\newcolumntype{L}[1]{>{\raggedright\arraybackslash}p{#1}}
\newcommand\tcaptab[1]{\captionsetup{position=top, font=normalsize, labelfont=bf, textfont=normalfont, justification=centering, margin=0mm, aboveskip=1mm, belowskip=0mm, labelsep=colon, singlelinecheck=false}\caption{#1}}
\newcommand\bnotetab[1]{\captionsetup{position=bottom, font=scriptsize,  textfont=normalfont, margin=1mm,aboveskip=1mm, belowskip=-1mm, justification=justified, singlelinecheck=false}\caption*{#1}}
\newcommand\tcapfig[1]{\captionsetup{position=top, font=normalsize, labelfont=bf, textfont=normalfont, justification=centering, margin=0mm, aboveskip=1mm, belowskip=0mm, labelsep=colon, singlelinecheck=false}\caption{#1}}
\newcommand\bnotefig[1]{\captionsetup{position=bottom, font=scriptsize,  textfont=normalfont, margin=1mm, aboveskip=1mm, belowskip=-1mm, justification=justified, singlelinecheck=false}\caption*{#1}}
\newcommand\subcaptab[1]{\captionsetup{position=bottom, font=small, labelfont=bf, textfont=normalfont, justification=centering, margin=0mm, aboveskip=4mm, belowskip=0mm, labelsep=space, singlelinecheck=false}\caption{#1}}
\def\*#1{\mathbf{#1}}
\def\+#1{\mathbb{#1}}
\def\wt#1{\widetilde{#1}}
\newcommand{\I}{{-1}}
\newcommand{\T}{\top}
\newcommand{\Lp}{\left(}
\newcommand{\Rp}{\right)}
\newcommand{\weight}{\theta}
\newcommand{\Weight}{\Theta}
\newcommand{\diag}{\textnormal{diag}}
\newcommand{\twt}{\textnormal{wt}}
\newcommand{\tr}{\textnormal{tr}}
\newcommand{\iid}{\textnormal{iid}}
\newcommand{\idx}{\textnormal{idx}}
\newcommand{\fidx}{s}
\newcommand{\tfidx}{\tilde{s}}
\newcommand{\CDF}{\textnormal{CDF}}
\newcommand*{\thisdraft}{This draft: July 2020} 
\newcommand*{\firstdraft}{First draft: May 2018}  
\newcommand{\pl}{\overset{p}{\rightarrow}}
\newcommand{\norm}[1]{\left\lVert#1\right\rVert}
\newcommand{\RNum}[1]{\uppercase\expandafter{\romannumeral #1\relax}}
\newtheorem{assumption}{Assumption}
\newtheorem{theorem}{Theorem}
\newtheorem{lemma}{Lemma}
\newtheorem{proposition}{Proposition}
\patchcmd{\chapter}{plain}{empty}{}{}
\begin{document}

	\title{Interpretable Sparse Proximate Factors for Large Dimensions\thanks{\scriptsize We thank Bryan Kelly, Kay Giesecke, Serena Ng, Jos\'e Luis Montiel Olea, Dacheng Xiu and seminar and conference participants at Stanford University, Columbia University, NBER-NSF Time-Series Conference, SoFiE Meeting, California Econometrics Conference, SoFiE Summer School on Machine Learning and Empirical Asset Pricing and INFORMS for helpful comments.}}
	\date{\thisdraft \\ \firstdraft}
	\date{\firstdraft \\ \thisdraft }
	
	\author{Markus Pelger\thanks{\scriptsize Stanford University, Department of Management Science \& Engineering, Email: \url{mpelger@stanford.edu}.}
		\and
		Ruoxuan Xiong\thanks{ \scriptsize Stanford University, Department of Management Science \& Engineering, Email: \url{rxiong@stanford.edu}.}
	}

	\begin{titlepage}
		\maketitle
		\thispagestyle{empty}
		\begin{abstract}
			{\small
			 This paper proposes sparse and easy-to-interpret proximate factors to approximate statistical latent factors. Latent factors in a large-dimensional factor model can be estimated by principal component analysis (PCA), but are usually hard to interpret. We obtain proximate factors that are easier to interpret by shrinking the PCA factor weights and setting them to zero except for the largest absolute ones. We show that proximate factors constructed with only 5-10\% of the data are usually sufficient to almost perfectly replicate the population and PCA factors without actually assuming a sparse structure in the weights or loadings. Using extreme value theory we explain why sparse proximate factors can be substitutes for non-sparse PCA factors. We derive analytical asymptotic bounds for the correlation of appropriately rotated proximate factors with the population factors. These bounds provide guidance on how to construct the proximate factors. In simulations and empirical analyses of financial portfolio and macroeconomic data we illustrate that sparse proximate factors are close substitutes for PCA factors with average correlations of around 97.5\% while being interpretable.

				\vspace{1cm}
				
				\noindent\textbf{Keywords:} Factor Analysis, Principal Components, Sparse Weights, Shrinkage, Interpretability, Large-Dimensional Panel Data, Extreme Value Theory

				\noindent\textbf{JEL classification:} C14, C38, C55, G12
			}
		\end{abstract}
	\end{titlepage}

	\begin{onehalfspacing}

		\section{Introduction}
		Large dimensional datasets are becoming increasingly relevant to study problems in economics and finance. These datasets contain rich information and open the possibilities of new findings and new efficacious models, which is feasible only if we better understand the datasets. Factor modeling \citep{bai2002determining,bai2003inferential,fan2013large} is a method that summarizes information in large-dimensional panel data and is an active research topic. In a large-dimensional factor model, both the time dimension and the cross-section dimension of the data sets are large and most of the co-movement can be explained by a few factors. Latent factors estimated from the data are particularly appealing as the underlying factor structure is usually not known. These factors are usually estimated by Principal Component Analysis (PCA).
		Latent PCA factors have been successfully used in economics and finance, for example for prediction and forecasting \citep{stock2002forecasting, stock2002macroeconomic}, asset pricing of many assets \citep{lettau2020factors,kelly2018characteristics}, high-frequency asset return modeling \citep{pelger2018,xiu2018}, conditional risk-return and term structure analysis \citep{ludvigson2007,ludvigson2009} and optimal portfolio construction \citep{fan2013large}.\footnote{Of course, the application of latent factor models goes beyond the applications listed above, e.g. inferring missing values in matrices \citep{xiong2020,candes2010power,candes2011robust}.} However, as the latent PCA factors are linear combinations of all cross-sectional units, they are usually hard to interpret. This poses a challenge for modeling and understanding the underlying structure to explain the co-movement in the data. This becomes particularly relevant for applications in economics and finance where the objective is to understand the underlying economic mechanism.       
		
		Practitioners and academics alike have used an intuitive approach to interpret latent statistical factors by focusing on the largest factor weights. A pattern in the largest factor weights suggests an economic interpretation as suggested by \cite{lettau2020factors} and \cite{pelger2019}. In this paper, we formalize this idea and show that the factors that are only based on the largest factor weights provide already an excellent approximation to the population factors. This step further reduces the dimensionality of the problem helping to better understand the economic mechanism in the problem. 
		
		We propose easy-to-interpret proximate factors for latent factors. We exploit the insight that cross-section units with larger factor weights have a larger signal-to-noise ratio, hence providing more information about the underlying factors. We start by estimating the underlying factor structure with conventional PCA, which returns the weights of PCA factors as the eigenvectors of the largest eigenvalues of a sample covariance matrix. Our method consists of three simple steps. First, we set all factor weights to zero except for the largest absolute ones. Second, the proximate factors are obtained from a simple regression on the thresholded factor weights. Third, the loadings are obtained from a regression on the proximate factors. We work under a general scenario where the factor weights and loadings in the true model are not sparse, i.e. the number of nonzero elements is proportional to the size of the cross-section, while the sparse factor weights have only a finite and small number of nonzero entries.
		
		We show that one needs to make a clear distinction between factor weights and loadings. In a conventional PCA analysis the loadings serve two purposes: First, they measure the exposure of the cross-sectional data to the factors. Second, they correspond to weights to construct the latent PCA factors. The same view is for example taken in a sparse PCA setup where loadings and hence also the factor weights are shrunk to a sparse matrix. We show that in an approximate factor model with non-sparse factor weights and loadings, we can construct proximate factors with sparse factor weights that are very close to the population factors. These proximate factors have non-sparse loadings which are consistent estimates of the true population loadings. The proximate factors are considerably easier to interpret as they are only based on a small fraction of the data, while enjoying the same properties as the non-sparse factors.\footnote{A common method to interpret low-dimensional factor models is to find a rotation of the common factors with a meaningful interpretation. This approach uses the insight that factor models are only identified up to an invertible transformation and represent the same model after an appropriate rotation. The $varimax$ criterion proposed by \cite{kaiser1958varimax} is a popular way to select factors whose factor weights have groups of large and negligible coefficients. However, in large-dimensional factor models with non-sparse factor weight structure, finding a ``good'' rotation becomes considerably more challenging. It is generally easier with our sparse proximate factors to find a rotation that has a meaningful interpretation.}

		We develop the statistical arguments that explain why the sparse proximate factors can be used as substitutes for the non-sparse PCA factors in general approximate factor models. The conventional derivations used in large-dimensional factor modeling to prove consistency do not apply to our proximate factors. The conventional non-sparse PCA factors are a weighted average over a large number of cross-sectional units which diversifies away the idiosyncratic noise. In contrast, proximate factors are weighted averages over only a finite and small number of cross-sectional units, which in general do not average out the noise and lead to biased estimates of the population factors. However, we show that, by selecting the most informative units as weights, this bias can become negligible. 
		The closeness between proximate factors and true factors is measured by the generalized correlation.\footnote{Generalized correlation (also called canonical correlation) measures how close two vector spaces are. It has been studied by \cite{anderson1958introduction} and applied in large-dimensional factor models \citep{bai2006confidence,pelger2018,pelger2018state,andreou2019inference}.} We provide an asymptotic probabilistic lower bound for the generalized correlation.\footnote{In simulations, we verify that the lower bound has good finite sample properties.} This lower bound is easy to calculate and depends on the number of nonzero elements in the sparse factor weights and on the tail distribution of population factor weights which we can characterize with Extreme Value Theory (EVT). This lower bound provides guidance on constructing the proximate factors that are guaranteed to have a high correlation with the true factors. One insight of this bound is that in the case of heteroskedastic errors the most informative factor weights are the units with the largest loadings relative to the unit-specific noise variance. Moreover, when factor weights have unbounded support, we show that proximate factors asymptotically span the same space as latent factors.
		Importantly, the estimated loadings of the proximate factors converge to the true population loadings up to the usual rotation. This surprising result is due to the two-stage procedure for estimating the loadings that averages out the idiosyncratic noise in the time dimension. Hence, regressions based on the more interpretable proximate factors will asymptotically yield the same results as using the harder to interpret PCA factors.

		Our results are of practical and theoretical importance. First, we provide a simple and easy-to-implement method to approximate latent factors by a small number of observations. These sparse proximate factors usually provide a much simpler economic interpretation of the model, either by directly analyzing the sparse composition or after rotating them appropriately. Our method has already been successfully used in \cite{lettau2020factors} to interpret asset pricing factors and in \cite{pelger2019} to interpret high-frequency factors. Second, we show in empirical and simulation studies that this approximation works surprisingly well and almost no information is lost by working with the sparse factors. 
		Third, an important contribution of this paper is to characterize the asymptotic properties of a sparse factor estimator if the population model is not sparse. 
		Our asymptotic bounds for the correlations between the proximate factors and the population factors provides the theoretical reasoning why our method works so well. In particular, it clarifies under which assumptions the proximate factors are a good approximation or even converge to the population factors. 
		Fourth, the asymptotic bounds provide guidance on how to select the key tuning parameter for our estimator, that is, the number of nonzero elements.\footnote{The degree of sparsity can also be chosen to obtain a sufficiently high generalized correlation with the estimated PCA factors or by cross-validation arguments to explain a sufficient amount of variation in the data. Our theoretical bound provides an alternative to select the tuning parameter based on arguments that should also hold out-of-sample.}
		
		

		We need to overcome three major challenges in the derivations of the probabilistic lower bound for the generalized correlation. First, we need to show that the estimated loadings converge uniformly to some rotation of the true loadings under the general approximate factor model assumptions.\footnote{We impose assumptions similar to \cite{bai2002determining}.}
		Uniform consistency of the estimated loadings is necessary for our argument that cross-section units with the largest estimated loadings have the largest probabilistic ``signal'' for the underlying factors. Second, the hard-thresholding procedure that sets most weights to zero to get the sparse factor weights has the down-side that we lose some of the large sample properties in the cross-section dimension. We need to take into account the cross-section dependency structure in the errors among a few cross-section units, which directly affects the noise level in calculating the generalized correlation. Third, the sparse factor weights, as well as proximate factors, are in general not orthogonal to one another in contrast to their non-sparse estimated version. A proximate factor for a specific population factor can also be correlated with other latent factors, which is reflected in the generalized correlation. Our assumptions and results need to take these complex inter-factor correlations into account.

		In two empirical applications, we apply our method to a large number of financial portfolios and a large-dimensional macroeconomic dataset. We find that in both datasets, proximate factors with around 5-10\% of the cross-sectional units can very well approximate the non-sparse PCA factors with average correlations of around 97.5\%. The proximate factors explain almost the same amount of variation as the non-sparse PCA factors. The sparse factors have an economic meaningful interpretation which would be hard to obtain from the non-sparse representation. For example, in the portfolio data, the non-sparse PCA factor weights are nonzero for all 370 assets, which makes any interpretation challenging. In contrast, the 30 nonzero weights for the proximate factors are grouped together in portfolios with specific characteristics that allow us to assign an economic label to these factors. For example, we discover the well-known ``momentum'' factor as one of the latent factors.

		Sparse loadings have already been employed to reduce the composition of factors and to make latent factors more interpretable. Most work formulates the estimation of sparse loadings as a regularized optimization problem to estimate principal components with a soft thresholding term, such as an $\ell_1$ penalty term or an elastic net penalty term \citep{zou2006sparse,mairal2010online,bai2008forecasting,bai2016efficient}.\footnote{\cite{choi2010penalized, lan2014sparse} and \cite{kawano2015sparse1} estimate the sparse loadings by minimizing the sum of the negative log-likelihood of the data with a soft thresholding term.} An alternative approach is to take a Bayesian perspective and specify sparse priors for factor loadings and use Bayesian updating to obtain posteriors for sparse loadings \citep{lucas2006sparse,bhattacharya2011sparse,kaufmann2013bayesian,pati2014posterior}. All these approaches assume that loadings are sparse in the population model, which allows these approaches to develop an asymptotic inferential theory. Nevertheless, the assumption of sparse population loadings may not be satisfied in many datasets. For example, the exposure to a market factor is universal and non-sparse in equity data. It is important to understand that this line of work typically does not distinguish between factor weights and loadings, i.e. it uses the shrunk loadings for factors weights and exposure. In contrast, we only estimate sparse factor weights, but non-sparse loadings. Additionally, we do not assume that the population factor weights are sparse. It is possible to adjust the sparse PCA method to make this important distinction between factor weights and loadings, i.e. loadings are obtained from a second stage regression. However, in contrast to our approach, there is no statistical theory explaining and justifying this adjusted sparse PCA approach in the same general setup that we are using. Furthermore, lasso-type estimators have the well-known shortcoming that they create biased estimates by the way how they are shrinking large elements and a similar non-optimal shrinking happens for sparse PCA. It turns out that sparse PCA performs substantially worse than our method in simulations and empirical applications.\footnote{Another method to increase the understandability and interpretability of factors is to associate latent factors or factor loadings with observed variables. Some latent factors can be approximated well by observed economic factors, such as Fama-French factors for equity data \citep{fama1992cross} or level, slope, and curvature factors for bond data \citep{diebold2006forecasting}. \cite{fan2016robust} propose robust factor models to exploit the explanatory power of observed proxies on latent factors. Another approach is to model how the factor loadings relate to observable variables. \cite{connor2007semiparametric}, \cite{connor2012efficient}, and \cite{fan2016projected} at least partially employ subject-specific covariates to explain factor loadings, such as market capitalization, price-earning ratios, and other firm characteristics. However, in order to explain latent factors by observed variables, it is necessary to include all the relevant variables, some of which might not be known. Our sparse proximate factors can provide discipline on which assets and covariates to focus on.    
		}

		The rest of the paper is structured as follows. Section \ref{sec:model-setup} introduces the model and the estimator. Section \ref{sec:theory} shows the consistency result for the estimated loadings and the asymptotic probabilistic lower bound for the generalized correlation. Section \ref{sec:simulation} presents simulation results. In section \ref{sec:empirical} we apply our approach to financial and macroeconomic data. Section \ref{sec:conclusion} concludes the paper. All proofs and additional empirical results are delegated to the Internet Appendix.

		\section{Model Setup}\label{sec:model-setup}

		\subsection{Estimator}\label{subsec:estimator}

		Suppose we observe a large dimensional panel data set $\*X \in \mathbb{R}^{N \times T}$ with $N$ units over $T$ time periods, where both $N$ and $T$ are large. Furthermore, suppose $\*X$ has a latent factor structure with $K$ common factors: 
		\begin{eqnarray}\label{eqn-true-model}
		\*X = \*\Lambda \*F^\T  + \*e, 
		\end{eqnarray}
		where $\*F \in \mathbb{R}^{T \times K}$,  $\*\Lambda \in \mathbb{R}^{N \times K}$, and $\*e \in \mathbb{R}^{N \times T}$ are common factors, factor loadings, and idiosyncratic components which are all unobserved.\footnote{We assume that we have consistently estimated the number of factors $K$.} A $K$-factor model can be estimated from the panel data by PCA. The PCA factors $\hat{\*F} \in \mathbb{R}^{T \times K}$ and loadings $\hat{\*\Lambda} \in \mathbb{R}^{N \times K}$ minimize a quadratic loss function 
		\begin{eqnarray}\label{eqn-quad-loss}
		\{\hat{\*F}, \hat{\*\Lambda} \} = \arg \min_{\*F, \*\Lambda} \frac{1}{NT} \sum_{i = 1}^N \sum_{t = 1}^T (X_{it} - \Lambda_i^\T F_t)^2, 
		\end{eqnarray}
		where $X_{it}$ is the observation of the $i$-th cross-section unit at time $t$, $F_t = [F_{t,1}, \cdots, F_{t,K}]^\T \in \mathbb{R}^{K \times 1} $ is the factor value at time $t$, and $\Lambda_i = [\Lambda_{i,1}, \cdots, \Lambda_{i, K}]^\T \in \mathbb{R}^{K \times 1}$ is the factor loading of the $i$-th cross-section unit. Latent factors and loadings are only identified up to an invertible transformation: For any invertible matrix $H$, $H {\Lambda}_i$ and $(H^\T)^{-1}{F}_t$ specify the same factor model as in \eqref{eqn-true-model}. Under the standard identification assumption that $\frac{1}{N} \hat{\*\Lambda}^{\top}\hat{\*\Lambda}$ is an identity matrix and $\frac{1}{T} \hat{\*F}^{\top} \hat{\*F}$ is a diagonal matrix, 
		the solution to the optimization problem \eqref{eqn-quad-loss} is the PCA estimate\footnote{We refer to $\frac{1}{NT} \*X\*X^\T$ as a sample covariance matrix and as in our applications we first demean the data. However our theoretical and empirical results are only based on the spectral decomposition of a second moment matrix and hold for general nonzero means. \cite{lettau2020theory} provide an answer when to demean the data.}
		\begin{eqnarray}
		&& \frac{1}{NT} \*X \*X^\T  = \hat{\*\Lambda} \hat{D}\hat{\*\Lambda}^{\top} ,  \label{eqn-estimated-loading} \\
		&& \hat{\*F} = \frac{1}{N} \*X^\T \hat{\*\Lambda}.\label{eqn-estimated-factor}
		\end{eqnarray}
		The estimated loadings $\hat{\*\Lambda}$ are the eigenvectors of the $K$-largest eigenvalues of $\frac{1}{NT} \*X\*X^\T$ multiplied by $\sqrt{N}$. The estimated factors $\hat{\*F}$ are the coefficients from regressing $\*X$ on $\hat{\*\Lambda}$. $\hat{D} = diag(\hat{D}_1, \hat{D}_2, \cdots, \hat{D}_K)$ denotes a diagonal matrix with the $K$ largest eigenvalues in descending order. Note, that $\hat{\*\Lambda}$ serves two purposes here: First, the estimated loadings are the cross-sectional weights to construct the estimated factors as shown in equation \eqref{eqn-estimated-factor}. Second, $\hat{\*\Lambda}$ is also the estimated exposure to the factors.
		\cite{bai2003inferential} shows that in an approximate factor model, $\hat{\*\Lambda}$ and $\hat{\*F}$ are consistent estimators of $\*\Lambda$ and $\*F$ up to an invertible transformation.

		The general approximate factor model framework assumes that the loadings $\*\Lambda$ and thus their consistent estimator $\hat{\*\Lambda}$ are not sparse, which is a necessary assumption for a consistent estimator of the factors in the presence of noise $\*e$. Here sparse means that only a finite number of elements in the loadings are nonzero that does not grow with $N$ and $T$. 
		As the estimated factors are linear combinations of the cross-section units $\*X$ weighted by a non-sparse $\hat{\*\Lambda}$, they are composed of almost all cross-section units, which are hard to interpret.

		We propose a method to estimate proximate factors that are sparse and hence more interpretable. The method is based on the following steps: 
		
		\begin{enumerate}
			\item \textit{Sparse factor weights:}
			$\hat{\*\Lambda}$ are the standard PCA estimates of the loadings. The proximate factor weights $\wt{\*W}$ are the largest elements in $\hat{\*\Lambda}$ obtained as follows:
			We shrink the $N-m$ \textit{smallest} entries in \textit{absolute values} in each estimated loading vector $\hat{\*\Lambda}_k$ to 0 and only keep the largest $m$ elements to get the sparse weight vector $\wt{\*W}_k$ for each $k$, where $m$ is finite. We normalize each sparse weight vector to have length one, that is, $\lVert\wt{\*W}_k\lVert_2 = 1$. \footnote{Formally, denote $\*M = [\*M_1, \*M_2, \cdots, \*M_K] \in \mathbb{R}^{N \times K}$ as a mask matrix indicating which factor weights are set to zero. $\*M_j$ has $m$ 1s and $N-m$ 0s. The sparse factor weights $\wt{\*W}$ can be written as 
				\begin{eqnarray} \label{eqn-tilde-lambda-def}
				\wt{\*W} =
				\begin{bmatrix}
				\frac{ \hat{\*\Lambda}_1 \odot \*M_1}{\norm{\hat{\*\Lambda}_1 \odot \*M_1}} & \frac{\hat{\*\Lambda}_2 \odot \*M_2}{\norm{\hat{\*\Lambda}_2 \odot \*M_2}}  & \cdots &
				\frac{\hat{\*\Lambda}_K \odot \*M_K}{\norm{\hat{\*\Lambda}_K \odot \*M_K}}
				\end{bmatrix}, 
				\end{eqnarray}
				where $\hat{\*\Lambda}_j$ is $j$-th estimated loading. The vector $\*M_j$ has the element 1 at the position of the $m$ largest loadings of $\hat{\*\Lambda}_j$ and zero otherwise. $\odot$ denotes the Hadamard product for element by element multiplication of matrices.} 
			\item \textit{Proximate factors:} We regress $\*X$ on $\wt{\*W}$ to obtain the proximate factors $\tilde{\*F}$: 
			\begin{eqnarray}\label{eqn-tilde-f-def}
			\tilde{\*F} = \*X^\T \wt{\*W} (\wt{\*W}^\T  \wt{\*W})^{-1}.
			\end{eqnarray}
			\item \textit{Loadings of proximate factors:} We regress $\*X$ on $ \tilde{\*F}$ to obtain the loadings $\tilde{\*\Lambda}$ of the proximate factors:
			\begin{eqnarray}\label{eqn-tilde-lam-def}
			\tilde{\*\Lambda} = \*X^\T  \tilde{\*F} ( \tilde{\*F}^\T   \tilde{\*F})^{-1}.
			\end{eqnarray}
		\end{enumerate}
		
		
		Proximate factors $\tilde{\*F}$ approximate latent factors $\*F$ well as measured by the generalized correlation. The generalized correlation equals the correlation between appropriately rotated proximate and population factors as defined in the subsequent sections. It is natural to measure the distance between two factors by their correlation. If two factors are perfectly correlated they explain the same variation in the data and provide the same results in linear regressions.
		
		\subsection{Illustrative Example}\label{sec:toy}
		
		We illustrate the intuition in a one-factor model. In this case, the generalized correlation is equal to the squared correlation between $\tilde{\*F}$ and $\*F$. For simplicity, assume that the factors and idiosyncratic component are i.i.d. over time, that is $F_t \stackrel{\iid}{\sim} (0, \sigma_{\*F_1}^2)$ and $e_{it} \stackrel{\iid}{\sim} (0, \sigma_e^2) $.
		Furthermore, our proximate factor consists of only one cross-sectional observation, i.e. $m=1$. Without loss of generality, the nonzero entry in $\wt{\*W}_1 \in \mathbb{R}^{N \times 1}$ is $\tilde w_{1,1}$ which is normalized to $\tilde w_{1,1}=1$. The squared correlation between $\tilde{\*F}$ and $\*F$ equals 
		\begin{eqnarray*}
			\widehat{\textnormal{corr}}(\tilde{\*F}, \*F)^2 &=& \frac{(\*F^\T  \tilde{\*F}/T)^2}{(\*F^\T  \*F/T) (\tilde{\*F}^\T \tilde{\*F}/T)} =\left( \frac{\*F^\T (\*F \Lambda_{1,1} + \*e_1)/T}{(\*F^\T \*F/T)^{1/2} ((\*F \Lambda_{1,1} + \*e_1)^\T (\*F \Lambda_{1,1} + \*e_1)/T)^{1/2}} \right)^2   \\
			&\pl& \frac{1  }{ 1   + \frac{1}{\left(\sigma_{\*F_1}/\sigma_e \right)^2 \Lambda_{1,1}^2}},
		\end{eqnarray*}
		where $\sigma_{\*F_1}/\sigma_e$ denotes the signal-to-noise ratio.\footnote{The second equation follows from $\tilde{\*F} = \*X^\T \wt{\*W} (\wt{\*W}^\T  \wt{\*W})^{-1} =  \*X^\T \wt{\*W} = (\*F_1 \Lambda_{1,1} + \*e_1) \tilde w_{1,1} =  (\*F_1 \Lambda_{1,1} + \*e_1)$.} The correlation increases with the size of $|\Lambda_{1,1}|$ and the signal-to-noise ratio $\sigma_{\*F_1}/\sigma_e$. If the largest population loading is sufficiently large, the correlation will be close to one. In the rest of the paper, we formalize this idea under a general setup. In the first part we discuss the proximate factors based on the largest loadings. Then, we generalize the idea to select the weights with the largest information content relative to the noise.
		
		\subsection{Assumptions}
		We impose several assumptions that are close to, but slightly stronger than, those in \cite{bai2002determining}. In order to show that $\tilde{\*F}$ is ``close'' to $\*F$ as measured by the generalized correlation, $\hat{\*\Lambda}$ needs to be a uniformly consistent estimator for $\*\Lambda$ up to some invertible transformation. 
		The following assumptions are necessary for all theorems in this paper and are standard assumptions for the general approximate factor model.\footnote{Assumption \ref{ass_factor} about the population factors is the same as \cite{bai2002determining}. Assumption \ref{ass_loading} allows loadings to be random. Since the loadings are independent of factors and errors, all results in \cite{bai2002determining} hold. Assumption \ref{ass_error}.1 imposes moment conditions for errors, which is the same as Assumption C.1 in \cite{bai2002determining}. This assumption implies that $\sigma_e^2$ is bounded. Assumption \ref{ass_error}.2 is close to Assumption 3.2 (i) and (ii) in \cite{fan2013large}. This assumption restricts the cross-section dependence of errors and is standard in the literature on approximate factor models. Since $\Sigma_e$ is symmetric, $\norm{\Sigma_e}_{1} \leq M$  is equivalent to $\norm{\Sigma_e}_{\infty} \leq M$. $\norm{\Sigma_e}_1 \leq M$ implies that $\sum_{j=1}^N |\tau_{ij,tt}| \leq M$. Together with $\tau_{ij,tt}$ being the same for all $t$ from the stationarity of $e_t$, this assumption implies Assumption C.3 in \cite{bai2002determining}. Assumption \ref{ass_error}.3 allows for weak time-series dependence of errors, which is slighter stronger than  Assumption C.2 in \cite{bai2002determining}. Assumption \ref{ass_error}.6 is the time average counterpart of Assumption C.5 in \cite{bai2002determining}. Assumption \ref{ass_f_e} implies Assumption D in \cite{bai2002determining}. The fourth moment conditions in Assumptions \ref{ass_error}.6 and \ref{ass_f_e}, together with Boole's inequality or the union bound, are used to show the uniform convergence of loadings, without assuming the boundedness of loadings.  Since Assumptions \ref{ass_factor}-\ref{ass_f_e} imply Assumption A-D in \cite{bai2002determining}, all results in \cite{bai2002determining} hold.}. 
		We assume that there exists a positive constant $M < \infty$ that can be used in all the assumptions. For a matrix $A = [a_{it}] \in \+R^{N \times T}$ we define the Frobenius norm as  $\norm{A}_F = [\tr(A^\T A)]^{1/2}$, the $L_1$-norm as $\norm{A}_1= \max_{1 \leq i \leq N} \sum_{t = 1}^{T} |a_{it}|$ and spectral norm as $\norm{A}_2 = \sigma_{\max}(A)$ where $\sigma_{\max}(A)$ is the largest singular value of $A$. 
		

		\begin{assumption}[{Factors}] \label{ass_factor}
			$\+E \norm{F_t}_F^4 \leq M < \infty$ and $\frac{1}{T} \sum_{t=1}^T F_t F_t^\T \xrightarrow{P} \Sigma_F$ for some $K \times K$ positive definite matrix $\Sigma_F$.
		\end{assumption}
		
		\begin{assumption}[{Factor Loadings}] \label{ass_loading}
			$\+E  \norm{\Lambda_i}_F^4 \leq M < \infty$ and $\frac{1}{N} \sum_{i=1}^N \Lambda_i \Lambda_i^\T \xrightarrow{P} \Sigma_{\Lambda}$ for some $K \times K$ positive definite matrix $\Sigma_{\Lambda}$. Loadings are independent of factors and errors.
		\end{assumption}
		
		\begin{assumption}[{Time and Cross-Section Dependence and Heteroskedasticity}] \label{ass_error}
			Denote \\ $\+E[e_{it}e_{js}]=\tau_{ij,ts}$, $\sigma_e^2 = \max_{i,j,t,s} |\tau_{ij,ts}|$. Then for all $N$ and $T$,
			\begin{enumerate}
				\item $\+E[e_{it}]=0$, $\+E[|e_{it}|^8] \leq M$;
				\item $e_t$ is stationary, $\Sigma_e = [\sigma^2_{e,ij}]_{N \times N}$ is the covariance matrix of $e_t$ with $\norm{\Sigma_e}_1 \leq M$;
				\item $\forall i$, $|\tau_{ii,ts}| \leq |\tau_{ts}|$ for some $\tau_{ts}$, $\frac{1}{T}\sum_{s=1}^T \sum_{t=1}^T |\tau_{ts}| \leq M $;
				\item $\+E[T^{-1/2}(\*e_i^\T \*e_j - \+E [\*e_i^\T \*e_j] )]^4 < M$.
				
			\end{enumerate}
		\end{assumption}
		\begin{assumption}[{Moments}] \label{ass_f_e}
			For all $i \leq N, t \leq T$, $\+E  \norm{T^{-1/2}\sum_{t=1}^T F_t e_{it}}_F^4 < M$.
		\end{assumption}
		
		We use  $h(m) = \max_{j_1, j_2, \cdots, j_m \in \{1, \cdots, N\}, t \in \{1, \cdots, T\}} \left( \sum_{k \neq l}  |\tau_{ j_k j_l,tt}|^2/\sigma_e^4 \right)^{1/2}$ to denote the maximum of the square root of total pairwise squared correlations among $m$ cross-section units.  Since each proximate factor is a linear combination of $m$ cross-section units, the estimation error of the proximate factor is determined by the errors of these $m$ cross-section units.  Note that if $e_{it}$ is i.i.d, then $h(m)=0$. If the errors of the $m$ cross-section units are perfectly dependent, $h(m)=\sqrt{m(m-1)}$. In general it holds that $0 \leq h(m) \leq \sqrt{m(m-1)}$.

		\section{Theoretical Results}\label{sec:theory}
		
		\subsection{Consistency}

		\cite{bai2002determining} and \cite{bai2003inferential} show that factors and loadings can be estimated consistently with PCA under Assumptions \ref{ass_factor}-\ref{ass_f_e} when $N, T \rightarrow \infty$. 
		A key element is that the loadings are pervasive which implies that $\hat {F_t} = \frac{1}{N} \*X_t^{\top} \hat{\*\Lambda} =  F_t \cdot  \frac{\*\Lambda^{\top} \hat{\*\Lambda}}{N} + \frac{(\*e_t^{\top})^{\top} \hat{\*\Lambda}}{N} = F_t H^{-1} + o_p(1)$, because the errors can be cross-sectionally diversified away. 
		Here $H^{-1}$ is the conventional invertible rotation matrix. 
		
		However, consistency does not hold in general for proximate factors $\tilde{\*F}$ for the following reason: 
		\[\tilde{\*F} =  \*X^\T \wt{\*W} (\wt{\*W}^\T  \wt{\*W})^{-1}= \*F \*\Lambda^{\top}  \wt{\*W} (\wt{\*W}^\T  \wt{\*W})^{-1} + \*e^\T\wt{\*W} (\wt{\*W}^\T  \wt{\*W})^{-1}\]
		and $\*e_t^\T\wt{\*W}_k = \sum_{i=1}^N e_{it} \wt w_{i,k} = \sum_{i=1}^m e_{\fidx_k(i) ,t}  \tilde{w}_{\fidx_k(i),k}$ where $\fidx_k(i)$ is the index of the $i$-th largest element in $\wt{\*W}_k$
		, that is, the sums are only taken over the finite number of nonzero weight entries.
		Even in the special case when the idiosyncratic components are i.i.d with variance $\sigma_e^2$, the variance of the $(t,k)$-th entry in $\*e^\T\wt{\*W}$ is $\mathrm{Var} \Big(\ \sum_{i=1}^m e_{\fidx_k(i),t}  \tilde{w}_{\fidx_k(i),k} \Big) = \sigma_{e}^2$, which does not vanish as $N, T \rightarrow \infty$. In summary, PCA factors average the idiosyncratic component over infinitely many non-sparse loadings leading to a law of large numbers, while for proximate factors this average is taken over only a finite number of elements without diversifying away the noise. 
		Although proximate factors are in general inconsistent, they can still be very close to the population factors. We will show under which conditions the correlations between the population and proximate factors are very close to one.

		\subsection{Loadings of Proximate Factors}

		The estimated loadings of the proximate factors asymptotically span the same space as the population loadings and hence will lead to the same regression or projection results. Hence, up to an invertible transformation, the loadings of the proximate factors are consistent.\footnote{Note, that our notion of consistency does not imply point-wise consistency, i.e. for a finite number of cross-section units the loadings can be distinct.} This result actually does not depend on the degree of sparsity $m$. The key element is that the loadings are obtained in a second stage regression that diversifies away idiosyncratic noise in the time dimension. Hence, it is important to distinguish between loadings and factor weights as sparse loadings are in general not consistent estimator of the population loadings.

		One of the major problems when comparing two different sets of loadings is that a factor model is only identified up to invertible linear transformations. Two sets of loadings represent the same model if the loadings span the same vector space. As proposed by Bai and Ng (2006) the generalized correlation is a natural candidate measure to describe how close two vector spaces are to each other. Intuitively, we calculate the correlation between the loadings of the proximate and the population factors after rotating them appropriately. The generalized correlation measures of how many loading vectors two sets have in common.
		The generalized correlation between the loadings of proximate and population factors is defined as
		\[\rho_{\tilde{\*\Lambda}, \*\Lambda} \coloneqq \tr \Lp ( \*\Lambda^\T \*\Lambda/N )^{-1} ( \*\Lambda^\T \tilde{\*\Lambda}/N ) (\tilde{\*\Lambda}^\T \tilde{\*\Lambda}/N )^{-1} ( \tilde{\*\Lambda}^\T \*\Lambda/N ) \Rp.\]
		Here, the generalized correlation $\rho_{\tilde{\*\Lambda}, \*\Lambda}$, ranging from 0 to $K$ (number of factors), measures how close $\tilde{\*\Lambda}$ and $\*\Lambda$ are. If $\tilde{\*\Lambda}$ lies in the space spanned by $\*\Lambda$, then $\rho=K$. Otherwise, if the space spanned by $\tilde{\*\Lambda}$ is orthogonal to the space spanned by $\*\Lambda$, then $\rho = 0$.\footnote{The generalized correlation is also known as canonical correlation. Our measure is based on the Euclidian inner product without demeaning the loadings which is appropriate for measuring the span of the loading matrix. Our generalized correlation measure is the sum of the squared individual generalized correlations. The first individual generalized correlation is the highest correlation that can be achieved through a linear combination of the proximate loadings $\tilde{\*\Lambda}$ and the population loadings $\*\Lambda$. For the second generalized correlation, we first project out the subspace that spans the linear combination for the first generalized correlation and then determine the highest possible correlation that can be achieved through linear combinations of the remaining $K-1$ dimensional subspaces. This procedure continues until we have calculated the $K$ individual generalized correlations. Mathematically the individual generalized correlations are the square root of the eigenvalues of the matrix $(\*\Lambda^\T\*\Lambda/N)^{-1} (\*\Lambda^\T \tilde{\*\Lambda}/N) (\tilde{\*\Lambda}^\T \tilde{\*\Lambda}/N)^{-1} (\tilde{\*\Lambda}^\T \*\Lambda/N)$.}

		\cite{bai2002determining} and \cite{bai2003inferential} show that the PCA loadings $\hat {\*\Lambda}$ are consistent estimators of the rotated loadings $ \*\Lambda H$. The rotation matrix $H$ is defined as $H= D^{-1/2} V_H^{\top}  \Sigma_F^{1/2}$ based on the spectral decomposition  $\Sigma_F^{1/2} \Sigma_{\Lambda} \Sigma_F^{1/2} = V_H D V_H^{\top}$, that is,  $D = \diag(D_1, D_2, \cdots, D_K)$ are the eigenvalues of $\Sigma_F^{1/2} \Sigma_{\Lambda} \Sigma_F^{1/2}$ in decreasing order and $V_H$ are the corresponding eigenvectors. If the population factor model has the same rotation as the PCA estimates, that is, $\Sigma_F$ is diagonal and $\Sigma_{\Lambda}=I_K$ then $H$ simplifies to an identity matrix. In this case, the generalized correlation $\rho_{\tilde{\*\Lambda}, \*\Lambda}$ simplifies to the sum of squared correlations between the vectors of estimated and population loadings.  
		
		The rotation is relevant as our factor weights are only close to the largest rotated population loadings $\*\Lambda H$. Our results depend on Proposition \ref{thm:uniform-consistency-loading} in the Appendix that shows the uniform consistency of the estimated loadings. Hence, weights based on the largest estimated loadings $\hat {\* \Lambda}$ are asymptotically very close to weights derived from the largest rotated population loadings $\* \Lambda H$. We denote by $\*W$ the matrix of the $m$ largest factor weights based on the rotated population loadings, i.e. it follows the same definition as $\wt{\*W}$ but applied to the population loadings $\*\Lambda H$.

		We need to impose one additional weak assumption on the residuals:
		\begin{assumption}\label{ass:max-eigen-err-cov}
			The largest eigenvalue of $\frac{1}{NT} \*e^{\top} \*e$ is $o_p(1)$.
		\end{assumption}
		Assumption \ref{ass:max-eigen-err-cov} is very weak and essentially satisfied in any sensible approximate factor model. It is standard and has been imposed in many related papers, e.g., \cite{fan2013large}. It is slightly stronger than Assumption \ref{ass_error}.3, that implies that the largest eigenvalue of the population residual auto-covariance matrix is bounded. Under suitable additional assumptions on the tail behavior of the residuals, Assumption \ref{ass_error}.3 implies that $\|\frac{1}{N} \*e^{\top} \*e\|_2$ is $O_p(1)$ which would be sufficient. 
		\begin{theorem}\label{thm:gen-lam}
			Suppose Assumptions \ref{ass_factor}-\ref{ass:max-eigen-err-cov} hold. If $\|\left(\*W^{\top} \*\Lambda \right)^{-1}\|_2 =O_p(1)$, then we have 
			\begin{align*}
			\rho_{\tilde{\*\Lambda}, \*\Lambda} \xrightarrow{P} K.  
			\end{align*}
		\end{theorem}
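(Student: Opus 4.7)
The plan is to show that $\tilde\Lambda$ lies asymptotically in the column space of $\Lambda$, after which the generalized correlation reduces to $\mathrm{tr}(I_K)=K$ by a direct algebraic computation. Substituting $X=\Lambda F^{\top}+e$ into the definition $\tilde\Lambda=X\tilde F(\tilde F^{\top}\tilde F)^{-1}$ gives the decomposition
\[
\tilde\Lambda \;=\; \Lambda P + Q,\qquad P \;=\; \frac{F^{\top}\tilde F}{T}\Lp\frac{\tilde F^{\top}\tilde F}{T}\Rp^{-1},\qquad Q \;=\; \frac{e\tilde F}{T}\Lp\frac{\tilde F^{\top}\tilde F}{T}\Rp^{-1}.
\]
If I can establish (a) $P$ converges in probability to an invertible limit with $\|P^{-1}\|_2=O_p(1)$, (b) $Q^{\top}Q/N\xrightarrow{P}0$, and (c) $\Lambda^{\top}Q/N\xrightarrow{P}0$, then $\Lambda^{\top}\tilde\Lambda/N\xrightarrow{P}\Sigma_\Lambda P$ and $\tilde\Lambda^{\top}\tilde\Lambda/N\xrightarrow{P}P^{\top}\Sigma_\Lambda P$, so
\[
\rho_{\tilde\Lambda,\Lambda}\;\xrightarrow{P}\;\mathrm{tr}\Lp\Sigma_\Lambda^{-1}\cdot\Sigma_\Lambda P\cdot (P^{\top}\Sigma_\Lambda P)^{-1}\cdot P^{\top}\Sigma_\Lambda\Rp\;=\;\mathrm{tr}(I_K)\;=\;K.
\]

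For step (a), I would unwind $\tilde F$ through its own construction as $\tilde F = FB + \eta$ with $B=\Lambda^{\top}\tilde W(\tilde W^{\top}\tilde W)^{-1}$ and $\eta=e^{\top}\tilde W(\tilde W^{\top}\tilde W)^{-1}$. The cross-term $F^{\top}\eta/T$ is $O_p(m/\sqrt T)$ by Assumption~\ref{ass_f_e}, while the quadratic $\eta^{\top}\eta/T$ has a non-vanishing but bounded limit $\Sigma_\eta$ built from $\tilde W^{\top}\Sigma_e\tilde W$ via Assumption~\ref{ass_error}. These yield $F^{\top}\tilde F/T\xrightarrow{P}\Sigma_F B$ and $\tilde F^{\top}\tilde F/T\xrightarrow{P}B^{\top}\Sigma_F B+\Sigma_\eta$, so that $P\xrightarrow{P}\Sigma_F B(B^{\top}\Sigma_F B+\Sigma_\eta)^{-1}$. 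Invertibility of this limit reduces to invertibility of $B$, equivalently of $\Lambda^{\top}\tilde W$. This is precisely where the hypothesis $\|(W^{\top}\Lambda)^{-1}\|_2=O_p(1)$ enters: Theorem~\ref{thm:uniform-consistency-loading}, together with the order-statistic comparison indicated in Lemma~\ref{lemma_orderstat}, shows that $\tilde W$ and the appropriately rotated version of $W$ share the same supports with probability approaching one, so $\|(\Lambda^{\top}\tilde W)^{-1}\|_2$ inherits the $O_p(1)$ bound from $\|(W^{\top}\Lambda)^{-1}\|_2$.

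Steps (b) and (c) are the ``noise vanishes'' part. For (b), write $Q^{\top}Q/N=(\tilde F^{\top}\tilde F/T)^{-1}\cdot\tilde F^{\top}e^{\top}e\tilde F/(NT^2)\cdot(\tilde F^{\top}\tilde F/T)^{-1}$; the middle factor is bounded in operator norm by $\lambda_{\max}(e^{\top}e/(NT))\cdot\|\tilde F/\sqrt T\|_2^2/T$, and Assumption~\ref{ass:max-eigen-err-cov} drives it to zero. For (c), split $\tilde F=FB+\eta$: the piece $\Lambda^{\top}eFB/(NT)$ is $O_p((NT)^{-1/2})$ by the standard factor-model moment bound, and $\Lambda^{\top}e\eta/(NT)$ reduces, after replacing $e^{\top}e/T$ by $\Sigma_e$, to $\Lambda^{\top}\Sigma_e\tilde W(\tilde W^{\top}\tilde W)^{-1}/N$; the entries of $\Lambda^{\top}\Sigma_e$ are $O_p(\sqrt N)$ because $\|\Sigma_e\|_2\le M$ from Assumption~\ref{ass_error}.2 and $\|\Lambda\|_2=O_p(\sqrt N)$, so summing over the $m$ non-zero support positions gives $O_p(m/\sqrt N)=o_p(1)$.

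The main obstacle, in my view, is the invertibility argument in step~(a): translating the population-side condition $\|(W^{\top}\Lambda)^{-1}\|_2=O_p(1)$ into invertibility of $\Lambda^{\top}\tilde W$. The support of $\tilde W$ is determined by the order statistics of the noisy, rotated loadings $\hat\Lambda\approx\Lambda H$, so the argument has to simultaneously absorb the rotation $H$ and control the random selection of $m$ cross-section units through uniform loading consistency. Once that invertibility is in hand, the rest is mechanical and highlights the structural point of the paper: sparsifying only the \emph{factor weights}, not the loadings, leaves the second-stage regression defining $\tilde\Lambda$ enough room to diversify away idiosyncratic noise.
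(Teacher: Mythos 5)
Your proposal is correct and, at its core, uses the \emph{same} decomposition $\tilde\Lambda = \Lambda P + Q$ as the paper's proof (the paper writes $\breve\Lambda = X\breve F/T = \Lambda\,(F^\top\breve F/T) + e\breve F/T$ after an orthonormalizing rotation of $\tilde F$), but the bookkeeping that follows is genuinely different. The paper never establishes direct convergence of $\Lambda^\top\tilde\Lambda/N$ and $\tilde\Lambda^\top\tilde\Lambda/N$; instead it writes $\rho_{\tilde\Lambda,\Lambda}=\mathrm{tr}\bigl((I_K+\Xi)^{-1}\bigr)+o_p(1)$, invokes the scalar inequality $\mathrm{tr}\bigl((I+M)^{-1}\bigr)\ge\mathrm{tr}(I-M)$ to get a one-sided bound $\rho\ge K-\mathrm{tr}(\text{error})$, then controls that trace by a chain of Cauchy--Schwarz-type trace inequalities and finally routes the signal-part invertibility through Lemma~\ref{lemma_rho} (the factor-side generalized-correlation expansion). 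Your version skips the lower-bound detour and the appeal to Lemma~\ref{lemma_rho} entirely: you show $\Lambda^\top\tilde\Lambda/N \to \Sigma_\Lambda P$, $\tilde\Lambda^\top\tilde\Lambda/N \to P^\top\Sigma_\Lambda P$ with $\|P^{-1}\|_2 = O_p(1)$, and conclude by cancellation in the trace. What this buys you is a cleaner treatment of the cross-term $\Lambda^\top e\tilde F/(NT)$: the paper dismisses it in one line citing ``$\Lambda^\top e_t/N = o_p(1)$'', but $\tilde F$ is itself a function of $e$, so the correlation must be addressed; your split $\tilde F = FB + \eta$ with the $\Lambda^\top e e^\top\tilde W$ piece estimated via $\|\Sigma_e\|_2\le M$ and the $m$ non-zero support positions makes this explicit (though ``replacing $e^\top e/T$ by $\Sigma_e$'' should itself be justified via Assumption~\ref{ass_error}.4, e.g.\ $\frac1N\sum_i\lambda_{i,j}(e_i^\top e_l/T - \tau_{il})=o_p(1)$). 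Both proofs share the same unaddressed bridge: the hypothesis is on the population matrix $W^\top\Lambda$, whereas what one actually needs is $\|(\tilde W^\top\Lambda)^{-1}\|_2=O_p(1)$ for the estimated sparse weights $\tilde W$; the paper simply asserts ``the smallest eigenvalue of $A$ is bounded away from $0$'' while you sketch the correct remedy (Theorem~\ref{thm:uniform-consistency-loading} plus Lemma~\ref{lemma_orderstat} to match supports w.h.p.). One typo in step~(b): the operator-norm bound on the middle factor should be $\lambda_{\max}\!\bigl(e^\top e/(NT)\bigr)\cdot\|\tilde F/\sqrt T\|_2^2$ without the spurious $/T$; the conclusion is unaffected since the correct bound is already $o_p(1)$ by Assumption~\ref{ass:max-eigen-err-cov}.
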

		The assumption that $\|\left(\*W^{\top} \*\Lambda \right)^{-1}\|_2 =O_p(1)$ is essentially a full rank assumption on $\*W^{\top} \*\Lambda$. It requires that the sparse set of cross-section units, which we use to construct the proximate factors, is affected by all factors in a non-redundant way. In the case of only one factor, i.e. $K=1$, it is trivially satisfied. In the case of multiple factors, we need to rule out that the largest elements of two loading vectors are identical. This is an assumption on the tail dependency of the loadings. If for example the loading vectors are independent and $m \geq K$, then this condition is satisfied.
		
		Our notion of consistency does not imply point-wise consistency of the loadings, i.e., for a finite number of loading elements $\tilde{\*\Lambda}$ can be different from $\*\Lambda H$. Our notion of consistency measures the asymptotic difference between vectors whose length goes to infinity and hence a finite number of elements have a negligible effect. The generalized correlation measure is the appropriate measure if we intend to use loadings for projections or in a cross-sectional regression. The strong result in Theorem \ref{thm:gen-lam} states that cross-sectional regressions with loadings of proximate factors yield the same results as using the population loadings up to an invertible matrix $H$. Note that this theorem has broader implications that go beyond proximate factors. For example, it justifies why an iterative procedure to estimate latent factors leads to a consistent estimator after a few iterations.\footnote{Instead of applying PCA, latent factors can also be estimated by an iterative procedure where for a set of candidate factors a first stage set of loadings is estimated with a time-series regression, which is then used in a second stage to obtain factors in a cross-sectional regression. This procedure is iterated until convergence. For example, \cite{bai2017} use a variation of this approach. Our result shows that in an approximate factor model one step is sufficient to obtain a consistent estimator of the loadings.}
		Next, we will show that the proximate factors themselves are also very close to the population factors.

	\subsection{One-Factor Case}\label{subsec:proxy-one-factor}
		We start with the one-factor model and characterize the correlation between population and proximate factors with a lower bound based on order statistics. Under additional assumptions, we can give explicit expressions for these order statistics with counting statistics and extreme value theory. In both cases, we derive analytical solutions for the lower bound. We use the results for comparative statics and preparing for the more general case.   
		
		The properties of the proximate factors depend on the probabilistic properties of the largest loadings that are used as weights. We denote by $\Lambda_{(m),1}$ the $m$-th largest element in $|\*\Lambda_1|$. Theorem \ref{thm-evt-one-factor} provides a closed-form lower bound on the squared correlation between proximate factors and population factors, $\rho \coloneqq \frac{(\*F^{\top} \tilde{\*F} )^2}{\*F^{\top} \*F \tilde{\*F}^{\top} \tilde{\*F}}$, in terms of the probability distribution of the $m$-th largest loading $P(|\Lambda_{(m),1}| \geq y)$, the signal strength $\frac{1}{T} \*F^{\top} \*F \pl \sigma_{\*F_1}^2$ and the noise level $\sigma_e^2$.

		\begin{theorem}\label{thm-evt-one-factor}
            Suppose Assumptions \ref{ass_factor}-\ref{ass_f_e} hold and there exists a continuous function $\bar{G}_{1,m}(\cdot)$ such that  the complementary cumulative distribution function of the $m$-th order statistic $|\Lambda_{(m),1}|$ converges to $\bar{G}_{1,m}(\cdot)$, that is, $\lim_{N \rightarrow \infty} P(|\Lambda_{(m),1}| \geq y) = \bar{G}_{1,m}(y)$. 
			For any given finite $m$ and any threshold $0 < \rho_0 < 1$, it holds that
			\begin{eqnarray}\label{eqn-evt-one-factor}
			\lim_{N, T\rightarrow \infty} P\left( \rho > \rho_0 \right) \geq \bar{G}_{1,m}(y_m) \qquad \text{with   $y_m = \left(\frac{1+h(m)}{m} \frac{\sigma_{e}^2}{\sigma_{\*F_1}^2} \frac{\rho_0}{1 - \rho_0} \right)^{1/2}$.}
			\end{eqnarray}
		\end{theorem}

		Theorem \ref{thm-evt-one-factor} guarantees that the squared correlation $\rho$ is above the target threshold $\rho_0$ with at least the probability $\bar{G}_{1,m}(y)$. In practice we might set the probability bound to 95\% and find the lower bound for the correlation $\sqrt{\rho_0}$. This theorem allows us to derive comparative statics about when proximate factors work well. Denote the right-hand side of Inequality \eqref{eqn-evt-one-factor} as $\underline{p}\coloneqq \bar{G}_{1,m}(y_m)$. Given a limiting order statistic $\bar{G}_{1,m}(.)$ for the loadings, the exceedance probability $\underline{p}$ depends only on $y_m$:
		\begin{enumerate}[noitemsep]
			\item The larger $\rho_0$, the larger $y_m$ and the smaller the exceedance probability $\underline{p}$; 
			\item The larger the signal-to-noise ratio $\sigma_{\*F_1}/\sigma_e$, the smaller $y_m$ and the larger $\underline{p}$; 
			\item The larger the cross-section dependence of errors $h(m)$, the smaller $\underline{p}$; 
			\item The number of nonzero elements $m$ affects $\underline{p}$ in two ways: First, in most cases, $\frac{1+h(m)}{m}$ decreases with $m$, which raises $\underline{p}$. Second, larger $m$ results in more subtraction terms in $\underline{p}$ with the opposite effect leading to a trade-off. 
		\end{enumerate}

		If the population loadings $\Lambda_{i,1}$ are i.i.d., then we can provide an explicit expression for $\bar{G}_{1,m}(y_m)$ in \eqref{eqn-evt-one-factor}. Furthermore, if the loadings have unbounded support, the correlation between the proximate and population factor converges to one. 
		
		\begin{proposition}\label{prop-one-factor}
			Suppose Assumptions \ref{ass_factor}-\ref{ass_f_e} hold and the loadings $\Lambda_{i,1}$ are i.i.d.  with continuous cumulative density function $\CDF_{|\Lambda_{i,1}|}(y) = P(|\Lambda_{i,1}| \leq y)$, then the result in \eqref{eqn-evt-one-factor} holds with $\bar{G}_{1,m}(y_m)$ equal to
			\begin{eqnarray} \label{thm1_eqn1}
			\bar{G}_{1,m}(y_m) = 1 - \lim_{N \rightarrow \infty} \sum_{j = 0}^{m - 1} {{N}\choose{j}} (1 - \CDF_{|\Lambda_{i,1}|}(y_{m}))^j  \CDF_{|\Lambda_{i,1}|}(y_{m})^{N-j}.
			\end{eqnarray}
			If $\CDF_{|\Lambda_{i,1}|}(y_{m}) < 1$, then $\lim_{N, T\rightarrow \infty} P(\rho > \rho_0) \rightarrow 1$.
		\end{proposition}
		
		Proposition \ref{prop-one-factor} states that if the loadings have unbounded support, then the correlation between the proximate and population factor converges to one. At a first glance, it seems to be at odds with our previous observations that the idiosyncratic component in a proximate factor cannot be diversified away. The intuition behind this consistency result is based on the growing signal-to-noise ratio. If loadings are sampled with an unbounded support independently of the idiosyncratic component, then for growing $N$ the largest loadings are unbounded and their signal-to-noise ratio explodes. Hence with high probability, the largest absolute loadings do not coincide with large idiosyncratic movements and the variation of these cross-section units is essentially only explained by the factor. Hence, selecting the cross-section unit with the largest loading is close to picking the factor itself. Proposition \ref{prop-one-factor} shows that a more dispersed distribution of $|\Lambda_{i,1}|$ leads to a larger $\underline{p}$, where the unbounded support is just an extreme case of this observation.
		
		After proper rescaling, loadings with unbounded support can also be interpreted as approximately sparse population loadings. If some loading elements diverge and we normalize $| \*\Lambda_{1}|=1$, a large number of elements in $\* \Lambda$ need to converge to 0. Hence, a sparse estimator is consistent if the true population model is sparse itself. The important contribution of this paper is that we can also characterize the asymptotic properties of the sparse estimator if the population model is not sparse. 

		
		A more general perspective for providing a lower bound for $\rho$ employs EVT, which relaxes the i.i.d. assumption of $\Lambda_{i,1}$ and which we will also pursue for the multi-factor case. The distribution of the largest loadings can be modeled by EVT under general conditions. 
		We denote the loading with the largest absolute value by $|\Lambda_{(1),1}|$. Under the assumptions of EVT, there exist sequences of constants $\{a_{1,N} > 0\}$, $\{b_{1,N}\}$ such that $P((|\Lambda_{(1),1}|-b_{1,N})/a_{1,N} \leq z) \rightarrow G^\ast_1(z) $ with the generalized extreme value (GEV) distribution $G^\ast_1(z)=\exp\left ( -\left(1+ \xi \left(\frac{z-\mu}{\sigma} \right) \right)^{-1/\xi} \right )$. The three parameters location $\mu$, scale $\sigma$ and shape $\xi$ completely characterize the GEV. The results for the $m$-th largest loading in the case of dependent loadings is more complex and summarized in Lemma \ref{lemma:order-stats-dist} in the Appendix.\footnote{Lemma \ref{lemma:order-stats-dist} is adapted from Theorem 3.3 in \cite{hsing1988extreme}. $|\Lambda_{i,1}|$ is indexed by the cross-section units. We assume that $|\Lambda_{i,1}|$ are exchangeable and can be properly reshuffled to satisfy the strong mixing condition.} Lemma \ref{lemma:order-stats-dist} provides the limiting distribution for the extreme order statistic of a strictly stationary sequence $|\Lambda_{i,1}|$ satisfying a strong mixing condition. 
		In more details, it provides the necessary and sufficient condition such that there exists a sequence $u_{1,N}(\tau)$ and function $G_{1,m}(\tau)$ with $\lim_{N \rightarrow \infty} P(|\Lambda_{(m),1}| \geq  u_{1,N}(\tau)) = G_{1,m}(\tau)$. 

		\begin{proposition}\label{prop:GEV}
			Suppose Assumptions \ref{ass_factor}-\ref{ass_f_e} hold and the population loadings $\Lambda_{i,1}$ and sequence $u_{1,N}(\tau)$ satisfy the assumptions in Lemma \ref{lemma:order-stats-dist}. Then, the $m$-th largest loading satisfies $\lim_{N \rightarrow \infty} P(|\Lambda_{(m),1}| \geq  u_{1,N}(\tau)) = \bar{G}_{1,m}(\tau)$ and the result in \eqref{eqn-evt-one-factor} holds with $\bar{G}_{1,m}(y_m)$ equal to
			\begin{align}
			\bar{G}_{1,m}(y_m) = 1 - e^{-\tau} \left[1 + \sum_{j = 1}^{m-1} c_{1,j} \cdot \frac{\tau^j}{j!}  \right], \label{eqn:GEV}
			\end{align}
			for $u_{1,N}(\tau) = y_m$. Both $u_{1,N}(\cdot) $ and the constants $c_{1,j}$ are determined by $\Lambda_{i,1}$'s distribution and its dependence structure. 
			
			The result simplifies for i.i.d. loadings. Suppose Assumptions \ref{ass_factor}-\ref{ass_f_e} hold, $\Lambda_{i,1}$ is i.i.d. and there exist sequences $\{a_{1,N} > 0\}$, $\{b_{1,N}\}$ such that $P((|\Lambda_{(1),1}|-b_{1,N})/a_{1,N} \leq z)$ converges to a non-degenerative distribution function. Then, $u_{1,N}(\tau) = a_{1,N} z_\tau + b_{1,N}$, $c_{1,j} = 1$ in Equation \eqref{eqn:GEV} and Theorem \ref{thm-evt-one-factor} holds with
			\begin{align}
			\bar{G}_{1,m}(y_m) = 1 - e^{-\tau} \cdot  \sum_{j = 0}^{m-1} \frac{\tau^j}{j!} \qquad \text{with  } \tau= \left( 1+ \xi \left( \frac{(y_m-b_{1,N} )/a_{1,N} -\mu}{\sigma}\right) \right)^{-1/\xi}. \label{eqn:evt-one-factor-iid}
			\end{align}	
			The extreme value parameters $\mu, \sigma, \xi, a_{1,N}$ and $b_{1,N}$ are determined by the distribution of $|\Lambda_{(1),1}|$.
		\end{proposition}

		The sequences $a_{1,N}$ and $b_{1,N}$ determine to which of the three extreme values distributions, Gumbel, Frechet and Weibull, the tail distribution of $|\Lambda_{i,1}|$ belongs. Table \ref{tab:extreme-value-example} lists examples for common distributions and the corresponding extreme value parameters. If the distribution of the loadings is known, its parameters can be estimated with maximum likelihood and for many distributions, the bounds in Proposition \ref{prop:GEV} can be directly inferred. More generally, the parameters of the GEV distribution can be estimated with block maxima methods as described in \cite{ancona2000comparison}, \cite{leadbetter1982extremes} and \cite{hsing1988extreme}.

		\begin{table}[h!]
			\begin{center}
				\tcaptab{Examples of Extreme Value Distributions}
				{\small
					\begin{tabular}{cccccccc}
						\toprule
						Family & Distribution & $G^\ast_1(z)$ & $ a_{1,N}$  & $b_{1,N}$ & $\mu$ & $\sigma$ & $\xi$  \\
						\midrule
						Gumbel & $\Lambda_i \sim \mathrm{Exp}(1)$ & $\exp(-e^{-z})$ & 1 & $N$ & 0 & 1 & 0 \\
						& $ \Lambda_i \sim \mathcal{N}(0,\sigma_\Lambda^2)$ & $\exp(-e^{-z})$ & $\frac{\sigma_\Lambda}{2N \phi(b_{1,N})}$ & $\sigma_\Lambda \Phi^{-1}(1-\frac{1}{2N})$ & 0 & 1 & 0  \\
						& $ \Lambda_i \sim \mathcal{N}(\mu_\Lambda,\sigma_\Lambda^2)$ & $\exp(-e^{-z})$ &  $\frac{1}{N\psi(b_{1,N}| \mu_\Lambda, \sigma_\Lambda)}$ & $\Psi^{-1}\Big(1 - \frac{1}{N}|\mu_\Lambda, \sigma_\Lambda\Big)$ & 0 & 1 & 0  \\
						Frechet & $\CDF_{\Lambda_i}(x) = \exp(-\frac{1}{x})$ & $\exp(-1/z)$ & $N$ & 0 & 1 &  1 & 1 \\
						Weibull & $\Lambda_i \sim \mathrm{U}(0,1)$ & $e^z$ & $1/N$ & 1 & 1 & 1 & $-1$ \\
						\bottomrule
				\end{tabular}}
				\bnotetab{This table shows examples of extreme value distributions. They characterize the tail distribution for i.i.d. loadings: $P((|\Lambda_{(1),1}|-b_{1,N})/a_{1,N} \leq z) \rightarrow G^\ast_1(z) $ with the GEV distribution $G^\ast_1(z)=\exp\left( -\left(1+ \xi \left((z-\mu)/\sigma \right) \right)^{-1/\xi} \right)$ where $\mu$ is the location parameter, $\sigma$ the scale parameter and $\xi$ the shape parameter.
					Hence, for $m=1$ and $y_m = a_{1,N} z_\tau + b_{1,N} $, we have $\bar{G}_{1,m}(y_m) = 1 - G^\ast_1(z_\tau)$ in \eqref{eqn:evt-one-factor-iid}. Here, $\psi(\cdot | \mu_\Lambda, \sigma_\Lambda)$ and $\Psi(\cdot| \mu_\Lambda, \sigma_\Lambda)$ denote the PDF and CDF of the absolute values of the loadings $|\Lambda_i|$ with distribution $ \Lambda_i \sim \mathcal{N}(\mu_\Lambda,\sigma_\Lambda^2)$. $\Phi$ and $\psi$ denote the CDF and PDF of a standard normal distribution.}
				\label{tab:extreme-value-example}
			\end{center}
		\end{table}

		\subsection{Multi-Factor Case}\label{subsec:proxy-multi-factor}

		The arguments of the one-factor model extend to a model with multiple factors. As our simulations and empirical results illustrate, the simple thresholding method provides proximate factors that explain very well the non-sparse PCA factors. However, formalizing the properties of the lower bound is more challenging. First, we have to work with generalized correlations instead of simple correlations to take into account the possible rotations of the latent factors. 
		Second, the sparse factor weight vectors are in general not orthogonal to each other in contrast to the PCA-loadings. Third, we need to study the joint distribution of extreme values for multiple loading vectors. In order to derive analytical bounds, we need to include additional correction terms. Under additional assumptions, we can sharpen the theoretical bounds and neglect the correction terms.
		
		One of the major problems when comparing two different sets of factors is that a factor model is only identified up to invertible linear transformations. We will again use the generalized correlation to measure how many factors two sets have in common. The generalized correlation between proximate factors and population factors is defined as
		$$\rho \coloneqq \tr \left( (\*F^\T  \*F/T)^{-1} (\*F^\T  \tilde{\*F}/T) (\tilde{\*F}^\T \tilde{\*F}/T)^{-1} (\tilde{\*F}^\T \*F/T) \right).$$ 
		Here, the generalized correlation $\rho$, ranging from 0 to the number of factors $K$, measures how close $\*F$ and $\tilde{\*F}$ are. If $\tilde{\*F}$ lies in the space spanned by $\*F$, then $\rho=K$.\footnote{The individual generalized correlations are the square root of the eigenvalues of the matrix $(\*F^\T \*F/T)^{-1} (\*F^\T  \tilde{\*F}/T) (\tilde{\*F}^\T \tilde{\*F}/T)^{-1} (\tilde{\*F}^\T \*F/T)$.}

		
		The multivariate bounds in Theorem \ref{thm-evt-multi-factor} have three adjustments relative to the one-factor model. The first adjustment arises from using the generalized correlation measure. Hence, we derive bounds on the sum of eigenvalues instead of a simple correlation. These bounds use eigenvalue inequalities which essentially change the value of $y_m$ in the one-factor case in Theorem \ref{thm-evt-one-factor}. The second complication arises because the sparse weights $ \wt{\* W}$ are in general not orthogonal to each other. We will subtract a correction term from the lower bound that provides a bound on this effect. If the proximate weights are ``non-overlapping'', that is, do not share nonzero entries for the same indices, the correction term vanishes. The third complication in the multi-factor case is the relationship between the sparse and non-sparse eigenvectors. As $ {\*W}^{\top} \*\Lambda H$ is in general not a diagonal matrix we need to take the off-diagonal elements into account which adds a second correction term to the lower bound. Under additional assumptions on the population loadings and non-overlapping sparse weights, we can also neglect this correction term. Below we illustrate the intuition behind these correction terms with an example.

		\begin{theorem}\label{thm-evt-multi-factor}
		Suppose Assumptions \ref{ass_factor}-\ref{ass_f_e} hold and there exists a continuous function $\bar{G}_{m}(\cdot)$ such that the joint $m$-th order statistic for the columns $\*V_k$ in $\*V = [v_{i,k}] := \*\Lambda H $, denoted as $(| v_{(m),1}|, \cdots, |v_{(m),K}| )$, follows $\lim_{N \rightarrow \infty} P(|v_{(m),1}| \geq y_1, \cdots, |v_{(m),K}| \geq y_K) = \bar{G}_{m}(y_1, \cdots, y_K)$. Then for any given finite $m$, $(y_{1}, \cdots, y_{K})$, $0< \underline{\gamma} \leq 1$ and $\bar \gamma_{\*V,e} \geq 1$ we have  
			\begin{align}
			\nonumber & \lim_{N, T\rightarrow \infty} P\left( \rho \geq K - \frac{\bar \gamma_{\*V,e} \sigma_{e}^2}{m\underline{\gamma}^2 } \sum_{k=1}^K \frac{1}{D_k y_{k}^2} \right)   \\
			\geq&  \bar{G}_{m}(y_{1}, \cdots, y_{K})  - \lim_{N \rightarrow \infty} P\Big(\sigma_{\min}(B) < \underline{\gamma}\Big) - \lim_{N \rightarrow\infty} P\Big(\norm{{\*W}^\T \Sigma_e {\*W} }_2 > \bar \gamma_{\*V,e} \sigma_{e}^2 \Big) \label{eqn-evt-multi-factor} 
			\end{align}
			where $D = \diag(D_1, D_2, \cdots, D_K)$ are the eigenvalues of $\Sigma_{F}^{1/2}\Sigma_{\Lambda}\Sigma_{F}^{1/2}$ in decreasing order and 
			\begin{align*}
			B =[b_{kl}] = \Big[ \frac{D_k^{1/2} \sum_{i = 1}^{m}v_{ \fidx_l(i),k} v_{\fidx_l(i),l}  }{ D_l^{1/2} \sum_{i = 1}^{m} v_{\fidx_l(i),l}^2  } \Big], \quad {\*W}     = \begin{bmatrix}
			\frac{ \*V_1  \odot \*M_1 }{\norm{\*V_1  \odot \*M_1 }} & \frac{\*V_2  \odot \*M_2 }{\norm{\*V_2  \odot \*M_2 }} & \cdots & \frac{ \*V_K  \odot \*M_K }{\norm{ \*V_K  \odot \*M_K }}
			\end{bmatrix}.
			\end{align*}
			$\sigma_{\min}(B)$ is the minimum singular value of $B\in \+R^{K \times K}$ and $\fidx_k(l)$ is the index of the $l$-th largest element in $\*V_k$. ${\*W}\in \+R^{N \times K}$ are the sparse weights based on the rotated population loadings $\*V$ and $\*M_k = [M_{k,i}] \in \{0,1\}^{N}$ where $M_{k,i} = 1$ if there exists $1 \leq j \leq m$ such that $i = \fidx_k(j)$ and 0 otherwise. 
		\end{theorem}
		
		In the proof of Theorem \ref{thm-evt-multi-factor} we show that the generalized correlation can be bounded from below by
		$\rho \geq K - \norm{{\*W}^\T \Sigma_e {\*W} }_2 \tr \Big( \Big( \frac{1}{T} {\*W}^\T  \*V D \*V^\T {\*W} \Big)^{-1}  \Big) + o_p(1)$. If the two matrices that appear in this product converge to diagonal matrices, then both correction terms disappear.   
		
		The additional parameter $\underline{\gamma}$ accounts for the impact of the off-diagonal terms to the trace of the matrix $\left( \frac{1}{T} {\*W}^\T  \*V D \*V^\T {\*W}\right)^{-1}$. In the special case when these off-diagonal terms converge to zero, it holds that $\sigma_{\min}(B) = 1$ and $P(\sigma_{\min}(B) < \underline{\gamma}) = 0$ for $\underline{\gamma}=1$.
		In order to illustrate this point we will consider the simple example of $m=1$ and $K=2$, i.e. a two factor model where the proximate factors only take the largest loading values. Without loss of generality, we assume the first element $v_{1,1}$ is the largest element for the first vector $\*V_1$ and the second element $v_{2,2}$ is the largest element for $\*V_2$. Then, we have
		\begin{align*}
		B=  \begin{pmatrix} D_1^{1/2} & 0 \\ 0 &  D_2^{1/2} \end{pmatrix} \*V^\T \*W \begin{pmatrix} \frac{1}{D_1^{1/2} v_{1,1}} & 0 \\ 0 & \frac{1}{D_2^{1/2} v_{2,2}}\end{pmatrix}
		= \begin{pmatrix} 1 & \frac{D_1^{1/2} v_{2,1}}{D_2^{1/2} v_{2,2}}  \\  \frac{D_2^{1/2} v_{1,2}}{D_1^{1/2} v_{1,1}}  & 1 \end{pmatrix}. 
		\end{align*}		
		The smallest singular value $\sigma_{\min}(B)$ of the matrix $B$ is a measure for how large the off-diagonal elements are. In the special case of a diagonal matrix $B$, the smallest singular value equals 1. Our lower bound will depend on an asymptotic probabilistic bound for $\sigma_{\min}(B)$, which depends on the distribution of the loading vectors. For example in the special case of independent vectors $\*V_{k}$, that have i.i.d normally distributed elements, a random element of the rotated loading vector $\*V_{k}$ is $O_p(1)$, while the largest element is unbounded in the limit. Hence, the ratio is $\frac{v_{1,2}}{v_{1,1}}=o_p(1)$. For this special case the multi-factor case is a direct extension of the one-factor model, i.e. we can apply the one-factor result to each factor in the multi-factor model individually. Another special case is when the population loadings have sparse structures themselves. More specifically, it is sufficient to have non-overlapping weight vectors $\*W_{k}$ and ``locally sparse'' $\*V$, that is, the vector $\*V_{k}$ has elements that are sufficiently small for the indices of nonzero elements of $\*W_{l \neq k}$. In both special cases we have $ \lim_{N \rightarrow \infty} \sigma_{\min}(B)=1$ while our theorem allows us to also take into account a more complex structure.  

		The additional parameter $\bar \gamma_{\*V,e}$ accounts for the impact of the off-diagonal terms in ${\*W}^\T \Sigma_e {\*W}$. In the special case of non-overlapping entries in $\wt{\*W}$, it holds for $\bar \gamma_{\*V,e}=1+h(m)$ that $\lim_{N \rightarrow\infty} P\Big(\norm{{\*W}^\T \Sigma_e {\*W} }_2 > \bar \gamma_{\*V,e} \sigma_{e}^2 \Big)=0$ and hence we can neglect this term. This is implied by independent loading vectors. In that case the joint $m$-th order statistic for the columns $\*V_k$ also simplifies to the product of the marginals as summarized by the next proposition. 
		
		\begin{proposition}\label{prop:evt-multi-factor-simplified}
			Suppose Assumptions \ref{ass_factor}-\ref{ass_f_e} hold,$\tilde{\*V}_k$ and $\tilde{\*V}_l$ are asymptotically independent for $k\neq l$, and there exists a continuous function $\bar{G}_{k,m}(\cdot)$ such that the $m$-th order statistic for each column $\*V_k$ follows $\lim_{N \rightarrow \infty} P(|v_{(m),k}| \geq y_k) = \bar{G}_{k,m}(y_k)$. Then Inequality \eqref{eqn-evt-multi-factor} simplifies to 
			\begin{align}
			\lim_{N, T\rightarrow \infty} P\left( \rho \geq K - \frac{(1+h(m)) \sigma_{e}^2}{m\underline{\gamma}^2 } \sum_{k=1}^K \frac{1}{D_k y_{k}^2 } \right)  
			\geq   \prod_{k=1}^K \bar{G}_{k,m}(y_{k})  - \lim_{N \rightarrow \infty} P\Big(\sigma_{\min}(B) < \underline{\gamma}\Big), \label{eqn:evt-multi-factor-simplified}
			\end{align}
			If each column $\*V_k$ satisfies the assumptions of Proposition \ref{prop:GEV}, then each $\bar{G}_{k,m}$ equals Equation \eqref{eqn:GEV} for dependent $v_{i,k}$ and Equation \eqref{eqn:evt-one-factor-iid} for i.i.d. $v_{i,k}$.
		\end{proposition}
		The same comparative statics holds as in the one-factor case. In addition, the bound is sharper if the rotated loadings for different factors have less dependence and share fewer nonzero entries.

		Our analysis uses the conventional PCA rotation of the loadings that implies $\hat{\*\Lambda}^{\top}\hat{\*\Lambda}/N=I_K$. All our results extend to any arbitrary rotation of the PCA loadings $\hat{\*\Lambda} \tilde H$ where $\tilde H$ is a full-rank matrix. We simply replace the estimated PCA loadings by $\hat{\*\Lambda} \tilde H$, the rotated population loadings by $\*\Lambda H \tilde H=\*V \tilde H$ and rotated population factor by $F ({H^{\top}})^{-1} ({\tilde H}^{\top})^{-1}$ and all theorems and propositions continue to hold. Theorem \ref{thm-evt-multi-factor} provides guidance on how to choose a rotation $\tilde H$ that provides proximate factors with a higher guaranteed correlation. If the estimated weights are non-overlapping we can avoid or at least reduce the correction terms which increases the correlation. However, this can come a cost of taking loading elements with lower absolute value and hence lower signal to noise ratio. Our theorem provides the exact trade-off between these choices.

		\subsection{Weighted Proximate Factors}\label{subsec:weighted-estimator}
		
		Weighted proximate factors are a generalization of our approach to select more informative sparse weights. It is motivated by the observations that the cross-sectional units with the largest loadings relative to the standard deviation of the idiosyncratic noise should provide a better approximation to the factors. We estimate the weighted proximate factors by applying a cross-sectional weight $N \times N$ to the observations before extracting the PCA factors and selecting the largest weights. This is different from selecting a $K \times K$ rotation of the PCA factors as discussed in the previous section. 
		
		We denote by $\Weight \in \mathbb{R}^{N \times N}$ a cross-sectional weight matrix. The weighted proximate factors are obtained from a \textit{weighted PCA}. First, we cross-sectionally re-weight the observations $\*X^{\twt}:= \Weight \*X$ and then apply PCA to $\frac{1}{NT} \*X^{\twt} (\*X^{\twt})^{\top}$ to obtain the weighted PCA loadings $\hat{\*\Lambda}^{\twt}$.\footnote{Note that the weighted PCA loadings $\hat{\*\Lambda}^{\twt}$ need to be multiplied by $\Weight^{-1}$ to be consistent estimates for the population loadings $\* \Lambda$ up to the usual rotation matrix.} Second, we apply the same procedure as in Section \ref{subsec:estimator} to obtain the sparse weighted factors weights $\wt{\*W}^{\twt}$ that have $m$ nonzero elements and length 1. Third, we regress $\*X^{\twt}$ on $\wt{\*W}^{\twt}$ to obtain the proximate factors $\tilde{\*F}^{\twt}$:
		\begin{align}\label{eqn-tilde-f-def-wt}
		\tilde{\*F}^{\twt} = {\*X^{\twt}}^\T \wt{\*W}^{\twt} \big( (\wt{\*W}^{\twt})^\T \wt{\*W}^{\twt}  \big)^{-1} .
		\end{align}
		Finally, the loadings of the proximate factors $\tilde{\*\Lambda}^{\twt}$ are the coefficients of the regression of $\*X$ on $\tilde{\*F}^{\twt}$:
		\begin{align}\label{eqn-tilde-lam-def-wt}
		\tilde{\*\Lambda}^{\twt} = \*X \tilde{\*F}^{\twt} \big(  (\tilde{\*F}^{\twt})^\T \tilde{\*F}^{\twt} \big)^{-1}. 
		\end{align}
		We propose to use the inverse of the standard deviation of the residuals as weights. This choice is motivated by efficiency arguments and can also improve the efficiency for the non-sparse weighted PCA. \cite{bai2003inferential} shows that under the assumptions of an approximate factor model and for $\sqrt{N}/T \rightarrow 0$ the estimated PCA loadings follow the same distribution as an OLS regression of the population loadings on $\*X$:
		\begin{align*}
		\sqrt{N} \left(H^{\top} \hat F_t - F_t  \right) = \left( \frac{1}{N} \* \Lambda^{\top} \*\Lambda \right)^{-1} \frac{1}{\sqrt{N}} \*\Lambda \*e_t + o_p(1).
		\end{align*}
		Let us set $\Weight = \Sigma_e^{-1/2}$. It is straightforward to show that the weighted PCA factors from a regression of $\*X^{\twt}$ on $\hat{\*\Lambda}^{\twt}$ have the same distribution as the following generalized least square regression
		\begin{align*}
		\sqrt{N} \left( (H^{\twt})^{\top}  \hat F_t^{\twt} - F_t  \right) = \left( \frac{1}{N} \* \Lambda^{\top}\Sigma_e^{-1}  \*\Lambda \right)^{-1} \frac{1}{\sqrt{N}} \*\Lambda^{\top} \Sigma_e^{-1}  \*e_t + o_p(1).
		\end{align*}
		Under additional assumptions on the residuals this choice of the weighting matrix will lead to the most efficient pointwise estimator for the factors while the estimator for the loadings is not affected by the weighting.\footnote{The efficiency argument requires that the time and cross-sectional dependency of the residuals can be separated. This would be satisfied if residuals are generated as $\*e=\*A_N \*\epsilon \*A_T$ where the $N \times T$ matrix $\*\epsilon$ has i.i.d. entries and the weak cross-sectional correlation is captured by the $N \times N$ matrix $\*A_N$, while the $T \times T$ matrix $\*A_T$ captures the time-series correlation.} However, estimating the large dimensional residual covariance matrix is challenging and appears to be empirically unstable.\footnote{There are various shrinkage estimator for estimating a large dimensional residual covariance matrix, for example the hard-thresholding estimator proposed in \cite{fan2013large}.} A simpler and more stable weighting matrix is the diagonal matrix $\Weight = \text{diag}(\Sigma_e^{-1/2})$ with the inverse noise standard deviations as diagonal elements as proposed by \cite{jones2001extracting} and \cite{boivin2006more}. A simple consistent estimator is based on the diagonal elements of $\hat{\Sigma}_e = \frac{1}{T} \sum_{t = 1}^{T} \hat{\*e}_t \hat{\*e}_t^\T$. This would result in the most efficient estimator if the residuals are uncorrelated. Under heteroskedastic and correlated residuals it could still be more efficient. The result is analogous to weighted and generalized least square regressions. 
		
		Under heteroskedastic noise the most informative proximate weights should be the largest loadings after reweighing them by the noise standard deviation. Using the same setup as in the toy example in Section \ref{sec:toy} but with $e_{it} \overset{\iid}{\sim}(0,\sigma_{e,i}^2)$ and the nonzero weight for unit $i$, the correlation is maximized by the largest $|\Lambda_{i,1}/\sigma_{e,i}|$:
		\begin{align*}
		\widehat{\textnormal{corr}}(\tilde{\*F}, \*F)^2=\frac{1  }{ 1   + \frac{1}{\sigma_f^2 \left( \Lambda_{i,1}/\sigma_{e,i} \right)^2}}.
		\end{align*}
		
		
		Proposition \ref{prop:weighted} provides a complete characterization for the generalized correlation $\rho^{\twt}$ between the weighted proximate factors $\tilde{\*F}^{\twt}$ and the population factors $\*F$ for diagonal weighting matrices. 
		\begin{proposition}\label{prop:weighted}
			Suppose Assumptions \ref{ass_factor},\ref{ass_error} and \ref{ass_f_e} hold, the weighting matrix $\Weight = \diag(\weight_1, \cdots, $\\$ \weight_N)$ is a diagonal matrix with $0 < \weight_i < \infty$ for all $i$, and Assumption \ref{ass_loading} holds for the weighted loadings $\Weight\*\Lambda$.  
            In the one-factor case, we assume that the $m$-th order statistic of $|\Weight\*\Lambda|$, denoted by $|\Weight\*\Lambda|_{(m),1}$, satisfies $\lim_{N \rightarrow \infty} P(|\Weight\*\Lambda|_{(m),1} \geq y) = \bar{G}_{1,m,\weight}(y)$ for some continuous function $\bar{G}_{1,m,\weight}(\cdot)$. Then Inequality \eqref{eqn-evt-one-factor} in Theorem \ref{thm-evt-one-factor} continues to hold with
			\begin{eqnarray}\label{eqn-evt-one-factor-weighted}
			\lim_{N, T\rightarrow \infty} P\left( \rho^{\twt} > \rho_0 \right) \geq \bar{G}_{1,m,\weight}( y_{m,\weight}^{\twt} ),
			\end{eqnarray}
			where $y_{m,\weight}^{\twt} = \frac{\sigma_{e,\weight}}{\sigma_{e}} y_m $ and $\sigma_{e,\weight}^2 = \max_{i,j,t,s} \weight_i \weight_j \cdot |\+E[e_{it} e_{js}]|$ . If $\Weight=\diag(\Sigma_e^{-1/2})$ then
			\[\bar{G}_{1,m,\weight}\left(\frac{\sigma_{e,\weight}}{\sigma_{e}} y_m \right) \geq \bar{G}_{1,m}(y_{m}). \]
			In the multi factor case, we assume that the $m$-th order statistic for each weighted column $\Weight\*V_k$ denoted as $(|\weight v|_{(m),1}, \cdots, |\weight v|_{(m),K} )$ satisfies $\lim_{N \rightarrow \infty} P(|\weight v|_{(m),1} \geq y_1, \cdots, |\weight v|_{(m),K} \geq y_K) = \bar{G}_{m,\weight}(y_1, \cdots, y_K)$ for some continuous function $\bar{G}_{m,\weight}(\cdot)$. Then Inequality \eqref{eqn-evt-multi-factor} in Theorem \ref{thm-evt-multi-factor} continues to hold with $\bar{G}_{m}(y_{1}, \cdots, y_{K})$ replaced by $\bar{G}_{m,\weight} \left(\frac{ \sigma_{e,\weight}}{ \sigma_{e}} y_{1}, \cdots, \frac{ \sigma_{e,\weight}}{ \sigma_{e}} y_{K} \right)$ and $\*V$ replaced by $\Weight\*V$. If $\Weight=\diag(\Sigma_e^{-1/2})$ then 
			\[\bar{G}_{m,\weight}\left(\frac{ \sigma_{e,\weight}}{ \sigma_{e}} y_{1}, \cdots, \frac{ \sigma_{e,\weight}}{ \sigma_{e}} y_{K} \right) \geq \bar{G}_{m}(y_{1}, \cdots, y_{K}). \]
		\end{proposition}
		The proposition states that in the one-factor case the weighted proximate factor based on the inverse noise standard deviation will always have a higher correlation bound than the unweighted proximate factor. In this sense the weighted proximate factor strictly dominates. The situation is more complex in the multi-factor case. There are three terms in \eqref{eqn-evt-multi-factor} and we can always improve the first term using the weighted approach with the inverse standard errors. However, the comparison is unclear for the second and third terms because these two terms involve maximum and minimum eigenvalues of matrices determined by the distribution of rotated loadings and the weighting itself can change the distribution. Our simulation and empirical analyses indicates that the weighted proximate factors always perform at least as well as the unweighted proximate factors.

		\subsection{Number of Nonzero Elements $m$}
		
		The results for the multi-factor model generalize to different levels of sparsity for each proximate factor. To simplify the presentation of the results, we have used the same number of nonzero elements $m$ for each of the proximate factors. When we use a different number of nonzero elements $m_j$ to construct each proximate factor $\tilde{\*F}_j$, the asymptotic bounds have to be modified as explained by the following proposition.
		\begin{proposition}\label{prop:vary-m}
			Assume the number of nonzeros in $\wt{\*W}_j$ is $m_j$ that varies with $j$ and the assumptions in Theorem \ref{thm-evt-multi-factor} hold. Then Theorem \ref{thm-evt-multi-factor} continues to hold with the generalized correlation threshold on the left-hand side of Inequality \eqref{eqn-evt-multi-factor} replaced by $K - \frac{\bar \gamma_{\*V,e} \sigma_{e}^2}{\underline{\gamma}^2 } \sum_{k=1}^K \frac{1}{m_k D_k y_{k}^2}$ and the right-hand side complementary order statistic replaced by $\bar{G}_{m_1,...,m_K}(y_{1}, \cdots, y_{K})$.
		\end{proposition}

		We will show in our empirical results that stronger PCA factors can usually be approximated well with fewer nonzero units compared to weaker PCA factors. This result directly follows from the theoretical bounds with factor specific sparsity $m_j$.

		In order to select the number of nonzero elements $m$ in the proximate factors, we propose two approaches. First, we use the theoretical results to choose $m$ that achieves a target probabilistic  lower bound for the generalized correlation. In our empirical analysis we set the probability lower bound to 95\% and calculate the generalized correlation $\rho_0(m)$ that is achieved with at least 95\% for different $m$, that is, $P (\rho \geq \rho_0(m)) \geq 0.95$. Then, we select $m$ to achieve a target generalized correlation.  As we show in the simulations and empirical data there is typically a sharp increase in the correlation by setting $m$ to 5-10\% of the data but only minor increases by including more observations. 
		Second, we use a completely data-driven approach. We estimate the average generalized correlation between the estimated PCA factors and our proximate factors and set $m$ to achieve a target correlation. This second approach can be implemented on a training data set to obtain the factor weights and the nonzero elements $m$ and then evaluated out-of-sample on test data. We also pursue this second approach in our empirical analysis. We confirm again that for our applications 5-10\% of the observations are sufficient to replicate the PCA factors very well.

		\subsection{Comparison with Sparse PCA}\label{subsec:comp-with-spca}

		Our method is closely related to using Lasso \citep{tibshirani1996regression} to sparsify loadings estimated from PCA. Most papers about estimating sparse principal components are more or less related to the method proposed by \cite{zou2006sparse}. They estimate sparse loadings by adding an $\ell_1$ penalty term\footnote{A generalization is to add an $\ell_2$ penalty term which leads to an Elastic Net penalty \citep{zou2005regularization}.} to the objective function (\ref{eqn-quad-loss}) yielding the formulation
		\begin{eqnarray}\label{eqn:spca-obj}
		(\bar{\*F}, \bar{\*\Lambda}) &=& \arg\,\min_{\*F, \*\Lambda} \sum_{i=1}^N \norm{\*X_i - \*F \*\Lambda^\T \*X_i}_2^2  + \alpha \sum_{j=1}^K  \norm{\*\Lambda_j}_1.  \\ 
		&& \text{s.t. } \*F^\T  \*F = I_{K} \nonumber
		\end{eqnarray}		
		Both methods, proximate factors and sparse PCA, are easy to implement. However, our method is different from sparse PCA in three aspects. First, our method does not impose the sparsity assumption on the loadings. The sparse factor weights are used to construct the proximate factors. However, the loadings of the proximate factors are in general non-sparse. It is a key insight of our paper that the factor weights and loadings should be treated differently.   
		Second, even though all methods have tuning parameters, these parameters work differently. In our method, $m$ is the number of nonzero entries in each factor weight vector. Although $\alpha$ in Lasso controls the sparsity of loadings, $\alpha$ cannot control the exact number of nonzero entries in individual loadings.\footnote{The number of nonzero elements in all loadings for the Lasso estimator is monotonically decreasing in the parameter $\alpha$. Hence, under certain conditions, there is a one-to-one mapping between the level of sparsity of the full loading matrix and the $\ell_1$ penalty weight. However, except for special cases, $\alpha$ cannot control the sparsity of a specific loading vector. Using different $\ell_1$ penalties $\alpha_j$ for different loading vectors allows for more control for the sparsity level in a specific loading vector, but it is not straightforward and not always possible to select any desired sparsity level.}
		Third, the thresholding in our approach is essentially a variable selection without changing the proportion of the largest weights, while the shrinkage in sparse PCA selects and rescales the largest loadings. It is well-known that a Lasso estimator is biased because it scales down the population parameters. A similar phenomenon occurs with sparse PCA where the largest elements in each loading vector are scaled differently. For a simple case, we can show in Proposition \ref{prop:comp-lasso} below that this downscaling makes sparse PCA factors always worse than our proximate factors that do not suffer from this bias. 
		
		We want to point out that the comparison between sparse PCA and our proximate factors is not the key element of this paper. Our important insight, that factor weights are different from factor loadings, could also be taken into account with sparse PCA methods, that is, the sparse PCA loadings can be used as factor weights and the loadings could be obtained from a second stage regression. This is different from how sparse PCA is considered in the literature. The important contribution of our paper is to provide a statistical framework for a sparse estimator of a non-sparse population model which is missing for sparse PCA. Furthermore, our method gives applied researchers more control over setting a desired sparsity level for individual factors and is very transparent in its application.  
		

		We consider now the special case of a one-factor model with cross-sectionally i.i.d. distributed errors. For sparse PCA with $\ell_1$ penalty similar to \cite{jolliffe2003modified} we can map this estimator into our framework. In more detail, we estimate the loadings $\bar \Lambda$ by minimizing $\norm{\*X-\*\Lambda \hat{\*F}^\T }^2_F + \alpha \norm{\Lambda}_1$ and use them as weights to construct the sparse PCA factors $\bar{\*F} =\*X^\T \bar{\*\Lambda} (\bar{\*\Lambda}^\T  \bar{\*\Lambda})^{-1}$.
		The difference in the generalized correlation between the our method and sparse PCA is
		\begin{align}
		\nonumber \Delta \rho = & \tr \left( (\*F^\T \*F/T)^{-1} (\*F^\T  \tilde{\*F}/T) (\tilde{\*F}^\T \tilde{\*F}/T)^{-1} (\tilde{\*F}^\T \*F/T) \right) \\
		& - \tr \left( (\*F^\T \*F/T)^{-1} (\*F^\T \bar{\*F}/T) (\bar{\*F}^\T  \bar{\*F}/T)^{-1} (\bar{\*F}^\T  \*F/T) \right) \label{deltarhodef}
		\end{align}
		\begin{proposition} \label{prop:comp-lasso}
			Assume a one factor model and $\frac{1}{T} \*e_i^\T \*e_i \rightarrow \sigma_e^2$.
			The two tuning parameters $m$ and $\alpha$ are chosen such that proximate factors and sparse PCA select the same number of nonzero elements. 
			As $N, T \rightarrow \infty$, we have with probability one $\Delta \rho \geq 0$.
		\end{proposition}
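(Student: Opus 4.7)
The overall strategy is to reduce the comparison to a one-dimensional optimization on a common $m$-dimensional support. Both $\tilde F$ and $\bar F$ are linear combinations of the same $m$ rows of $X$ indexed by the same support set $S$; the asymptotic squared correlation with $F$ depends only on the direction of the corresponding weight vector $w$ restricted to $S$ and, as I will show, takes the closed form
\begin{equation*}
g(w) = \frac{\sigma_F^2 (w^\top \Lambda_S)^2}{\sigma_F^2 (w^\top \Lambda_S)^2 + \sigma_e^2 \|w\|^2}.
\end{equation*}
By Cauchy--Schwarz, $g$ is maximized precisely when $w$ is collinear with $\Lambda_S$, so the proposition reduces to showing that the proximate-factor weights asymptotically achieve this collinearity while sparse PCA's soft-thresholded weights generically do not.

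First I characterize the sparse PCA minimizer. Because $\hat F$ is held fixed, the objective $\|X-\Lambda\hat F^\top\|_F^2 + \alpha\|\Lambda\|_1$ decouples across $i$; the PCA identities $\hat F^\top X_i = T\hat s\,\hat\lambda_i$ and $\hat F^\top\hat F = T\hat s$ reduce each scalar subproblem to a soft-thresholding with solution $\bar\lambda_i = \mathrm{sign}(\hat\lambda_i)(|\hat\lambda_i| - c)_+$, where $c = \alpha/(2T\hat s)$. Choosing $\alpha$ to match $m$ forces the support of $\bar\Lambda$ to equal the top-$m$ support $S$ used by our method. The two methods therefore disagree only in the weights on $S$: we use weights proportional to $\hat\Lambda_S$, whereas sparse PCA uses $\hat\Lambda_S$ shrunk by $c$ in absolute value.

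Next I derive the asymptotic correlation formula. Writing $X = \Lambda F^\top + e$ and invoking $F^\top F/T \to \sigma_F^2$, the cross-sectional independence giving $e_i^\top e_j/T \to \sigma_e^2 \delta_{ij}$, and Assumption \ref{ass_f_e} giving $F^\top e_i = O_p(\sqrt T)$, I obtain $F^\top X^\top w/T \to \sigma_F^2(\Lambda_S^\top w)$ and $w^\top X X^\top w/T \to \sigma_F^2(\Lambda_S^\top w)^2 + \sigma_e^2\|w\|^2$ for any $w$ supported on $S$ with bounded entries; substituting yields $\mathrm{corr}(F, X^\top w)^2 \to g(w)$. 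Cauchy--Schwarz gives $g(w) \leq g_{\max} := \sigma_F^2\|\Lambda_S\|^2/(\sigma_F^2\|\Lambda_S\|^2 + \sigma_e^2)$ with equality iff $w$ is parallel to $\Lambda_S$. Theorem \ref{thm:uniform-consistency-loading} implies $\hat\Lambda_S = H\Lambda_S + o_p(1)$ for a scalar $H$ in the one-factor case, so $\tilde W \to \Lambda_S/\|\Lambda_S\|$ and $g(\tilde W) \to g_{\max}$. In contrast, $\bar\Lambda_S \to H\Lambda_S - c\,\mathrm{sign}(\Lambda_S)$, which is collinear with $\Lambda_S$ only if $|\lambda_i|$ is constant on $S$ -- a measure-zero event under a continuous loading distribution. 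Hence $g(\bar\Lambda) \leq g_{\max}$ asymptotically, yielding $\Delta\rho \geq 0$ almost surely.

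The main obstacle is the random-support bookkeeping: one must verify that with probability tending to one the sparse PCA support coincides with the top-$m$ support of $\hat\Lambda$ along the same sample path on which Theorem \ref{thm:uniform-consistency-loading} is invoked. This follows from a standard top-$m$ stability argument -- the continuity of the loading distribution ensures that the gaps between consecutive order statistics of $|\hat\Lambda|$ eventually dominate the $o_p$ estimation error -- and the remaining pieces are algebraic.
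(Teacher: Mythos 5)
Your proof takes essentially the same route as the paper's: compute the asymptotic squared correlation as a function of the restricted weight vector on the support $S$, apply Cauchy--Schwarz to show the maximum is attained exactly when the weights are collinear with $\Lambda_S$, and observe that hard thresholding preserves collinearity in the limit while the $\ell_1$ soft-thresholding $\bar\lambda_i = \mathrm{sign}(\hat\lambda_i)(|\hat\lambda_i|-c)_+$ generically breaks it. You go slightly further in spelling out the soft-thresholding map explicitly (the paper instead cites the LARS equivalence) and in flagging the support-matching step, but the key inequality and the overall structure coincide.
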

		Proposition \ref{prop:comp-lasso} states that our method approximates the population factor at least as well as sparse PCA. Although Proposition \ref{prop:comp-lasso} is stated in a restrictive setting, we show in simulations that $\tilde{\*F}$ has a higher generalized correlation with $\*F$ even when the assumptions in Proposition \ref{prop:comp-lasso} do not hold.

		\section{Simulation}\label{sec:simulation}

		We evaluate proximate factors (PPCA), weighted proximate factors (PPCA (wt)) based on inverse standard errors, and sparse PCA (SPCA). We confirm that proximate factors are a very good approximation of the population factors and that the lower bounds are an accurate description of the exceedance probabilities for the generalized correlations. 
		
		Our baseline model assumes
		\[\Lambda_i \overset{\iid}{\sim} \mathcal{N}(0,I_{K}), \quad F_t \overset{\iid}{\sim} \mathcal{N}(0,\Sigma_F),\quad   e_{it} \overset{\iid}{\sim} \mathcal{N}(0,\sigma_i^2), \quad \sigma_i  \overset{\iid}{\sim} \mathrm{U}(0.5,1). \]
		In this model errors are heterogeneous. \footnote{We also simulate data with time-series dependence or with cross-section dependence in the errors. The results are presented in Section \ref{sec:add-simulation} in the Internet Appendix.} 
		We study the impact of varying $N$, $T$, $\Sigma_F$ and $m$. For each set of parameters and a fixed threshold $\rho_0$, we run 1000 Monte-Carlo simulations to calculate the empirical probability of $P(\rho \geq \rho_0)$ by counting the percentage of times that $\rho \geq \rho_0$.   
		
		\subsection{Probabilistic Bounds}\label{susec:sim-prob}
		
		We start with the case of only one factor and study the probability bound provided in Theorem \ref{thm-evt-one-factor} and Proposition \ref{prop:GEV}. We set $\rho_0=0.95$ for the squared correlation which means that we want to evaluate the probability that the proximate factors have correlation of at least 97.5\% with the population factors. The probabilist lower bound $\underline{p}=\bar{G}_{1,m}(y_m)$ depends on $y_m$ which in our setup simplifies to $y_m=\left(\frac{1}{m  \sigma_{\*F_1}^2} \frac{0.95}{1 - 0.95} \right)^{1/2}$ as $h(m)=0$ and $\sigma_{e}^2 = 1$. Equation \eqref{eqn:evt-one-factor-iid} and the GEV for $\Lambda_{i} \sim \mathcal{N}(0, 1)$ in Table \ref{tab:extreme-value-example} simplify the expression for $\bar{G}_{1,m}(y_m)$: 
		\begin{align}
		\bar{G}_{1,m}(y_m) = 1 - e^{-\tau}  \sum_{j = 0}^{m-1} \frac{\tau^j}{j!} \;\; \text{with  } \tau= \exp \Big(- \frac{y_m - b_{1,N} }{1/(2N \psi(b_{1,N}) ) }   \Big) \text{ and } b_{1,N} = \Psi^{-1}\Big(1-\frac{1}{2N}\Big). \label{eqn:simtoy}
		\end{align}
		
		\begin{figure}[t!]
			\centering
			\tcapfig{Correlations in One-Factor Model as a Function of $m$ }
			\begin{subfigure}{.5\textwidth}
				\centering
				\includegraphics[width=1\linewidth]{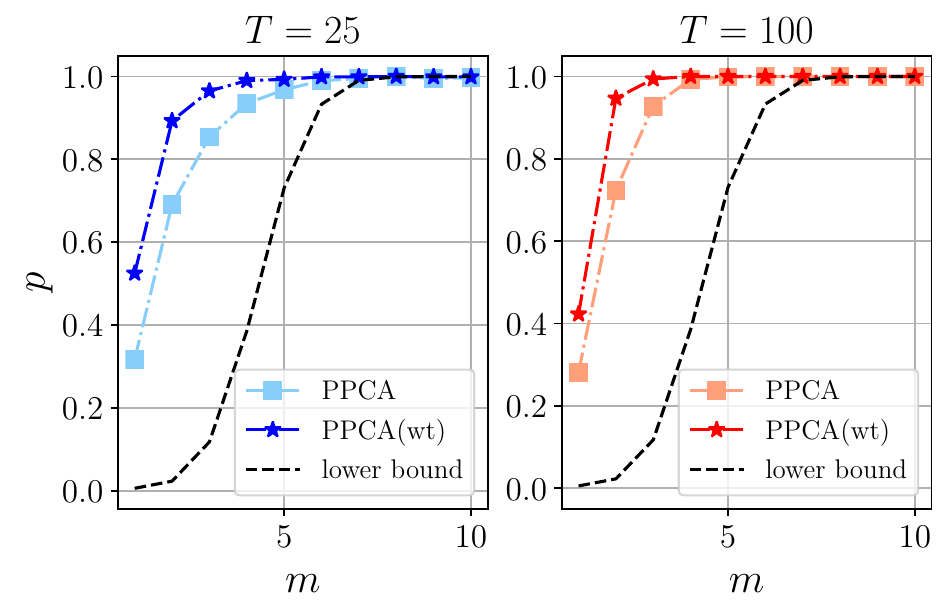}
				\subcaptab{$\sigma_{\*F_1} = 1.0$}
			\end{subfigure}%
			\begin{subfigure}{.5\textwidth}
				\centering
				\includegraphics[width=1\linewidth]{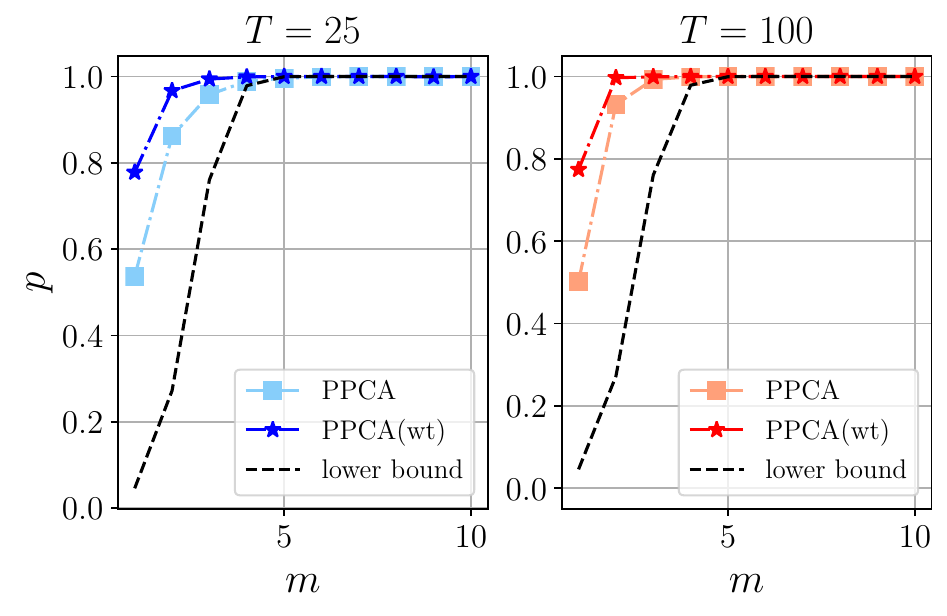}
				\subcaptab{$\sigma_{\*F_1} = 1.2$}
			\end{subfigure}
			\bnotefig{This figure compares $P(\rho \geq \rho_0)$ based on 1,000 Monte Carlo simulations and the probability lower bound $\underline{p} = \bar{G}_{1,m}(y_m)$ as a function of $m$. PPCA are the unweighted proximate factors and PPCA (wt) use the inverse standard errors as weights. We set $N=100$ and $\rho_0=0.95$. The left plots use $\sigma_{\*F_1} = 1.0$ while the right plots have the higher $\sigma_{\*F_1} = 1.2$. We calculate the probabilities for $T=25$ and $T=100$.  Both, $P(\rho \geq \rho_0)$ and $\underline{p}$, are very close to 1 with about 5-10\% of units $m$ to construct the proximate factors.}
			\label{fig:thm-evt-one-factor}
		\end{figure}

		\begin{figure}[t!]
			\centering
			\tcaptab{Generalized Correlations as a Function of $N$}
			\begin{subfigure}{.5\textwidth}
				\centering
				\includegraphics[width=1\linewidth]{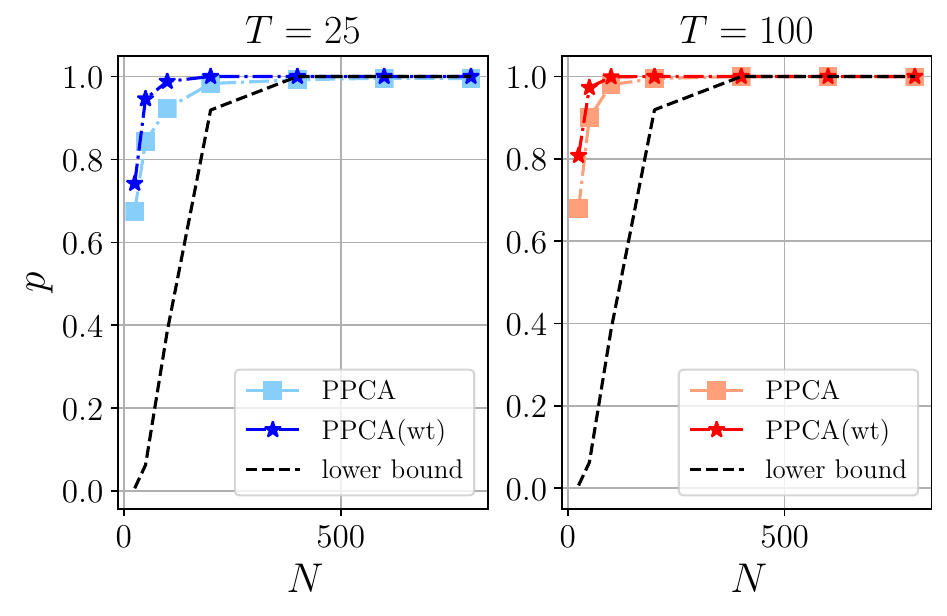}  \subcaptab{One-factor model}
				\label{fig:thm-evt-one-factor-N}
			\end{subfigure}%
			\begin{subfigure}{.5\textwidth}
				\centering
				\includegraphics[width=1\linewidth]{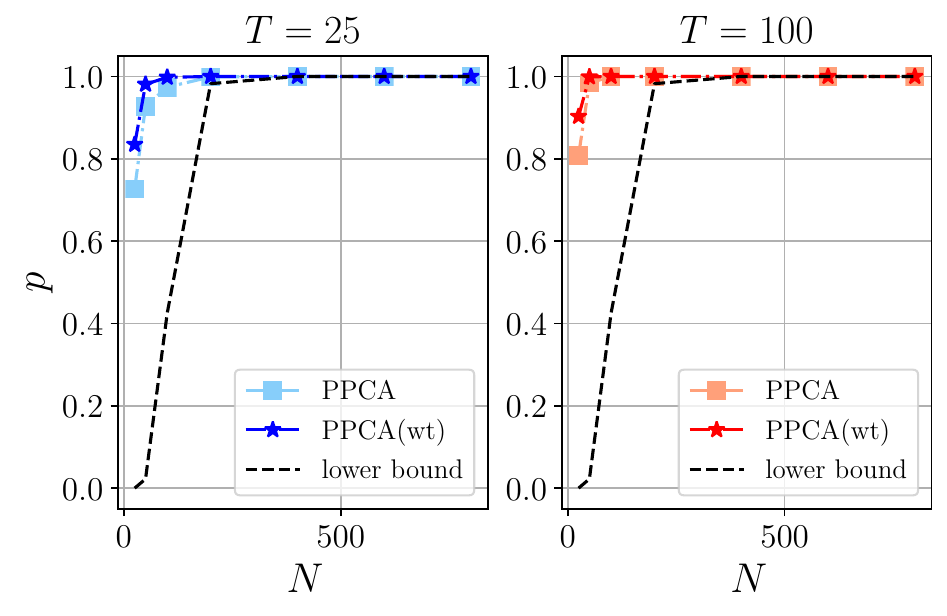}
				\subcaptab{Multi-factor model}
				\label{fig:thm-evt-multi-factor-N}
			\end{subfigure}
			\bnotetab{This figure compares $P(\rho \geq \rho_0)$ based on 1,000 Monte Carlo simulations and the probability lower bound $\underline{p}$ as a function of $N$. The nonzero elements are set to $m=4$ in all cases. PPCA are the unweighted proximate factors and PPCA (wt) use the inverse standard errors as weights. The one-factor model uses $\rho_0 = 0.95$ and $\sigma_{\*F_1}=1.0$, while the multi-factor model sets $\rho_0 = 1.9$ and $[\sigma_{\*F_1}, \sigma_{\*F_2}]=[1.2, 1.0]$. The lower bound is $\underline{p} = \bar{G}_{1,m}(y_m)$ for one factor and is defined in Equation \eqref{eqn:mod-lower-bound-multi-factor} for two factors. We calculate the probabilities for $T=25$ and $T=100$. Both, $P(\rho \geq \rho_0)$ and $\underline{p}$, are very close to 1 for $N > 250$.
			}
		\end{figure}

		Figure \ref{fig:thm-evt-one-factor} shows how $P(\rho \geq \rho_0)$ and $\underline{p}$ change with different $m$, $T$ and signal-to-noise ratio $\sigma_{\*F_1}/\sigma_e$. First, all probabilities are monotonically increasing in $m$, that is, less sparse factors are closer to the population factors. Second, there is a sharp incline in probabilities for $m \leq 10$ but then $P(\rho \geq \rho_0)$ becomes very close to 1. This means that $\tilde{\*F}$ has 97.5\% correlation with $\*F$, but is based on less than 10\% of all observations. Third, in both cases the lower bound $\underline{p}$ closely approximates the probability $P(\rho \geq \rho_0)$ for $m \geq 5$ and hence can be used for guidance on evaluating the proximate factors. Fourth, the larger the signal-to-noise ratio, the larger $P(\rho \geq \rho_0)$ and $\underline{p}$. Fifth, proximate factors are more highly correlated with the population factors for larger $T$. Equation (\ref{eqn-evt-one-factor}) has an $o_p(1)$ term and this term decreases with $T$. Last but not least, the weighted proximate factors perform better in this setup with heteroskedastic errors.

		Figure \ref{fig:thm-evt-one-factor-N} shows that, as $N$ increases, both $P(\rho \geq \rho_0)$ and $\underline{p}$ converge to 1 with only $m=4$ nonzero weight elements. This confirms our result in Proposition \ref{prop-one-factor}. As $\Lambda_i$ is normally distributed with unbounded support, our approach selects the $m$ units that are themselves very close to the population factors.
		
				\begin{figure}[t!]
			\centering
			\tcapfig{Generalized Correlations in Multi-Factor Model as a Function of $m$}
			\begin{subfigure}{.5\textwidth}
				\centering
				\includegraphics[width=1\linewidth]{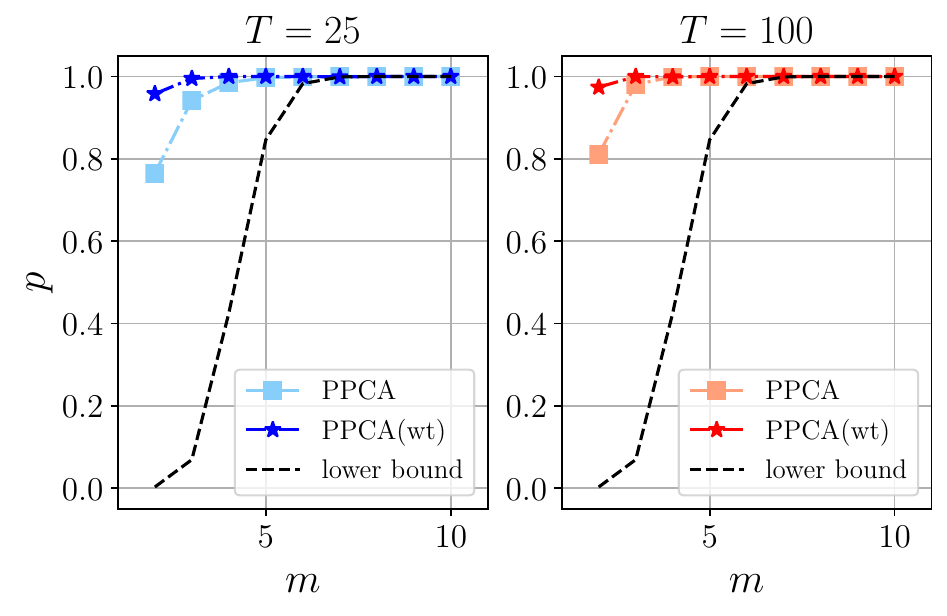}
				\subcaptab{$[\sigma_{\*F_1}, \sigma_{\*F_2}] = [1.2, 1.0]$}
			\end{subfigure}%
			\begin{subfigure}{.5\textwidth}
				\centering
				\includegraphics[width=1\linewidth]{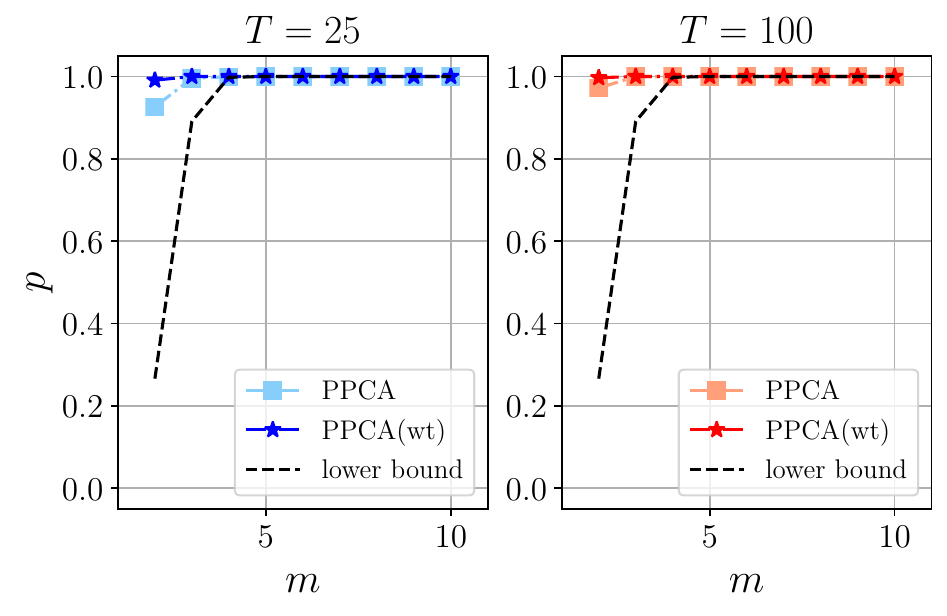}
				\subcaptab{$[\sigma_{\*F_1}, \sigma_{\*F_2}] = [1.5, 1.2]$}
			\end{subfigure}
			\bnotefig{This figure compares $P(\rho \geq \rho_0)$ based on 1,000 Monte Carlo simulations and the probability lower bound defined in Equation \eqref{eqn:mod-lower-bound-multi-factor} as a function of $m$. We have $K=2$ and set $\rho_0=1.9$ and $N=100$. PPCA are the unweighted proximate factors and PPCA (wt) use the inverse standard errors as weights. The left plots use $[\sigma_{\*F_1}, \sigma_{\*F_2}] = [1.2, 1.0]$ while the right plots have the higher $[\sigma_{\*F_1}, \sigma_{\*F_2}] = [1.5, 1.2]$. We calculate the probabilities for $T=25$ and $T=100$.  Both, $P(\rho \geq \rho_0)$ and $\underline{p}$, are very close to 1 with about 5-10\% of units $m$ to construct the proximate factors.}
			\label{fig:thm-evt-multi-factor}
		\end{figure}
		
		Next, we consider the case of two factors, i.e. $K = 2$ and study the probability bound provided in Theorem \ref{thm-evt-multi-factor} and Proposition \ref{prop:evt-multi-factor-simplified}. We set $\rho_0= K \times 0.95 = 1.9$ and assume that the factors are independent with $\Sigma_F = \diag(\sigma_{\*F_1}^2, \sigma_{\*F_2}^2)$. For independent and normally distributed loadings the non-overlapping condition is satisfied and the correction terms in \eqref{eqn:evt-multi-factor-simplified} can be neglected. We set $y_1=y_2$, which simplifies the bound to
		\begin{align}\label{eqn:mod-lower-bound-multi-factor}
		\lim_{N, T\rightarrow \infty}  P(\rho > \rho_0 ) \geq \underline{p} = \bar{G}_{1,m}(y_{1}) \bar{G}_{2,m}(y_2)\;\; \text{with } y_1=y_2=\left(\frac{1}{m}\Big( \frac{1}{\sigma_{\*F_1}^2} + \frac{1}{\sigma_{\*F_2}^2}  \Big)\frac{1}{K-\rho_0} \right)^{1/2}
		\end{align}
		In our special case $ \bar{G}_{1,m}(y_{1})$ and $\bar{G}_{2,m}(y_2)$  take the same form as in the one-factor case in \ref{eqn:simtoy}. Figure \ref{fig:thm-evt-multi-factor} displays the probability $P(\rho \geq \rho_0)$ and $\underline{p}$ as a function of $m$ for different $T$ and factor variances in the multi-factor case. All the results from the one-factor case carry over to multiple factors.

		\subsection{Comparison with Sparse PCA}\label{subsec:sim-comp-wtih-spca}
		
		We compare the estimation of the factors, loadings and the common component of our methods PPCA and PPCA (wt) with SPCA. We have already shown theoretically that in the special case of a one-factor model with homoskedastic errors, the proximate factors provide a better estimator for the population factors than sparse PCA. We consider now the more general setting of multiple factors with dependent and heteroscedastic errors. For the various estimators we calculate the generalized correlation of estimated factors $\hat{\*F}$ and loadings $\hat{\*\Lambda}$ with the population values and also calculate the root-mean-squared error (RMSE), 
		\[ \mathrm{RMSE} = \left(\frac{1}{NT} \sum_{i=1}^N \sum_{t=1}^T \left( X_{it}-  \hat X_{it} \right)^2 \right)^{1/2}  \text{ with }  \hat X_{it} = \hat{\Lambda}_i^\T \hat{F}_t,\]
		for different estimators $\hat{\*F}$ and $\hat{\*\Lambda}$. 
		The loadings for PPCA and PPCA (wt) are obtained from a second stage regression on the factors as specified in \eqref{eqn-tilde-lam-def} and \eqref{eqn-tilde-lam-def-wt}. The SPCA factor $\bar{\*F}$ and loadings $\bar{\*\Lambda}$ are the solution to the optimization problem \eqref{eqn:spca-obj} for different $\ell_1$ penalties $\alpha > 0$.  In order to make various approaches comparable, we choose the number of nonzero elements $m_j$ for each factor weight in PPCA  and PPCA (wt) to match the number of nonzero elements in each factor weight for SPCA for a given penalty $\alpha$. This means that all methods have the same degree of sparsity for each factor but can use different entries and magnitudes for the weights.
		
		We report the in-sample (training) and out-of-sample (test) results. For the first case, we estimate the parameters on the training data set and calculate the goodness-of-fit measures on it. For the second case, we use the factor portfolio weights estimated on the training data set to construct the factors and loadings on the test data set.\footnote{The loadings for SPCA on the test data set are the same as on the training data set. However, the loadings of PPCA and PPCA (wt) are different on the training and test data as they are estimated in a second stage regression.}

		Motivated by our empirical application, we simulate a $K=5$ factor model where $\Sigma_{F} = I_K$ and errors are correlated and heterogeneous:
		\begin{itemize}
			\item Heteroskedasticity: $e_{it} = \sigma_i v_{it}$, $\sigma_i \stackrel{\iid}{\sim} \mathrm{U}(1, 3)$ and $v_{it} \stackrel{\iid}{\sim} \mathcal{N}(0,1)$ 
			\item Cross sectional dependence: $\*e_t \stackrel{\iid}{\sim} \mathcal{N}(0, \Sigma_e)$, where $\Sigma_e = (c_{ij}) \in \+R^{N \times N}$ with $c_{ij} = 0.5^{|i - j|}$
		\end{itemize}

		Figure \ref{fig:sim-rho-rmse} reports the generalized correlations for factors and loadings and RMSE for the different methods. On both training and test data, PPCA and PPCA (wt) have higher correlations with the population factors and smaller RMSE than SPCA.  Interestingly PPCA and PPCA (wt) can actually provide as good a fit as conventional PCA without thresholding. Moreover, PPCA (wt) better approximates the population factors and has smaller RMSE than unweighted PPCA which supports our arguments in Section \ref{subsec:weighted-estimator}. 
		\begin{figure}[t!]
			\tcaptab{Comparison between PPCA, PPCA (wt) and SPCA}
			\begin{adjustwidth}{-0.6cm}{}
				\centering
				\begin{subfigure}{.7\textwidth}
					\centering
					\includegraphics[width=1\linewidth]{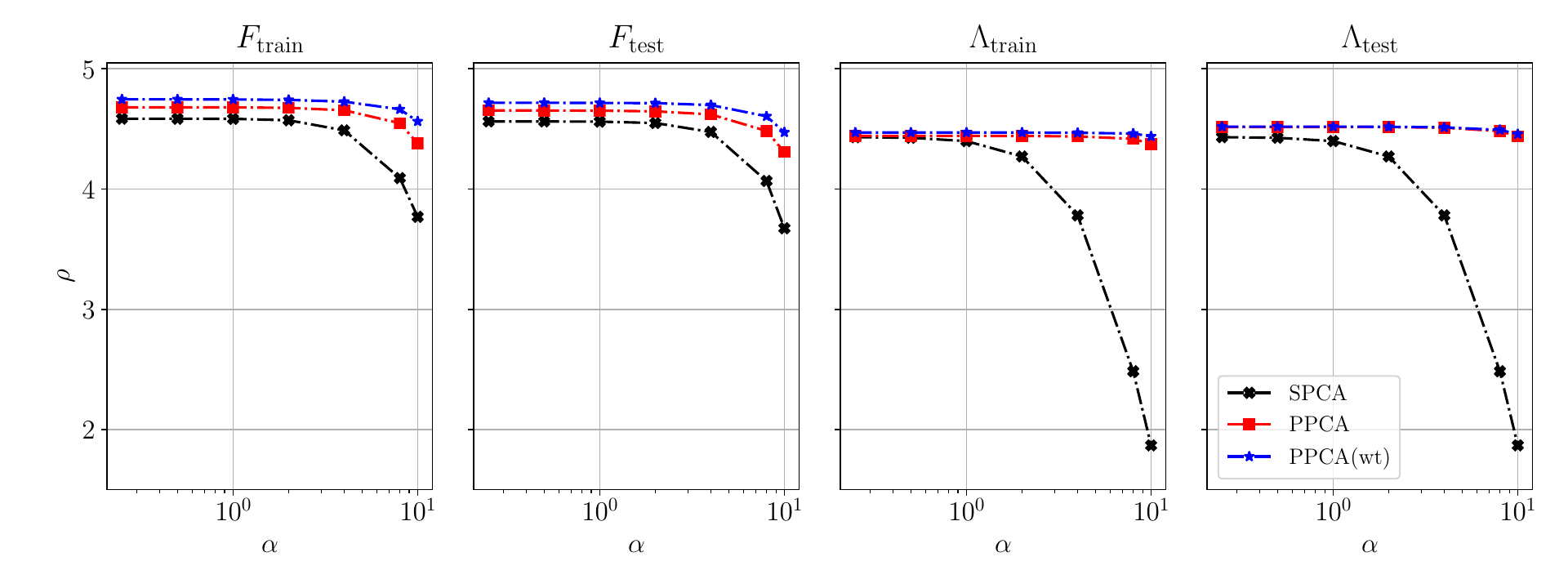}
					\subcaptab{Generalized correlations}
				\end{subfigure}%
				\begin{subfigure}{.35\textwidth}
					\centering
					\includegraphics[width=1\linewidth]{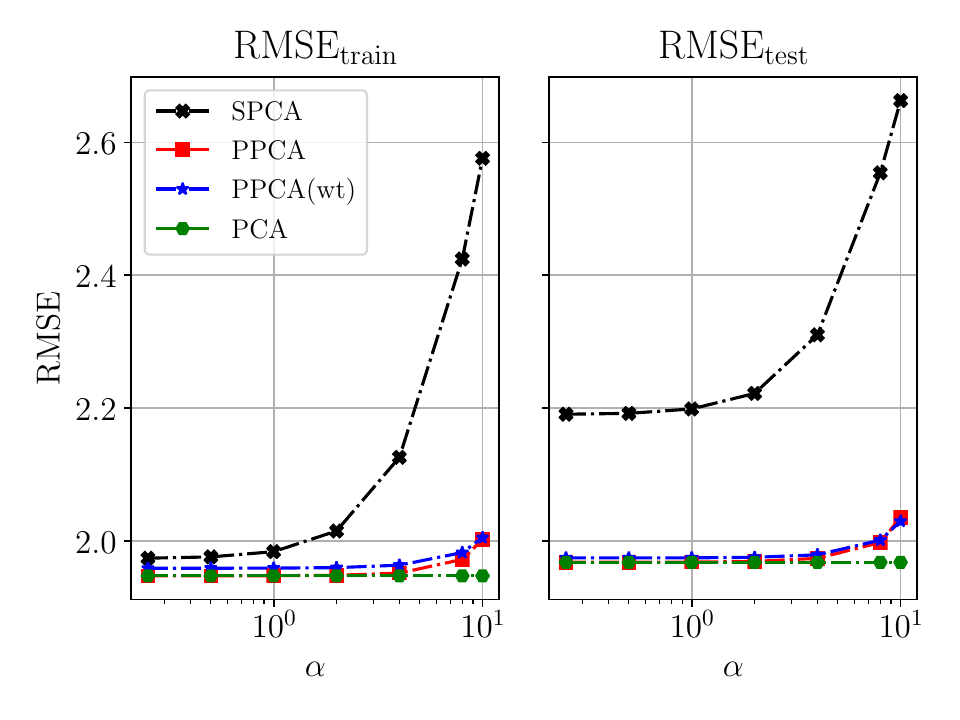}
					\subcaptab{RMSE}
				\end{subfigure}
			\end{adjustwidth}
			\bnotefig{This figure compares the in-sample and out-of-sample generalized correlations for factors and loadings, and in-sample and out-of-sample RMSE in $\*X$ for PPCA, PPCA (wt) and SPCA. We set $N=100$, $T=100$ and $K=5$. In order to achieve the same sparsity level for various methods, we first choose $\alpha$, the $\ell_1$ penalty for SPCA, and set the number of nonzero weights $m_j$ for each factor in PPCA and PPCA (wt) to obtain the same number as SPCA. PPCA and PPCA (wt) have higher in-sample and out-of-sample generalized correlations for both factors and loadings. Moreover, PPCA and PPCA (wt) have lower in-sample and out-of-sample RMSE.}
			\label{fig:sim-rho-rmse}
		\end{figure}
		
		We have already argued in Proposition \ref{prop:comp-lasso} that under certain conditions sparse PCA factors provide a worse approximation to factors than PPCA. This result seems to continue to hold in more general setups. More importantly, estimating sparse loadings in a population model with non-sparse population loadings is simply miss-specified. It is striking how much lower the correlations of the SPCA loadings are with the population loadings. This loading estimation error leads to the substantially larger RMSE. Both, PPCA and PPCA (wt), have correlations and errors that can be almost twice as good as SPCA. 

		\section{Empirical Application}\label{sec:empirical}

		In this section, we apply our method to two relevant datasets, 370 financial portfolios and 128 macroeconomic variables. We compare the proximate factors $\tilde{\*F}$ with the PCA factors $\hat{\*F}$ by two metrics, the generalized correlation $\rho$ between $\tilde{\*F}$ and $\hat{\*F}$ and the proportion of variation explained by $\tilde{\*F}$ and $\hat{\*F}$. In both datasets, proximate factors constructed from our approach with 5-10\% of cross-section observations are very close to the conventional PCA factors. As proximate factors are composed of only a few cross-section observations, we can find economically meaningful labels for each latent factor.
		
		\subsection{Financial Portfolio Data}
		We study the monthly returns of 370 portfolios sorted in deciles based on 37 anomaly characteristics from 07/1963 to 12/2016. For each characteristic, we have ten decile-sorted portfolios that are updated yearly and are linear combinations of returns of U.S. firms in CRSP. These are the same portfolios studied in \cite{lettau2020factors} and are compiled by \cite{kozak2017shrinking}.\footnote{We thank \cite{kozak2017shrinking} for allowing us to use their data.} These 37 anomaly characteristics are listed in Table \ref{tab:list-anomaly} in the Internet Appendix.\footnote{\cite{kozak2017shrinking} use a set of 50 anomaly characteristics. We use 37 of those characteristics with the longest available cross-sections. The same data is studied in \cite{lettau2020factors}.} Each of these portfolios can be interpreted as an actively managed portfolio based on a signal that has been shown to be relevant for the risk-return tradeoff. In contrast to individual stock data these managed portfolios are more stationary and \cite{lettau2020factors} show that they can be modeled by a stable factor structure.

		\begin{figure}[t]
			\centering
			\tcapfig{Financial Portfolio Data: Generalized Correlation and Explained Variance}
			\begin{subfigure}{.5\textwidth}
				\centering
				\includegraphics[width=0.8\linewidth]{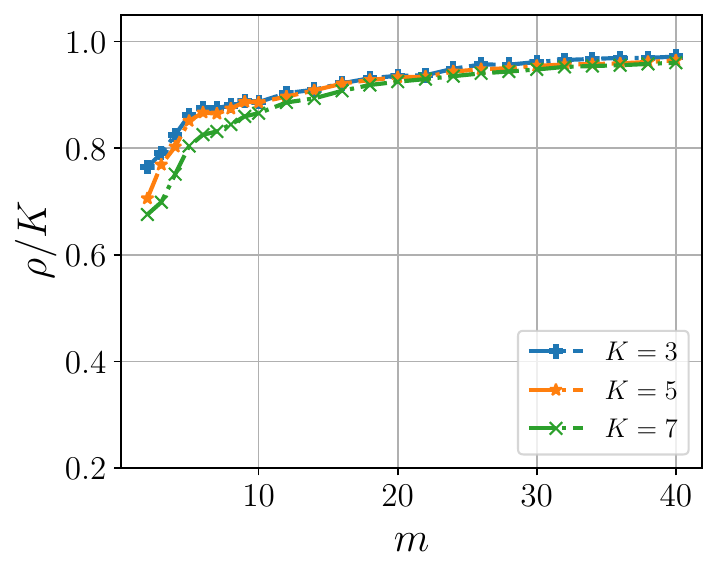}
				\subcaptab{Generalized Correlations}
			\end{subfigure}%
			\begin{subfigure}{.5\textwidth}
				\centering
				\includegraphics[width=0.83\linewidth]{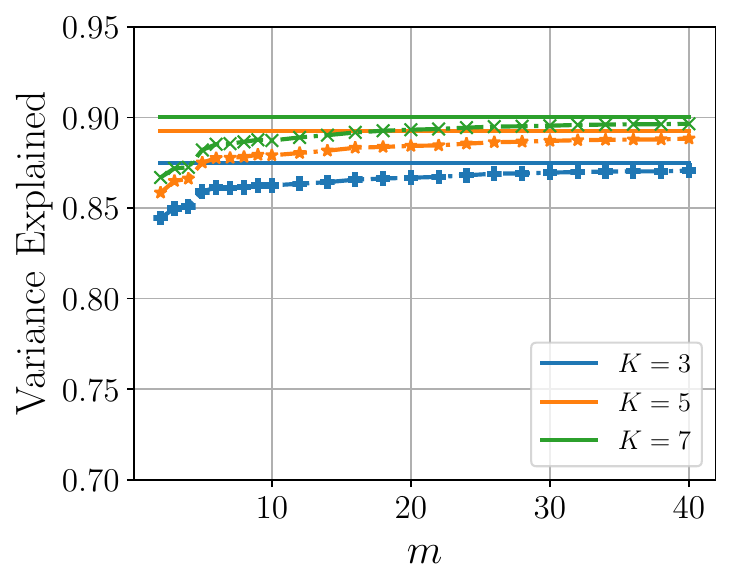}
				\subcaptab{Variance Explained}
			\end{subfigure}
			\bnotefig{The left figure shows the normalized generalized correlation $\rho/K$  between $\tilde{\*F}$ and $\hat{\*F}$ as a function of $m$. The right figure shows the proportion of variation explained by $\tilde{\*F}$ and $\hat{\*F}$ as a function of $m$. Each proximate factor has the same number of nonzero elements $m$. $N$ is 370 and $T$ is 638 in this data. In both figures $K$ varies from 3 to 7. Proximate factors constructed with only 5\% of the portfolios are very close to the non-sparse PCA factors.}
			\label{fig:single-sorted-portfolios}
		\end{figure}

		\begin{table}[t!]
			\centering
			\tcaptab{Financial Portfolio Data: Generalized Correlation Decomposition}
			\begin{tabular}{cccccc}
				\toprule
				\diagbox{$m$}{} & $\hat{\*F}_1$ & $\hat{\*F}_2$ & $\hat{\*F}_3$ & $\hat{\*F}_4$ & $\hat{\*F}_5$  \\
				\midrule
				10 & 0.993 & 0.912 & 0.771 & 0.918 & 0.837  \\
				20 & 0.995 & 0.948 & 0.883 & 0.949 & 0.890 \\
				30 & 0.996 & 0.965 & 0.935 & 0.966 & 0.910 \\
				40 & 0.997 & 0.971 & 0.958 & 0.975 & 0.923 \\
				\bottomrule
			\end{tabular}
			\bnotetab{This table shows the generalized correlation between each $\hat{\*F}_j$ and all $\tilde{\*F}$ in a model with 5 factors. $N$ is 370 and $T$ is 638.  These generalized correlations correspond to the $R^2$ from a regression of each $\hat{\*F}_j$ on all $\tilde{\*F}$. The sum of each row equals $\rho$. In general, stronger factors have larger signal-to-noise ratio and can be better approximated with a smaller $m$.}
			\label{tab:single-sorted-portfolios}
		\end{table}
		
		In Figure \ref{fig:single-sorted-portfolios}, we calculate the generalized correlation $\rho$ between estimated PCA and proximate factors and compare the proportion of variation explained by these factors for different number of factors $K$. We normalize the generalized correlation by the number of factors. The closer $\rho/K$ to 1, the better the sparse factors approximate the non-sparse PCA factors.  The more portfolios we include in the proximate factors, the larger the correlation $\rho$ and the amount of variation explained. When the number of nonzero entries $m$ in each sparse loading is around 20, corresponding to around 5\% of the cross-section units, the ratio is close to 0.9, implying that the average correlation between each proximate factor in $\tilde{\*F}$ and each estimated factor in $\hat{\*F}$ is around 95\%. The explained variation is defined as $\left(\sum_{i=1}^N\sum_{t=1}^T \hat X^2_{it} \right)/ \left(\sum_{i=1}^N\sum_{t=1}^T X^2_{it} \right)$, where $\hat X_{it}=\hat \Lambda_i^{\top} \hat F_t$ is the based on the estimated factors and loadings. As the data is demeaned the explained variation equals the percentage of variance explained in the data. By construction, PCA factors maximize the explained variance in-sample. However, the variance explained by $ \tilde{\*F}$ becomes very close to that by $\hat{\*F}$ when $ \tilde{\*F}$ is constructed with the 10 cross-section units with the strongest signals.

		Table \ref{tab:single-sorted-portfolios} shows the generalized correlations for each of the five factors with their proximate version. As by construction, the PCA factors are uncorrelated, the individual generalized correlations correspond to the $R^2$s from a linear regression of each PCA factor on the five proximate factors. Even with only 10 portfolios, we can capture each of the latent factors very well, in particular the first two factors. 30 portfolios are sufficient to almost perfectly replicate the latent factors.\footnote{As many investors trade on factors, trading on proximate factors can also significantly reduce trading costs.}

		We study each of the five proximate factors in more detail. Figure \ref{fig:single-sorted-portfolios-4th-loading} provides a clear picture of the assets used to construct the fourth factor, labeled as the momentum factor. We use the fourth factor as an example and present the figures for the other factors in the Internet Appendix. First, the fourth proximate factor is only composed of five strategies related to momentum, which are Industry Momentum, 6-month Momentum, 12-month Momentum, Value Momentum and Value Momentum Profitability. Second, this factor is a long-short factor, i.e., it has negative weights for the smallest decile portfolios and positive weights for the largest decile portfolios. Most financial factors are constructed as long-short factors to capture the difference between the extreme characteristics. For comparison, we also show the portfolio weights for the fourth PCA factor in Figure \ref{fig:single-sorted-portfolios-4th-loadingPCA}. The pattern suggests that the fourth factor is related to momentum, but our proximate factor provides the argument that the information for this factor is almost completely captured by momentum portfolios.

		\begin{figure}[t!]
			\centering
			\tcapfig{Financial Portfolio Data: Portfolio Weights of 4th Proximate Factor}
			\includegraphics[width=1\linewidth]{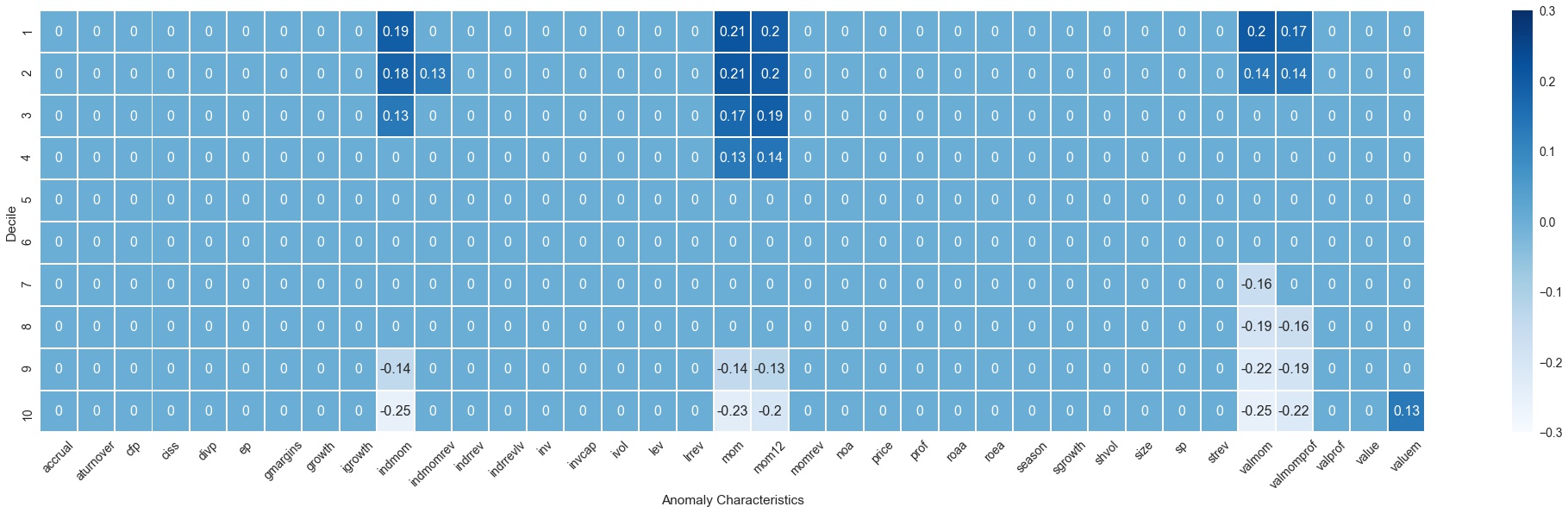}
			\bnotefig{This figure shows the portfolio weights of the 4th proximate factor. The 30 nonzero entries in the portfolio weights are composed of the extreme quantiles of momentum related characteristics. These factor weights are positive for the low quantiles and negative for the high quantiles. This proximate factor can be interpreted as a long-short momentum factor.  The names and description of the 37 anomaly characteristics are listed in Table \ref{tab:list-anomaly} in the Internet Appendix.}
			\label{fig:single-sorted-portfolios-4th-loading}
		\end{figure}
		
		\begin{figure}[t!]
			\centering
			\tcapfig{Financial Portfolio Data: Portfolio Weights of 4th PCA Factor}
			\includegraphics[width=1\linewidth]{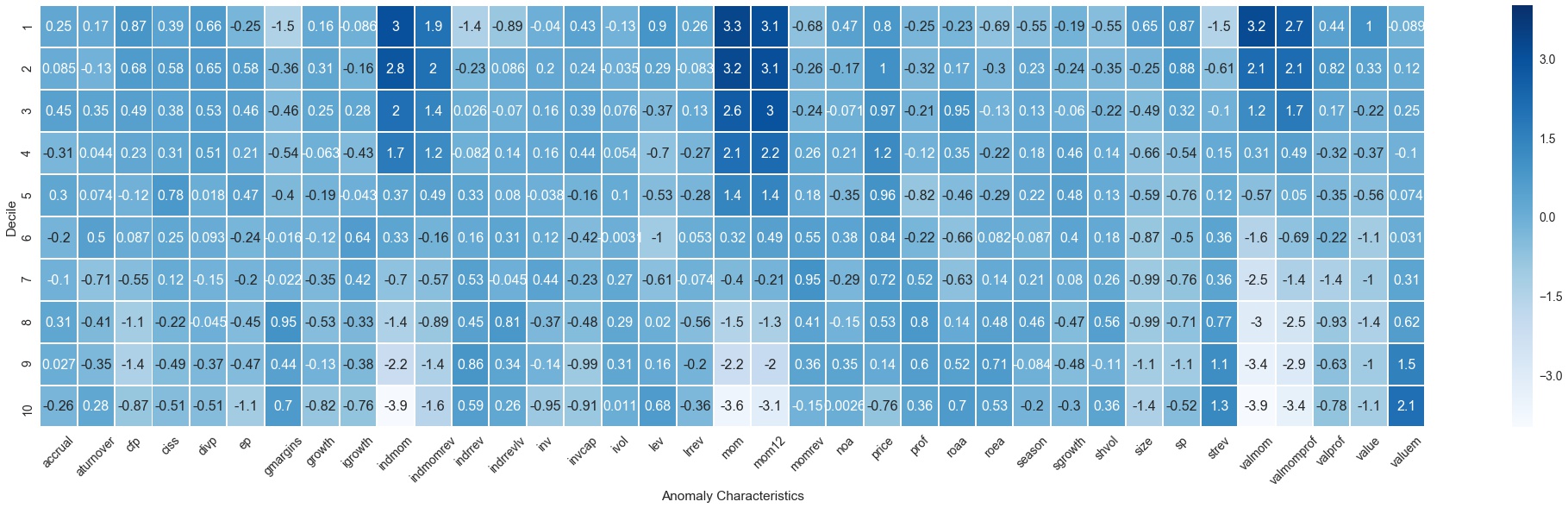}
			\bnotefig{This figure shows the portfolio weights of the 4th PCA factor. In contrast to the 4th proximate factor all factor weights are nonzero which makes the interpretation challenging.}
			\label{fig:single-sorted-portfolios-4th-loadingPCA}
		\end{figure}
		
		\begin{figure}[t!]
			\centering
			\tcapfig{Financial Portfolio Data: Composition of Proximate Factors}
			\includegraphics[width=1\linewidth]{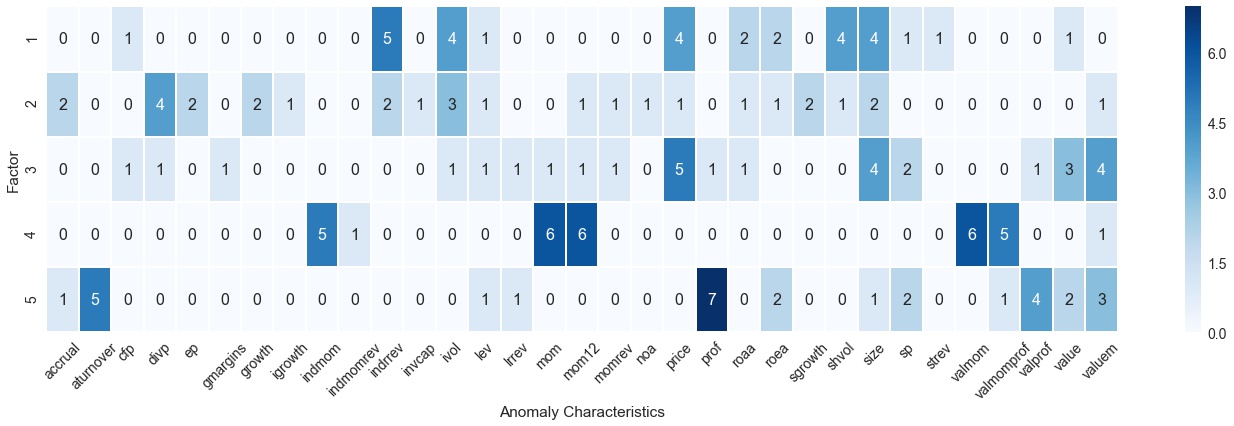}
			\bnotefig{This figure shows the number of nonzero weights of each proximate factor for each characteristic. We study 5 factors and each proximate factor is constructed with 30 portfolios. The names and description of the 37 anomaly characteristics are listed in Table \ref{tab:list-anomaly} in the Internet Appendix.}
			\label{fig:single-sorted-portfolios-loading}
		\end{figure}

		Figure \ref{fig:single-sorted-portfolios-loading} displays the composition of each latent factor based on the characteristic groups. It provides a first intuition for interpreting each latent factor. Based on Figures \ref{fig:single-sorted-portfolios-1st-loading} to \ref{fig:single-sorted-portfolios-5th-loading} we assign the following labels to the other latent factors: The first factor is a ``long''-only market factor. The second factor loads on categories such as dividend/price, earning/price, investment capital, and asset and sales growth suggesting a ``value'' interpretation. The third factor loads strongly on price and large market capitalization and hence we label it as a big-price factor. The last factor is clearly an asset turnover-profitability factor.
		
		\begin{figure}[t!]
			\tcapfig{Financial Portfolio Data: Out-of-Sample Generalized Correlations and RMSE}
			\begin{adjustwidth}{-1cm}{}
				\centering
				\begin{subfigure}{.4\textwidth}
					\centering
					\includegraphics[width=1\linewidth]{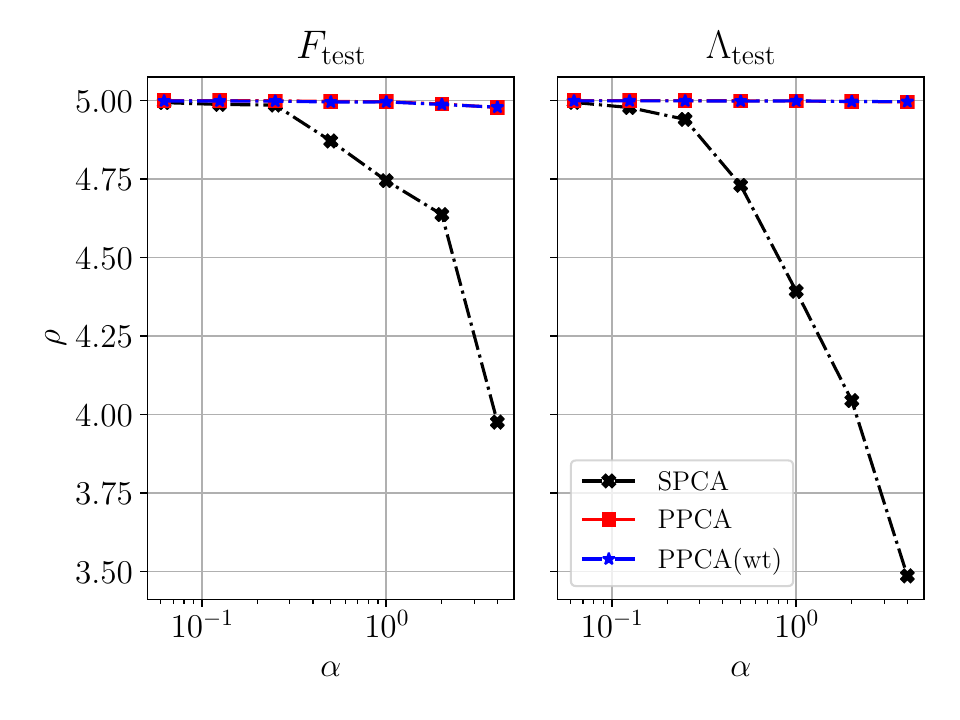}
					\subcaptab{$\rho$ with $\hat{\*F}$/$\hat{\*\Lambda}$}
				\end{subfigure}%
				\begin{subfigure}{.4\textwidth}
					\centering
					\includegraphics[width=1\linewidth]{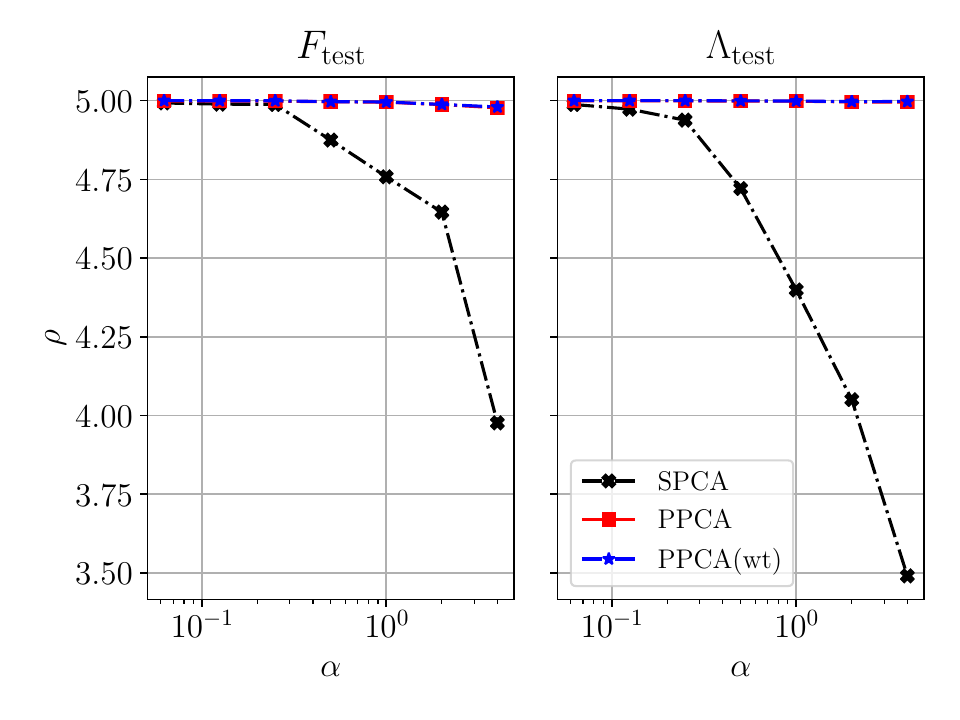}
					\subcaptab{$\rho$ with $\hat{\*F}^\twt$/$\hat{\*\Lambda}^\twt$}
				\end{subfigure}%
				\begin{subfigure}{.235\textwidth}
					\centering
					\includegraphics[width=1\linewidth]{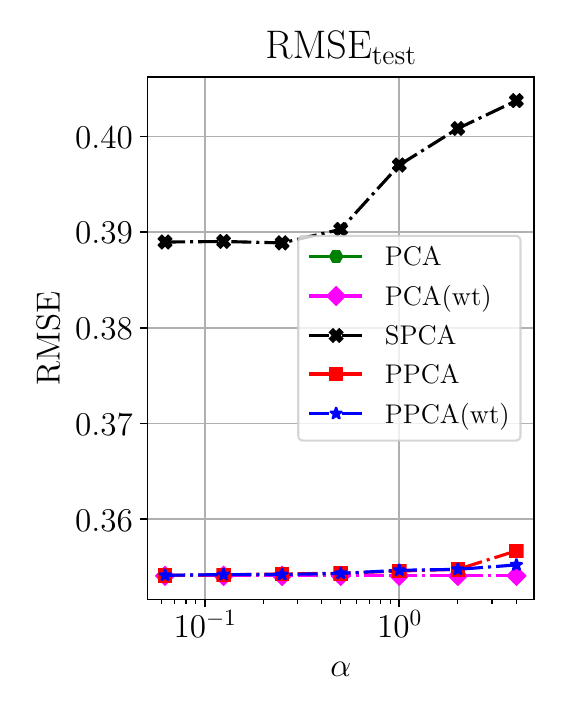}
					\subcaptab{RMSE}
				\end{subfigure}
			\end{adjustwidth}
			\label{fig:370-port-rho-rmse}
			\bnotefig{This figure compares the out-of-sample generalized correlations for factors and loadings and out-of-sample RMSE for proximate factors (PPCA), weighted proximate factors (PPCA (wt)), sparse PCA (SPCA), weighted PCA (PCA (wt)) and unweighted PCA. PPCA (wt) and PCA (wt) use the inverse standard errors as weights. In order to achieve the same sparsity level for various methods, we first choose $\alpha$, the $\ell_1$ penalty for SPCA, and set the number of nonzero weights $m_j$ for each factor in PPCA and PPCA (wt) to obtain the same number of nonzero elements in each factor as SPCA. The left figure shows the generalized correlation of the factors and loadings with the PCA estimates  $\hat{\*F}$/loadings $\hat{\*\Lambda}$. The middle figure show the corresponding generalized correlations with weighted PCA estimates $\hat{\*F}^\twt$/loadings $\hat{\*\Lambda}^\twt$. The right figure displays the RMSE for all five methods. PCA, PCA (wt), PPCA and PPCA (wt) achieve very similar performance and significantly outperform SPCA.} 
		\end{figure}
		
		Figure \ref{fig:370-port-rho-rmse} compares proximate factors with sparse PCA using the same metrics as in Section \ref{subsec:sim-comp-wtih-spca}. As the population factors and loadings are unknown, we compare the proximate and sparse factors and loadings with the non-sparse PCA factors and loadings. We also include the weighted PCA factors $\hat{\*F}^\twt$ with the inverse standard deviation of the residuals as weights.\footnote{We use the residuals from a 5-factor model to estimate the standard deviation of the errors. The results are robust to this choice.} Both, unweighted PCA factors $\hat{\*F}$ and weighted PCA factors $\hat{\*F}^\twt$ are consistent estimators of the population factors. However, in the presence of heteroskedastic errors, we expect the non-sparse weighted PCA estimates to be more efficient and hence to be the more appropriate comparison benchmark. The first half of the time-series serves as the training and the second half as the test data set. The number of nonzero elements $m_j$ for each proximate factor is set equal to the number of nonzero elements of the corresponding SPCA factor for a given $\ell_1$-lasso penalty $\alpha$.\footnote{We only report the out-of-sample results in the main text. The Internet Appendix collects the in-sample results with very similar findings. An alternative approach to implement sparse PCA is to directly impose the cardinality constraint $\norm{\*\Lambda_j}_0 \leq m$  rather than using the $\ell_1$ penalty term $\alpha \sum_{j = 1}^{K} \norm{\*\Lambda_j}_1$ in the optimization problem \eqref{eqn:spca-obj}, where $\norm{\*\Lambda_j}_0 $ equals the number of nonzero elements in $\*\Lambda_j$. \cite{sigg2008expectation} develop an expectation-maximization (EM) algorithm based on a probabilistic expression of PCA to solve this optimizaton problem with constraint  $\norm{\*\Lambda_j}_0 \leq m$.  This approach will in general return sparse loadings with $m$ nonzero elements in each sparse loading vector when $m$ is reasonably small. The Internet Appendix collects the results for this alternative implementation with very similar findings as in the main text.} 
		
		First, unweighted PCA factors and loadings are almost identical to the weighted PCA factors and loadings as the residuals have only a low degree of heteroskedasticity. As the two sets of proximate factors are very close approximations of the corresponding PCA factors, PPCA and PPCA (wt) are also almost identical with a slightly lower RMSE for PPCA (wt) out-of-sample. Second, sparse PCA performs substantially worse. For a higher degree of sparsity, the SPCA factors are neither close to the unweighted nor the weighted PCA factors. In particular, the sparse loadings are miss-specified, resulting in a higher out-of-sample RMSE for all levels of sparsity.

		Last but not least, we show how the theoretical lower bound in Theorem \ref{thm-evt-multi-factor} and Proposition \ref{prop:evt-multi-factor-simplified} can serve as a guidance to select the number of nonzero $m$ in factor weights. We consider the case where each proximate factor can have a different number of nonzero entries $m_k$. For analytical tractability we use the simple model where we assume that population loadings are i.i.d normally distributed, but it turns out that this simple model already provides a very accurate description capturing the relevant trade-offs. More specifically, we assume that $\Lambda_{i,k} \stackrel{\iid}{\sim} \mathcal{N}(\mu_{\Lambda, k}, \sigma_{\Lambda,k}^2)$ and estimate the mean $\hat{\mu}_{\Lambda,k} = \frac{1}{N} \sum_{i = 1}^{N} \hat\Lambda_{i,k}$ and sample variance $\hat{\sigma}^2_{\Lambda, k} -1 - \hat{\mu}_{\Lambda,k}^2$ following from $\hat{\Lambda}^\T\hat{\Lambda}/N = I_K$. The parameters of the GEV distribution are only a function of $\hat{\mu}_{\Lambda,k}$ and $\hat{\sigma}^2_{\Lambda, k}$ and known in closed-form as specified in Table \ref{tab:extreme-value-example}. For the given target probability $\underline{p}=0.95\%$ we solve for $y_{k,m_{k}}$ in $\underline{p} := \prod_{k = 1}^{K} \bar{G}_{k,m_k}(y_{k,m_{k}})$. We want each factor to contribute equally to the generalized correlation and set $\bar{G}_{k,m_k}(y_{k,m_{k}}) = \underline{p}^{\frac{1}{K} }$. Hence, we obtain 
		\begin{align*}
		y_{k,m_{k}} = \bar{G}^{-1}_{k,m_K}\big(\underline{p}^{\frac{1}{K} } \big), \quad \text{ where }  \bar{G}_{k,m_k}(y_{k,m_{k}}) = 1 - e^{-\tau} \cdot  \sum_{j = 0}^{m-1} \frac{\tau^j}{j!} \text{ and } \tau = \exp \Big(- \frac{y_{k,m_{k}} - b_{k,N} }{a_{k,N} }   \Big). 
		\end{align*}
		The theoretical lower bound for $\rho$ equals $K - \sum_{k=1}^K \frac{{\sigma}_{e}^2}{ m_k  \sigma_{\*F_k}^2 y_{k,m_{k}}^2 }$ which holds with probability $\underline{p}$. The noise variance is estimated by $\hat{\sigma}_{e}^2 = \frac{1}{NT} \sum_{i, t} \big(X_{it} - \hat{\Lambda}_i^\T \hat{F}_t \big)^2$ and factor variances $\sigma_{\*F_k}^2$ are estimated by the largest sample eigenvalues.
		
				\begin{figure}[t!]
			\centering
			\tcapfig{Financial Portfolio Data: Theoretical Lower Bound as A Guidance}
			\includegraphics[width=0.8\linewidth]{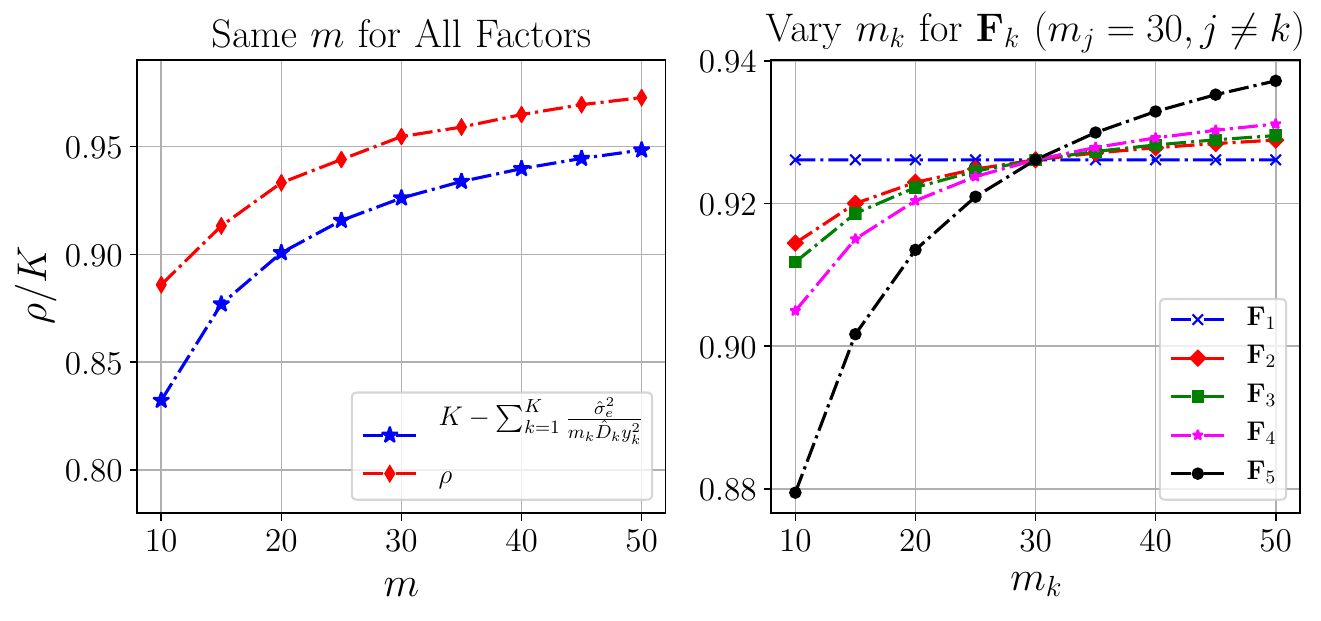}
			\bnotefig{
				The left figure shows the normalized generalized correlation $\rho/K$ of proximate factors with PCA factors and theoretical lower bound that is achieved with $\underline{p}=95\%$ probability. The normalized theoretical lower bound equals $1 - \frac{1}{K} \sum_{k=1}^K \frac{\hat{\sigma}_{e}^2}{ m_k \hat{\sigma}_{\*F_k}^2 y_{k,m_{k}}^2 }$. The left figures uses the same sparsity level $m$ for all 5 factors. The right figure displays the normalized theoretical lower bound as function of $m_k$ for factor $\*F_k$ while the sparsity level of the other factors is set to $m=30$. First, the theoretical lower bound captures the same trade-offs as the actual generalized correlation. Second, stronger factors can be well approximated with less nonzero weights $m_k$.}
			\label{fig:370-port-prob-bound}
		\end{figure}
		
		Figure \ref{fig:370-port-prob-bound} shows the theoretical bound as a function of $m_k$. First, the theoretical lower bound captures the same trade-offs as the generalized correlation with the PCA factors. In particular, the slopes of the curves in the left subfigure are almost identical and indicate that the largest gains can be achieved by adding the first 30 nonzero elements. In the right subfigure, we study the partial effect of changing the sparsity $m_k$ of the $k$-th factor while keeping the sparsity level of the other factors constant. The results illustrate that weaker factors which explain less variation require more nonzero weights while stronger factors can already be approximated well with very sparse proximate factors. These theoretical predictions support the empirical findings in Table \ref{tab:single-sorted-portfolios}.

		\subsection{Macroeconomic Data}
		
		We study 128 monthly U.S. macroeconomic indicators from 01/1959 to 02/2018 as in \citep{mccracken2016fred}. This data is from the Federal Reserve Economic Data (FRED) and is publicly available.\footnote{This dataset is updated in a timely manner and is available at \url{https://research.stlouisfed.org/econ/mccracken/fred-databases/}} The 128 macroeconomic indicators can be classified into 8 groups: 1. output and income; 2. labor market; 3. housing; 4. consumption, orders, and inventories; 5. money and credit; 6. interest and exchange rates; 7. prices; 8. stock market \citep{mccracken2016fred}. Macroeconomic datasets have been successfully analyzed with PCA factors for forecasting and estimation of factor augmented regressions \citep{stock2002macroeconomic, boivin2005understanding}. \cite{mccracken2016fred} estimate 8 factors in this macroeconomic data.

		Figure \ref{fig:macroeconomic} plots the generalized correlation between $\tilde{\*F}$ and $\hat{\*F}$ normalized by $K$ and the proportion of variance explained with different number of factors. When the number of nonzero entries $m$ in each sparse factor weight vector is around 10, which is less than 8\% of all macroeconomic variables, the ratio of $\rho$ to $K$ is close to 0.9, which corresponds to an average correlation of 0.95\%. The amount of variation explained by $\tilde{\*F}$ and $\hat{\*F}$ also indicate that $\tilde{\*F}$ consisting of around 15 macroeconomic variables can approximate $\hat{\*F}$ very well. 
		
		In the following, we study the 8-factor model in more detail. Table \ref{tab:macroeconomic} shows the generalized correlations for each factor which correspond to the $R^2$ in regressions of each PCA factor on all proximate factors. The first five factors can be captured well by proximate factors with 10 macroeconomic variables in each proximate factors. The sixth to eighth factors are weaker, but they can be reasonably captured by proximate factors with around 20 variables. 
		
		\begin{figure}[t!]
			\centering
			\tcapfig{Macroeconomic Data: Generalized Correlation and Explained Variance }
			\begin{subfigure}{.5\textwidth}
				\centering
				\includegraphics[width=0.8\linewidth]{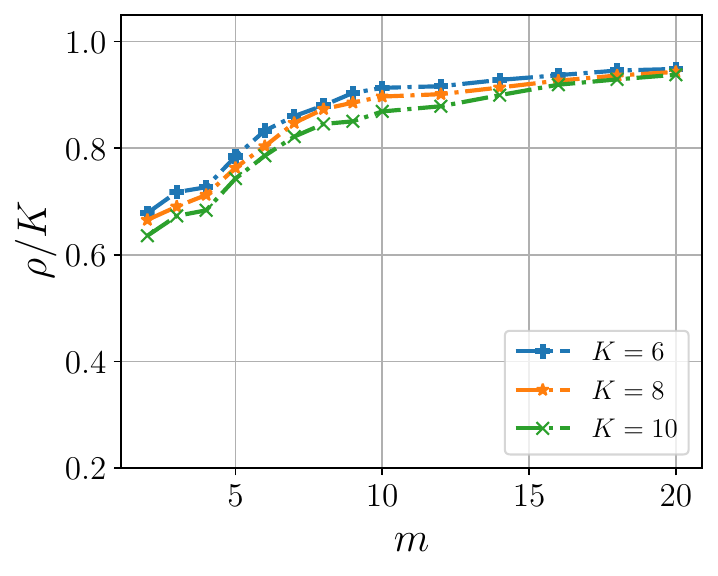}
				\subcaptab{Generalized Correlations}
			\end{subfigure}%
			\begin{subfigure}{.5\textwidth}
				\centering
				\includegraphics[width=0.82\linewidth]{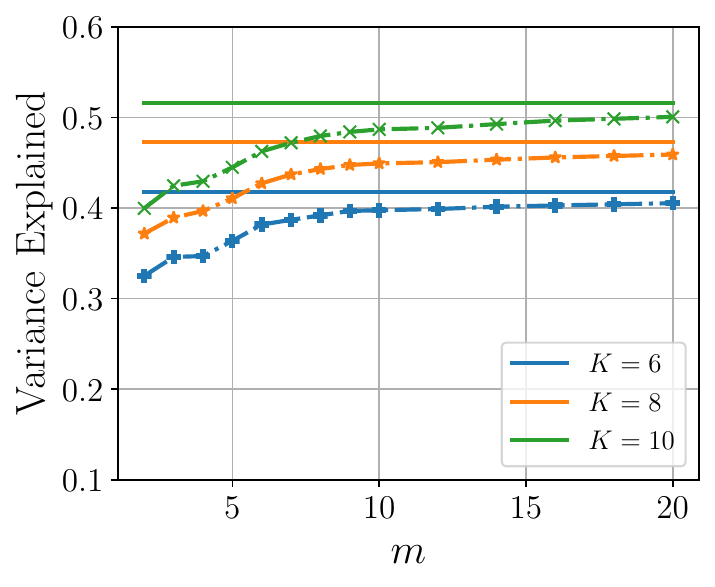}
				\subcaptab{Variance Explained}
			\end{subfigure}
			\bnotefig{The left figure shows the normalized generalized correlation $\rho/K$  between $\tilde{\*F}$ and $\hat{\*F}$ as a function of $m$. Th right figure shows the proportion of variance explained by $\tilde{\*F}$ and $\hat{\*F}$ as a function of  $m$. $N$ is 128, $T$ is 707 and $K$ varies from 6 to 8. Proximate factors constructed with only 10\% of the variables are very close to the non-sparse PCA factors.}
			\label{fig:macroeconomic}
		\end{figure}

		\begin{table}[t!]
			\centering
			\tcaptab{Macroeconomic Data: Generalized Correlation Decomposition}
			\begin{tabular}{ccccccccc}
				\toprule
				\diagbox{$m$}{} & $\hat{\*F}_1$ & $\hat{\*F}_2$ & $\hat{\*F}_3$ & $\hat{\*F}_4$ & $\hat{\*F}_5$ & $\hat{\*F}_6$  & $\hat{\*F}_7$ & $\hat{\*F}_8$ \\
				\midrule
				10 & 0.953 & 0.959 & 0.949 & 0.953 & 0.961 & 0.799 & 0.833 & 0.767\\
				15 & 0.967 & 0.970 & 0.958 & 0.956 & 0.964 & 0.857 & 0.867 & 0.837\\
				20 & 0.977 & 0.974 & 0.957 & 0.963 & 0.961 & 0.905 & 0.919 & 0.891\\
				25 & 0.983 & 0.980 & 0.961 & 0.979 & 0.973 & 0.937 & 0.943 & 0.929\\
				\bottomrule
			\end{tabular}
			\bnotetab{This table shows the generalized correlation between each $\hat{\*F}_j$ and all $\tilde{\*F}$ in a model with 8 factors. $N$ is 128 and $T$ is 707. These generalized correlations correspond to the $R^2$ from a regression of each $\hat{\*F}_j$ on all $\tilde{\*F}$. The sum of each row equals $\rho$. In general, stronger factors have larger signal-to-noise ratio and can be better approximated with a fixed $m$. }
			\label{tab:macroeconomic}
		\end{table}

		\begin{figure}[t!]
			\centering
			\tcapfig{Macroeconomic Data: Composition of Proximate Factors}
			\includegraphics[width=0.4\linewidth]{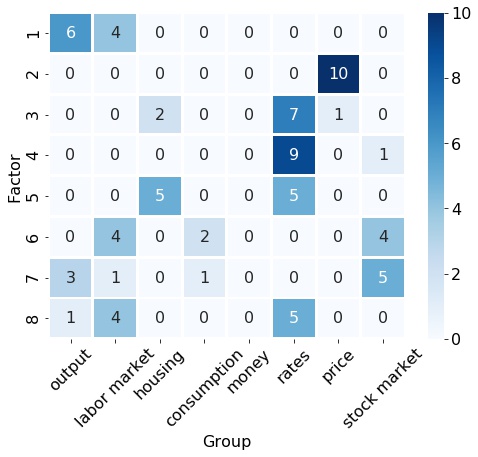}
			\bnotefig{This figure shows the composition of each of the 8 proximate factor in a particular group. Each proximate factor is constructed from 10 macroeconomic variables. The 8 groups are: 1. output and income; 2. labor market; 3. housing; 4. consumption, orders and inventories; 5. money and credit; 6. interest and exchange rates; 7. prices; 8. stock market.}
			\label{fig:macroeconomic-loading}
		\end{figure}
		
		Figure \ref{fig:macroeconomic-loading} shows a clear interpretation of the statistical factors. When each hard-thres\-holded loading has 10 nonzero entries, we see a strong pattern in the composition of each latent sparse factor that allows us to label each one. The first factor is a labor and productivity factor; the second factor is a price factor; both third and fourth factors are interest and exchange rate factors; the fifth factor is a housing and rate factor; the sixth factor is a labor and stock market factor; the seventh factor is a productivity and stock market factor; the eighth factor is a labor and rate factor. Macroeconomic variables in groups of output, income, labor market, and prices are the main compositions of the first and second factors. The third, fourth, fifth, and eighth latent factors are mainly composed of macroeconomic variables in the group of interest and exchange rates.  Variances explained by latent factors are in decreasing order. Thus, variables in groups of  output, income, labor market, prices, interest and exchange rates explain most of the variation in the whole dataset.

		\begin{figure}[t!]
			\tcapfig{Macroeconomic Data: Out-of-Sample Generalized Correlations and RMSE}
			\begin{adjustwidth}{-1cm}{}
				\centering
				\begin{subfigure}{.4\textwidth}
					\centering
					\includegraphics[width=1\linewidth]{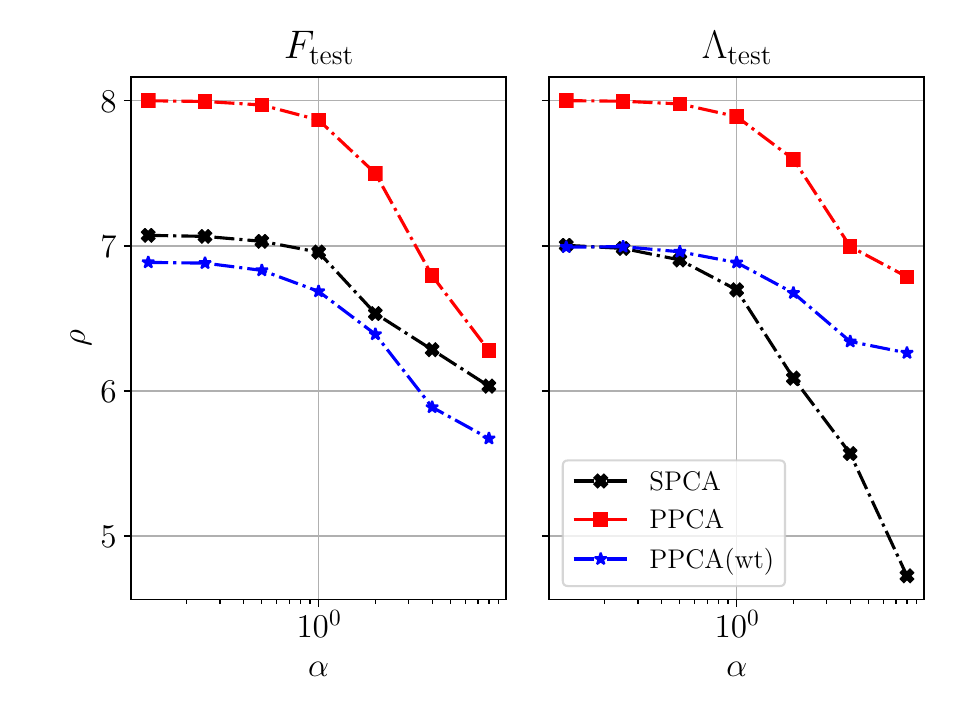}
					\caption{$\rho$ with $\hat{\*F}$/$\hat{\*\Lambda}$}
				\end{subfigure}%
				\begin{subfigure}{.4\textwidth}
					\centering
					\includegraphics[width=1\linewidth]{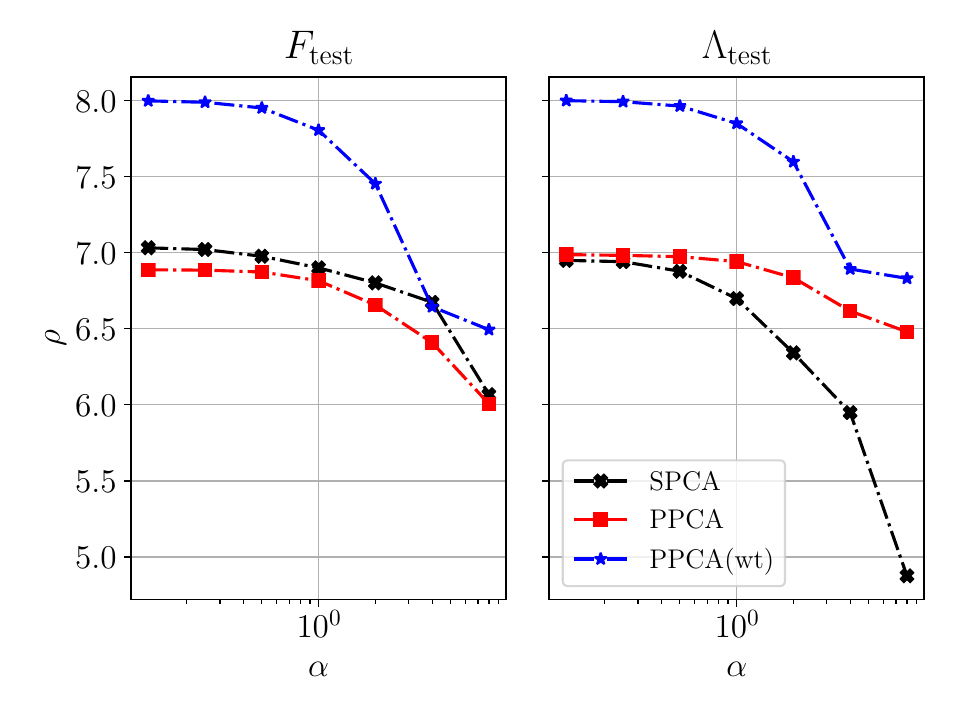}
					\caption{$\rho$ with $\hat{\*F}^\twt$/$\hat{\*\Lambda}^\twt$}
				\end{subfigure}%
				\begin{subfigure}{.235\textwidth}
					\centering
					\includegraphics[width=1\linewidth]{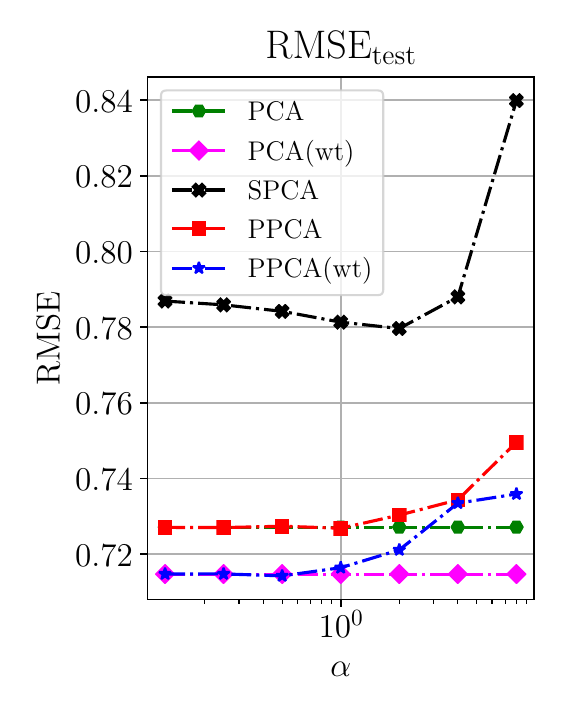}
					\caption{RMSE}
				\end{subfigure}
			\end{adjustwidth}
			\bnotefig{This figure compares the out-of-sample generalized correlations for factors and loadings and out-of-sample RMSE for proximate factors (PPCA), weighted proximate factors (PPCA (wt)), sparse PCA (SPCA), weighted PCA (PCA (wt)) and unweighted PCA. PPCA (wt) and PCA (wt) use the inverse standard errors as weights. In order to achieve the same sparsity level for various methods, we first choose $\alpha$, the $\ell_1$ penalty for SPCA, and set the number of nonzero weights $m_j$ for each factor in PPCA and PPCA (wt) to obtain the same number of nonzero elements in each factor as SPCA. The left figure shows the generalized correlation of the factors and loadings with the PCA estimates  $\hat{\*F}$/loadings $\hat{\*\Lambda}$. The middle figure show the corresponding generalized correlations with weighted PCA estimates $\hat{\*F}^\twt$/loadings $\hat{\*\Lambda}^\twt$. The right figure displays the RMSE for all five methods. PCA, PCA (wt), PPCA and PPCA (wt) achieve very similar performance and significantly outperform SPCA. }
			\label{fig:fred-md-rho-rmse}
		\end{figure}
		
		\begin{figure}[t!]
			\centering
			\tcaptab{Macroeconomic Data: Theoretical Lower Bound as A Guidance}
			\includegraphics[width=0.8\linewidth]{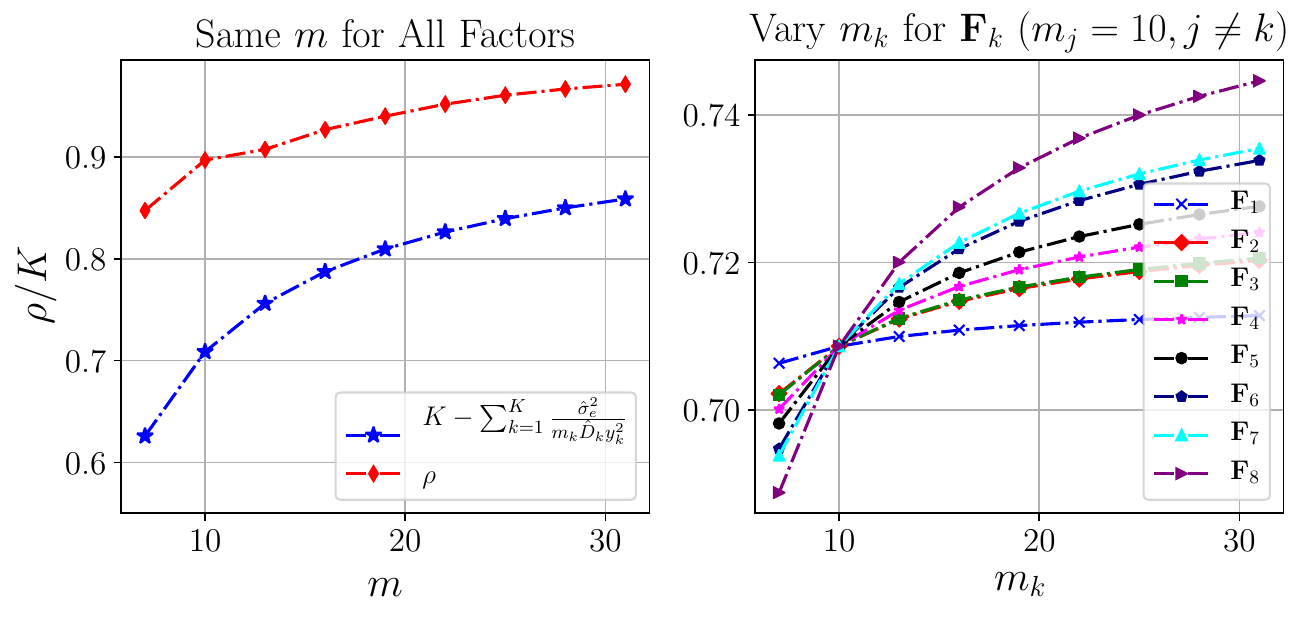}
			\bnotefig{The left figure shows the normalized generalized correlation $\rho/K$ of proximate factors with PCA factors and theoretical lower bound that is achieved with $\underline{p}=95\%$ probability. The normalized theoretical lower bound equals $1 - \frac{1}{K} \sum_{k=1}^K \frac{\hat{\sigma}_{e}^2}{ m_k \hat{\sigma}_{\*F_k}^2 y_{k,m_{k}}^2 }$. The left figures uses the same sparsity level $m$ for all 5 factors. The right figure displays the normalized theoretical lower bound as function of $m_k$ for factor $\*F_k$ while the sparsity level of the other factors is set to $m=10$. First, the theoretical lower bound captures the same trade-offs as the actual generalized correlation. Second, stronger factors can be well approximated with less nonzero weights $m_k$.}
			\label{fig:fred-md-prob-bound}
		\end{figure}
		
		Figures \ref{fig:fred-md-rho-rmse} and \ref{fig:fred-md-prob-bound} apply the same analysis as in Figures \ref{fig:370-port-rho-rmse} and \ref{fig:370-port-prob-bound} to the macroeconomic data with similar insights. The key difference is that the macroeconomic time-series are more heterogeneous. Hence, the weighted PCA achieves a lower RMSE out-of-sample than the unweighted PCA. Note that as expected the weighted proximate factors are more highly correlated with the weighted PCA factors while unweighted proximate factors mimic the unweighted PCA factors. This is also reflected in the RMSE. Sparse PCA overall performs significantly worse as expected. The theoretical lower bound provides similar guidance as in the portfolio case. Note that in the macroeconomic data the factors are generally weaker and explain less variation than in the portfolio data which leads to a larger gap between the empirical and theoretical generalized correlation bound. Overall the empirical results justify why we can use the proximate factors as a replacement for standard PCA factors.

		\section{Conclusion}\label{sec:conclusion}
		
		In this paper, we propose a method to construct proximate factors that consist of only a small number of cross-section units and can approximate latent factors well. These proximate factors are usually much easier to interpret than PCA factors. The closeness between proximate factors and latent factors is measured by the generalized correlation. We provide an asymptotic probabilistic lower bound for the generalized correlation based on extreme value theory. This lower bound explains why proximate factors are close to latent factors and provides guidance on how to chose the sparsity in the proximate factors. Simulations verify that the lower bound closely approximates the exceedance probability of the generalized correlation, especially in the most relevant case of an exceedance probability close to 1. The proximate factors have non-sparse loadings which are consistent estimates of the true population loadings. Empirical applications to two different datasets, financial portfolios, and macroeconomic data, show that proximate factors consisting of 5-10\% of the cross-section units can approximate latent factors well. Using these proximate factors, we provide a meaningful interpretation of the latent factors in our datasets. 
		
	\end{onehalfspacing}

	\singlespacing
	\bibliographystyle{econometrica}
	{\footnotesize
		\bibliography{reference}
	}
	
	
	

	
	\appendix
	
	\renewcommand{\thetheorem}{A.\arabic{theorem}}%
	\renewcommand{\theproposition}{A.\arabic{proposition}}%
	\renewcommand{\thelemma}{A.\arabic{lemma}}%
	\renewcommand{\theassumption}{A.\arabic{assumption}}%
	
	\renewcommand{\theequation}{A.\arabic{equation}}\setcounter{equation}{0}
	\renewcommand{\thefigure}{A.\arabic{figure}} \setcounter{figure}{0}
	\renewcommand{\thetable}{A.\arabic{table}} \setcounter{table}{0}

	\setcounter{lemma}{0}
	\setcounter{proposition}{0}
	\setcounter{theorem}{0}
	\setcounter{assumption}{0}
	
	{\small

		\vspace{0.8cm}
		
		
		\begin{center}
			{\LARGE\bf{Appendix}}
		\end{center}

		\section{Extreme Value Theory for Dependent Data}
		In this section, we provide the asymptotic distribution of the $m$-th order statistic of the $k$-th factor loading $|\Lambda_{(m),k}|$ for cross-sectionally dependent loadings. We use these results to provide an asymptotic bound in Proposition \ref{prop:GEV} and \ref{prop:evt-multi-factor-simplified}.
		
		We assume $\{|\Lambda_{i,k}|\}$ is a strictly stationary sequence which satisfies the strong mixing condition (also known as $\alpha$-mixing). We divide the sample of size $N$ into blocks of length $r_N$, where $r_N = o(N)$. 
		For a given level $u_{k,N}(\tau)$ we count the number of $|\Lambda_{i,k}| $ exceeding this level in a block conditional on that there is at least one exceedance in this block. Formally, we define the cluster size distribution $\pi_{k,N}(i;\tau)$ as    
		\begin{align}
		\pi_{k,N}(i;\tau) = P\left[\sum_{j=1}^{r_N} \mathbbm{1}\left(|\Lambda_{j,k}| > u_{k,N}(\tau)\right) = i \Big| \sum_{j=1}^{r_N} \mathbbm{1}\left(|\Lambda_{j,k}| > u_{k,N}(\tau)\right) > 0  \right]. \label{eqn:cluster}
		\end{align}
		The following lemma provides the connection between the convergence of $\pi_{k,N}(i;\tau)$ and the convergence of the distribution of $|\Lambda_{(m),k}|$ which characterizes the asymptotic distribution of $|\Lambda_{(m),k}|$. This lemma is adapted from Theorem 3.3 in \cite{hsing1988extreme}.
		\begin{lemma}[Asymptotic Distribution of $|\Lambda_{(m),k}|$]\label{lemma:order-stats-dist}
			Define the cluster size distribution $\pi_{k,N}(i;\tau)$ as in equation \ref{eqn:cluster} for a given sequence $u_{k,N}(\tau)$ and sequence $r_N$ that satisfy $N/r_N \rightarrow \infty$, $ e^{N/r_N} \alpha(l_N) \rightarrow 0$, and $ e^{N/r_N} l_N/N \rightarrow 0$, where $l_N/N \rightarrow 0$ and $\alpha(\cdot)$ is the mixing function of the strong mixing condition that holds for $\{|\Lambda_{i,1}|\}$.
			\begin{enumerate}
				\item If for some $\tau > 0$, $\pi_{k,N}(i;\tau)$ converges to some $\pi_k(i)$ for $1 \leq i \leq m-1$ which is independent of $\tau$, then $P(|\Lambda_{(m),k}| \leq u_{k,N}(\tau))$ converges for each $\tau > 0$ and
				\begin{eqnarray}\label{eqn:G-1-m-limit}
				\lim_{N \rightarrow \infty} P(|\Lambda_{(m),k}| \leq u_{k,N}(\tau))  = e^{-\tau} \left[1 + \sum_{l = 1}^{m-1} c_{k,l} \cdot \frac{\tau^l}{l!}\right]
				\end{eqnarray}
				where 
				\begin{eqnarray}\label{eqn:pi-def}
				c_{k,l} =\sum_{i = l}^{m-1} \pi_k^{\ast^l} (i) \quad \text{and} \quad
				\pi_k^{\ast^l} (i) = \begin{cases}
				0, &  i < l \\
				\underset{ i_r \geq 1, 1 \leq r \leq l }{\underset{i_1 + \cdots i_l = i}{\sum \cdots \sum}} \pi_k(i_1)\cdots \pi_k(i_l) , &  i \geq l
				\end{cases}.
				\end{eqnarray}
				\item Conversely, if $P(|\Lambda_{(m),k}| \leq u_{k,N}(\tau)) $ converges for each $\tau > 0$, then for any $\tau > 0$ and $1 \leq i \leq m - 1$,  $\pi_{k,N}(i;\tau)$ converges to some $\pi_k(i)$ which is independent of $\tau$, and the limit of $P(|\Lambda_{(m),1}| \leq u_{1,N}(\tau)) $ is the same as (\ref{eqn:G-1-m-limit}). 
			\end{enumerate}
		\end{lemma}
		A special case of Lemma \ref{lemma:order-stats-dist} is that $|\Lambda_{i,k}|$ is independent, then $\pi_N(k;\tau) \rightarrow 1$ and
		\[ \lim_{N \rightarrow \infty} P(|\Lambda_{(m),k}| \leq u_{k,N}(\tau))  = e^{-\tau}  \sum_{l = 0}^{m-1} \frac{\tau^l}{l!} \]
		If the tail of $|\Lambda_{i,k}|$ follows the GEV distribution with parameters $(\mu, \sigma, \xi)$, then \\ $u_{k,N}(\tau) =a_{k,N}\left(\mu + \sigma \left( \frac{\tau^{-\xi} - 1}{\xi} \right) \right) + b_{k,N}$ for some normalizing sequences $\{a_{k,N} > 0\}$ and $\{b_{k,N}\}$ and $G_{k,m}(\tau)$ is the same as in Theorem 3.4 in \cite{coles2001introduction}. 
		


		\section{Uniform Convergence of PCA Loadings}
		
		\cite{bai2002determining} and \cite{bai2003inferential} show that factors and loadings can be estimated consistently with PCA under Assumptions \ref{ass_factor}-\ref{ass_f_e} when $N, T \rightarrow \infty$.  
		The derivations for our proximate factors require the stronger uniform consistency of loadings as stated in the following proposition. This result is of independent interest by itself.
		
		
		\begin{proposition} \label{thm:uniform-consistency-loading}
			Under Assumptions \ref{ass_factor}-\ref{ass_f_e} it holds that
			\[\max_{l \leq N} \norm{\hat{\Lambda}_l - H \Lambda_l} = O_p(\sqrt{1/N} + N^{1/4}/\sqrt{T}).\]
		\end{proposition}
		
		Proposition \ref{thm:uniform-consistency-loading} states that the maximum difference between the estimated loading and some rotation of the true loading for any cross-section unit converges to 0 at a specific rate. Relative to $\frac{1}{N} \sum_{l=1}^N \norm{\hat{\Lambda}_l - H \Lambda_l}^2 = O_p(1/N + 1/T)$, the uniform convergence rate of $\norm{\hat{\Lambda}_l - H \Lambda_l}$ is slower if $T/N^{3/2}=o(1)$.\footnote{\cite{bai2003inferential} show in their Proposition 2 that $\max_{1\leq t \leq T} \norm{\tilde{F}_t - (H^{-1})^\T F_t} =  O_p(\sqrt{1/T} + \sqrt{T}/\sqrt{N})$. \cite{fan2013large} show in their Theorem 4 that $\max_{l \leq N} \norm{\hat{\Lambda}_l - H \Lambda_l} = O_p(\sqrt{1/N} + \sqrt{\log N} /\sqrt{T})$ under the additional assumption of bounded loadings and $\max_{1\leq t \leq T} \norm{\tilde{F}_t - (H^{-1})^\T F_t} =  O_p(\sqrt{1/T} + T^{1/4}/\sqrt{N})$. Our results relax their assumption of bounded loadings which comes at the cost of rate that is slower than in \cite{fan2013large}.} Proposition \ref{thm:uniform-consistency-loading} states that a large estimated loadings imply large population loadings. As a result, the $m$-th largest values of $|\hat{\*\Lambda} _j|$ is close to the $m$-th largest $|(\*\Lambda H)_{\cdot, j}|$, which is formally stated in Lemma \ref{lemma_orderstat}. Hence, we can derive the distribution of  the correlation $\mathrm{corr}(\tilde{\*F}, \*F)$ based on the largest population instead of estimated loadings.
		
	}

	
	\newpage
	
	\renewcommand{\thetheorem}{IA.\arabic{theorem}}  \setcounter{theorem}{0}
	\renewcommand{\theproposition}{IA.\arabic{proposition}} \setcounter{proposition}{0}
	\renewcommand{\thelemma}{IA.\arabic{lemma}} \setcounter{lemma}{0}
	\renewcommand{\theassumption}{IA.\arabic{assumption}}  \setcounter{assumption}{0}
	
	\renewcommand{\theequation}{IA.\arabic{equation}}\setcounter{equation}{0}
	\renewcommand{\thefigure}{IA.\arabic{figure}} \setcounter{figure}{0}
	\renewcommand{\thetable}{IA.\arabic{table}} \setcounter{table}{0}
	
	\setcounter{page}{1}
	\setcounter{section}{0}
	\setcounter{subsection}{0}
	
	\renewcommand{\thesection}{IA.\Alph{section}}
	\renewcommand{\thesubsection}{\thesection.\arabic{subsection}}

	{\small

		\begin{center}
			{\LARGE\bf{Internet Appendix to \\Interpretable Proximate Factors for Large Dimensions}}
		\end{center}
		\vspace{0.1cm}

		\section{Uniform Convergence of PCA Loadings}

		\begin{proof}[Proof of Proposition \ref{thm:uniform-consistency-loading}]
			Let
			\begin{eqnarray}\label{eqn-decom-lambda}
			\hat{\Lambda}_l - H \Lambda_l = \hat{D}^{-1} \left( \frac{1}{N}\sum_{i=1}^N \hat{\Lambda}_i \+E(\*e_i^\T \*e_l)/T + \frac{1}{N} \sum_{i=1}^N \hat{\Lambda}_i \zeta_{il} + \frac{1}{N} \sum_{i=1}^N \hat{\Lambda}_i \eta_{il} + \frac{1}{N} \sum_{i=1}^N \hat{\Lambda}_i \xi_{il} \right)
			\end{eqnarray}
			where $\hat{D}$ are eigenvalues of $\frac{1}{NT} \*X \*X^\T$ corresponding to eigenvalues $\hat{\*\Lambda}$, $H$ is some rotation matrix, $\zeta_{il} = \*e_i^\T \*e_l/T -  E(\*e_i^\T \*e_l)/T$, $\eta_{il}=\Lambda_i^\T \sum_{t=1}^T F_t e_{lt}/T$ and $\xi_{il} = \Lambda_l^T \sum_{t=1}^T F_t e_{it}/T$. 
			
			\begin{lemma} \label{lemma-max-norm-lambda-error}
				Under Assumptions \ref{ass_loading}-\ref{ass_f_e},
				\begin{enumerate}
					\item $\max_{l \leq N} \norm{\frac{1}{NT} \sum_{i = 1}^N \hat{\Lambda}_i \+E(\*e_i^\T \*e_l) } = O_p(\sqrt{1/N})$
					\item $\max_{l \leq N} \norm{\frac{1}{N} \sum_{i = 1}^N \hat{\Lambda}_i \zeta_{il}} = O_p(N^{1/4}/\sqrt{T}) $
					\item $\max_{l \leq N} \norm{\frac{1}{N} \sum_{i = 1}^N \hat{\Lambda}_i \eta_{il} }= O_p(N^{1/4}/\sqrt{T}) $
					\item $\max_{l \leq N} \norm{\frac{1}{N} \sum_{i = 1}^N \hat{\Lambda}_i \xi_{il} }= O_p(N^{1/4}/\sqrt{T}) $
				\end{enumerate}
			\end{lemma}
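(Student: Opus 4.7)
My plan is to control each of the four terms by the same template: apply Cauchy--Schwarz to factor out $(\tfrac{1}{N}\sum_i\|\hat\lambda_i\|^2)^{1/2}=O_p(1)$ (in fact this quantity equals $K^{1/2}$ exactly under the identification $\hat\Lambda^\T\hat\Lambda/N=I_K$), so that it suffices to obtain a uniform-in-$l$ bound on the remaining factor. For the uniform bound I will combine Markov's inequality with a union bound over $l\leq N$, exploiting the fourth-moment conditions built into Assumptions \ref{ass_loading}--\ref{ass_f_e}; these are exactly strong enough to absorb the $N$ picked up by the union bound and yield the rates claimed.

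For part (1), stationarity (Assumption \ref{ass_error}.2) gives $E(e_i^\T e_l)/T=\tau_{il,11}$, and then $\tau_{il,11}^2\leq \sigma_e^2|\tau_{il,11}|$ together with $\|\Sigma_e\|_1\leq M$ yields $\frac{1}{N}\sum_i\tau_{il,11}^2=O(1/N)$ uniformly in $l$, so no union bound is even needed and taking a square root produces the $\sqrt{1/N}$ rate. For part (2), Assumption \ref{ass_error}.4 gives $E[\zeta_{il}^4]=O(1/T^2)$; Jensen's inequality yields $(\tfrac{1}{N}\sum_i\zeta_{il}^2)^2\leq \tfrac{1}{N}\sum_i\zeta_{il}^4$, so Markov plus a union bound over $l$ gives $\max_l\tfrac{1}{N}\sum_i\zeta_{il}^2=O_p(\sqrt{N}/T)$, and taking square roots produces the $N^{1/4}/\sqrt{T}$ rate.

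Parts (3) and (4) reduce to bounding $\max_l\|g_l\|$ or $\max_l\|\lambda_l\|$ after a linear factorization, where $g_l=T^{-1/2}\sum_t f_t e_{lt}$. For (3), write $\tfrac{1}{N}\sum_i\hat\lambda_i\eta_{il}=T^{-1/2}\bigl(\tfrac{1}{N}\sum_i\hat\lambda_i\lambda_i^\T\bigr)g_l$; the bracket has operator norm $O_p(1)$ by Cauchy--Schwarz and Assumption \ref{ass_loading}, while Assumption \ref{ass_f_e}'s uniform $E\|g_l\|^4<M$ bound together with a union bound gives $\max_l\|g_l\|=O_p(N^{1/4})$. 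Symmetrically for (4), write $\tfrac{1}{N}\sum_i\hat\lambda_i\xi_{il}=T^{-1/2}\bigl(\tfrac{1}{N}\sum_i\hat\lambda_i g_i^\T\bigr)\lambda_l$; the bracket has norm $O_p(1)$ since its squared norm is dominated by $(\tfrac{1}{N}\sum_i\|\hat\lambda_i\|^2)(\tfrac{1}{N}\sum_i\|g_i\|^2)$, both of which are $O_p(1)$ (the second by $E\|g_i\|^2\leq\sqrt{E\|g_i\|^4}\leq\sqrt{M}$), and $\max_l\|\lambda_l\|=O_p(N^{1/4})$ by an identical Markov/union bound using $E\|\lambda_l\|^4\leq M$ from Assumption \ref{ass_loading}.

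The main subtlety is the exact $N^{1/4}$ rate that appears in the union bounds of (2)--(4). With only fourth moments we lose a factor of $N^{1/4}$ per maximum; this is the price of not assuming boundedness of $\lambda_i$, in contrast to \cite{fan2013large} who exploit sub-Gaussian tails to obtain the sharper $\sqrt{\log N}$ rate and correspondingly sharper uniform bound on $\hat\lambda_l-H\lambda_l$. The work is thus not in the high-level strategy but in checking that the fourth-moment assumptions suffice and, in particular, that the combinations $\hat\lambda_i\lambda_i^\T$ and $\hat\lambda_i g_i^\T$ can be controlled without re-introducing any dependency in $l$ that would defeat the factorization — which they can, precisely because $\hat\lambda_i$ depends on $l$ only through a global rotation that is absorbed into the $O_p(1)$ bracket.
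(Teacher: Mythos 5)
Your proposal is correct and follows essentially the same route as the paper's own proof: Cauchy--Schwarz to peel off the $O_p(1)$ factor $(\frac{1}{N}\sum_i\|\hat\lambda_i\|^2)^{1/2}$, the $\ell_1$-bound on $\Sigma_e$ for part (1), and a Markov-plus-union-bound argument using the available fourth moments for parts (2)--(4). One small point worth flagging: in part (4) the paper's displayed bound uses $\max_{i\leq N}\|\frac{1}{T}\sum_t f_t e_{it}\|^2$ inside the Cauchy--Schwarz step, which as written would give the weaker rate $O_p(N^{1/2}/\sqrt{T})$ after multiplying by $\max_l\|\lambda_l\|=O_p(N^{1/4})$; your version with the average $\frac{1}{N}\sum_i\|g_i\|^2=O_p(1)$ is the one that actually delivers $O_p(N^{1/4}/\sqrt{T})$, and is presumably what the paper intended.
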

			\begin{proof}[Proof of Lemma \ref{lemma-max-norm-lambda-error}]
				\begin{enumerate}
					\item
					By the Cauchy-Schwarz inequality and the fact that $\frac{1}{N}\sum_{i=1}^N \norm{\hat{\Lambda}_i}^2 = O_p(1)$, 
					\begin{eqnarray*}
						\max_{l \leq N} \norm{\frac{1}{NT} \sum_{i = 1}^N \hat{\Lambda}_i \+E[\*e_i^\T \*e_l] } &\leq& \max_{l \leq N} \left(\frac{1}{N} \sum_{i=1}^N \norm{\hat{\Lambda}_i}^2 \frac{1}{N} \sum_{i=1}^N (E[\*e_i^\T \*e_l]/T)^2 \right)^{1/2} \\
						&\leq& O_p(1) \max_{l \leq N} \left(\frac{1}{N} \sum_{i=1}^N (\+E[\*e_i^\T \*e_l]/T)^2 \right)^{1/2} \\
						&\leq&  O_p(1) \max_{i,l \leq N} \sqrt{|E[\*e_i^\T \*e_l]/T|} \max_{l \leq N}  \left(\frac{1}{N} \sum_{i=1}^N |\+E[\*e_i^\T \*e_l/T| \right)^{1/2} \\
						&=& O_p(\sqrt{1/N})
					\end{eqnarray*}
					by Assumption \ref{ass_error}.2.
					\item By Cauchy-Schwarz inequality, 
					\begin{eqnarray*}
						\max_{l \leq N} \norm{\frac{1}{N} \sum_{i = 1}^N\hat{\Lambda}_i \zeta_{il}}&\leq& \max_{l \leq N} \left(\frac{1}{N} \sum_{i=1}^N \norm{\hat{\Lambda}_i}^2 \frac{1}{N} \sum_{i = 1}^N  \zeta_{il}^2 \right)^{1/2} \leq O_p(1) \max_{l \leq N}  \left(\frac{1}{N} \sum_{i = 1}^N  \zeta_{il}^2\right)^{1/2} \\ 
						&=& O_p(N^{1/4}/\sqrt{T})
					\end{eqnarray*}
					It follows from Assumption \ref{ass_error}.4 that  $\+E(\frac{1}{N} \sum_{i=1}^N \zeta_{il}^2 )^2 \leq \max_{i,l} \+E\zeta_{il}^4 = O(T^{-2})$. From Markov inequality and Boole's inequality (the union bound), $\max_{l \leq N} \frac{1}{N} \sum_{i=1}^N \zeta_{il}^2 = O_p(N^{1/4}/\sqrt{T})$ \footnote{Denote $y_l =  \frac{1}{N} \sum_{i=1}^N \zeta_{il}^2 $. $\exists M_1, \+E y_l^2 \leq M_1/T^2 $. We have $\forall \epsilon$, $P\left( \max_{l \leq N} y_l^2 > NM_1/(T^2 \epsilon) \right) = P\left( \exists l,  y_l^2 > NM_1/(T^2 \epsilon) \right) \leq \sum_{l=1}^N P\left(  y_l^2 > NM_1/(T^2 \epsilon) \right)  \leq \sum_{l=1}^N \frac{\+Ey_l^2}{NM_1/(T^2 \epsilon)} \leq \epsilon$ by Markov Inequality and the union bound. Thus, $\max_{l \leq N} y_l=O_p(N^{1/2}/T)$. }.
					
					\item $\+E\norm{1/\sqrt{T}\sum_{t=1}^T F_t e_{it}}^4 \leq M$ by Assumption \ref{ass_f_e}. Using Markov inequality and Boole's inequality (the union bound), we have $\max_{i\leq N} \norm{\frac{1}{T} \sum_{t=1}^T F_t e_{it}} = O_p(N^{1/4}/\sqrt{T}) $. Thus, 
					\begin{eqnarray*}
						\max_{l \leq N} \norm{\frac{1}{N} \sum_{i = 1}^N \hat{\Lambda}_i \eta_{il} } \leq \norm{\frac{1}{N}\sum_{i=1}^N \hat{\Lambda}_i \Lambda_i^\T} \max_{l\leq N} \norm{\frac{1}{T} \sum_{t = 1}^T F_t e_{lt}} = O_p(N^{1/4}/\sqrt{T})
					\end{eqnarray*}
					follows from $\norm{\frac{1}{N}\sum_{i=1}^N \hat{\Lambda}_i \Lambda_i^\T} \leq \left(\frac{1}{N}\sum_{i=1}^N \norm{\hat{\Lambda}_i}^2 \right)^{1/2} \left(\frac{1}{N}\sum_{i=1}^N \norm{\Lambda_i}^2 \right)^{1/2} = O_p(1)$ by Assumption \ref{ass_loading}.
					\item By Assumption \ref{ass_f_e}, $\norm{\frac{1}{NT}\sum_{i=1}^N\sum_{t=1}^T F_t e_{it} \hat{\Lambda}_i} \leq \left(\frac{1}{N} \sum_{i=1}^N \norm{\hat{\Lambda}_i}^2 \max_{i \leq N} \norm{\frac{1}{T} \sum_{t = 1}^T F_t e_{it} }^2  \right)^{1/2} = O_p(\sqrt{1/T})$. In addition, since $\+E\norm{\Lambda_i}^4 < M$ from Assumption \ref{ass_loading}, $\max_{l \leq N}\norm{\Lambda_l} = O_p(N^{1/4})$.\footnote{$\forall \epsilon$, $P\left( \max_{l \leq N} \norm{\Lambda_l}^4 > NM/\epsilon \right) \leq \sum_{l=1}^N P\left(  \norm{\Lambda_l}^4 > NM/\epsilon \right) =\sum_{l=1}^N \frac{E\norm{\Lambda_l}^4}{NM/\epsilon} \leq \epsilon$ by Markov Inequality and the union bound. Thus, $\max_{l \leq N} \norm{\Lambda_l}^4=O_p(N)$ and $\max_{l \leq N} \norm{\Lambda_l} = O_p(N^{1/4})$.} Thus, 
					\begin{eqnarray*}
						\max_{l \leq N} \norm{\frac{1}{N} \sum_{i = 1}^N \hat{\Lambda}_i \xi_{il} } \leq \max_{i \leq N}\norm{\Lambda_i} \norm{\frac{1}{NT}\sum_{i=1}^N\sum_{t=1}^T F_t e_{it} \hat{\Lambda}_i} = O_p(N^{1/4}/\sqrt{T}).
					\end{eqnarray*}
				\end{enumerate}
			\end{proof}
			
			%
			%
			%
			%
			%
			Under Assumptions \ref{ass_factor}-\ref{ass_f_e}, from Lemma A.3 in \cite{bai2003inferential}, we have $\hat{D} \xrightarrow{P} D$, where $D = \diag(D_1, \cdots, D_K)$ are the eigenvalues of $\Sigma_F^{1/2} \Sigma_{\Lambda} \Sigma_F^{1/2}$. Since from Assumptions \ref{ass_factor} and \ref{ass_loading}, eigenvalues of $\Sigma_F \Sigma_{\Lambda}$ are bounded away from both zero and infinity, diagonal elements in $\hat{D}$ are bounded away from zero and infinity with probability 1, therefore, diagonal elements in $\hat{D}^{-1}$ are bounded away from zero and infinity with probability 1.
			\begin{eqnarray*}
				\max_{l \leq N}\norm{\hat{\Lambda}_l - H \Lambda_l} &\leq& \norm{\hat{D}^{-1}} \left\lbrace \max_{l \leq N} \norm{\frac{1}{NT}\sum_{i=1}^N \hat{\Lambda}_i \+E[\*e_i^\T \*e_l]} + \max_{l \leq N} \norm{\frac{1}{N}  \sum_{i=1}^N \hat{\Lambda}_i \zeta_{il}} \right. \\
				&&\left.\max_{l \leq N} \norm{\frac{1}{N} \sum_{i=1}^N \hat{\Lambda}_i \eta_{il}} + \max_{l \leq N} \norm{\frac{1}{N} \sum_{i=1}^N \hat{\Lambda}_i \xi_{il} } \right\rbrace  \\
				&=& O_p(1) (O_p(\sqrt{1/N}) + O_p(N^{1/4}/\sqrt{T})) = O_p(\sqrt{1/N}+N^{1/4}/\sqrt{T}) 
			\end{eqnarray*}
			from Lemma \ref{lemma-max-norm-lambda-error}.
		\end{proof}



		\section{Generalized Correlation Between Population Factors and Proximate Factors}
		We prove the results in Section \ref{sec:theory} under a general setup (each observation is multiplied by weight $\weight_i$)
		\[\underbrace{\Weight \*X}_{\*X^{\twt}} = \Weight \*\Lambda \*F^\T  + \Weight \*e\]
		where $\Weight = \diag(\weight_1, \weight_2, \cdots, \weight_N)$. Without loss of generality, we assume $\weight_i$ takes values bounded above from 1. {Suppose Assumption \ref{ass_loading} holds if $\Lambda_i$ is replaced by $\weight_i \Lambda_i$.}
		Denote the upper bound for weighted error's second moment as $\sigma_{e,\weight}^2 = \max_{i,j,t,s} |\weight_i \weight_j \+E[e_{it}e_{js}]|$.  When $\Weight_i = 1$ for all $i$, $\*X^{\twt} = \*X$ and the setting is the same as Section \ref{subsec:proxy-one-factor} and \ref{subsec:proxy-multi-factor}.  
		
		In the following, we use $\hat{\*\Lambda}$ and $\hat{\*F}$ to denote the PCA loadings and factors estimated from $\*X^{\twt} = \Weight \*X$. 
		
		
		%
		
		\subsection{Lemmas for the Proof of Theorems and Propositions in Section \ref{sec:theory}}
		We first show  in the following lemma that if a unit has large estimated loading, then with high probability, this unit's population loading is large as well.
		\begin{lemma} \label{lemma_orderstat}
			Under Assumptions \ref{ass_factor}-\ref{ass_f_e}, $\forall c, j, \forall 1 \leq i \leq m, \epsilon, \delta$, $\exists N_0, T_0$, when $N > N_0, T > T_0$,  with probability $\geq 1 - \delta$, if $|(\Weight \*\Lambda H)_{\fidx_j(m),j}| \geq c + 2\epsilon$, 
			$$|(\Weight  \*\Lambda H)_{\fidx_j(i),j}| > c,$$ where $|(\Weight  \*\Lambda H)_{\fidx_j(m),j}|$ is the $m$-th order statistic of $|(\Weight \*\Lambda H)_{\cdot, j}|$ and $\fidx_j(i)$ is the index of $i$-th order statistic of the $j$-th estimated loading $|\hat{\*\Lambda}_{j}|$.
		\end{lemma}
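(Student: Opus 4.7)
The plan is to derive this lemma as a direct consequence of Theorem~\ref{thm:uniform-consistency-loading}, since uniform consistency of the estimated loadings implies that the ordering induced by $|\hat \Lambda_j|$ must closely track the ordering induced by $|(\Lambda H)_{\cdot,j}|$ at least for the largest entries.

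First, I would fix arbitrary $c$, $j$, $\epsilon$, $\delta$ and use Theorem~\ref{thm:uniform-consistency-loading} to choose $N_0,T_0$ large enough that with probability at least $1-\delta$ the event
\[
\mathcal{E} = \Bigl\{ \max_{l\leq N} \|\hat\lambda_l - H\lambda_l\| < \epsilon/2 \Bigr\}
\]
holds; this is legitimate because $\sqrt{1/N} + N^{1/4}/\sqrt{T} \to 0$ under the usual rate regime, and $|\hat\lambda_{l,j} - (H\lambda_l)_j|$ is bounded by the norm on the left. All subsequent reasoning is carried out on $\mathcal{E}$.

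Next I would run a simple counting argument. Let $S = \{l\leq N : |(\Lambda H)_{l,j}| \geq c+2\epsilon\}$. Since $|(\Lambda H)_{(m),j}|\geq c+2\epsilon$, the set $S$ contains at least $m$ indices. For every $l\in S$, the reverse triangle inequality combined with $\mathcal{E}$ yields $|\hat\lambda_{l,j}| \geq |(\Lambda H)_{l,j}| - \epsilon/2 \geq c + 3\epsilon/2$. Hence at least $m$ entries of $|\hat\Lambda_j|$ exceed $c+3\epsilon/2$, which forces the $m$-th largest entry of $|\hat\Lambda_j|$, and therefore each of the top-$m$ entries $|\hat\lambda_{j_1,j}|,\dots,|\hat\lambda_{j_m,j}|$, to be at least $c+3\epsilon/2$.

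Finally, a second application of $\mathcal{E}$ in the opposite direction gives, for each $i\leq m$,
\[
|(\Lambda H)_{j_i,j}| \;\geq\; |\hat\lambda_{j_i,j}| - \epsilon/2 \;\geq\; c + \epsilon \;>\; c,
\]
which is the claim. The proof is essentially a two-sided sandwich using uniform consistency and an order-preservation argument; the only subtle point, and the one I would state most carefully, is that uniform (as opposed to average) consistency is exactly what lets us simultaneously bound $|\hat\lambda_{l,j}-(H\lambda_l)_j|$ on the set $S$ and on the unknown index set $\{j_1,\dots,j_m\}$. Since Theorem~\ref{thm:uniform-consistency-loading} is already in hand, no further probabilistic machinery is needed.
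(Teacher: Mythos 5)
Your proof is correct and follows essentially the same strategy as the paper: apply the uniform consistency of Theorem~\ref{thm:uniform-consistency-loading} twice, once to lift a lower bound on the true top-$m$ loadings to a lower bound on the estimated ones (thereby forcing the top-$m$ estimated indices to have large estimated values), and once more to transfer that bound back to the true loadings at those indices. The only cosmetic difference is that you run the counting argument through the set $S$ of indices exceeding the threshold, while the paper phrases the same sandwich via the $m$-th order statistic of $|\hat\Lambda_j|$; both are equivalent and rely on the same uniform (not just average) consistency.
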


		\begin{proof}[Proof of Lemma \ref{lemma_orderstat}]
			
			Under Assumptions \ref{ass_factor}-\ref{ass_f_e}, from Theorem \ref{thm:uniform-consistency-loading}, 
			\[\max_{l\leq N}\norm{\hat{\Lambda}_l - H^\T \Lambda_l \weight_l } = O_p(1/\sqrt{N}+ N^{1/4}/\sqrt{T}) \]
			Thus, $\forall j \leq K$ $\max_{l\leq N}|\hat{\Lambda}_{l,j} -  (\Weight \*\Lambda H)_{l,j} | = O_p(1/\sqrt{N}+ N^{1/4}/\sqrt{T})$. In other words, $\forall \epsilon, \delta$, there exist $N_0$ and $ T_0$, such that $\forall N > N_0, T > T_0$,
			$$
			P(\max_{l}|\hat{\Lambda}_{l,j} - (\Weight \*\Lambda H)_{l,j}| > \epsilon) < \delta.
			$$
			Therefore, we have $\forall N > N_0, T > T_0$,
			$$
			P(\max_{i}||\hat{\Lambda}_{l,j} | - |(\Weight \*\Lambda H)_{l,j}|| > \epsilon) < \delta
			$$ 
			following $||\hat{\Lambda}_{l,j} | - |(\Weight\*\Lambda H)_{l,j}|| \leq |\hat{\Lambda}_{l,j} - (\Weight \*\Lambda H)_{l,j}|$.
			We have with probability at least $1-\delta$, $\forall 1\leq i \leq m$, $|(\Weight\*\Lambda H)_{\fidx_j(i),j}| - \epsilon < |\hat{\Lambda}_{\fidx_j(i), j}| < |(\Weight\*\Lambda H)_{\fidx_j(i),j}| + \epsilon$ and $|(\Weight\*\Lambda H)_{\fidx_j(m),j}| - \epsilon < |\hat{\Lambda}_{\fidx_j(m),j}| < |(\Weight\*\Lambda H)_{\fidx_j(m),j}| + \epsilon$. From $\forall 1\leq i \leq m$, $|\hat{\Lambda}_{\fidx_j(i),j}| \geq \min(|\hat{\Lambda}_{(1),j}|, \cdots, |\hat{\Lambda}_{(m),j}|)$ (recall the definition $|\hat{\Lambda}_{(i),j}|$ is the largest $i$-th entry in $\hat{\*\Lambda}_j$ ), with probability $\geq 1-\delta$, 
			\begin{eqnarray*}
				&&|(\Weight \*\Lambda H)_{\fidx_j(i),j}| + \epsilon > |\hat{\Lambda}_{\fidx_j(i),j}| \geq \min(|\hat{\Lambda}_{(1),j}|, \cdots, |\hat{\Lambda}_{(m),j}|) \\
				&>&\min(|(\Weight \*\Lambda H)_{(1),j}|-\epsilon, \cdots, |(\Weight \*\Lambda H)_{(m),j}-\epsilon| )= |(\Weight \*\Lambda H)_{(m),j}| - \epsilon
			\end{eqnarray*}
			Therefore, if $ |(\Weight \*\Lambda H)_{(m),j}| - 2\epsilon \geq c$, with probability $\geq 1-\delta$, $\forall 1\leq i \leq m$, $ |(\Weight \*\Lambda H)_{\fidx_j(i),j}| > c$.
		\end{proof}
		
		Next, in the following lemma, we show an asymptotic equivalent expression for $\rho$
		that will be used in the proof of Theorem \ref{thm-evt-one-factor} and \ref{thm-evt-multi-factor}.
		
		\begin{lemma} \label{lemma_rho}
			Under Assumptions \ref{ass_loading} and \ref{ass_f_e}, as $N, T \rightarrow \infty$, $$\rho =  tr \left(   \left(I + \left( \frac{\*F^\T  \*F}{T} \right)^{-1/2} \left(\wt{\*W}^\T \Weight \*\Lambda \right)^{-1} \frac{\wt{\*W}^\T \Weight \*e \*e^\T \Weight \wt{\*W}}{T}  \left(\*\Lambda^\T \Weight \wt{\*W}\right)^{-1} \left( \frac{\*F^\T  \*F}{T} \right)^{-1/2} \right)^{-1}  \right) +  o_p(1)$$
		\end{lemma}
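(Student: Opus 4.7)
The strategy is to substitute $X = \Lambda F^\T + e$ into $\tilde F = X^\T \wt W (\wt W^\T \wt W)^{-1}$ and expand the quadratic forms appearing in the definition of $\rho$. Writing $\Phi = \wt W^\T \wt W$, $\hat \Sigma_F = F^\T F/T$, $C = \Lambda^\T \wt W$, and $M_e = \wt W^\T e e^\T \wt W/T$, a direct expansion gives
\begin{align*}
F^\T \tilde F/T &= \hat \Sigma_F\, C\, \Phi^{-1} + (F^\T e^\T/T)\,\wt W\, \Phi^{-1},\\
\tilde F^\T \tilde F/T &= \Phi^{-1}\Ls C^\T \hat \Sigma_F C + M_e + C^\T (F^\T e^\T/T)\wt W + \wt W^\T (eF/T) C \Rs \Phi^{-1}.
\end{align*}

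The first step is to show that the cross terms coupling $F$ with $e$ are $o_p(1)$. Assumption~\ref{ass_f_e} combined with Markov's inequality and the union bound (as used in Lemma~\ref{lemma-max-norm-lambda-error}) gives $\max_{i\leq N}\norm{T^{-1}\sum_t f_t e_{i,t}} = O_p(N^{1/4}/\sqrt{T})$. Since each column of $\wt W$ has only $m$ nonzero entries and unit $\ell_2$-norm, the cross terms $(F^\T e^\T/T)\wt W$ are therefore $O_p(\sqrt{m}\,N^{1/4}/\sqrt{T})$, which vanishes under the factor-model regime. Together with the implicit hypothesis that $\wt W^\T \Lambda$ is invertible with $\norm{(\wt W^\T \Lambda)^{-1}}_2 = O_p(1)$, these contributions can be absorbed into $o_p(1)$.

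Once the cross terms are removed, the $\Phi^{-1}$ factors cancel when $(\tilde F^\T \tilde F/T)^{-1}$ is plugged into the trace, and cyclic invariance yields $\rho = tr\Ls C\,(C^\T \hat \Sigma_F C + M_e)^{-1}\,C^\T \hat \Sigma_F \Rs + o_p(1)$. Using invertibility of $C$, factor
\[ C^\T \hat \Sigma_F C + M_e = C^\T\Ls\hat \Sigma_F + (C^\T)^{-1} M_e C^{-1}\Rs C, \]
invert this product, and observe that the outer $C$ and $C^\T$ cancel against the factors already outside the inverse, leaving
\[ \rho = tr\Ls \hat \Sigma_F\,(\hat \Sigma_F + D)^{-1}\Rs + o_p(1),\qquad D = (\wt W^\T \Lambda)^{-1}\, M_e\, (\Lambda^\T \wt W)^{-1}. \]
Finally, the identity $tr\Ls\hat \Sigma_F(\hat \Sigma_F + D)^{-1}\Rs = tr\Ls(I + \hat \Sigma_F^{-1/2} D \hat \Sigma_F^{-1/2})^{-1}\Rs$, obtained by conjugating with $\hat \Sigma_F^{1/2}$ and using cyclic invariance of the trace once more, delivers the claimed formula.

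The main obstacle is the first step: although Assumption~\ref{ass_f_e} controls $T^{-1/2}\sum_t f_t e_{i,t}$ for each individual $i$, the support of $\wt W$ is data-dependent (it is determined by $\hat \Lambda$), so one really needs a uniform-in-$i$ bound. This is exactly what the union bound coupled with the fourth-moment condition in Assumption~\ref{ass_f_e} provides, paralleling the argument already used in Lemma~\ref{lemma-max-norm-lambda-error}; all remaining steps are routine matrix algebra.
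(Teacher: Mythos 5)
Your proposal is correct and follows essentially the same route as the paper's own proof: substitute $X = \Lambda F^\T + e$ into $\tilde F$, expand $F^\T\tilde F/T$ and $\tilde F^\T\tilde F/T$, show the cross terms are $o_p(1)$, cancel the $(\wt W^\T\wt W)^{-1}$ factors, and use cyclicity of the trace with the factorization $C^\T\hat\Sigma_F C + M_e = C^\T[\hat\Sigma_F + (C^\T)^{-1}M_e C^{-1}]C$. One place where you are actually more careful than the paper: the paper justifies $F^\T e^\T\wt W/T = o_p(1)$ via the pointwise statement ``$\forall i$, $e_i F/T = o_p(1)$'' together with boundedness of the weights, whereas you correctly note that the support of $\wt W$ is data-dependent (determined by $\hat\Lambda$), so a uniform bound $\max_{i\le N}\|T^{-1}\sum_t f_t e_{it}\| = O_p(N^{1/4}/\sqrt T)$ — obtainable from Assumption~\ref{ass_f_e}, Markov, and the union bound as in Lemma~\ref{lemma-max-norm-lambda-error} — is what is really needed. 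You also make explicit the implicit regularity condition $\|(\wt W^\T\Lambda)^{-1}\|_2 = O_p(1)$, which the paper only imposes explicitly later in Theorem~\ref{thm:gen-lam}. These are minor refinements of the same argument rather than a different proof strategy.
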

		
		\begin{proof}[Proof of Lemma \ref{lemma_rho}]
			Denote $Q$ as $Q = \*\Lambda^\T \Weight \wt{\*W}$. Given $\tilde{\*F} =  \*X^\T \Weight \wt{\*W} (\wt{\*W}^\T  \wt{\*W})^{-1}$ and $\Weight X =\Weight \*\Lambda \*F^\T  + \Weight \*e$, we have 
			$$
			\*F^\T  \tilde{\*F}/T = (\*F^\T   \*F/T)(\*\Lambda^\T \Weight \wt{\*W}) (\wt{\*W}^\T  \wt{\*W})^{-1} + (\*F^\T  \*e^\T \Weight \wt{\*W}/T) (\wt{\*W}^\T  \wt{\*W})^{-1}  = (\*F^\T \*F/T) Q (\wt{\*W}^\T  \wt{\*W})^{-1} + o_p(1)
			$$
			follows from $\forall i$, $\*e_i^\T \*F/T = o_p(1)$ from Assumption \ref{ass_f_e} and $\forall i, j$, $\tilde w_{i,j}$ to be bounded because $\wt{\*W}_j^\T  \wt{\*W}_j = 1$ and each column in $\wt{\*W}$ has $m$ (fixed) nonzero entries, so $\*F^\T  \*e^\T \Weight \wt{\*W}/T = o_p(1)$. $\tilde{\*F}^\T \tilde{\*F}/T$ has
			\begin{eqnarray*}
				&& \tilde{\*F}^\T \tilde{\*F}/T \\
				&=& (\wt{\*W}^\T  \wt{\*W})^{-1} (\*F \*\Lambda^\T \Weight \wt{\*W}  + \*e^\T \Weight \wt{\*W})^\T (\*F \*\Lambda^\T \Weight \wt{\*W}  + \*e^\T \Weight \wt{\*W})(\wt{\*W}^\T  \wt{\*W})^{-1}/T \\
				&=& (\wt{\*W}^\T  \wt{\*W})^{-1} \left(Q^\T (\*F^\T  \*F/T) Q + ( \wt{\*W}^\T\Weight e \*F/T)Q + Q^\T (\*F^\T  \*e^\T \Weight \wt{\*W} /T)  +  \wt{\*W}^\T  \Weight \*e \*e^\T \Weight \wt{\*W}/T \right) (\wt{\*W}^\T  \wt{\*W})^{-1} \\
				&=&(\wt{\*W}^\T  \wt{\*W})^{-1} \left( Q^\T (\*F^\T  \*F/T) Q + \wt{\*W}^\T \Weight \*e \*e^\T \Weight \wt{\*W}/T \right) (\wt{\*W}^\T  \wt{\*W})^{-1}  + o_p(1) 
			\end{eqnarray*}
			follows from $Q = [q_{jl}] = \*\Lambda^\T \Weight \wt{\*W}$, where $q_{jl} = \sum_{i = 1}^m \Lambda_{\fidx_l(i),j} \wt W_{\fidx_l(i),l}$, where $\wt W_{\fidx_l(i),l}$ is nonzero for $\fidx_l(1), \fidx_l(2), \cdots, \fidx_l(m)$. $\Lambda_{\fidx_l(i),j} = O_p(1)$ from assumption \ref{ass_loading} and then $q_{jl} = O_p(1)$.  
			
			Plug the above into $\rho$ and 
			\begin{eqnarray*}
				\rho &=& \tr \left( (\*F^\T \*F/T)^{-1} (\*F^\T  \tilde{\*F}/T) (\tilde{\*F}^\T \tilde{\*F}/T)^{-1} (\tilde{\*F}^\T \*F/T) \right) \\
				&=& \tr \left( \left( \frac{\*F^\T  \*F}{T} \right)^{-1} \left( \frac{\*F^\T  \*FQ}{T} \right) \left( \frac{Q^\T\left(\*F^\T  \*F + (Q^\T)^{-1} \wt{\*W}^\T \Weight \*e \*e^\T \Weight \wt{\*W} Q^{-1}\right)Q}{T}  \right)^{-1} \left( \frac{Q^\T \*F^\T  \*F}{T} \right) \right) +  o_p(1)  \\
				&=& \tr \left(  \left( \frac{\*F^\T  \*F + (Q^\T)^{-1} \wt{\*W}^\T \Weight  \*e \*e^\T \Weight \wt{\*W} Q^{-1}}{T}  \right)^{-1} \left( \frac{\*F^\T  \*F}{T} \right) \right)  + o_p(1)  \\
				&=& \tr \left( \left( \frac{\*F^\T  \*F}{T} \right)^{1/2}   \left( \frac{\*F^\T  \*F + (Q^\T)^{-1}  \wt{\*W}^\T \Weight \*e \*e^\T \Weight \wt{\*W} Q^{-1}}{T}  \right)^{-1} \left( \frac{\*F^\T  \*F}{T} \right)^{1/2} \right) + o_p(1)  \\
				&=& \tr \left(   \left(I + \left( \frac{\*F^\T  \*F}{T} \right)^{-1/2}(Q^\T)^{-1} \frac{ \wt{\*W}^\T \Weight \*e \*e^\T \Weight \wt{\*W}  }{T}  Q^{-1} \left( \frac{\*F^\T  \*F}{T} \right)^{-1/2} \right)^{-1}  \right)  + o_p(1) 
			\end{eqnarray*}
			
		\end{proof}
		
		In the following lemma, we show that if $\wt{\*W}$ satisfies the non-overlapping assumption, then we can bound the term  $\frac{1}{T} \wt{\*W}^\T \Weight \*e \*e^\T \Weight \wt{\*W} $ that appears in the asymptotic equivalence form of $\rho$. This lemma will be used in the proof of Proposition \ref{prop-one-factor} and Theorem \ref{thm-evt-multi-factor}.
		\begin{lemma} \label{lemma_lam_error}
			If indices of nonzeros entries in columns of $\wt{\*W}$ do not overlap and under Assumption \ref{ass_error}.2, then
			$$
			\frac{1}{T} \wt{\*W}^\T \Weight \*e \*e^\T \Weight \wt{\*W} \leq (1+h(m)) \sigma_{e,\weight}^2 I_K  + o_p(1) 
			$$
		\end{lemma}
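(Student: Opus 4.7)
The plan is to decompose the argument into a law-of-large-numbers step that reduces the stochastic inequality to a deterministic Loewner-order bound on a fixed $K\times K$ matrix, and then to verify that bound using Cauchy--Schwarz together with the definition of $h(m)$.

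First, I would show that
\[ \frac{1}{T}\wt W^\T e e^\T \wt W = \wt W^\T \Sigma_e \wt W + o_p(1), \]
where $\Sigma_e = (\tau_{ij,tt})$ is the population cross-sectional covariance at lag zero. The $(k,l)$ entry on the left equals $\sum_{i,j} \wt w_{i,k}\wt w_{j,l}(T^{-1}\sum_t e_{ti}e_{tj})$, a finite sum with only $O(m^2)$ nonvanishing terms since each column of $\wt W$ has exactly $m$ nonzero entries. Stationarity of $e_t$ (Assumption \ref{ass_error}.2) gives $E[T^{-1}\sum_t e_{ti}e_{tj}] = \tau_{ij,tt}$, and Assumption \ref{ass_error}.4 supplies the fourth-moment bound on $T^{-1/2}(e_i^\T e_j - Ee_i^\T e_j)$. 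Chebyshev's inequality applied to each of the finitely many pairs $(i,j)$ then yields $T^{-1} e_i^\T e_j = \tau_{ij,tt} + O_p(T^{-1/2})$, which propagates linearly through the finite sum.

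Second, I would bound $\wt W^\T \Sigma_e \wt W$ in Loewner order. For a unit vector $v \in \mathbb{R}^K$, set $u = \wt W v \in \mathbb{R}^N$. The non-overlapping hypothesis combined with $\|\wt W_k\|=1$ for each $k$ forces $\wt W^\T \wt W = I_K$, so $\|u\|=1$; in fact $u_i = v_k \wt w_{i,k}$ for $i\in S_k$ and $u_i=0$ otherwise, where $S_k$ denotes the support of $\wt W_k$. Decomposing $\Sigma_e$ into its diagonal and off-diagonal parts and bounding the diagonal pointwise by $\sigma_e^2$ gives
\[ v^\T \wt W^\T \Sigma_e \wt W v \leq \sigma_e^2 + \Bigl|\sum_{i\ne j} u_i u_j \tau_{ij,tt}\Bigr|. \]
The within-block contribution to the cross-term equals $\sum_k v_k^2 \sum_{i\ne j\in S_k}\wt w_{i,k}\wt w_{j,k}\tau_{ij,tt}$; Cauchy--Schwarz combined with $\sum_{i\ne j\in S_k}\tau_{ij,tt}^2 \leq h(m)^2\sigma_e^4$ (by the definition of $h(m)$) and $\sum_{i\ne j\in S_k}(\wt w_{i,k}\wt w_{j,k})^2 \leq \|\wt W_k\|^4 = 1$ bounds its absolute value by $h(m)\sigma_e^2\sum_k v_k^2 = h(m)\sigma_e^2$, giving the desired $(1+h(m))\sigma_e^2$ upper bound for the quadratic form.

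The main obstacle is controlling the cross-block contribution $\sum_{k\ne l} v_k v_l \sum_{i\in S_k, j\in S_l}\wt w_{i,k}\wt w_{j,l}\tau_{ij,tt}$, arising from residual covariances across disjoint index supports. A direct Cauchy--Schwarz over the full $Km$-element support of $u$ would inflate the constant from $h(m)$ to $h(Km)$; to recover the stated $h(m)$ one must either argue that these cross-block terms vanish---which happens exactly when the errors are cross-sectionally uncorrelated between disjoint index sets, the scenario the non-overlapping hypothesis is designed to isolate---or absorb them into the $o_p(1)$ remainder using Assumption \ref{ass_error}.2's $\ell_1$-bound on $\Sigma_e$. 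This block-separation step is the delicate part of the argument and is exactly where the non-overlapping assumption does its work.
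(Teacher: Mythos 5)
Your first two steps align well with the paper's own proof. The reduction $\frac{1}{T}\wt W^\T ee^\T\wt W = \wt W^\T\Sigma_e\wt W + o_p(1)$ is implicit in the paper (which invokes stationarity to pass from $\frac{1}{T}e_{j_i}^\T e_{j_k}$ to $\tau_{j_i j_k,tt} + o_p(1)$), and your within-block Cauchy--Schwarz bound---diagonal $\leq \sigma_e^2\sum_i\wt w_{j_i,j}^2 = \sigma_e^2$, off-diagonal within a support bounded by $h(m)\sigma_e^2$ via $\sum_{i\neq k}(\wt w_{j_i,j}\wt w_{j_k,j})^2\leq 1$ and $\sum_{i\neq k}\tau_{j_i j_k,tt}^2\leq h(m)^2\sigma_e^4$---reproduces the paper's bound on the $(j,j)$ entry essentially verbatim. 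Framing the final inequality through quadratic forms $v^\T\wt W^\T\Sigma_e\wt W v$ rather than entrywise is a cleaner route to a Loewner-order conclusion, but it leads to the same arithmetic.

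The cross-block contribution that you flag at the end is, however, a genuine gap in your argument---and it is worth pointing out that it is also unaddressed in the paper's own proof. The paper disposes of the off-diagonal entries by asserting that the $(j,k)$ entry of $\frac{1}{T}\wt W^\T ee^\T\wt W$ is zero ``given there are no overlapping nonzero elements among loadings,'' but disjointness of the supports $S_j$ and $S_k$ does not annihilate $\sum_{i\in S_j,\,l\in S_k}\wt w_{i,j}\wt w_{l,k}\,\tau_{il,tt}$: these are population covariances between distinct cross-section units, which need not vanish under Assumptions \ref{ass_error}.1--\ref{ass_error}.4. Your proposed fallbacks do not close the gap either---assuming cross-block independence is an additional hypothesis not present in the lemma, and the $\ell_1$ bound $\norm{\Sigma_e}_1\leq M$ bounds these terms by a constant but does not make them $o_p(1)$ for fixed $m$ as $T\to\infty$. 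A direct Cauchy--Schwarz over the full $Km$-element support would produce $h(Km)$, not $h(m)$, as you note. So as written, neither your proposal nor the paper's proof establishes the stated $(1+h(m))\sigma_e^2 I_K$ bound without an additional assumption controlling cross-block covariances (or a redefinition of $h$ over $Km$ indices). You have correctly located the crux, but you have not resolved it; an honest version of the lemma would either strengthen the hypotheses or weaken the constant to $h(Km)$.
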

		
		\begin{proof}[Proof of Lemma \ref{lemma_lam_error}]
			Let the indices of nonzero entries in $\wt{\*W}_j$ be $\fidx_j(1), \cdots, \fidx_j(m)$. From Cauchy-Schwartz inequality, the $(j,j)$ element in $\frac{1}{T}\wt{\*W}^\T  \*e \*e^\T \wt{\*W}$ has 
			
			\begin{eqnarray}
			\nonumber && \frac{1}{T}\wt{\*W}_j^\T \Weight \*e \*e^\T \Weight \wt{\*W}_j \\
			\nonumber &=& \frac{1}{T} \sum_{i = 1}^{m} \sum_{k =1 }^{m} \weight_{\fidx_j(i)} \weight_{\fidx_j(k)} \tilde w_{\fidx_j(i),j} \tilde w_{\fidx_j(k),j} e_{\fidx_j(i)}^\T e_{\fidx_j(k)} \\
			\nonumber &=& \frac{1}{T} \sum_{i = 1}^{m} \weight_{\fidx_j(i)}^2 \tilde w_{\fidx_j(i),j}^2  \*e_{\fidx_j(i)}^\T  \*e_{\fidx_j(i)} + \frac{1}{T} \sum_{i \neq k} \weight_{\fidx_j(i)} \weight_{\fidx_j(k)}  \tilde w_{\fidx_j(i),j} \wt w_{\fidx_j(k),j} \*e_{\fidx_j(i)}^\T \*e_{\fidx_j(k)} \\
			&\leq& \sigma_{e,\weight}^2 \sum_{i = 1}^{m} \tilde w_{\fidx_j(i),j}^2 + \left( \sum_{i \neq k} \tilde w_{\fidx_j(i),j}^2 \tilde w_{\fidx_j(k),j}^2 \right)^{1/2} \left( \sum_{i \neq k} \left( \frac{1}{T} \weight_{\fidx_j(i)} \weight_{\fidx_j(k)} \*e_{\fidx_j(i)}^\T \*e_{\fidx_j(k)} \right)^2 \right)^{1/2} + o_p(1) \label{eqn:lameelam1} \\ 
			&\leq& \sigma_{e,\weight}^2 + \left( \sum_{i \neq k} \tilde w_{\fidx_j(i),j}^2 \tilde w_{\fidx_j(k),j}^2 \right)^{1/2} \left( \sum_{i \neq k}\weight_{\fidx_j(i)} \weight_{\fidx_j(k)} \tau_{\fidx_j(i) \fidx_j(k),tt}^2 \right)^{1/2} + o_p(1) \label{eqn:lameelam2} \\
			&\leq& (1 + h(m))\sigma_{e,\weight}^2 + o_p(1) \label{eqn:lameelam3}
			\end{eqnarray}
			where $e_{i} \in R^{T \times 1}$ is the $i$-th row of $e$. Inequality (\ref{eqn:lameelam1}) holds from the definition of $\sigma_{e,\weight}^2$ in Assumption \ref{ass_error}. Inequality (\ref{eqn:lameelam2}) holds from the stationarity of $\*e_t$ in Assumption \ref{ass_error}.2, $\frac{1}{T}\weight_{\fidx_j(i)}^2 \*e_{\fidx_j(i)}^\T \*e_{\fidx_j(i)} = \weight_{\fidx_j(i)}^2 \tau_{\fidx_j(i) \fidx_j(k), tt}+ o_p(1) \leq \sigma_{e,\weight}^2 + o_p(1)$.  Inequality (\ref{eqn:lameelam3}) holds from $\sum_{i \neq k} \tilde w_{\fidx_j(i),j}^2 \tilde w_{\fidx_j(k),j}^2 \leq (\sum_{i =1}^m \tilde w_{\fidx_j(i),j}^2 )^2 = 1$ and the definition of $h(m)$, $\sum_{i \neq k} \weight_{\fidx_j(i)} \weight_{\fidx_j(k)}  \tau_{\fidx_j(i) \fidx_j(k),tt}^2 \leq (h(m))^2 \sigma_{e,\weight}^4$. 
			
			The $(j,k)$ element in $\frac{\wt{\*W}^\T \Weight  \*e \*e^\T \Weight \wt{\*W}}{T}$ is 0 given there are no overlapping nonzero elements among loadings. Thus, we have $$
			\frac{1}{ T} \wt{\*W}^\T \Weight \*e \*e^\T \Weight \wt{\*W} \leq (1+h(m)) \sigma_{e,\weight}^2 I_K + o_p(1).
			$$
		\end{proof}
		
		The following lemma provides an asymptotic equivalent expression for $\hat{D}$. This lemma will be used in the proof of Lemma \ref{lemma_u}.
		\begin{lemma} \label{lemma_sbar}
			Suppose Assumptions \ref{ass_factor}-\ref{ass_f_e} hold. Let $H = \frac{\*F^\T  \*F}{T} \frac{\*\Lambda^\T \Weight \hat{\*\Lambda}}{N} (\hat{D})^{-1}$, then $H$ is invertible. Let $\bar{D} = H^{-1} \frac{\*F^\T  \*F}{T} (H^\T)^{-1}$, we have
			$$
			\hat{D} = \bar{D} + o_p(1)
			$$
		\end{lemma}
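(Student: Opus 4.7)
The plan is to exploit the eigen-equation $(NT)^{-1}XX^\T \hat{\Lambda} = \hat{\Lambda}\hat{S}$ together with the identification $\hat{\Lambda}^\T \hat{\Lambda}/N = I_K$ to write $\hat{S}$ as a quadratic form in $A := \Lambda^\T \hat{\Lambda}/N$ and $B := F^\T F/T$, and then to use the defining relation $H = B A \hat{S}^{-1}$ to convert this quadratic form into $H^{-1} B (H^\T)^{-1} = \bar{S}$ up to an $o_p(1)$ error.

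First I would left-multiply the eigen-equation by $\hat{\Lambda}^\T/N$ and apply the normalization to obtain
\[ \hat{S} \;=\; \frac{1}{N^2 T}\,\hat{\Lambda}^\T X X^\T \hat{\Lambda}. \]
Expanding $X = \Lambda F^\T + e$ produces four terms: the main term $A^\T B A$, plus three remainders linear or quadratic in $e$, namely $\hat{\Lambda}^\T \Lambda F^\T e^\T \hat{\Lambda}/(N^2T)$, its transpose, and $\hat{\Lambda}^\T e e^\T \hat{\Lambda}/(N^2T)$. Under Assumptions \ref{ass_factor}--\ref{ass_f_e} these three remainders are $o_p(1)$ by standard approximate-factor-model bookkeeping: the first two reduce to averages of $\hat{\lambda}_i (F^\T e_i/T)$ which are $O_p(1/\sqrt{T})$ by Assumption \ref{ass_f_e}, and the last splits via $e_i^\T e_j/T = \zeta_{ij} + E[e_i^\T e_j/T]$, where the mean part is controlled by Assumption \ref{ass_error}.2 ($\|\Sigma_e\|_1 \leq M$) and the $\zeta_{ij}$ part by Assumption \ref{ass_error}.4---the same ingredients already used in Lemma \ref{lemma-max-norm-lambda-error}. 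This yields the clean identity $\hat{S} = A^\T B A + o_p(1)$.

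Next I would substitute $A = B^{-1} H \hat{S}$ (obtained by rearranging the definition of $H$) into the quadratic form, producing
\[ \hat{S} \;=\; \hat{S}\, H^\T B^{-1} H\, \hat{S} + o_p(1). \]
By Lemma A.3 of \cite{bai2003inferential}, $\hat{S} \xrightarrow{P} S$ with $S$ positive definite (Assumptions \ref{ass_factor}, \ref{ass_loading}), so $\hat{S}^{-1}$ is $O_p(1)$; pre- and post-multiplying by $\hat{S}^{-1}$ gives $\hat{S}^{-1} = H^\T B^{-1} H + o_p(1)$. Invertibility of $H$ follows automatically here: $B$ is invertible in the limit and $\hat{S}$ is invertible, and the identity just derived together with $\hat{S}^{-1}$ being $O_p(1)$ forces $H^\T B^{-1} H$ to converge to an invertible limit, hence $H$ is invertible with probability approaching one. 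Inverting both sides, using continuity of matrix inversion at a nonsingular limit, gives $\hat{S} = (H^\T B^{-1} H)^{-1} + o_p(1) = H^{-1} B (H^\T)^{-1} + o_p(1) = \bar{S} + o_p(1)$.

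The main obstacle is largely bookkeeping---tracking $o_p(1)$ through each matrix multiplication and inversion, and verifying that every matrix being inverted has a positive-definite probability limit. The only step requiring genuine probabilistic input is the vanishing of the three $e$-dependent remainders, but these are the standard Bai--Ng averages already invoked in the proof of Theorem \ref{thm:uniform-consistency-loading}, so no new estimates are needed.
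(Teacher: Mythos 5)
Your proposal is correct and follows essentially the same route as the paper: both derive the key identity $\hat{S} = (\Lambda^\T\hat{\Lambda}/N)^\T(F^\T F/T)(\Lambda^\T\hat{\Lambda}/N) + o_p(1)$ from the eigen-equation and the loading normalization, then close the argument using the definition of $H$ and the convergence $\hat{S}\xrightarrow{P}S$ from Lemma A.3 of Bai (2003). The only cosmetic difference is the direction of substitution — you eliminate $A=\Lambda^\T\hat\Lambda/N$ in favor of $H$ and invert at the end, while the paper computes $H=(A^\T)^{-1}+o_p(1)$ and substitutes into $\bar S$ directly — and you obtain invertibility of $H$ from the derived identity rather than from the rank argument, but the substance is identical.
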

		\begin{proof}[Proof of Lemma \ref{lemma_sbar}]
			Recall $\hat{\*\Lambda}$ are the PCA loadings estimated from $\*X^{\twt} = \Weight \*X$. 
			
			Under Assumptions \ref{ass_factor}-\ref{ass_f_e} and Theorem 1 in \cite{bai2002determining}, $\mathrm{rank}(\frac{\*F^\T  \*F}{T} \frac{\*\Lambda^\T \Weight  \hat{\*\Lambda}}{N} ) = K$ and therefore $\frac{\*F^\T  \*F}{T} \frac{\*\Lambda^\T \Weight \hat{\*\Lambda}}{N} $ is invertible. $\hat{D}$ is invertible by Lemma A.3 in \cite{bai2003inferential} and Assumptions \ref{ass_factor}-\ref{ass_f_e}. Substituting $\*X = \Weight \*\Lambda F^\T  + \*e$ into $\frac{1}{N} \hat{\*\Lambda}^\T  \left( \frac{1}{NT}  \tilde{\*X} \tilde{\*X}^\T \right) \hat{\*\Lambda} = \hat{D}$, from Equation \eqref{eqn-decom-lambda} and Theorem 1 in \cite{bai2003inferential}, we have
			$$
			\hat{D} = \frac{\hat{\*\Lambda}^\T  \Weight \*\Lambda}{N} \frac{\*F^\T  \*F}{T} \frac{\*\Lambda^\T \Weight \hat{\*\Lambda}}{N} + o_p(1).
			$$
			Plug it into $H = \frac{\*F^\T  \*F}{T} \frac{\*\Lambda^\T \Weight \hat{\*\Lambda}}{N} (\hat{D})^{-1}$,  we have
			$$
			H = \left(\frac{\hat{\*\Lambda}^\T \Weight  \*\Lambda}{N} \right)^{-1} + o_p(1).
			$$
			Thus, 
			
			$$
			\bar{D} = H^{-1} \frac{\*F^\T  \*F}{T} (H^\T)^{-1} = \left( \left(\frac{\hat{\*\Lambda}^\T \Weight  \*\Lambda}{N} \right)^{-1} \right)^{-1} \frac{\*F^\T  \*F}{T}  \left(\left( \left(\frac{\hat{\Lambda}^\T \Weight  \*\Lambda}{N} \right)^{-1} \right)^\T \right)^{-1} + o_p(1) = \hat{D} + o_p(1)
			$$

		\end{proof}
		
		The following lemma provides an asymptotic equivalent expression for $\frac{1}{T} \wt{\*W}^\T  (\Weight \*\Lambda \*F^\T  \*F  \*\Lambda^\T \Weight) \wt{\*W} $, which is a critical term in the asymptotic expression for $\rho$. This lemma will be used in the proof of Theorem \ref{thm-evt-multi-factor}.
		\begin{lemma} \label{lemma_u}
			Under Assumption \ref{ass_factor}-\ref{ass_f_e}, let $D = \diag(D_1, D_2,\cdots, D_{K})$ be the diagonal matrix consisting the eigenvalues of $\Sigma_F \Sigma_{\Lambda,\Weight}$ in decreasing order, $H = \frac{\*F^\T  \*F}{T} \frac{\*\Lambda^\T \Weight \hat{\*\Lambda}}{N} (\hat{D})^{-1}$ and $\*U = \Weight \*\Lambda H D^{1/2}$, where $D^{1/2} = \diag(D_1^{1/2}, D_2^{1/2}, \cdots, D_{K}^{1/2})$, then for all $i$ and $l$, we have 
			$$
			\frac{1}{T} \weight_i \Lambda_i^\T \*F^\T  \*F \Lambda_l \weight_l = u_i^\T  u_l + o_p(1).
			$$
			Furthermore, we rescale each column of $\*U \odot \*M $ to get $\tilde{\*U}$, 
			$$
			\tilde{\*U}  =
			\begin{bmatrix}
			\frac{ \*U_1 \odot \*M_1 }{\norm{\*U_1 \odot \*M_1}} & \frac{ \*U_2 \odot \*M_2 }{\norm{\*U_2 \odot \*M_2 }} & \cdots &
			\frac{ \*U_K \odot \*M_K }{\norm{\*U_K \odot \*M_K}}
			\end{bmatrix} 
			$$
			where $\*U_j$ and $\*M_j$ are $j$-th column in $\*U$ and $\*M$ ($\*M$ is the mask matrix defined in (\ref{eqn-tilde-lambda-def})). We have $\tilde w_i = \tilde u_i + o_p(1)$ and 
			$$ 
			\frac{1}{T} \wt{\*W}^\T  (\Weight \*\Lambda \*F^\T  \*F  \*\Lambda^\T \Weight) \wt{\*W} =  \tilde{\*U}^\T   \*U \*U^\T  \tilde{\*U} + o_p(1)
			$$
		\end{lemma}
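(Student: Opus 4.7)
The plan is to prove the two equalities sequentially, using the earlier results (Theorem~\ref{thm:uniform-consistency-loading}, Lemma~\ref{lemma_sbar}, and Lemma A.3 of Bai and Ng, 2002). The key algebraic identity I would exploit is $u_i^\top u_l = \lambda_i^\top H S H^\top \lambda_l$, which follows directly from $U = \Lambda H S^{1/2}$.

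For the first statement, I would start from the definition $\bar S = H^{-1}(F^\top F/T)(H^\top)^{-1}$, i.e., $F^\top F/T = H \bar S H^\top$. By Lemma~\ref{lemma_sbar}, $\hat S = \bar S + o_p(1)$, and by Lemma A.3 in Bai and Ng (2002), $\hat S \xrightarrow{P} S$; thus $\bar S = S + o_p(1)$. Substituting,
\[
\tfrac{1}{T}\lambda_i^\top F^\top F \lambda_l = \lambda_i^\top H S H^\top \lambda_l + \lambda_i^\top H\,o_p(1)\,H^\top \lambda_l = u_i^\top u_l + o_p(1),
\]
where I use that for each fixed pair $(i,l)$, $\lambda_i,\lambda_l = O_p(1)$ by Assumption~\ref{ass_loading} and that $H = O_p(1)$.

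For the second statement, I would compute entrywise. The $(j,k)$ entry of $\tfrac{1}{T}\wt W^\top \Lambda F^\top F \Lambda^\top \wt W$ is a \emph{finite} sum since $\wt W_j$ has only $m$ nonzero entries at indices $j_1,\dots,j_m$ and likewise $\wt W_k$. Expanding,
\[
\sum_{p=1}^{m}\sum_{q=1}^{m} \wt W_{j_p,j}\,\wt W_{j_q,k}\,\Bigl(\tfrac{1}{T}\lambda_{j_p}^\top F^\top F \lambda_{j_q}\Bigr) = \sum_{p,q} \wt W_{j_p,j}\,\wt W_{j_q,k}\, u_{j_p}^\top u_{j_q} + o_p(1)
\]
by applying the first statement to each of the (finitely many) terms. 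Recognizing the right-hand side as $\wt W_j^\top U U^\top \wt W_k$ gives $\tfrac{1}{T}\wt W^\top \Lambda F^\top F \Lambda^\top \wt W = \wt W^\top U U^\top \wt W + o_p(1)$.

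It remains to replace $\wt W$ by $\tilde U$. By Theorem~\ref{thm:uniform-consistency-loading}, $\max_l \|\hat \lambda_l - H\lambda_l\| = o_p(1)$ under the standard rate condition, so at each of the $mK$ nonzero positions, $\hat \Lambda_{i,j} = (\Lambda H)_{i,j} + o_p(1) = s_j^{-1/2} U_{i,j} + o_p(1)$. The common factor $s_j^{1/2}$ cancels in the normalization, so $\wt W_{i,j} - \tilde U_{i,j} = o_p(1)$ entrywise, provided the denominators $\|\hat \Lambda_j \odot M_j\|$ and $\|U_j \odot M_j\|$ are bounded away from zero; this follows from the standard identification condition $\tfrac{1}{N}\hat \Lambda^\top \hat \Lambda = I_K$ together with the extreme-value behavior of the largest entries. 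Since $U^\top \wt W_k$ is a finite combination of the $O_p(1)$ rows $u_i$ (hence $O_p(1)$), the entrywise closeness implies $\wt W^\top U U^\top \wt W = \tilde U^\top U U^\top \tilde U + o_p(1)$, completing the proof. The main obstacle is this last step—making the swap between $\wt W$ and $\tilde U$ rigorous requires controlling both the numerator and denominator in the normalization uniformly over the $m$ selected cross-section units, where one must carefully argue that the selected denominator is bounded below with probability approaching one.
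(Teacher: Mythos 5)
Your proof follows essentially the same route as the paper's: for the first identity you use $F^\top F/T = H\bar S H^\top$ (from the definition of $\bar S$), then $\hat S = \bar S + o_p(1)$ from Lemma~\ref{lemma_sbar} and $\hat S \to S$ from Lemma~A.3 of Bai and Ng (2003) to conclude $\bar S = S + o_p(1)$, giving $\tfrac{1}{T}\lambda_i^\top F^\top F\lambda_l = \lambda_i^\top HSH^\top\lambda_l + o_p(1) = u_i^\top u_l + o_p(1)$; for the second identity you expand entrywise over the finitely many nonzero positions and then pass from $\wt W$ to $\tilde U$ using $\hat S^{1/2}\hat\lambda_i = u_i + o_p(1)$ and the observation that the $\hat s_j^{1/2}$ scaling cancels under normalization. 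This is precisely the paper's argument.

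One small remark: the caveat you flag at the end — that the swap $\wt W \to \tilde U$ requires controlling numerator and denominator of the normalization uniformly over the $m$ selected (random) indices and verifying the denominators stay bounded away from zero — is exactly the step the paper also leaves implicit, asserting $\wt w_i = \tilde u_i + o_p(1)$ and moving on. You are right that this is the least rigorous link in the chain, though the missing details are recoverable from Theorem~\ref{thm:uniform-consistency-loading} and Lemma~\ref{lemma_orderstat} (the selected entries are the extreme order statistics, which are bounded below in probability). One inaccuracy worth noting: your remark that ``$U^\top\wt W_k$ is a finite combination of the $O_p(1)$ rows $u_i$'' is not quite right, since the selected rows of $U$ are the extreme ones and need not be $O_p(1)$ (they can grow like $N^{1/4}$ under Assumption~\ref{ass_loading}); the argument survives because the normalization by $\|\cdot\|$ makes $\wt W$ and $\tilde U$ themselves bounded and their difference $o_p(1)$, but the way you phrase the boundedness is a bit loose.
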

		
		\begin{proof}[Proof of Lemma \ref{lemma_u}]
			Recall the definition of $\bar{D} = H^{-1} \frac{\*F^\T  \*F}{T} (H^\T)^{-1}$, we have $\frac{1}{T} \Weight \*\Lambda \*F^\T  \*F \*\Lambda^\T \Weight = \Weight \*\Lambda H \bar{D} H^\T \*\Lambda^\T \Weight$. Under Assumptions \ref{ass_factor}-\ref{ass_f_e}, and from Lemma \ref{lemma_sbar}, $\hat{D} = \bar{D} + o_p(1)$. From Lemma A.3 in \cite{bai2003inferential}, $\hat{D} = \frac{1}{N} \hat{\Lambda}^\T  \left( \frac{1}{NT} \Weight \*X \*X^\T \Weight \right) \hat{\Lambda} = D + o_p(1)$. Thus, $D = \bar{D} + o_p(1)$. Since $U = \Weight \*\Lambda H D^{1/2}$, 
			$$
			\frac{1}{T} \weight_i \Lambda_i^\T \*F^\T  \*F \Lambda_l \weight_l = \weight_i \Lambda_i^\T H \bar{D} H^\T  \Lambda_l \weight_l  = \weight_i  \Lambda_i^\T H D H^\T  \Lambda_l \weight_l + o_p(1) =u_i^\T  u_l  + o_p(1),
			$$
			where $\Lambda_i \in R^{K \times 1}$ and $u_i \in R^{K \times 1}$ are the transposes of $i$-th rows in $\*\Lambda$ and $\*U$. From Theorem \ref{thm:uniform-consistency-loading}, $\hat{\Lambda}_i = H^\T \Lambda_i \weight_i  + o_p(1)$, together with $\hat{D} = D + o_p(1)$, we have $\hat{D}^{1/2} \hat{\Lambda}_i = u_i + o_p(1)$. Moreover, $\forall j$, $\frac{\hat{\*\Lambda}_j \hat{D}_j^{1/2} \odot \*M_j}{\norm{\hat{\*\Lambda}_j \hat{D}_j^{1/2} \odot \*M_j}} = \frac{\hat{\*\Lambda}_j \odot \*M_j}{\norm{\hat{\*\Lambda}_j \odot \*M_j}} $. Since $\wt{\*W} = \begin{bmatrix}
			\frac{ \hat{\*\Lambda}_1 \odot \*M_1}{\norm{\hat{\*\Lambda}_1 \odot \*M_1}} & \frac{\hat{\*\Lambda}_2 \odot \*M_2}{\norm{\hat{\*\Lambda}_2 \odot \*M_2}}  & \cdots &
			\frac{\hat{\Lambda}_K \odot \*M_K}{\norm{\hat{\Lambda}_K \odot \*M_K}} 
			\end{bmatrix} $ and $
			\tilde{\*U}  =
			\begin{bmatrix}
			\frac{ \*U_1 \odot \*M_1 }{\norm{\*U_1 \odot \*M_1}} & \frac{\*U_2 \odot \*M_2 }{\norm{\*U_2 \odot \*M_2 }} & \cdots &
			\frac{ \*U_K \odot \*M_K }{\norm{\*U_K \odot \*M_K}}
			\end{bmatrix} 
			$, we have for each cross-section unit $i$,
			$$
			\tilde w_i = \tilde u_i + o_p(1)
			$$
			and therefore
			$$ 
			\frac{1}{T} \wt{\*W}^\T  (\Weight \*\Lambda \*F^\T  \*F  \*\Lambda^\T \Weight) \wt{\*W} = \wt{\*W}^\T  ( \*U \*U^\T ) \wt{\*W} + o_p(1) = \tilde{\*U}^\T   \*U \*U^\T  \tilde{\*U} + o_p(1)
			$$
			
		\end{proof}
		
		\subsection{Proof of Proposition \ref{prop-one-factor} }
		Let us first prove Proposition \ref{prop-one-factor} and use the proof of this proposition as an intermediate step to prove Theorem \ref{thm-evt-one-factor}. In this section, for notation simplicity, we denote the CDF of $|\Lambda_{i,1}|$ as $F_{|\Lambda_{i,1}|}(y_{m})$ (the same as $\mathrm{CDF}_{|\Lambda_{i,1}|}(y_{m})$ in the statement of Proposition \ref{prop-one-factor}).
		\begin{proof}[Proof of Proposition \ref{prop-one-factor}] 
			\texttt{}\\
			Without loss of generality, suppose $\weight_i$ take values $\weight_{0,1}, \weight_{0,2}, \cdots, \weight_{0,n_\weight}$ and $\weight_{0,1} < \weight_{0,2} < \cdots < \weight_{0,n_\weight} = 1$. \\
			As a preparation for the proof, we first define a few notations. $\fidx_1(l)$ is the index of $l$-th largest entry in $|\Weight \*\Lambda_1|$ ($\*\Lambda = \*\Lambda_1$ in the one-factor case), and $\tfidx_1(l)$ is the index of $l$-th largest entry in $|\hat{\*\Lambda}|$. \\
			\textbf{Step 1: Provide an intermediate lower bound for $P(\rho > \rho_0)$.} \\
			Recall the definition of $Q$ from Lemma \ref{lemma_rho} ($Q = \*\Lambda^\T \Weight  \wt{\*W}$), we have $Q^\T\left( \frac{\*F^\T  \*F}{T} \right)Q = \wt{\*W}^\T  \left( \*U \*U^\T  \right) \wt{\*W}  $. From Lemma \ref{lemma_rho} and \ref{lemma_lam_error}, we have
			\begin{align}
			\nonumber \rho \geq& \tr \left(   \left(I + \left( \frac{\*F^\T  \*F}{T} \right)^{-1/2}(Q^\T)^{-1} \left((1+h(m))\sigma_{e,\weight}^2  \right)  Q^{-1} \left( \frac{\*F^\T  \*F}{T} \right)^{-1/2} \right)^{-1}  \right) + o_p(1)  \\
			=&   \left(1 + \frac{(1+h(m))\sigma_{e,\weight}^2   }{\frac{1}{T} \wt{\*W}^\T  \left( \Weight \*\Lambda \*F^\T  \*F  \*\Lambda^\T \Weight \right) \wt{\*W}  }  \right)^{-1}   + o_p(1). \label{eqn:proof-prop-one-factor-1} 
			\end{align}
			Since $\bar{D} = H^{-1} \frac{\*F^\T  \*F}{T} (H^\T)^{-1}$, and $\bar{D}$ and $H$ are scalars in the one-factor model, we have $\sigma_{\*F_1}^2 = \frac{\*F^\T  \*F}{T} + o_p(1) = \bar{D} H^2 + o_p(1)$, and $\hat{\Lambda}_{\tfidx_1(l),1} = \weight_{\tfidx_1(l)} \Lambda_{\tfidx_1(l),1} H + o_p(1)$, for all $l$. We can simplify $\frac{1}{T} \wt{\*W}^\T  \left(\Weight \*\Lambda \*F^\T  \*F  \*\Lambda^\T \Weight \right) \wt{\*W}$ to 
			\begin{align}
			\nonumber \frac{1}{T} \wt{\*W}^\T  \left(\Weight \*\Lambda \*F^\T  \*F  \*\Lambda^\T \Weight \right) \wt{\*W} =& (\wt{\*W}^\T  \hat{\*\Lambda}) \bar{D} (\hat{\*\Lambda}^\T  \wt{\*W}) + o_p(1) \\
			\nonumber =& \bar{D} (\sum_{i = 1}^{m} \hat{\Lambda}_{\tfidx_1(l),1}^2)^2/(\sum_{i = 1}^{m} \hat{\Lambda}_{\tfidx_1(l),1}^2) + o_p(1) \\
			=& \sigma_{\*F_1}^2 \sum_{i = 1}^{m} \weight_{\tfidx_1(l),1}^2 \Lambda_{\tfidx_1(l),1}^2 + o_p(1). \label{eqn:proof-prop-one-factor-2}  
			\end{align}
			We plug \eqref{eqn:proof-prop-one-factor-2} into \eqref{eqn:proof-prop-one-factor-1}
			\begin{align}
			\rho \geq  \left(1 + \frac{(1+h(m)) \sigma_{e,\weight}^2 }{\sigma_{\*F_1}^2 \sum_{i = 1}^{m} \weight_{\tfidx_1(l),1}^2 \Lambda_{\tfidx_1(l),1}^2  }  \right)^{-1} + o_p(1). \label{eqn:proof-prop-one-factor-3}    
			\end{align}
			The next step is to provide a lower bound for $P(\rho > \rho_0) $. Since $F_{|\weight_{\tfidx_1(l)} \Lambda_{\tfidx_1(l),1}|}(y) = P(|\weight_{\tfidx_1(l)} \Lambda_{\tfidx_1(l),1}| \leq y)$ is continuous in $y$ for all $y$,\footnote{$F_{|\weight_{\tfidx_1(l)} \Lambda_{i,1}|}(y)$ is continuous in $y$ implies $\lim_{N\rightarrow \infty } P \left(  \left(1 + \frac{(1+h(m))\sigma_{e,\weight}^2 }{\sigma_{\*F_1}^2 \sum_{i = 1}^{m} \weight_{\tfidx_1(l)} \Lambda_{\tfidx_1(l),1}^2  }  \right)^{-1} > \rho_0 \right)$ is continuous in $\rho_0$} and convergence in probability implies convergence distribution, we have 
			\begin{align}
			\lim_{N, T \rightarrow \infty} P(\rho > \rho_0) \geq \lim_{N\rightarrow \infty } P \left(  \left(1 + \frac{(1+h(m))\sigma_{e,\weight}^2 }{\sigma_{\*F_1}^2 \sum_{i = 1}^{m}\weight_{\tfidx_1(l)}  \Lambda_{\tfidx_1(l),1}^2  }  \right)^{-1} > \rho_0 \right).    \label{eqn:proof-prop-one-factor-4} 
			\end{align}
			We can further simplify \eqref{eqn:proof-prop-one-factor-4}. Note that 
			$$\left(1 + \frac{(1+h(m))\sigma_{e,\weight}^2 }{\sigma_{\*F_1}^2 \sum_{i = 1}^{m} \weight_{\tfidx_1(l)}^2  \Lambda_{\tfidx_1(l),1}^2  }  \right)^{-1} > \rho_0 \Longleftrightarrow \frac{\sigma_{\*F_1}^2    }{ \sigma_{e,\weight}^2} \sum_{i = 1}^{m} \weight_{\tfidx_1(l)}^2 \Lambda_{\tfidx_1(l),1}^2  > \frac{\rho_0}{ 1 - \rho_0}.$$
			From the definition of $\fidx_1(l)$, we have $|\weight_{\fidx_1(1)} \Lambda_{\fidx_1(1),1}| \geq |\weight_{\fidx_1(2)} \Lambda_{\fidx_1(2),1}|  \geq \cdots \geq |\weight_{\fidx_1(N)} \Lambda_{\fidx_1(N),1}|  $. From Lemma \ref{lemma_orderstat}, we can ignore term $H$ since it is a scalar, and we know if $|\weight_{\fidx_1(m)} \Lambda_{\fidx_1(m),1}| > c$, let $\epsilon = \frac{1}{4}(|\weight_{\fidx_1(m)} \Lambda_{\fidx_1(m),1}| - c)$, then $|\weight_{\fidx_1(m)} \Lambda_{\fidx_1(m),1}| > c + 2 \epsilon$ and $\forall 0 < \delta < 1$, $\exists N_0, T_0$, such that $N > N_0$, $T > T_0$, with probability at least $1 - \delta$, $|\weight_{\tfidx_1(l)} \Lambda_{\tfidx_1(l),1}| > c$. Thus, as $N, T \rightarrow \infty$, $\forall c$, if $\weight_{\fidx_1(m)} \Lambda_{\fidx_1(m),1} > c$, then $\weight_{\tfidx_1(l)} \Lambda_{\tfidx_1(l),1} > c$ with probability 1 for all $1 \leq l \leq m$.  Denote $y_{m,\weight} = \sqrt{ \frac{1+h(m)}{m} \frac{\sigma_{e,\weight}^2}{ \sigma_{\*F_1}^2} \frac{\rho_0}{ 1 - \rho_0}}$. As $N, T \rightarrow \infty$, with probability 1, 
			\begin{align*}
			& |\weight_{\fidx_1(m)} \Lambda_{\fidx_1(m),1}| > y_{m,\weight} \\ \Rightarrow& \frac{\sigma_f^2    }{(1+h(m)) \sigma_{e,\weight}^2} \sum_{i = 1}^{m}\weight_{\fidx_1(i)}^2  \Lambda_{\fidx_1(i),1}^2  > \frac{\rho_0}{ 1 - \rho_0} \\
			\Rightarrow&  \frac{\sigma_{\*F_1}^2    }{(1+h(m)) \sigma_{e,\weight}^2} \sum_{i = 1}^{m} \weight_{\tfidx_1(i)}^2 \Lambda_{\tfidx_1(i),1}^2  > \frac{\rho_0}{ 1 - \rho_0}.
			\end{align*}
			Then a simplified lower bound for \eqref{eqn:proof-prop-one-factor-4} is
			\begin{align}
			\nonumber \lim_{N, T \rightarrow \infty} P(\rho > \rho_0) \geq& \lim_{N\rightarrow \infty } P \left(  \left(1 + \frac{(1+h(m)) \sigma_{e,\weight}^2 }{\sigma_{\*F_1}^2 \sum_{i = 1}^{m} \weight_{\tfidx_1(l)}^2 \Lambda_{\tfidx_1(l),1}^2  }  \right)^{-1} > \rho_0 \right) \\ \geq& \lim_{N \rightarrow \infty} P(|\weight_{\fidx_1(m)} \Lambda_{ \fidx_1(m),1}| > y_{m,\weight}).    \label{eqn:proof-prop-one-factor-5}
			\end{align}
			\textbf{Step 2: Consider the special case where $\weight_i = 1$ for all $i$.} \\
			Denote $y_m$ as $y_m = y_{m,\weight}$ and recell the CDF of $|\Lambda_{i,1}|$ is denoted as $F_{|\Lambda_{i,1}|}(y_{m})$ (the same as $\mathrm{CDF}_{|\Lambda_{i,1}|}(y_{m})$ in the statement of Proposition \ref{prop-one-factor}). Since
			\begin{align*}
			&P(|\weight_{\fidx_1(m)} \Lambda_{\fidx_1(m),1}| > y_{m})  \\
			=& 1 - P(|\weight_{\fidx_1(1) } \Lambda_{\fidx_1(1),1} | \leq y_{m}) - \sum_{j = 1}^{m} P\left(|\weight_{\fidx_1(j) } \Lambda_{\fidx_1(j),1} | \geq y_{m}, |\weight_{\fidx_1(j+1) } \Lambda_{\fidx_1(j+1),1} | \leq y_{m}\right) \\
			=& 1 - \sum_{j = 0}^{m - 1} {{N}\choose{j}} (1 - F_{|\Lambda_{i,1}|}(y_{m}))^j  F_{|\Lambda_{i,1}|}(y_{m})^{N-j}
			\end{align*}
			and $F_{|\Lambda_{i,1}|}(y)$ is continuous in $y$ for all $y$, we have for a specific $m$ and a given level $\rho_0$ , as $N, T \rightarrow \infty$,
			$$
			\lim_{N, T \rightarrow \infty} P(\rho > \rho_0) \geq 1 - \lim_{N\rightarrow \infty} \sum_{j = 0}^{m - 1} {{N}\choose{j}} (1 - F_{|\Lambda_{1,i}|}(y_{m}))^j  F_{|\Lambda_{1,i}|}(y_{m})^{N-j} .
			$$
			\textbf{Step 3: Consider the general case where $\weight_i \neq 1$ for some $i$.} \\
			We can provide a lower bound for $P(\rho > \rho_0)$ that is strictly better than $1 - \lim_{N\rightarrow \infty} \sum_{j = 0}^{m - 1} {{N}\choose{j}} (1 - F_{|\Lambda_{1,i}|}(y_{m}))^j $ (such as $ 1 - p_{y_{m,\weight}}$ defined in \eqref{eqn:proof-prop-one-factor-6}) if the weight is inverse proportional to error's standard error, where $\rho_0$ is the same as the $\rho_0$ in Step 2. The proof is as follows.
			\begin{align}
			\nonumber &P(|\weight_{\fidx_1(m)} \Lambda_{\fidx_1(m),1}| > y_{m,\weight})  \\
			\nonumber =& 1 - P(|\weight_{\fidx_1(1) } \Lambda_{\fidx_1(1),1} | \leq y_{m,\weight}) \\
			\nonumber & - \sum_{j = 1}^{m} P\left(|\weight_{\fidx_1(j) } \Lambda_{\fidx_1(j),1} | \geq y_{m,\weight}, |\weight_{\fidx_1(j+1) } \Lambda_{\fidx_1(j+1),1} | \leq y_{m,\weight}\right) \\    
			\nonumber =& 1 - \sum_{j = 0}^{m - 1} \sum_{\stackrel{j_1, \cdots, j_{n_\weight}:}{j_1 + \cdots + j_{n_\weight}=j} } {{N_1}\choose{j_1}} {{N_2}\choose{j_2}} \cdots {{N_{n_\weight}}\choose{j_{n_\weight}}} (1 - F_{|\weight_{0,1} \Lambda_{i,1}|}(y_{m,\weight}))^{j_1}  F_{|\weight_{0,1} \Lambda_{i,1}|}(y_{m,\weight})^{N_1-j_1} \\
			\nonumber & \cdot (1 - F_{|\weight_{0,2} \Lambda_{i,1}|}(y_{m,\weight}))^{j_2}  F_{|\weight_{0,2} \Lambda_{i,1}|}(y_{m,\weight})^{N_2-j_2} \cdots (1 - F_{|\weight_{0,n_\weight} \Lambda_{i,1}|}(y_{m,\weight}))^{j_{n_\weight}}  F_{|\weight_{0,n_\weight} \Lambda_{i,1}|}(y_{m,\weight})^{N_{n_\weight}-j_{n_\weight} } \\
			\stackrel{\Delta}{=}& 1 - p_{y_{m,\weight}} \label{eqn:proof-prop-one-factor-6}
			\end{align}
			When the weight is inverse proportional to error's standard error, we have $\sigma_{e,\weight}^2 =  \weight_{0,1}^2 \sigma_e^2$ and $y_{m,\weight} = \weight_{0,1} \sqrt{ \frac{1+h(m)}{m} \frac{\sigma_{e}^2}{ \sigma_{\*F_1}^2} \frac{\rho_0}{ 1 - \rho_0}} = \weight_{0,1} y_m$. Note that for any $\ell$, we have 
			\[F_{|\weight_{0,\ell} \Lambda_{i,1}|}(y_{m,\weight}) =F_{|\weight_{0,\ell} \Lambda_{i,1}|}(\weight_{0,1} y_m ) \leq F_{|\weight_{0,\ell} \Lambda_{i,1}|}(\weight_{0,\ell} y_m ) =  F_{| \Lambda_{i,1}|}( y_m ) \]
			Note that 
			\begin{eqnarray*}
				\frac{\partial}{\partial x} \sum_{j = 0}^{m^\prime - 1} {{N^\prime}\choose{j}} (1-x)^j x^{N^\prime-j} &=& N^\prime x^{N^\prime-1} + \sum_{j = 1}^{m^\prime - 1} {{N^\prime}\choose{j}} (1-x)^{j-1} x^{N^\prime -j-1} (N^\prime -j-N^\prime x) \\
				&=& m^\prime  {{N^\prime}\choose{m^\prime}} x^{N^\prime - m^\prime} (1 - x)^{m^\prime - 1} > 0.
			\end{eqnarray*}
			$\frac{\partial}{\partial x} \sum_{j = 0}^{m^\prime - 1} {{N^\prime}\choose{j}} (1-x)^j x^{N^\prime-j}$ is increasing in $x$ for any $m^\prime$ and $N^\prime$. Then we can rearrange the terms in $p_{y_{m,\weight}}$ and provide an upper bound for $p_{y_{m,\weight}}$ (that is $ \sum_{j = 0}^{m - 1}  {{N}\choose{j}} (1 - F_{| \Lambda_{i,1}|}(y_{m}))^{j} F_{| \Lambda_{i,1}|}(y_{m})^{N -j}$),
			\begin{align*}
			&p_{y_{m,\weight}} \\
			=& \sum_{l = 0}^{m - 1} \sum_{\stackrel{j_2, \cdots, j_{n_\weight}:}{j_2 + \cdots + j_{n_\weight}=l} }  {{N_2}\choose{j_2}} \cdots {{N_{n_\weight}}\choose{j_{n_\weight}}} (1 - F_{|\weight_{0,2} \Lambda_{i,1}|}(y_{m,\weight}))^{j_2}  F_{|\weight_{0,2} \Lambda_{i,1}|}(y_{m,\weight})^{N_2-j_2}  \cdots (1 - F_{|\weight_{0,n_\weight} \Lambda_{i,1}|}(y_{m,\weight}))^{j_{n_\weight}}   \\
			& F_{|\weight_{0,n_\weight} \Lambda_{i,1}|}(y_{m,\weight})^{N_{n_\weight}-j_{n_\weight} } \cdot \underbrace{\Lp  \sum_{j_1 = 0}^{m - 1 - l} {{N_1}\choose{j_1}} (1 - F_{|\weight_{0,1} \Lambda_{i,1}|}(y_{m,\weight}))^{j_1}  F_{|\weight_{0,1} \Lambda_{i,1}|}(y_{m,\weight})^{N_1-j_1} \Rp }_{\leq \Lp  \sum_{j_1 = 0}^{m - 1 - l} {{N_1}\choose{j_1}} (1 - F_{|\Lambda_{i,1}|}(y_{m}))^{j_1}  F_{| \Lambda_{i,1}|}(y_{m})^{N_1-j_1} \Rp}  \\
			\leq& \sum_{j = 0}^{m - 1} \sum_{\stackrel{j_1, \cdots, j_{n_\weight}:}{j_1 + \cdots + j_{n_\weight}=j } } {{N_1}\choose{j_1}} {{N_2}\choose{j_2}} \cdots {{N_{n_\weight}}\choose{j_{n_\weight}}} (1 - F_{|\Lambda_{i,1}|}(y_{m}))^{j_1}  F_{| \Lambda_{i,1}|}(y_{m})^{N_1-j_1}   \\
			&  (1 - F_{|\weight_{0,2} \Lambda_{i,1}|}(y_{m,\weight}))^{j_2}  F_{|\weight_{0,2} \Lambda_{i,1}|}(y_{m,\weight})^{N_2-j_2}  \cdots   (1 - F_{|\weight_{0,n_\weight} \Lambda_{i,1}|}(y_{m,\weight}))^{j_{n_\weight}} F_{|\weight_{0,n_\weight} \Lambda_{i,1}|}(y_{m,\weight})^{N_1+N_{n_\weight} -j_{n_\weight} } \\
			=& \sum_{l = 0}^{m - 1} \sum_{\stackrel{j_1, j_3, \cdots, j_{n_\weight}:}{j_1 + j_3 + \cdots + j_{n_\weight}=l} }  {{N_1}\choose{j_1}} {{N_3}\choose{j_3}}  \cdots {{N_{n_\weight}}\choose{j_{n_\weight}}} (1 - F_{| \Lambda_{i,1}|}(y_{m}))^{j_1}  F_{| \Lambda_{i,1}|}(y_{m})^{N_1-j_1}  \cdots   \\
			& (1 - F_{|\weight_{0,n_\weight} \Lambda_{i,1}|}(y_{m,\weight}))^{j_{n_\weight}}  F_{|\weight_{0,n_\weight} \Lambda_{i,1}|}(y_{m,\weight})^{N_{n_\weight}-j_{n_\weight} } \cdot \underbrace{\Lp  \sum_{j_2 = 0}^{m - 1 - l} {{N_2}\choose{j_2}} (1 - F_{|\weight_{0,2} \Lambda_{i,1}|}(y_{m,\weight}))^{j_2}  F_{|\weight_{0,2} \Lambda_{i,1}|}(y_{m,\weight})^{N_2-j_2} \Rp }_{< \Lp  \sum_{j_2 = 0}^{m - 1 - l} {{N_2}\choose{j_2}} (1 - F_{|\Lambda_{i,1}|}(y_{m}))^{j_2}  F_{| \Lambda_{i,1}|}(y_{m})^{N_2-j_2} \Rp}  \\
			<& \sum_{l = 0}^{m - 1} \sum_{\stackrel{j_1, j_3, \cdots, j_{n_\weight}:}{j_1 + j_3 + \cdots + j_{n_\weight}=l} }  {{N_1}\choose{j_1}} {{N_2}\choose{j_2}} {{N_3}\choose{j_3}}  \cdots {{N_{n_\weight}}\choose{j_{n_\weight}}} (1 - F_{| \Lambda_{i,1}|}(y_{m}))^{j_1+j_2}  F_{| \Lambda_{i,1}|}(y_{m})^{N_1+N_2-j_1-j_2}  \cdots   \\
			& (1 - F_{|\weight_{0,n_\weight} \Lambda_{i,1}|}(y_{m,\weight}))^{j_{n_\weight}}  F_{|\weight_{0,n_\weight} \Lambda_{i,1}|}(y_{m,\weight})^{N_{n_\weight}-j_{n_\weight} }  \\
			<& \cdots \\
			<& \sum_{l = 0}^{m - 1} \sum_{\stackrel{j_1, j_3, \cdots, j_{n_\weight}:}{j_1 + j_3 + \cdots + j_{n_\weight}=l} }  {{N_1}\choose{j_1}} {{N_2}\choose{j_2}}\cdots {{N_{n_\weight}}\choose{j_{n_\weight}}} (1 - F_{| \Lambda_{i,1}|}(y_{m}))^{j_1+j_2+\cdots+j_{n_\weight}}  F_{| \Lambda_{i,1}|}(y_{m})^{N_1+N_2+\cdots+N_{n_\weight}-j_1-j_2-\cdots-j_{n_\weight}}  \\
			=& \sum_{j = 0}^{m - 1}  {{N}\choose{j}} (1 - F_{| \Lambda_{i,1}|}(y_{m}))^{j} F_{| \Lambda_{i,1}|}(y_{m})^{N -j}
			\end{align*}
			Since $F_{|\Lambda_{i,1}|}(y)$ is continuous in all $y$, we have for a specific $m$ and a given level $\rho_0$ , as $N, T \rightarrow \infty$,
			\begin{align*}
			\lim_{N, T \rightarrow \infty} P(\rho > \rho_0) \geq&  1 - p_{y_{m,\weight}} > 1 - \lim_{N\rightarrow \infty} \sum_{j = 0}^{m - 1} {{N}\choose{j}} (1 - F_{|\Lambda_{1,i}|}(y_{m}))^j  F_{|\Lambda_{1,i}|}(y_{m})^{N-j}
			\end{align*}
			\textbf{Step 4: Show if $\theta_i = 1$ for all $i$ and $F_{|\Lambda_{i,1}|}(y_{m}) < 1$, then $\lim_{N, T\rightarrow \infty} P(\rho > \rho_0) \rightarrow 1$} \\
			For a particular $0 < \rho_0 < 1$ and $m$, if $F_{|\Lambda_{i,1}|}(y_{m,\weight}) < 1$, $\forall 0 \leq j \leq m$,  $\lim_{N \rightarrow \infty} {{N}\choose{j}} (1 - F_{|\Lambda_{i,1}|}(y_{m,\weight}))^j  F_{|\Lambda_{i,1}|}(y_{m,\weight})^{N-j} \rightarrow 0$. Thus, as $N \rightarrow \infty$
			$$\sum_{j = 0}^{m - 1} {{N}\choose{j}} (1 - F_{|\Lambda_{i,1}|}(y_{m,\weight}))^j  F_{|\Lambda_{i,1}|}(y_{m,\weight})^{N-j} \rightarrow 0.$$
			Since $P(\rho > \rho_0) \leq 1$, we have as $N, T \rightarrow \infty$, 
			$$
			P(\rho > \rho_0) \rightarrow 1.
			$$
		\end{proof}

		\subsection{Proof of Theorem \ref{thm-evt-one-factor} and One-Factor Case in Proposition \ref{prop-one-factor}}
		The main of Theorem \ref{thm-evt-one-factor} is mainly based on the proof of Proposition \ref{prop-one-factor}. This proof also contains the proof of the one-factor case in Proposition \ref{prop:weighted}
		.	\begin{proof}[Proof of Theorem \ref{thm-evt-one-factor} and One-Factor Case of Proposition \ref{prop-one-factor}]
			\texttt{}\\
			In this proof, Step 1 and 2 prove Theorem  \ref{thm-evt-one-factor} . Step 3 proves the one-factor case in Proposition \ref{prop:weighted}.
			
			Without loss of generality, suppose $\weight_i$ take values $\weight_{0,1}, \weight_{0,2}, \cdots, \weight_{0,n_\weight}$ and $\weight_{0,1} < \weight_{0,2} < \cdots < \weight_{0,n_\weight} = 1$. \\
			\textbf{Step 1: Provide an intermediate lower bound for $P(\rho > \rho_0)$.} \\
			From the proof of Proposition \ref{prop-one-factor},
			\begin{align*}
			\lim_{N, T \rightarrow \infty} P(\rho > \rho_0) \geq& \lim_{N\rightarrow \infty } P \left(  \left(1 + \frac{(1+h(m)) \sigma_{e,\weight}^2 }{\sigma_{\*F_1}^2 \sum_{i = 1}^{m} \weight_{\tfidx_1(i)}^2 \Lambda_{\tfidx_1(i),1}^2  }  \right)^{-1} > \rho_0 \right) \\ \geq& \lim_{N \rightarrow \infty} P(|\weight_{\fidx_1(i)} \Lambda_{\fidx_1(i),1}| > y_{m,\weight}) 
			\end{align*}
			where $y_{m,\weight} = \sqrt{\frac{(1+h(m))}{m}\frac{\sigma_{e,\weight}^2}{\sigma_{\*F_1}^2}\frac{\rho_0}{1-\rho_0}} $, $\fidx_1(l)$ is the index of $l$-th largest entry in $|\Weight \*\Lambda_1|$, and $\tfidx_j(l)$ is the index of $l$-th largest entry in $|\hat{\*\Lambda}_1|$.  \\
			\textbf{Step 2: Consider the special case where $\weight_i = 1$ for all $i$.} \\
			Let $y_m \stackrel{\Delta}{=} \sqrt{\frac{(1+h(m))}{m}\frac{\sigma_{e}^2}{\sigma_{\*F_1}^2}\frac{\rho_0}{1-\rho_0}} = y_{m,\weight} $. Then probability lower bound is 
			\begin{align}
			\nonumber \lim_{N, T \rightarrow \infty} P\left( \rho > \rho_0 \right)  &\geq \lim_{N\rightarrow \infty}  P(|\Lambda_{(m),1}| > u_{1,N}(\tau)) \\ &= \bar{G}_{1,m}(y_m).
			\end{align}
			\textbf{Step 3: Consider the general case where $\weight_i \neq 1$ for some $i$. (Proof of the one-factor case in Proposition \ref{prop:weighted})} \\
			We can provide a lower bound for $P(\rho > \rho_0)$ that is better than $ \bar{G}_{1,m}(\tau)$ in \eqref{eqn:one-factor-lower-bound-evt} (such as $ \lim_{N\rightarrow \infty } P \left( |\weight_{\fidx_1(m)} \Lambda_{\fidx_1(m),1}|  > \weight_{0,1}  y_m   \right)$) if the weight is inverse proportional to error's standard error, where $\rho_0$ in defined in Step 2. The proof is as follows. 
			\begin{align*} 
			& \lim_{N, T \rightarrow \infty} P(\rho > \rho_0) \\
			\geq&\lim_{N, T \rightarrow \infty} P\left( \rho > \frac{m\sigma_{\*F_1}^2 u_{1,N}^2(\tau)}{(1+h(m))\sigma_{e}^2 + m \sigma_{\*F_1}^2 u_{1,N}^2(\tau)} \right) \\
			\geq& \lim_{N\rightarrow \infty } P \left(  \left(1 + \frac{(1+h(m)) \sigma_{e,\weight}^2 }{\sigma_{\*F_1}^2 \sum_{i = 1}^{m} \weight_{\tfidx_1(l)}^2 \Lambda_{\tfidx_1(l),1}^2  }  \right)^{-1} > \frac{m\sigma_{\*F_1}^2 u_{1,N}^2(\tau)}{(1+h(m))\sigma_{e}^2 + m \sigma_{\*F_1}^2 u_{1,N}^2(\tau)} \right)\\
			=&  \lim_{N\rightarrow \infty } P \left( \sum_{i = 1}^{m} \weight_{\tfidx_1(l)}^2 \Lambda_{\tfidx_1(l),1}^2  > \frac{\sigma_{e,\weight}^2}{\sigma_{e}^2}  u_{1,N}^2(\tau)   \right)\\
			\geq& \lim_{N\rightarrow \infty } P \left( |\weight_{\fidx_1(m)} \Lambda_{\fidx_1(m),1}|  > \frac{\sigma_{e,\weight}}{\sigma_{e}}  u_{1,N}(\tau)  \right) \\
			=& \lim_{N\rightarrow \infty } P \left( |\weight_{\fidx_1(m)} \Lambda_{\fidx_1(m),1}|  > \frac{\sigma_{e,\weight}}{\sigma_{e}}  y_m   \right) = \bar{G} (y_{m,\weight})\\
			=&  \lim_{N\rightarrow \infty } P \left( |\weight_{\fidx_1(m)} \Lambda_{\fidx_1(m),1}|  > \weight_{0,1}  y_m   \right)\\
			>&  \lim_{N\rightarrow \infty } P \left( |\weight_{0,1}  \Lambda_{(m),1}|  > \weight_{0,1}  y_m   \right) =  \lim_{N\rightarrow \infty } P \left( | \Lambda_{(m),1}|  >   y_m   \right) = \bar{G}_{1,m}(y_m),
			\end{align*}
			following that $\Lambda_{(m),1} = \Lambda_{\fidx_1(m),1}$.
		\end{proof}
		

		\subsection{Proof of Theorem \ref{thm-evt-multi-factor}, Multi-Factor Case in Proposition \ref{prop-one-factor} and Proposition \ref{prop:vary-m}}
		This proof also contains the proof of the multi-factor case in Proposition \ref{prop:weighted} and Proposition \ref{prop:vary-m}.
		\begin{proof}[Proof of Theorem \ref{thm-evt-multi-factor}, Multi-Factor Case in Proposition \ref{prop-one-factor} and Proposition \ref{prop:vary-m}]
			\texttt{}\\
			In this proof, Step 1-4 prove Theorem  \ref{thm-evt-multi-factor}. Step 5 proves the multi-factor case in Proposition \ref{prop:weighted}. Step 6 proves Proposition \ref{prop:vary-m}.

			Without loss of generality, suppose $\weight_i$ take values $\weight_{0,1}, \weight_{0,2}, \cdots, \weight_{0,n_\weight}$ and $\weight_{0,1} < \weight_{0,2} < \cdots < \weight_{0,n_\weight} = 1$. As a preparation for the proof, we define some notations. We use  $\fidx_j(l)$ to denote the index of $l$-th largest entry in $|\Weight \*V_j|$. We use $\tfidx_j(l)$ to denote the index of $l$-th largest entry in $|\hat{\*\Lambda}_j|$). \\
			\textbf{Step 1: Simplify $\rho$.} \\
			We start with the asymptotic equivalent expression for $\rho$ shown in Lemma \ref{lemma_rho}. 
			For any finite dimension symmetric matrix $M$, from Taylor expansion and mean value theorem, there exist a symmetric matrix $\tilde{M}$,
			$$\tr((I + M)^{-1}) = \tr(I - M + \frac{1}{2} \tilde{M}^2) \geq \tr(I - M),$$ 
			therefore, (recall $Q$ is defined as $Q = \*\Lambda^\T \Weight \wt{\*W}$)
			\begin{align}
			\rho =& \tr \left(   \left(I + \left( \frac{\*F^\T  \*F}{T} \right)^{-1/2}(Q^\T)^{-1} \frac{\wt{\*W}^\T \Weight \*e \*e^\T \Weight \wt{\*W}  }{T}  Q^{-1} \left( \frac{\*F^\T  \*F}{T} \right)^{-1/2} \right)^{-1}  \right)  + o_p(1) \label{eqn:prop3-1}\\
			\nonumber >&  \tr \left( I -  \left( \frac{\*F^\T  \*F}{T} \right)^{-1/2}(Q^\T)^{-1} \frac{ \wt{\*W}^\T \Weight \*e \*e^\T \Weight \wt{\*W} }{T}  Q^{-1} \left( \frac{\*F^\T  \*F}{T} \right)^{-1/2} \right) + o_p(1) \\
			\nonumber =& K - \tr \left( \frac{ \wt{\*W}^\T \Weight \*e \*e^\T \Weight \wt{\*W} }{T} \nonumber \left(Q^\T \frac{\*F^\T  \*F}{T} Q \right)^{-1} \right) + o_p(1) \\
			\nonumber =& K - \tr \left( \frac{ \wt{\*W}^\T \Weight \*e \*e^\T \Weight \wt{\*W} }{T} \left( \frac{1}{T} \wt{\*W}^\T  (\*\Lambda \*F^\T  \*F  \*\Lambda^\T) \wt{\*W} \right)^{-1} \right) + o_p(1) \\
			\geq& K - \norm{\frac{ \wt{\*W}^\T \Weight \*e \*e^\T \Weight \wt{\*W} }{T}}_2  \tr \left( \left( \frac{1}{T} \wt{\*W}^\T  (\*\Lambda \*F^\T  \*F  \*\Lambda^\T) \wt{\*W}\right)^{-1}  \right)  + o_p(1) \label{eqn:prop3-2}\\
			=& K - \norm{\wt{\*W}^\T \Sigma_{e, \weight} \wt{\*W} }_2  \tr \left( \left( \frac{1}{T} \wt{\*W}^\T  (\Lambda \*F^\T  \*F  \*\Lambda^\T) \wt{\*W}\right)^{-1}  \right)  + o_p(1) \label{eqn:prop3-2-2} \\
			=& K - \norm{(\*W^{(\Weight, \tilde{\*F})})^\T \Sigma_{e, \weight} \*W^{(\Weight, \tilde{\*F})} }_2  \tr \left(  \left( \tilde{\*U}^\T   \*U \*U^\T  \tilde{\*U} \right)^{-1}  \right) + o_p(1) \label{eqn:prop3-3} \\
			=& K - \norm{(\*W^{(\Weight, \tilde{\*F})})^\T \Sigma_{e, \weight} \*W^{(\Weight, \tilde{\*F})} }_2  \tr( A^{-1})+ o_p(1)  \label{eqn:prop3-4}
			\end{align}
			where $\Sigma_{e,\weight} := \Weight \Sigma_e \Weight = \Weight \Big(\lim_{T \rightarrow \infty}  \frac{1}{T} \*e \*e^\T \Big) \Weight$, $A =[a_{jl}] = \tilde{\*U}^\T   \*U \*U^\T  \tilde{\*U}$, $\*U=\Weight \*\Lambda H D^{1/2}$, $\*W^{(\Weight, \tilde{\*F})}$ in \eqref{eqn:prop3-3} is defined as 
			\begin{align*}
			\*W^{(\Weight, \tilde{\*F})} = \begin{bmatrix}
			\frac{ (\Weight \*V_1 ) \odot \*M_1^{(\tilde{\*F})} }{\norm{(\Weight \*V_1 ) \odot \*M_1^{(\tilde{\*F})}}} & \frac{(\Weight \*V_2 ) \odot \*M_2^{(\tilde{\*F})} }{\norm{(\Weight \*V_2 ) \odot \*M_2^{(\tilde{\*F})} }} & \cdots &
			\frac{ (\Weight \*V_K ) \odot \*M_K^{(\tilde{\*F})} }{\norm{(\Weight \*V_K ) \odot \*M_K^{(\tilde{\*F})}}}
			\end{bmatrix},
			\end{align*}
			and $\*M_j^{(\tilde{\*F})} $ is the mask vector applied to $\hat{\* \Lambda}_j$, i.e., 
			\begin{align*}
			\wt{\*W} =
			\begin{bmatrix}
			\frac{ \hat{\*\Lambda}_1 \odot \*M_1^{(\tilde{\*F})} }{\norm{\hat{\*\Lambda}_1 \odot \*M_1^{(\tilde{\*F})} }} & \frac{\hat{\*\Lambda}_2 \odot \*M_2^{(\tilde{\*F})} }{\norm{\hat{\*\Lambda}_2 \odot \*M_2^{(\tilde{\*F})} }}  & \cdots &
			\frac{\hat{\*\Lambda}_K \odot \*M_K^{(\tilde{\*F})}}{\norm{\hat{\*\Lambda}_K \odot \*M_K^{(\tilde{\*F})} }}
			\end{bmatrix}, 
			\end{align*}
			We use the superscript $(\tilde{\*F})$ in $\*M_j^{(\tilde{\*F})}$ to denote $\*M_{j,i}^{(\tilde{\*F})} = 1$ if $i$ is among the indicies of  indices of largest $m$ entries in $\hat{\* \Lambda}_j$ and 0 otherwise. Note that in this proof, we need to separate the indices of largest $m$ entries in $|\hat{\* \Lambda}_j|$ from the indicies of largest $m$ entries in $|(\*\Lambda H)_j|$ (in this proof, $\*M_j(\tilde{\*F})$  is the same as the $\*M_j$ in \eqref{eqn-tilde-lambda-def}).  
			
			We use the superscript $(\Weight, \tilde{\*F})$ in $\*W^{(\Weight, \tilde{\*F})}$ to denote that  the sparse weights based on the rotated and weighted population loadings $\Weight \*V =\Weight \*\Lambda H$, and the sparse weights are nonzero if the corresponding value in $\hat{\*\Lambda}$ is among the largest $m$ entries. As a comparison, entries in $\*W^{(\Weight)}$ used at the end of Step 3 are nonzero if the corresponding value in $\Weight \*\Lambda H$ is among the largest $m$ entries.
			
			Moreover, $\tilde{\*U}$ defined as follows is equal to $\*W^{(\Weight, \tilde{\*F})} $.
			\begin{align*}
			\tilde{\*U}  &=
			\begin{bmatrix}
			\frac{ \*U_1 \odot \*M_1^{(\tilde{\*F})} }{\norm{\*U_1 \odot \*M_1^{(\tilde{\*F})}}} & \frac{ \*U_2 \odot \*M_2^{(\tilde{\*F})} }{\norm{\*U_2 \odot \*M_2^{(\tilde{\*F})} }} & \cdots &
			\frac{ \*U_K \odot \*M_K^{(\tilde{\*F})} }{\norm{\*U_K \odot \*M_K^{(\tilde{\*F})}}}
			\end{bmatrix} \\
			&= \begin{bmatrix}
			\frac{ (\Weight \*V_1 D_1^{1/2}) \odot \*M_1^{(\tilde{\*F})} }{\norm{(\Weight \*V_1 D_1^{1/2}) \odot \*M_1^{(\tilde{\*F})}}} & \frac{(\Weight \*V_2 D_2^{1/2}) \odot \*M_2^{(\tilde{\*F})} }{\norm{(\Weight \*V_2 D_2^{1/2}) \odot \*M_2^{(\tilde{\*F})} }} & \cdots &
			\frac{ (\Weight \*V_K D_K^{1/2}) \odot \*M_K^{(\tilde{\*F})} }{\norm{(\Weight \*V_K D_K^{1/2}) \odot \*M_K^{(\tilde{\*F})}}}
			\end{bmatrix}  \\
			&=\begin{bmatrix}
			\frac{ (\Weight \*V_1 ) \odot \*M_1^{(\tilde{\*F})} }{\norm{(\Weight \*V_1 ) \odot \*M_1^{(\tilde{\*F})}}} & \frac{(\Weight \*V_2 ) \odot \*M_2^{(\tilde{\*F})} }{\norm{(\Weight \*V_2 ) \odot \*M_2^{(\tilde{\*F})} }} & \cdots &
			\frac{ (\Weight \*V_K ) \odot \*M_K^{(\tilde{\*F})} }{\norm{(\Weight \*V_K ) \odot \*M_K^{(\tilde{\*F})}}}
			\end{bmatrix} = \*W^{(\Weight, \tilde{\*F})}
			\end{align*}
			Let us explain more about how we obtain \eqref{eqn:prop3-4}.  Equation \eqref{eqn:prop3-1} follows from Lemma \ref{lemma_rho}. For Inequality \eqref{eqn:prop3-2}, we use Von Neumann's trace inequality (For matrix $A$ and $B$, we have $\tr (AB) \leq \sum_i \alpha_i \beta_i \leq \max_i \alpha_i \sum_i \beta_i = \max_i \alpha_i \cdot \tr(B)$, where $\alpha_i$ and $\beta_i$ are eigenvalues of $A$ and $B$). Equation \eqref{eqn:prop3-2-2} holds because there are only finitely many nonzeros in each column of $\wt{\*W}$, each column of $\wt{\*W}$ has norm 1 and $\tr \left( \left( \frac{1}{T} \wt{\*W}^\T  (\Lambda \*F^\T  \*F  \*\Lambda^\T) \wt{\*W}\right)^{-1}  \right) = O_p(1)$ (because $\rho \leq K$).  Equation \eqref{eqn:prop3-3} follows from Lemma \ref{lemma_u}.  
			
			Since the CDF of $|v_i|$, $F_{|v_i|}(v)= F_{(|v_{i,1}|, \cdots, |v_{i,K}|)}(v)$, is continuous in $v$, we have  $F_{|u_i|}(u)$ is continuous in $u$. From the fact that convergence in probability implies convergence in distribution, we can use \eqref{eqn:prop3-4} to provide a probabilistic bound for $P(\rho > \rho_0)$, 
			\begin{align*}
			\lim_{N, T \rightarrow \infty} P(\rho > \rho_0) &\geq \lim_{N \rightarrow \infty} P(K - \norm{(\*W^{(\Weight, \tilde{\*F})})^\T \Sigma_{e, \weight} \*W^{(\Weight, \tilde{\*F})} }_2  \tr( A^{-1}) > \rho_0) \\
			&= \lim_{N \rightarrow \infty} P\left(\tr( A^{-1}) < \frac{K - \rho_0}{\norm{(\*W^{(\Weight, \tilde{\*F})})^\T \Sigma_{e, \weight} \*W^{(\Weight, \tilde{\*F})} }_2  } \right)    
			\end{align*}
			\textbf{Step 2: Simplify $\tr( A^{-1})$ in terms of the largest rotated and scaled population loadings $\*U$. } \\
			First, we provide the expression for every entry in $A$. Since $\{\tfidx_j(1),\tfidx_j(2), \cdots, \tfidx_j(m)  \}$ are indices of largest $m$ entries in $|\hat{\*\Lambda}_j|$, then indices of nonzero entries in both  $\wt{\*W}_j$ and  $\*U_j$ are\\ $\{\tfidx_j(1),\tfidx_j(2), \cdots, \tfidx_j(m)  \}$ as well. The diagonal entries in $A$ can be written as 
			\[a_{jj}= \frac{\sum_{k=1}^K (\sum_{i = 1}^{m} u_{\tfidx_j(i),j} u_{\tfidx_j(i),k} )^2}{\sum_{i = 1}^{m} u_{\tfidx_j(i),j}^2}\]
			and the off-diagonal entries in $A$ can be written as
			$$
			a_{jl} = \frac{\sum_{k=1}^K  (\sum_{i = 1}^{m} u_{\tfidx_j(i),j} u_{\tfidx_j(i),k} ) (\sum_{i = 1}^{m} u_{\tfidx_l(i),k} u_{\tfidx_l(i),l} ) }{ (\sum_{i = 1}^{m} u_{\tfidx_j(i),j}^2)^{1/2}  (\sum_{i = 1}^{m} u_{\tfidx_l(i),l}^2)^{1/2} },  \quad \forall j \neq l
			$$
			where $\*U = \Weight \*\Lambda H D^{1/2} = \Weight \*V D^{1/2}$. We decompose $A$ as $A= (\tilde{\*U}^\T   \*U) (\*U^\T  \tilde{\*U}) = D^{(\tilde{\*F})} (B^{(\tilde{\*F})})^\T B^{(\tilde{\*F})} D^{(\tilde{\*F})}$, where $D^{(\tilde{\*F})}$ is a diagonal matrix with $d^{(\tilde{\*F})}_{jj} = (\sum_{i = 1}^{m} u_{\tfidx_j(i),j}^2)^{1/2}$ and the diagonal entries in $B^{(\tilde{\*F})}$ are 1, while the off-diagonal entries have
			\[b_{kl}^{(\tilde{\*F})} =  \frac{ \sum_{i = 1}^{m} u_{\tfidx_l(i),k} u_{\tfidx_l(i),l}  }{  \sum_{i = 1}^{m}  u_{\tfidx_l(i),l}^2  } = \frac{D_k^{1/2} \sum_{i = 1}^{m}\weight_{\tfidx_l(i)}^2 v_{\tfidx_l(i),k} v_{\tfidx_l(i),l}  }{ D_l^{1/2} \sum_{i = 1}^{m}\weight_{\tfidx_l(i)}^2 v_{\tfidx_l(i),l}^2  },  \quad \text{for all $k$ and $l$}. \]
			We use the superscript $(\tilde{\*F})$ in $D^{(\tilde{\*F})}$ and $B^{(\tilde{\*F})}$. Both $D^{(\tilde{\*F})}$ and $B^{(\tilde{\*F})}$ involve the indicies of  largest $m$ entries in $|\hat{\* \Lambda}_j|$ for all $j$. On the other hand, $B$ (without the superscript $(\tilde{\*F})$) involves the indicies of  largest $m$ entries in $|(\Weight \*\Lambda H)_j|$ for all $j$.
			
			Since $(B^{(\tilde{\*F})})^\T B^{(\tilde{\*F})}$ is positive semidefinite, we can provide an upper bound for $\tr(A^{-1})$ using the matrix inequality
			\begin{eqnarray*}
				\tr(A^{-1}) &=& \tr((D^{(\tilde{\*F})} (B^{(\tilde{\*F})})^\T B^{(\tilde{\*F})} D^{(\tilde{\*F})} )^{-1}) = \tr((D^{(\tilde{\*F})})^{-1} ((B^{(\tilde{\*F})})^\T B^{(\tilde{\*F})})^{-1} (D^{(\tilde{\*F})})^{-1}) \\
				&\leq& \lambda_{\max}(((B^{(\tilde{\*F})})^\T B^{(\tilde{\*F})})^{-1}) \tr((D^{(\tilde{\*F})})^{-2}) =\frac{1}{\lambda_{\min}((B^{(\tilde{\*F})})^\T B^{(\tilde{\*F})})}  \tr((D^{(\tilde{\*F})})^{-2}) \\
				&=&  \frac{1}{\sigma_{\min}^2(B^{(\tilde{\*F})})}  \tr((D^{(\tilde{\*F})})^{-2}) = \frac{1}{\sigma_{\min}^2(B^{(\tilde{\*F})})}  \sum_{j=1}^K \frac{1}{\sum_{i = 1}^{m} u_{\tfidx_j(i),j}^2},
			\end{eqnarray*}
			where $\lambda_{\max}(M)$ and $\lambda_{\min}(M)$ represent the maximum and minimum eigenvalue of matrix $M$,  $\sigma_{\max}(M)$ and $\sigma_{\min}(M)$ represent the maximum and minimum singular value of matrix $M$.  \\
			\texttt{} \\
			\textbf{Step 3: Provide a lower bound for $P\left(\tr( A^{-1}) < \frac{K - \rho_0}{\norm{(\*W^{(\Weight, \tilde{\*F})})^\T \Sigma_{e, \weight} \*W^{(\Weight, \tilde{\*F})} }_2  } \right) $. } \\
			Let us first define three events $E_{1,\weight} = \{\sigma_{\min}(B^{(\tilde{\*F})}  ) \geq \underline{\gamma} \}$ for some $\underline{\gamma}$, \\ $E_{2,\weight} = \left\lbrace \sum_{j=1}^K \frac{1}{D_j \weight^2_{\fidx_j(m)} v^2_{\fidx_j(m),j} } <  \frac{m(K-\rho_0)\underline{\gamma}^2}{\norm{(\*W^{(\Weight, \tilde{\*F})})^\T \Sigma_{e, \weight} \*W^{(\Weight, \tilde{\*F})} }_2 }  \right\rbrace$,\\ and  $E_{3,\weight} = \left\lbrace  \tr(A^{-1}) < \frac{K-\rho_0}{\norm{(\*W^{(\Weight, \tilde{\*F})})^\T \Sigma_{e, \weight} \*W^{(\Weight, \tilde{\*F})} }_2 } \right\rbrace$. Now, we would like to show
			\[P(E_{1,\weight} \cap E_{3,\weight}) \geq P(E_{1,\weight} \cap E_{2,\weight}).\]
			From Lemma \ref{lemma_orderstat} and the proof of Proposition \ref{prop-one-factor}, for all $i$ and $j$, and for any $c$, if the $m$-th largest population entry $v_{\fidx_j(m),j} > c$, then the unit $\fidx_j(i)$ selected from our method (with largest estimated loading) has $v_{\fidx_j(i), j} > c$ with probability 1 as $N, T \rightarrow \infty$. Recall the definition of $u_{i,j}: = \weight_i v_{i,j} s_{j}^{1/2} $. We have similar argument for $u_{i,j}$, that is, for all $i$ and $j$, and for any $c$,  if  the $m$-th largest population entry  $u_{\fidx_j(m),j} > c$,  then the unit $\fidx_j(i)$ selected from our method (with largest estimated loading) has $u_{\fidx_j(i), j} > c$ with probability 1 as $N, T \rightarrow \infty$. Using this property, we can provide a sufficient condition for event $E_{3,\weight}$. That is, if event $E_{1,\weight}$ happens, we have
			\begin{align*}
			& \sum_{j=1}^K \frac{1}{D_j \weight^2_{\fidx_j(m)} v^2_{\fidx_j(m),j} } <  \frac{m(K-\rho_0)\underline{\gamma}^2}{\norm{(\*W^{(\Weight, \tilde{\*F})})^\T \Sigma_{e, \weight} \*W^{(\Weight, \tilde{\*F})} }_2} \\
			\Rightarrow& \sum_{j=1}^K \frac{1}{m u^2_{\fidx_j(m),j}} <  \frac{(K-\rho_0) \underline{\gamma}^2 }{\norm{(\*W^{(\Weight, \tilde{\*F})})^\T \Sigma_{e, \weight} \*W^{(\Weight, \tilde{\*F})} }_2} \\
			\overset{\text{w.p.1}}{\Rightarrow} & \sum_{j=1}^K \frac{1}{\sum_{i = 1}^{m} u_{\tfidx_j(i),j}^2} <  \frac{(K-\rho_0) \underline{\gamma}^2}{\norm{(\*W^{(\Weight, \tilde{\*F})})^\T \Sigma_{e, \weight} \*W^{(\Weight, \tilde{\*F})} }_2} \\
			\Rightarrow& \frac{1}{\underline{\gamma}^2   }  \sum_{j=1}^K \frac{1}{\sum_{i = 1}^{m} u_{\tfidx_j(i),j}^2} <  \frac{(K-\rho_0) }{\norm{(\*W^{(\Weight, \tilde{\*F})})^\T \Sigma_{e, \weight} \*W^{(\Weight, \tilde{\*F})} }_2}  \\
			\overset{\text{under}\,\, E_{1,\weight}}{\Rightarrow} & \frac{1}{\sigma_{\min}^2(B^{(\tilde{\*F})} )}  \sum_{j=1}^K \frac{1}{\sum_{i = 1}^{m} u_{\tfidx_j(i),j}^2}  <  \frac{K-\rho_0}{\norm{(\*W^{(\Weight, \tilde{\*F})})^\T \Sigma_{e, \weight} \*W^{(\Weight, \tilde{\*F})} }_2} \\
			\Rightarrow& \tr(A^{-1}) < \frac{K-\rho_0}{\norm{(\*W^{(\Weight, \tilde{\*F})})^\T \Sigma_{e, \weight} \*W^{(\Weight, \tilde{\*F})} }_2}
			\end{align*}
			where $\weight_{\fidx_j(m)} $ is the weight for the unit that is the $m$-th order statistic $v_{\fidx_j(m),j} $ in $|\*V_j|$. ``$\Rightarrow$'' denotes if the event on the left-hand side happens, then the event on the right-hand side happens. Hence, if event $E_{1,\weight}$ and $E_{2,\weight}$ happens, then $E_{3,\weight}$ happens. We have 
			\begin{align*}
			P( E_{3,\weight})  &\geq P(E_{1,\weight} \cap E_{3,\weight}) \geq P(E_{1,\weight} \cap E_{2,\weight}) + o_p(1) \\
			&= P(E_{1,\weight} ) + P(E_{2,\weight}) - P(E_{1,\weight} \cup E_{2,\weight})  + o_p(1)    \\
			&\geq P(E_{1,\weight} ) + P(E_{2,\weight}) - 1 + o_p(1)    \\
			&= P(E_{2,\weight} ) - P(E_{1,\weight}^\complement) + o_p(1)
			\end{align*}
			Then we have 
			\begin{align*}
			&\lim_{N, T \rightarrow \infty} P(\rho > \rho_0) \geq  \lim_{N \rightarrow \infty} P\left(\tr( A^{-1}) <\frac{K - \rho_0}{\norm{(\*W^{(\Weight, \tilde{\*F})})^\T \Sigma_{e, \weight} \*W^{(\Weight, \tilde{\*F})} }_2} \right) = \lim_{N \rightarrow \infty} (P(E_{2,\weight} ) - P(E_{1,\weight}^\complement)) \\
			\geq&  \lim_{N \rightarrow \infty} P \left( \sum_{j=1}^K \frac{1}{D_j \weight^2_{\fidx_j(m)} v^2_{\fidx_j(m),j} } <  \frac{m(K-\rho_0)\underline{\gamma}^2}{\norm{(\*W^{(\Weight, \tilde{\*F})})^\T \Sigma_{e, \weight} \*W^{(\Weight, \tilde{\*F})} }_2}  \right) - \lim_{N \rightarrow \infty} P(\sigma_{\min}(B^{(\tilde{\*F})} ) < \underline{\gamma})
			\end{align*}
			Next, let us define two more events $E_{4,\weight} = \Big\lbrace \norm{(\*W^{(\Weight, \tilde{\*F})})^\T \Sigma_{e, \weight} \*W^{(\Weight, \tilde{\*F})} }_2 < \bar \gamma_{\*V,e} \sigma_{e,\weight}^2 \Big\rbrace$ and \\ $E_{5,\weight} = \Big\lbrace  \sum_{j=1}^K \frac{1}{D_j \weight^2_{\tfidx_j(m)} v^2_{\tfidx_j(m),j} } <  \frac{m(K-\rho_0)\underline{\gamma}^2}{\bar \gamma_{\*V,e} \sigma_{e,\weight}^2 }  \Big\rbrace$.  If event $E_{4,\weight}$ holds, we have 
			\begin{align*}
			&\sum_{j=1}^K \frac{1}{D_j \weight^2_{\tfidx_j(m)} v^2_{\tfidx_j(m),j} } <  \frac{m(K-\rho_0)\underline{\gamma}^2}{\bar \gamma_{\*V,e} \sigma_{e,\weight}^2 } \\
			\overset{\text{under}\,\, E_{4,\weight}}{\Rightarrow}& \sum_{j=1}^K \frac{1}{D_j \weight^2_{\tfidx_j(m)} v^2_{\tfidx_j(m),j} } <  \frac{m(K-\rho_0)\underline{\gamma}^2}{\norm{(\*W^{(\Weight, \tilde{\*F})})^\T \Sigma_{e, \weight} \*W^{(\Weight, \tilde{\*F})} }_2}  
			\end{align*}
			Then we have 
			\[P(E_{2,\weight} )  \geq P(E_{4,\weight} \cap E_{5,\weight}) \geq P(E_{5,\weight}) + E_{4,\weight} - 1 \geq P(E_{5,\weight}) - P(E_{4,\weight}^\complement). \]
			Hence the probability lower bound for $P(\rho > \rho_0 )$ is 
			\begin{align}
			\nonumber\lim_{N, T \rightarrow \infty} P(\rho > \rho_0) 
			\geq&  \lim_{N \rightarrow \infty} P \left( \sum_{j=1}^K \frac{1}{D_j \weight^2_{\tfidx_j(m)} v^2_{\tfidx_j(m),j} } <  \frac{m(K-\rho_0)\underline{\gamma}^2}{\bar \gamma_{\*V,e} \sigma_{e,\weight}^2  }  \right) \\
			&- \lim_{N \rightarrow \infty} P\Big(\sigma_{\min}(B^{(\tilde{\*F})} ) < \underline{\gamma}\Big) - \lim_{N \rightarrow\infty} P\Big(\norm{(\*W^{(\Weight, \tilde{\*F})})^\T \Sigma_{e, \weight} \*W^{(\Weight, \tilde{\*F})} }_2 > \bar \gamma_{\*V,e}  \sigma_{e,\weight}^2 \Big), 
			\end{align}
			where $ \*W^{(\Weight, \tilde{\*F})} =\begin{bmatrix}
			\frac{ (\Weight \*V_1 ) \odot \*M_1^{(\tilde{\*F})} }{\norm{(\Weight \*V_1 ) \odot \*M_1^{(\tilde{\*F})}}} & \frac{(\Weight \*V_2 ) \odot \*M_2^{(\tilde{\*F})} }{\norm{(\Weight \*V_2 ) \odot \*M_2^{(\tilde{\*F})} }} & \cdots &
			\frac{ (\Weight \*V_K ) \odot \*M_K^{(\tilde{\*F})} }{\norm{(\Weight \*V_K ) \odot \*M_K^{(\tilde{\*F})}}}
			\end{bmatrix}$ and \\ $B^{(\tilde{\*F})}   = \Big[ \frac{D_k^{1/2} \sum_{i = 1}^{m}\weight_{\tfidx_l(i)}^2 v_{\tfidx_l(i),k} v_{\tfidx_l(i),l}  }{D_l^{1/2}  \sum_{i = 1}^{m}\weight_{\tfidx_l(i)}^2 v_{\tfidx_l(i),l}^2  } \Big]$. 
			Since both $\Lambda_i$ and $v_i$ are identically distributed for all $i$, following a similar argument as the one at the beginning of Step 3, we can replace $\tfidx_l(i)$ by $\fidx_l(i)$ in asymptotics for $B^{(\tilde{\*F})} $ and we have
			\[\lim_{N \rightarrow \infty} P\Big(\sigma_{\min}(B^{(\tilde{\*F})} ) < \underline{\gamma}\Big)  = \lim_{N \rightarrow \infty} P\Big(\sigma_{\min}(B ) < \underline{\gamma}\Big),  \]
			where $B = \Big[ \frac{D_k^{1/2} \sum_{i = 1}^{m}\weight_{\fidx_l(i)}^2 v_{\fidx_l(i),k} v_{\fidx_l(i),l}  }{D_l^{1/2}  \sum_{i = 1}^{m}\weight_{\fidx_l(i)}^2 v_{\fidx_l(i),l}^2  } \Big]$. 
			
			Similarly, we can replace $\tfidx_l(i)$ by $\fidx_l(i)$ in asymptotics for $\*W^{(\Weight, \tilde{\*F})}$ 
			\[\lim_{N \rightarrow\infty} P\Big(\norm{(\*W^{(\Weight, \tilde{\*F})})^\T \Sigma_{e, \weight} \*W^{(\Weight, \tilde{\*F})} }_2 > \bar \gamma_{\*V,e} \sigma_{e,\weight}^2 \Big) = \lim_{N \rightarrow\infty} P\Big(\norm{({\*W}^{(\Weight)})^\T \Sigma_{e, \weight} {\*W}^{(\Weight)} }_2 > \bar \gamma_{\*V,e} \sigma_{e,\weight}^2 \Big), \]
			where ${\*W}^{(\Weight)} =\begin{bmatrix}
			\frac{ (\Weight \*V_1 ) \odot \*M_1 }{\norm{(\Weight \*V_1 ) \odot \*M_1 }} & \frac{(\Weight \*V_2 ) \odot \*M_2 }{\norm{(\Weight \*V_2 ) \odot \*M_2 }} & \cdots &
			\frac{ (\Weight \*V_K ) \odot \*M_K }{\norm{(\Weight \*V_K ) \odot \*M_K}}
			\end{bmatrix}$ (entries in $\*W^{(\Weight)}$  are nonzero if the corresponding value in $\Weight \*\Lambda H$ is among the largest $m$ entries.). Then the probability lower bound for $P(\rho > \rho_0 )$ is 
			\begin{align}
			\nonumber\lim_{N, T \rightarrow \infty} P(\rho > \rho_0) 
			\geq&  \lim_{N \rightarrow \infty} P \left( \sum_{j=1}^K \frac{1}{D_j \weight^2_{\fidx_j(m)} v^2_{\fidx_j(m),j} } <  \frac{m(K-\rho_0)\underline{\gamma}^2}{\bar \gamma_{\*V,e} \sigma_{e,\weight}^2  }  \right) \\
			&- \lim_{N \rightarrow \infty} P\Big(\sigma_{\min}(B ) < \underline{\gamma}\Big) - \lim_{N \rightarrow\infty} P\Big(\norm{({\*W}^{(\Weight)})^\T \Sigma_{e, \weight} {\*W}^{(\Weight)} }_2 > \bar \gamma_{\*V,e}  \sigma_{e,\weight}^2 \Big),  \label{eqn:bound-multi-factor-general} 
			\end{align}
			\\
			\textbf{Step 4: Consider the special case where $\weight_i = 1$ for all $i$.} \\
			When $\weight_i = 1$ for all $i$, we have $\Sigma_{e,\weight} = \Sigma_e$, $\sigma_{e,\weight}^2 = \sigma_{e}^2$, $ {\*W}^{(\Weight )}  =\begin{bmatrix}
			\frac{ \*V_1  \odot \*M_1 }{\norm{\*V_1  \odot \*M_1 }} & \frac{\*V_2  \odot \*M_2 }{\norm{\*V_2  \odot \*M_2 }} & \cdots &
			\frac{ \*V_K  \odot \*M_K }{\norm{ \*V_K  \odot \*M_K }}
			\end{bmatrix} := {\*W}$ and $B  = \Big[ \frac{D_k^{1/2} \sum_{i = 1}^{m}v_{ \fidx_l(i),k} v_{\fidx_l(i),l}  }{ D_l^{1/2} \sum_{i = 1}^{m} v_{\fidx_l(i),l}^2  } \Big]$. The probability lower bound for $P(\rho > \rho_0 )$ can be simplified to 
			\begin{align}
			\nonumber\lim_{N, T \rightarrow \infty} P(\rho > \rho_0) 
			\geq&  \lim_{N \rightarrow \infty} P \left( \sum_{j=1}^K \frac{1}{D_j v^2_{\fidx_j(m),j} } <  \frac{m(K-\rho_0)\underline{\gamma}^2}{\bar \gamma_{\*V,e} \sigma_{e}^2  }  \right) \\
			&- \lim_{N \rightarrow \infty} P\Big(\sigma_{\min}(B) < \underline{\gamma}\Big) - \lim_{N \rightarrow\infty} P\Big(\norm{{\*W}^\T \Sigma_e {\*W}}_2 > \bar \gamma_{\*V,e} \sigma_{e}^2 \Big) \label{eqn:multi-factor-lower-bound-evt}
			\end{align}
			We set $\rho_0$ as $\rho_0 = K - \frac{\bar \gamma_{\*V,e} \sigma_{e}^2}{m \underline{\gamma}^2 } \sum_{j=1}^K \frac{1}{D_j y_j^2}$. Note that 
			\begin{align*}
			&\forall j, \, |v_{\fidx_j(m),j}| > y_k \Rightarrow \sum_{j = 1}^K \frac{1}{D_j v_{\fidx_j(m),j}^2} <  \sum_{j = 1}^K \frac{1}{D_j y_j^2} 
			\Rightarrow  \sum_{j = 1}^K \frac{1}{D_j v_{\fidx_j(m),j}^2}  < \frac{m(K-\rho_0) \underline{\gamma}^2 }{\bar \gamma_{\*V,e} \sigma_{e}^2}   
			\end{align*}
			Then the probability lower bound has 
			\begin{align}
			\nonumber & \lim_{N, T\rightarrow \infty} P\left( \rho \geq K - \frac{\bar \gamma_{\*V,e} \sigma_{e}^2}{m\underline{\gamma}^2 } \sum_{j=1}^K \frac{1}{D_j y_j^2} \right)   \\
			\geq&   \bar{G}_m(y_1, \cdots, y_K)  - \lim_{N \rightarrow \infty} P\Big(\sigma_{\min}(B) < \underline{\gamma}\Big) - \lim_{N \rightarrow\infty} P\Big(\norm{{\*W}^\T \Sigma_e {\*W}}_2 > \bar \gamma_{\*V,e} \sigma_{e}^2 \Big).
			\end{align}
			\textbf{Step 5: Consider the general case where $\weight_i \neq 1$ for some $i$ (Proof of the multi-factor case in Proposition \ref{prop:weighted})} \\
			We still set $\rho_0$ as $\rho_0 = K - \frac{\bar \gamma_{\*V,e} \sigma_{e}^2}{m \underline{\gamma}^2 } \sum_{j=1}^K \frac{1}{D_j y_j^2}$. Since $\frac{m(K-\rho_0)\underline{\gamma}^2}{\bar \gamma_{\*V,e} \sigma_{e}^2  }   = \frac{\sigma_e^2}{\sigma_{e,\weight}^2 }   \sum_{j = 1}^K \frac{1}{D_j y_j^2} $, we have
			\begin{align*}
			&\forall j, \, |v_{\fidx_j(m),j}| > \frac{\sigma_{e,\weight}}{\sigma_e} y_k \Rightarrow \sum_{j = 1}^K \frac{1}{D_j v_{\fidx_j(m),j}^2} <  \frac{\sigma_e^2}{\sigma_{e,\weight}^2 }   \sum_{j = 1}^K \frac{1}{D_j y_j^2} 
			\Rightarrow  \sum_{j = 1}^K \frac{1}{D_j v_{\fidx_j(m),j}^2}  < \frac{m(K-\rho_0) \underline{\gamma}^2 }{\bar \gamma_{\*V,e} \sigma_{e,\weight}^2}   
			\end{align*}
			Then the probability lower bound has 
			\begin{align}
			\nonumber & \lim_{N, T\rightarrow \infty} P\left( \rho \geq K - \frac{\bar \gamma_{\*V,e} \sigma_{e}^2}{m\underline{\gamma}^2 } \sum_{j=1}^K \frac{1}{D_j y_j^2} \right)   \\
			\geq&  \bar{G}_{m,\weight}\left(\frac{ \sigma_{e,\weight}}{ \sigma_{e}} y_{1}, \cdots, \frac{ \sigma_{e,\weight}}{ \sigma_{e}} y_{K} \right)  - \lim_{N \rightarrow \infty} P\Big(\sigma_{\min}(B) < \underline{\gamma}\Big) -\lim_{N \rightarrow\infty} P\Big(\norm{({\*W}^{(\Weight)})^\T \Sigma_{e, \weight} {\*W}^{(\Weight)} }_2 > \bar \gamma_{\*V,e}  \sigma_{e,\weight}^2 \Big).
			\end{align}
			If the weight is inverse proportional to error's standard error, then  we have $\weight_i  |v_{ij}| \geq \theta_{0,1}   |v_{ij}| = \frac{\sigma_{e,\weight}}{ \sigma_{e}}  |v_{ij}|  $ for all $i$ and 
			\begin{align*}
			\bar{G}_{m,\weight}\left(\frac{ \sigma_{e,\weight}}{ \sigma_{e}} y_{1}, \cdots, \frac{ \sigma_{e,\weight}}{ \sigma_{e}} y_{K} \right)    &= \lim_{N \rightarrow \infty} P\left(|\weight v|_{(m),1} \geq \frac{ \sigma_{e,\weight}}{ \sigma_{e}} y_1, \cdots, |\weight v|_{(m),K} \geq \frac{ \sigma_{e,\weight}}{ \sigma_{e}} y_K\right) \\
			&\geq \lim_{N \rightarrow \infty} P\left(| v|_{(m),1} \geq y_1, \cdots, | v|_{(m),K} \geq y_K\right) = \bar{G}_m(y_1, \cdots, y_K)
			\end{align*}
			\texttt{}\\
			\textbf{Step 6: Consider the number of nonzeros vary for different factors(Proof of Proposition \ref{prop:vary-m})}  \\
			Suppose the number of nonzeros is $m_1, m_2, \cdots, m_K$ for the first, second, ..., $K$-th factors. We need to revisit Step 2 and 3 to simplify $\tr(A^\I)$ and provide a lower bound for\\ $P\left(\tr( A^{-1}) < \frac{K - \rho_0}{\norm{(\*W^{(\Weight, \tilde{\*F})})^\T \Sigma_{e, \weight} \*W^{(\Weight, \tilde{\*F})} }_2  } \right) $.  
			
			The diagonal entries in $A$ can be written as ($m$ is changed to $m_j$)
			\[a_{jj}= \frac{\sum_{k=1}^K (\sum_{i = 1}^{m_j} u_{\tfidx_j(i),j} u_{\tfidx_j(i),k} )^2}{\sum_{i = 1}^{m_j} u_{\tfidx_j(i),j}^2}\]
			and the off-diagonal entries in $A$ can be written as ($m$ is changed to $m_j$ and $m_l$)
			$$
			a_{jl} = \frac{\sum_{k=1}^K  (\sum_{i = 1}^{m_j} u_{\tfidx_j(i),j} u_{\tfidx_j(i),k} ) (\sum_{i = 1}^{m_l} u_{\tfidx_l(i),k} u_{\tfidx_l(i),l} ) }{ (\sum_{i = 1}^{m_j} u_{\tfidx_j(i),j}^2)^{1/2}  (\sum_{i = 1}^{m_l} u_{\tfidx_l(i),l}^2)^{1/2} }, \quad \forall j \neq l
			$$
			Similarly, we decompose $A$ as $A= (\tilde{\*U}^\T   \*U) (\*U^\T  \tilde{\*U})= D^{(\tilde{\*F})} (B^{(\tilde{\*F})})^\T B^{(\tilde{\*F})} D^{(\tilde{\*F})}$, where $D^{(\tilde{\*F})}$ is a diagonal matrix with $d^{(\tilde{\*F})}_{jj} = (\sum_{i = 1}^{m} u_{\tfidx_j(i),j}^2)^{1/2}$ and the diagonal entries in $B^{(\tilde{\*F})}$ are 1, while the off-diagonal entries are ($m$ is changed to $m_l$)
			\[b_{kl}^{(\tilde{\*F})} =  \frac{ \sum_{i = 1}^{m_l} u_{\tfidx_l(i),k} u_{\tfidx_l(i),l}  }{  \sum_{i = 1}^{m_l}  u_{\tfidx_l(i),l}^2  } = \frac{D_k^{1/2} \sum_{i = 1}^{m_l}\weight_{\tfidx_l(i)}^2 v_{\tfidx_l(i),k} v_{\tfidx_l(i),l}  }{ D_l^{1/2} \sum_{i = 1}^{m_l}\weight_{\tfidx_l(i)}^2 v_{\tfidx_l(i),l}^2  },  \quad \text{for all $k$ and $l$}. \]
			Since $(B^{(\tilde{\*F})})^\T B^{(\tilde{\*F})}$ is positive semidefinite, similar as Step 2, we have  ($m$ is changed to $m_j$)
			\begin{eqnarray*}
				tr(A^{-1})  \leq \frac{1}{\sigma_{\min}^2(B^{(\tilde{\*F})})}  \sum_{j=1}^K \frac{1}{\sum_{i = 1}^{m_j} u_{\tfidx_j(i),j}^2}.
			\end{eqnarray*}
			We modify events $E_{2,\weight}$ in Step 3 as  $E_{2,\weight} = \left\lbrace \sum_{j=1}^K \frac{1}{m_j D_j \weight^2_{\fidx_j(m_j)} v^2_{\fidx_j(m_j),j} } <  \frac{(K-\rho_0)\underline{\gamma}^2}{\norm{(\*W^{(\Weight, \tilde{\*F})})^\T \Sigma_{e, \weight} \*W^{(\Weight, \tilde{\*F})} }_2 }  \right\rbrace$. If event $E_{1,\weight}$ (defined in Step 3) happens, we have
			\begin{align*}
			& \sum_{j=1}^K \frac{1}{m_j D_j \weight^2_{\fidx_j(m_j)} v^2_{\fidx_j(m_j),j} } <  \frac{(K-\rho_0)\underline{\gamma}^2}{\norm{(\*W^{(\Weight, \tilde{\*F})})^\T \Sigma_{e, \weight} \*W^{(\Weight, \tilde{\*F})} }_2} \\
			\overset{\text{w.p.1}}{\Rightarrow} & \sum_{j=1}^K \frac{1}{\sum_{i = 1}^{m_j} u_{\tfidx_j(i),j}^2} <  \frac{(K-\rho_0) \underline{\gamma}^2}{\norm{(\*W^{(\Weight, \tilde{\*F})})^\T \Sigma_{e, \weight} \*W^{(\Weight, \tilde{\*F})} }_2} \\
			\overset{\text{under}\,\, E_{1,\weight}}{\Rightarrow} & \frac{1}{\sigma_{\min}^2(B^{(\tilde{\*F})} )}  \sum_{j=1}^K \frac{1}{\sum_{i = 1}^{m_j} u_{\tfidx_j(i),j}^2}  <  \frac{K-\rho_0}{\norm{(\*W^{(\Weight, \tilde{\*F})})^\T \Sigma_{e, \weight} \*W^{(\Weight, \tilde{\*F})} }_2} \\
			\Rightarrow& tr(A^{-1}) < \frac{K-\rho_0}{\norm{(\*W^{(\Weight, \tilde{\*F})})^\T \Sigma_{e, \weight} \*W^{(\Weight, \tilde{\*F})} }_2}
			\end{align*}
			We follow the remaining steps in Step 3 and have the probabilistic bound for $P(\rho > \rho_0)$
			\begin{align}
			\nonumber\lim_{N, T \rightarrow \infty} P(\rho > \rho_0) 
			\geq&  \lim_{N \rightarrow \infty} P \left( \sum_{j=1}^K \frac{1}{m_j D_j \weight^2_{\fidx_j(m_j)} v^2_{\fidx_j(m_j),j} } <  \frac{(K-\rho_0)\underline{\gamma}^2}{\bar \gamma_{\*V,e} \sigma_{e,\weight}^2  }  \right) \\
			&- \lim_{N \rightarrow \infty} P\Big(\sigma_{\min}(B ) < \underline{\gamma}\Big) - \lim_{N \rightarrow\infty} P\Big(\norm{({\*W}^{(\Weight)})^\T \Sigma_{e, \weight} {\*W}^{(\Weight)} }_2 > \bar \gamma_{\*V,e}  \sigma_{e,\weight}^2 \Big),  \label{eqn:bound-multi-factor-general-vary-m} 
			\end{align}
			If $\weight_i = 1$ for all $i$, then we have the probability lower bound has 
			\begin{align}
			\nonumber & \lim_{N, T\rightarrow \infty} P\left( \rho \geq K - \frac{\bar \gamma_{\*V,e} \sigma_{e}^2}{\underline{\gamma}^2 } \sum_{j=1}^K \frac{1}{m_j D_j u_{j,N}^2(\tau_j)} \right)   \\
			\geq&  \bar{G}_{m_1,...,m_K}(y_{1}, \cdots, y_{K})  - \lim_{N \rightarrow \infty} P\Big(\sigma_{\min}(B) < \underline{\gamma}\Big) - \lim_{N \rightarrow\infty} P\Big(\norm{{\*W}^\T \Sigma_e {\*W}}_2 > \bar \gamma_{\*V,e} \sigma_{e}^2 \Big)
			\end{align}
		\end{proof}

		\subsection{Proof of Proposition \ref{prop:GEV}-\ref{prop:vary-m}}
		\begin{proof}[Proof of Proposition \ref{prop:GEV}]
			If sequence $u_{1,N}(\tau)$ satisfy the assumptions in Lemma \ref{lemma:order-stats-dist}, then $c_{1,j} =  \sum_{i = j}^{m-1} \pi_1^{\ast^j} (i) $ defined in \eqref{eqn:G-1-m-limit} and \eqref{eqn:GEV} holds.
			
			If loadings are i.i.d. then from Theorem 3.4 in \cite{coles2001introduction}, $c_{1,j} = 1$ and \eqref{eqn:evt-one-factor-iid} holds.
		\end{proof}

		\begin{proof}[Proof of Proposition \ref{prop:evt-multi-factor-simplified}]
			When $\tilde{\*V}_k$ and $\tilde{\*V}_l$ are asymptotically independent for $k\neq l$, the nonzero entries in $\wt{\*W}$ are asymptotically non-overlapping.  From Lemma \ref{lemma_lam_error}, $\frac{1}{T} \wt{\*W}^\T \Weight \*e \*e^\T \Weight \wt{\*W} \leq (1+h(m)) \sigma_{e,\weight}^2 I_K  + o_p(1) $. Let $\bar \gamma_{\*V,e}=1+h(m)$. Then we have  $\lim_{N \rightarrow\infty} P\Big(\norm{{\*W}^\T \Sigma_e {\*W} }_2 > (1+h(m)) \sigma_{e}^2 \Big)=0$ and hence we can neglect this term and get Proposition \ref{prop:evt-multi-factor-simplified}.
		\end{proof}

		\begin{proof}[Proof of Proposition \ref{prop:weighted}]
			For the one-factor case, the proof is provided in Step 3 in the proof of Theorem \ref{thm-evt-one-factor} and Step 3 in the proof of Proposition \ref{prop-one-factor}. For the multi-factor case, the proof is provided in Step 5 in the proof of Theorem \ref{thm-evt-multi-factor}.
		\end{proof}
		
		\begin{proof}[Proof of Proposition \ref{prop:vary-m}]
			The proof is provided in Step 6 in the proof of Theorem  \ref{thm-evt-multi-factor}.
		\end{proof}

		\subsection{Proof of  Theorem \ref{thm:gen-lam}}
		
		\begin{proof}[Proof of Theorem \ref{thm:gen-lam}]
			Without loss of generality, we assume $\tilde{\*F}^\T  \tilde{\*F}/T = I_K$. Otherwise, we can multiply an invertible rotation matrix $M$ to $\tilde{\*F}$, denoted as $\breve{\*F} = \tilde{\*F} M$, such that $\breve{\*F}^\T  \breve{\*F}/T = I_K$. Furthermore, $ \rho_{\tilde{\*F}, \*F} = \rho_{\breve{\*F}, \*F}$. Let 
			\[\breve{\*\Lambda} = \*X \breve{\*F} (\breve{\*F}^\T  \breve{\*F})^{-1} = \*X \breve{\*F}/T. \]
			Since $\breve{\*F} = \tilde{\*F} M$ and $\tilde{\*\Lambda} = \*X \tilde{\*F} (\tilde{\*F}^\T  \tilde{\*F})^{-1} $, we have 
			\[\breve{\*\Lambda} = \*X \breve{\*F}/T = \*X \tilde{\*F} M/T = \tilde{\*\Lambda} (\tilde{\*F}^\T  \tilde{\*F}) (M/T)^{-1} = \tilde{\*\Lambda} M^\prime, \]
			where $M^\prime = (\tilde{\*F}^\T  \tilde{\*F}) (M/T)^{-1}$ is also an invertible rotation matrix. 
			
			The closeness between $\breve{\*\Lambda}$ and $\*\Lambda$, $\rho_{\breve{\*\Lambda}, \*\Lambda}$, is defined similarly as $\rho_{\tilde{\*\Lambda}, \*\Lambda}$. Since $M^\prime$ is invertible, we have 
			\begin{align}
			\rho_{\breve{\*\Lambda}, \*\Lambda} = \rho_{\tilde{\*\Lambda}, \*\Lambda}.
			\end{align}
			In the following, we focus on  $\rho_{\breve{\*\Lambda}, \*\Lambda}$.
			
			Denote $P$ as $P = \*F^\T  \breve{\*F}/T$. Given $\breve{\*\Lambda} = \*X \breve{\*F}/T$ and $\*X = \*\Lambda \*F^\T  + \*e$, we have 
			\[\*\Lambda^\T \breve{\*\Lambda}/N = (\*\Lambda^\T \*\Lambda/N )(\*F^\T  \breve{\*F}/T) + \*\Lambda^\T \*e \breve{\*F}/(NT) = (\*\Lambda^\T \*\Lambda/N )(\*F^\T  \breve{\*F}/T) +  o_p(1)\]
			following $\*\Lambda^\T \*e_t/N = o_p(1)$ for all $t$ from Assumption \ref{ass_f_e}. The term $\breve{\*\Lambda}^\T \breve{\*\Lambda}/N$ has
			\begin{eqnarray*}
				\breve{\*\Lambda}^\T \breve{\*\Lambda}/N &=& (\*X \breve{\*F}/T)^\T (\*X \breve{\*F}/T) \\
				&=& \frac{\breve{\*F}^\T  \*F}{T} \frac{\*\Lambda^\T \*\Lambda}{N} \frac{\*F^\T  \breve{\*F}}{T} + \frac{\breve{\*F}^\T  \*e^\T \*e \breve{\*F}}{N T^2} + o_p(1)
			\end{eqnarray*}
			
			Then we have 
			\begin{eqnarray*}
				\rho_{\tilde{\*\Lambda}, \*\Lambda} &=& \tr \Lp ( \*\Lambda^\T \*\Lambda/N )^{-1} ( \*\Lambda^\T \breve{\*\Lambda}/N ) (\breve{\*\Lambda}^\T \breve{\*\Lambda}/N )^{-1} ( \breve{\*\Lambda}^\T \*\Lambda/N ) \Rp \\
				&=& \tr \Lp \Lp \frac{\*\Lambda^\T \*\Lambda}{N} \Rp^{-1} \Lp \frac{\*\Lambda^\T \*\Lambda \*F^\T  \breve{\*F}}{NT} \Rp \Lp \frac{\breve{\*F}^\T  \*F \*\Lambda^\T \*\Lambda \*F^\T  \breve{\*F} + \breve{\*F}^\T  \*e^\T \*e \breve{\*F}}{NT^2} \Rp^{-1} \Lp \frac{\breve{\*F}^\T  \*F \*\Lambda^\T \*\Lambda}{NT} \Rp \Rp + o_p(1) \\
				&=& \tr \Lp \Lp I_K +  \Lp \frac{\*\Lambda^\T \*\Lambda}{N} \Rp^{-1/2} \Lp  \frac{\breve{\*F}^\T  \*F}{T} \Rp^{-1} \frac{\breve{\*F}^\T  \*e^\T \*e \breve{\*F}}{NT^2} \Lp \frac{\*F^\T  \breve{\*F}}{T} \Rp^{-1} \Lp \frac{\*\Lambda^\T \*\Lambda}{N} \Rp^{-1/2} \Rp^{-1} \Rp + o_p(1) \\
			\end{eqnarray*}
			For any small symmetric matrix $M$, from Taylor expansion and mean value theorem, there exists a symmetric matrix $\tilde{M}$,
			$$\tr((I_K + M)^{-1}) = \tr(I_K - M + \frac{1}{2} \tilde{M}^2) \geq \tr(I_K - M).$$ 
			Using this property, we have
			\begin{eqnarray}
			\nonumber \rho_{\breve{\*\Lambda}, \*\Lambda} &\geq& K - \tr \Lp \Lp \frac{\*\Lambda^\T \*\Lambda}{N} \Rp^{-1/2} \Lp  \frac{\breve{\*F}^\T  \*F}{T} \Rp^{-1} \frac{\breve{\*F}^\T  \*e^\T \*e \breve{\*F}}{NT^2} \Lp \frac{\*F^\T  \breve{\*F}}{T} \Rp^{-1} \Lp \frac{\*\Lambda^\T \*\Lambda}{N} \Rp^{-1/2} \Rp  + o_p(1) \\
			\nonumber &=& K - \tr \Lp  \Lp \frac{\breve{\*F}^\T  \*e^\T \*e \breve{\*F}}{NT^2} \Rp \Lp \frac{\breve{\*F}^\T  \*F \*\Lambda^\T \*\Lambda \*F^\T  \breve{\*F}}{NT^2} \Rp^{-1}    \Rp + o_p(1) \\
			\nonumber &\geq& K - \sqrt{\tr \Lp  \Lp \frac{\breve{\*F}^\T  \*e^\T \*e \breve{\*F}}{NT^2} \Rp^2 \Rp} \sqrt{\tr \Lp \Lp \frac{\breve{\*F}^\T  \*F \*\Lambda^\T \*\Lambda \*F^\T  \breve{\*F}}{NT^2} \Rp^{-2}    \Rp}  + o_p(1)  \\
			&\geq& K - \tr \Lp  \frac{\breve{\*F}^\T  \*e^\T \*e \breve{\*F}}{NT^2}  \Rp \tr \Lp \Lp \frac{\breve{\*F}^\T  \*F \*\Lambda^\T \*\Lambda \*F^\T  \breve{\*F}}{NT^2} \Rp^{-1}  \Rp  + o_p(1), \label{eqn:rho-lambda-breve-lambda}
			\end{eqnarray}
			followed from $tr(M^k) \leq (tr(M))^k$ for a symmetric positive semidefinite matrix $M$. For the term $\tr \Lp \Lp \frac{\breve{\*F}^\T  \*F \*\Lambda^\T \*\Lambda \*F^\T  \breve{\*F}}{NT^2} \Rp^{-1}    \Rp$, we have
			\begin{eqnarray}
			\nonumber \tr \Lp \Lp \frac{\breve{\*F}^\T  \*F \*\Lambda^\T \*\Lambda \*F^\T  \breve{\*F}}{NT^2} \Rp^{-1} \Rp &=& \tr  \Lp \Lp \frac{\*\Lambda^\T \*\Lambda \*F^\T  \*F}{N T} \Rp^{-1} \Lp \frac{(\*F^\T  \*F)^{-1} \*F^\T  \breve{\*F} \breve{\*F}^\T  \*F }{T} \Rp^{-1} \Rp \\
			\nonumber &\leq& \sqrt{\tr\Lp \Lp \frac{\*\Lambda^\T \*\Lambda \*F^\T  \*F}{N T} \Rp^{-2} \Rp } \sqrt{\tr \Lp \Lp  \Lp \frac{\*F^\T  \*F}{T} \Rp^{-1} \frac{\*F^\T  \breve{\*F} \breve{\*F}^\T  \*F}{T^2} \Rp^{-2}\Rp} \\
			&\leq& \tr\Lp \Lp \frac{\*\Lambda^\T \*\Lambda \*F^\T  \*F}{N T} \Rp^{-1} \Rp  \tr \Lp \Lp  \Lp \frac{\*F^\T  \*F}{T} \Rp^{-1} \frac{\*F^\T  \breve{\*F} \breve{\*F}^\T  \*F}{T^2} \Rp^{-1}\Rp \label{eqn:f-inequality}
			\end{eqnarray}
			Denote $A_{\tilde{\*F}, \*F} = ( \*F^\T  \*F/T )^{-1} ( \*F^\T  \tilde{\*F}/T ) (\tilde{\*F}^\T  \tilde{\*F}/T )^{-1} ( \tilde{\*F}^\T  \*F/T ) $ \\ and $A_{\breve{\*F}, \*F} = ( \*F^\T  \*F/T )^{-1} ( \*F^\T  \breve{\*F}/T ) (\breve{\*F}^\T  \breve{\*F}/T )^{-1} ( \breve{\*F}^\T  \*F/T ) $. We have
			\[A_{\breve{\*F}, \*F} = ( \*F^\T  \*F/T )^{-1} ( \*F^\T  \tilde{\*F} M/T ) (M^\T \tilde{\*F}^\T  \tilde{\*F} M/T )^{-1} (M^\T \tilde{\*F}^\T  \*F/T ) = A_{\tilde{\*F}, \*F}  \]
			and together with $\breve{\*F}^\T  \breve{\*F}/T = I_K$, we have 
			\[\tr \Lp \Lp  \Lp \frac{\*F^\T  \*F}{T} \Rp^{-1} \frac{\*F^\T  \breve{\*F} \breve{\*F}^\T  \*F}{T^2} \Rp^{-1} \Rp =  tr(A_{\breve{\*F}, \*F}^{-1}) = tr (A_{\tilde{\*F}, \*F}^{-1})\]
			From Lemma \ref{lemma_rho}, 
			\[\rho_{\tilde{\*F}, \*F} = \tr(A_{\tilde{\*F}, \*F}) =  \tr \left(   \Bigg(I_K + \underbrace{ \left( \frac{\*F^\T  \*F}{T} \right)^{-1/2}(Q^\T)^{-1} \frac{\wt{\*W}^\T  \*e \*e^\T \wt{\*W}}{T}  Q^{-1} \left( \frac{\*F^\T  \*F}{T} \right)^{-1/2} }_{C} \Bigg)^{-1}  \right) + o_p(1). \]
			where $Q = \*\Lambda \Weight \wt{\*W} = \*\Lambda \wt{\*W} $ under the setting of this theorem ($\Weight = I_N$). Therefore, $ \tr(A_{\tilde{\*F}, \*F}) = \tr(I_K + C)  + o_p(1)$ and
			\[\tr \Lp \Lp  \Lp \frac{\*F^\T  \*F}{T} \Rp^{-1} \frac{\*F^\T  \breve{\*F} \breve{\*F}^\T  \*F}{T^2} \Rp^{-1} \Rp = tr \Lp  I_K + C\Rp  + o_p(1). \]
			We plug it into \eqref{eqn:rho-lambda-breve-lambda}, use the inequality \eqref{eqn:f-inequality} and have 
			\[\rho_{\breve{\*\Lambda}, \*\Lambda} \geq K - \tr  \Lp \frac{\breve{\*F}^\T  \*e^\T \*e \breve{\*F}}{NT^2} \Rp \tr \Lp \Lp \frac{\*\Lambda^\T \*\Lambda \*F^\T  \*F}{N T} \Rp^{-1} \Rp tr \Lp I_K + C \Rp + o_p(1).  \]
			From \eqref{eqn:prop3-2-2} in the proof of Theorem \ref{thm-evt-multi-factor}, we have
			\begin{align}
			\nonumber \tr(I_K + C) \leq&  K +  \norm{\wt{\*W}^\T \Sigma_e \wt{\*W}}_2  \tr \left( \left( \frac{1}{T} \wt{\*W}^\T  (\Lambda \*F^\T  \*F  \*\Lambda^\T) \wt{\*W}\right)^{-1}  \right)  + o_p(1) \\
			\nonumber \leq&  K +  \norm{\wt{\*W}^\T \Sigma_e \wt{\*W}}_2  \tr \left( Q^{-1} \left( \frac{\*F^\T  \*F}{T} \right)^{-1}  (Q^\T)^{-1}  \right)  + o_p(1) \\
			\leq& K +  K  (1 + h(m)) \sigma_e^2   \tr \left( Q^{-1} \left( \frac{\*F^\T  \*F}{T} \right)^{-1}  (Q^\T)^{-1}  \right)  + o_p(1) \label{eqn:proof-thm-lambda-1} \\
			\leq& K +  K  (1 + h(m)) \sigma_e^2 \norm{(Q Q^\T)^{-1}}_2 \tr(\Sigma_{F}^{-1})  + o_p(1) \label{eqn:proof-thm-lambda-2} \\
			\leq& K +  K  (1 + h(m)) \sigma_e^2 \norm{Q^{-1}}^2_2 \tr(\Sigma_{F}^{-1})  + o_p(1) \label{eqn:proof-thm-lambda-3} \\
			=& O_p(1) \label{eqn:proof-thm-lambda-4} 
			\end{align}
			where \eqref{eqn:proof-thm-lambda-1} follows $\norm{\wt{\*W}^\T \Sigma_e \wt{\*W}}_2 \leq \norm{\wt{\*W}^\T \Sigma_e \wt{\*W}}_1 \leq K (1 + h(m)) \sigma_e^2 $ from Holder's inequality and symmetricity of this matrix; \eqref{eqn:proof-thm-lambda-2} follows trace inequality $\tr(AB) \leq \norm{A}_2 \tr(B)$; \eqref{eqn:proof-thm-lambda-3} follows matrix 2-norm is submultiplicative $\norm{AB}_2 \leq \norm{A}_2 \norm{B}_2$,  and \eqref{eqn:proof-thm-lambda-4} follows the assumption in Theorem \ref{thm:gen-lam} that $\norm{(W^\T \Lambda)^{-1}}_2 = O_p(1)$ and the proof of Lemma \ref{lemma_u}. 
			
			Next, we need to provide an upper bound for $\tr  \Lp \frac{\tilde{\*F}^\T  \*e^\T \*e \tilde{\*F}}{NT^2} \Rp $ and $ \tr \Lp \Lp \frac{\*\Lambda^\T \*\Lambda \*F^\T  \*F}{N T} \Rp^{-1} \Rp$. For the term $\tr  \Lp \frac{\tilde{\*F}^\T  \*e^\T \*e \tilde{\*F}}{NT^2} \Rp $,
			from Assumption \ref{ass:max-eigen-err-cov}, $\frac{1}{NT}  \*e^\T \*e = o_p(1)$ and  together with $\tilde{\*F}^\T  \tilde{\*F}/T = I_K$, we have 
			\[\frac{1}{NT^2} \tilde{\*F}_j^\T  \*e^\T \*e \tilde{\*F}_j  = o_p(1) \]
			and then 
			\[ \tr  \Lp \frac{\tilde{\*F}^\T  \*e^\T \*e \tilde{\*F}}{NT^2} \Rp = \sum_{j = 1}^K \frac{1}{NT^2} \tilde{\*F}_j^\T  \*e^\T \*e \tilde{\*F}_j  = o_p(1).  \]
			For the term $ \tr \Lp \Lp \frac{\*\Lambda^\T \*\Lambda \*F^\T  \*F}{N T} \Rp^{-1} \Rp$, note that
			\[\frac{1}{NT} \*\Lambda^\T \*\Lambda \*F^\T  \*F \xrightarrow{p}  \Sigma_{\Lambda} \Sigma_F. \]
			From Assumptions \ref{ass_factor} and \ref{ass_loading}, $\Sigma_F$ and $\Sigma_{\Lambda}$ are positive definite matrix. Then $\Sigma_F \Sigma_{\Lambda}$ is invertible and the maximum eigenvalue of $(\Sigma_F \Sigma_{\Lambda})^{-1}$ is bounded away from $\infty$. Thus, $ \tr \Lp \Lp \frac{\*\Lambda^\T \*\Lambda \*F^\T  \*F}{N T} \Rp^{-1} \Rp = O_p(1)$. We then have 
			\[\tr  \Lp \frac{\tilde{\*F}^\T  \*e^\T \*e \tilde{\*F}}{NT^2} \Rp \tr \Lp \Lp \frac{\*\Lambda^\T \*\Lambda \*F^\T  \*F}{N T} \Rp^{-1} \Rp tr \Lp I_K + C\Rp = o_p(1)\]
			followed from $ \tr  \Lp \frac{\tilde{\*F}^\T  \*e^\T \*e \tilde{\*F}}{NT^2} \Rp = o_p(1)$, $\tr \Lp \Lp \frac{\*\Lambda^\T \*\Lambda \*F^\T  \*F}{N T} \Rp^{-1} \Rp = O_p(1)$
			and $\tr \Lp I_K + C\Rp = O_p(1)$. Thus, $\rho_{\breve{\*\Lambda}, \*\Lambda} \xrightarrow{p} K$ and $\rho_{\tilde{\*\Lambda}, \*\Lambda} \xrightarrow{p} K$.

		\end{proof}
		
		\begin{proof}[Proof of Proposition \ref{prop:comp-lasso}]
			Define $\tilde{\rho}$ and $\bar{\rho}$ as 
			
			\begin{eqnarray*}
				\tilde{\rho} &=& tr \left( (\*F^\T \*F/T)^{-1} (\*F^\T  \tilde{\*F}/T) (\tilde{\*F}^\T \tilde{\*F}/T)^{-1} (\tilde{\*F}^\T \*F/T) \right) \\
				\bar{\rho} &=& tr \left( (\*F^\T \*F/T)^{-1} (\*F^\T \bar{\*F}/T) (\bar{\*F}^\T  \bar{\*F}/T)^{-1} (\bar{\*F}^\T  \*F/T) \right)
			\end{eqnarray*}
			
			Assume there are $m$ nonzero elements in both $\tilde{\*\Lambda}$ and $\bar{\*\Lambda}$. Since it is an one factor model, $\tilde{\*\Lambda} = \tilde{\*\Lambda}_1$ and $\bar{\*\Lambda} = \bar{\*\Lambda}_1$. Assume the indexes of nonzero values in $\tilde{\*\Lambda}_1$ and $\bar{\*\Lambda}_1$ are $\tfidx_1(l)$ and $\bar{f}^{(\idx)}(1,l)$ $i = 1, 2, \cdots m$ respectively.

			From lemma 4, 
			we have as $N, T \rightarrow \infty$, 
			$$\tilde{\rho} =  tr \left(   \left(I + \left( \frac{\*F^\T  \*F}{T} \right)^{-1/2} \left(\tilde{\*\Lambda}^\T \*\Lambda \right)^{-1} \frac{\tilde{\*\Lambda}^\T \*e \*e^\T \tilde{\*\Lambda}}{T}  \left(\*\Lambda^\T \tilde{\*\Lambda} \right)^{-1} \left( \frac{\*F^\T  \*F}{T} \right)^{-1/2} \right)^{-1}  \right) +  o_p(1)$$
			
			For $\frac{\tilde{\*\Lambda}^\T \*e \*e^\T \tilde{\*\Lambda}}{T}$, in a one-factor model,
			\begin{eqnarray}
			\nonumber \frac{1}{T}\tilde{\*\Lambda}^\T \*e \*e^\T \tilde{\*\Lambda} &=& \frac{1}{T} \sum_{i = 1}^{m} \sum_{k =1 }^{m}  \tilde{\Lambda}_{\tfidx_1(l),1} \tilde{\Lambda}_{\tfidx_1(k),1} \*e_{\tfidx_1(l)}^\T  \*e_{\tfidx_1(k)} \\
			\nonumber &=& \frac{1}{T} \sum_{i = 1}^{m} \tilde{\Lambda}_{\tfidx_1(l),1}^2  \*e_{\tfidx_1(l)}^\T  \*e_{\tfidx_1(l)} + \frac{1}{T} \sum_{i \neq k}   \tilde{\Lambda}_{\tfidx_1(l),1} \tilde{\Lambda}_{\tfidx_1(k),1} \*e_{\tfidx_1(l)}^\T  \*e_{\tfidx_1(k)},
			\end{eqnarray}
			where $\*e_{i}$ is the $i$-th row in $\*e$. If errors are cross-sectionally independent (weekly dependent), then for fixed $m$, $\frac{1}{T} \sum_{i \neq k}   \tilde{\Lambda}_{\tfidx_1(l),1} \tilde{\Lambda}_{\tfidx_1(k),1} \*e_{\tfidx_1(l)}^\T  \*e_{\tfidx_1(k)} = o_p(1)$. Furthermore, since $\sum_{i \neq k}   \tilde{\Lambda}^2_{\tfidx_1(l),1}  = 1$, if there is some $\sigma_e^2$, such that $\frac{1}{T} \*e_{i}^\T \*e_{i} \rightarrow \sigma_e^2$, then
			\[\frac{1}{T}\tilde{\*\Lambda}^\T \*e \*e^\T \tilde{\*\Lambda}  = \sigma_e^2 + o_p(1) \]
			
			Then we have 
			\begin{eqnarray*}
				\tilde{\rho} =   \left(1 + \frac{\sigma_e^2 }{\frac{1}{T} \tilde{\*\Lambda}^\T \left( \*\Lambda \*F^\T  \*F \*\Lambda^\T \right) \tilde{\*\Lambda}  }  \right)^{-1}   + o_p(1) \\
			\end{eqnarray*}
			
			Similarly, for sparse loadings calculated from Lasso, with the standardization, $\bar{\*\Lambda}^\T  \bar{\*\Lambda} = 1$, we also have 
			\begin{eqnarray*}
				\bar{\rho} &=&  \left(1 + \frac{\sigma_e^2 }{\frac{1}{T} \bar{\*\Lambda}^\T  \left( \*\Lambda \*F^\T  \*F \*\Lambda^\T \right) \bar{\*\Lambda}  }  \right)^{-1}   + o_p(1) \\
			\end{eqnarray*}
			
			From the proof of Proposition 1, we have 
			\begin{eqnarray*}
				\frac{1}{T} \tilde{\*\Lambda}^\T \left( \Lambda \*F^\T  \*F \*\Lambda^\T \right) \tilde{\*\Lambda} = \sigma_{\*F_1}^2 \sum_{i = 1}^{m} \Lambda_{\tfidx_1(l),1}^2 + o_p(1) + o_p(1) \\
			\end{eqnarray*}
			
			On the other hand, 
			\begin{eqnarray*}
				\frac{1}{T} \bar{\*\Lambda}^\T  \left( \*\Lambda \*F^\T  \*F \*\Lambda^\T \right) \bar{\*\Lambda} &=& \bar{D} (\sum_{i = 1}^{m} \bar{\Lambda}_{\bar{s}_1(i),1} \hat{\Lambda}_{\bar{s}_1(i),1})^2 + o_p(1),
			\end{eqnarray*}
			where $\bar{D} = H^{-1} \frac{\*F^\T  \*F}{T} (H^\T)^{-1}$. 
			
			From Cauchy-Schwarz inequality, 
			\begin{eqnarray*}
				\Lp \sum_{i = 1}^m \bar{\Lambda}_{\bar{s}_1(i),1} \hat{\Lambda}_{\bar{s}_1(i),1} \Rp^2 \leq \Lp\sum_{i = 1}^m \bar{\Lambda}_{\bar{s}_1(i),1}^2\Rp \Lp\sum_{i = 1}^m \hat{\Lambda}_{\bar{s}_1(i),1}^2\Rp = \sum_{i = 1}^m \hat{\Lambda}_{\bar{s}_1(i),1}^2.
			\end{eqnarray*}
			The equality holds only if $\forall i$, $\frac{\tilde{\Lambda}_{\tfidx_1(i),1} }{\hat{\Lambda}_{\tfidx_1(i),1}}$ is the same. However, the sparse loadings are obtained from Lasso Regression. The coefficients equal to those from Least Angle Regression \citep{efron2004least, friedman2001elements}, which in general are not proportional to the least square coefficients. Thus, in general, $\frac{\tilde{\Lambda}_{\tfidx_1(i),1} }{\hat{\Lambda}_{\tfidx_1(i),1}}$ is not the same $\forall i$. The above inequality is a strict inequality. 
			
			Thus,
			\begin{eqnarray*}
				\frac{1}{T} \bar{\*\Lambda}^\T  \left( \*\Lambda \*F^\T  \*F \*\Lambda^\T \right) \bar{\*\Lambda} < \frac{1}{T} \tilde{\*\Lambda}^\T \left( \Lambda \*F^\T  \*F \*\Lambda^\T \right) \tilde{\*\Lambda} + o_p(1) 
			\end{eqnarray*}
			and therefore, as $N, T \rightarrow 0$,
			$$
			\Delta \rho =  \tilde{\rho} - \bar{\rho} \geq 0
			$$
			with probability 1.
			
		\end{proof}

		\newpage

		\section{Supplementary Empirical Results}\label{sec:add-empirical}

		\begin{table}[H]
			\centering
			\begin{tabular}{llll}
				\toprule
				&Anomaly characteristics & & Anomaly characteristics\\ \midrule
				1&Accruals - accrual & 20& Momentum (12m) - mom12           \\
				2&Asset Turnover - aturnover& 21& Momentum-Reversals - momrev \\
				3&Cash Flows/Price - cfp & 22& Net Operating Assets - noa\\
				4&Composite Issuance - ciss & 23& Price - price \\
				5&Dividend/Price - divp & 24& Gross Protability - prof\\
				6&Earnings/Price - ep & 25& Return on Assets (A) - roaa\\
				7&Gross Margins - gmargins & 26& Return on Book Equity (A) - roea  \\
				8&Asset Growth - growth & 27& Seasonality - season  \\
				9&Investment Growth - igrowth & 28& Sales Growth - sgrowth \\
				10&Industry Momentum - indmom & 29& Share Volume - shvol \\
				11&Industry Mom. Reversals - indmomrev  & 30& Size - size \\
				12&Industry Rel. Reversals \footnote{Industry Rel. Reversals: Industry Relative Reversals} - indrrev & 31& Sales/Price - sp                  \\
				13&Industry Rel. Rev. (L.V.) \footnote{Industry Relative Reversal (Low Volatility): Industry Relative Reversals} - indrrevlv & 32& Short-Term Reversals - strev \\ 
				14&Investment/Assets - inv & 33& Value-Momentum - valmom \\
				15&Investment/Capital - invcap & 34& Value-Momentum-Prof. - valmomprof \\
				16&Idiosyncratic Volatility - ivol & 35& Value-Protability -valprof        \\
				17&Leverage - lev & 36& Value (A) - value                 \\
				18&Long Run Reversals - lrrev & 37& Value (M) - valuem                \\
				19&Momentum (6m) - mom &  &                                   \\ \bottomrule
			\end{tabular}
			\caption{List of 37 anomaly characteristics in 370 single-sorted portfolios}
			\label{tab:list-anomaly}
		\end{table}
		
		An alternative approach to implement sparse PCA is to directly impose the cardinality constraint $\norm{\*\Lambda_j}_0 \leq m$  rather than using the $\ell_1$ penalty term $\alpha \sum_{j = 1}^{K} \norm{\*\Lambda_j}_1$ in the optimization problem \eqref{eqn:spca-obj}, where $\norm{\*\Lambda_j}_0 $ equals the number of nonzero elements in $\*\Lambda_j$. \cite{sigg2008expectation} develop an expectation-maximization (EM) algorithm based on a probabilistic expression of PCA to solve this optimizaton problem with constraint  $\norm{\*\Lambda_j}_0 \leq m$.\footnote{This algorithm is implemented in the R package \textsf{nsprcomp} \citep{sigg2018package}. } This approach will in general return sparse loadings with $m$
		nonzero elements in each sparse loading vector when $m$ is reasonably small. Then the sparse loadings from this approach will have the same number of nonzero elements as PPCA.
		
		Figure \ref{fig:370-port-rho-rmse-by-m}-\ref{fig:fred-md-rho-rmse-by-m-train} compare PPCA and sparse PCA with the cardinality constraint using the same metrics as in Section \ref{sec:empirical}. Similar as Figure \ref{fig:370-port-rho-rmse} and \ref{fig:fred-md-rho-rmse} for out-of-sample results and Figure \ref{fig:370-port-rho-rmse-train} and \ref{fig:fred-md-rho-rmse-train} for in-sample results, sparse PCA performs significantly worse than PPCA on both training and test data for the financial portfolio and macroeconomic data. Factors and loadings from sparse PCA are neither close to unweighted nor weighted PCA factors and loadings. Also, the RMSE for sparse PCA is much higher than PPCA.

		\begin{figure}[H]
			\tcapfig{Financial Portfolio Data: Out-of-Sample Generalized Correlations and RMSE (SPCA with Cardinality Constraints)}
			\begin{adjustwidth}{-1cm}{}
				\centering
				\begin{subfigure}{.4\textwidth}
					\centering
					\includegraphics[width=1\linewidth]{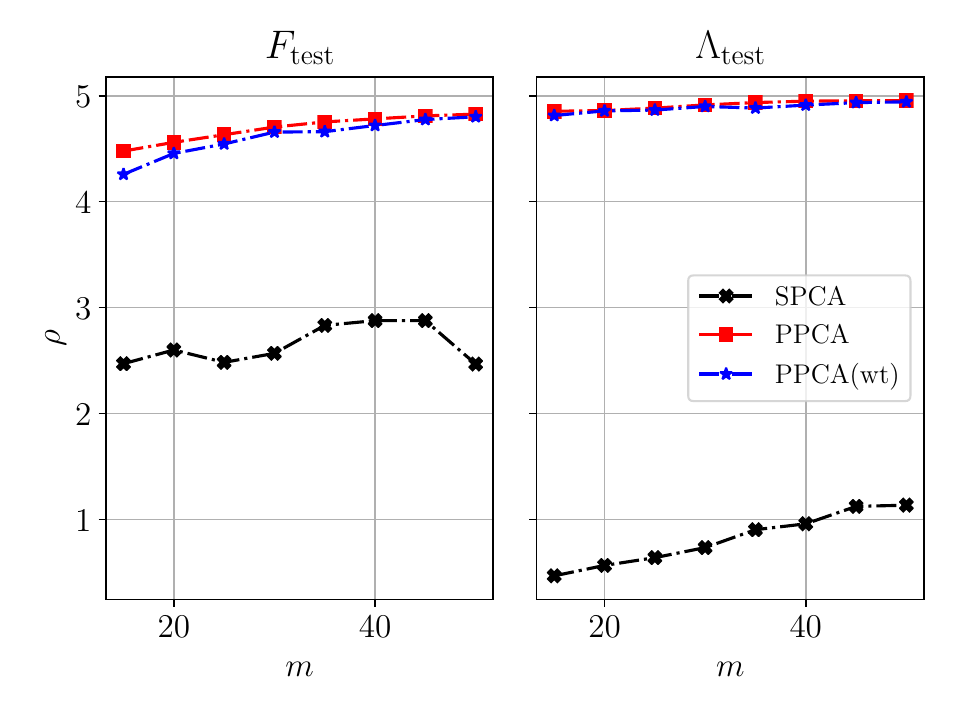}
					\subcaptab{$\rho$ with $\hat{\*F}$/$\hat{\*\Lambda}$}
				\end{subfigure}%
				\begin{subfigure}{.4\textwidth}
					\centering
					\includegraphics[width=1\linewidth]{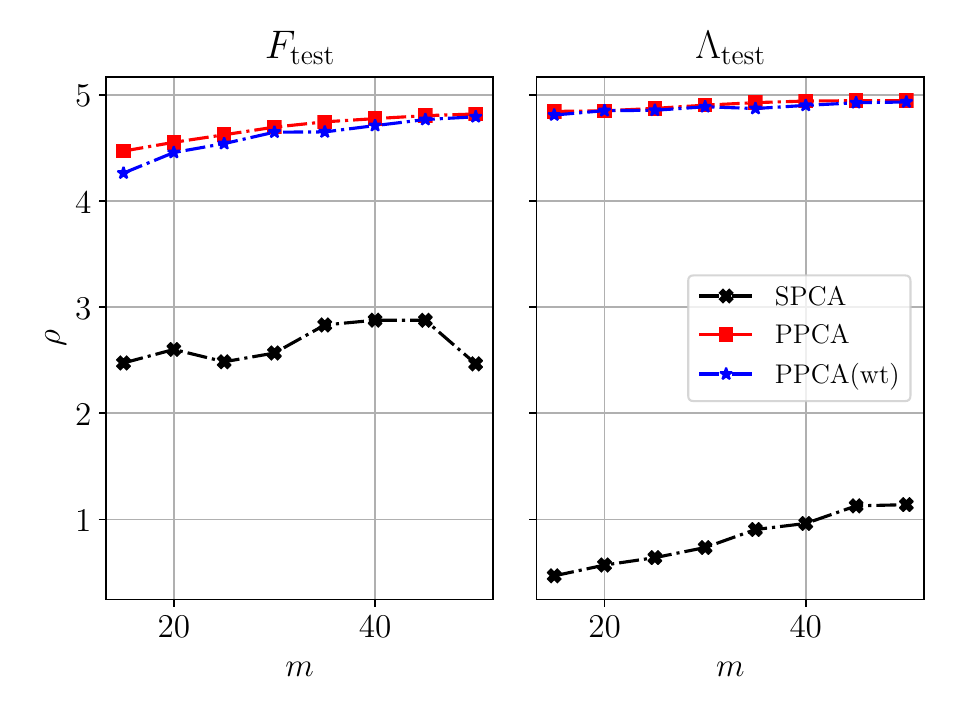}
					\subcaptab{$\rho$ with $\hat{\*F}^\twt$/$\hat{\*\Lambda}^\twt$}
				\end{subfigure}%
				\begin{subfigure}{.235\textwidth}
					\centering
					\includegraphics[width=1\linewidth]{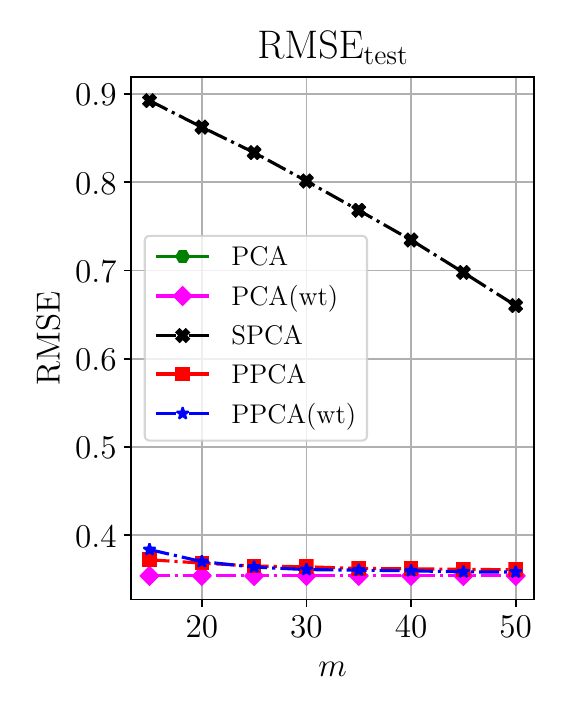}
					\subcaptab{RMSE}
				\end{subfigure}
			\end{adjustwidth}
			\label{fig:370-port-rho-rmse-by-m}
			\bnotefig{This figure compares the out-of-sample generalized correlations for factors and loadings and out-of-sample RMSE for proximate factors (PPCA), weighted proximate factors (PPCA (wt)), sparse PCA (SPCA), weighted PCA (PCA (wt)) and unweighted PCA. PPCA (wt) and PCA (wt) use the inverse standard errors as weights. In order to achieve the same sparsity level for various methods, we first choose $m$ for PPCA and set the cardinality constraint $\norm{\*\Lambda_j}_0 \leq m$ for SPCA. The left figure shows the generalized correlation of the factors and loadings with the PCA estimates  $\hat{\*F}$/loadings $\hat{\*\Lambda}$. The middle figure show the corresponding generalized correlations with weighted PCA estimates $\hat{\*F}^\twt$/loadings $\hat{\*\Lambda}^\twt$. The right figure displays the RMSE for all five methods. PCA, PCA (wt), PPCA and PPCA (wt) achieve very similar performance and significantly outperform SPCA.} 
		\end{figure}
		
		\begin{figure}[H]
			\tcapfig{Financial Portfolio Data: In-Sample Generalized Correlations and RMSE (SPCA with Cardinality Constraints)}
			\begin{adjustwidth}{-1cm}{}
				\centering
				\begin{subfigure}{.4\textwidth}
					\centering
					\includegraphics[width=1\linewidth]{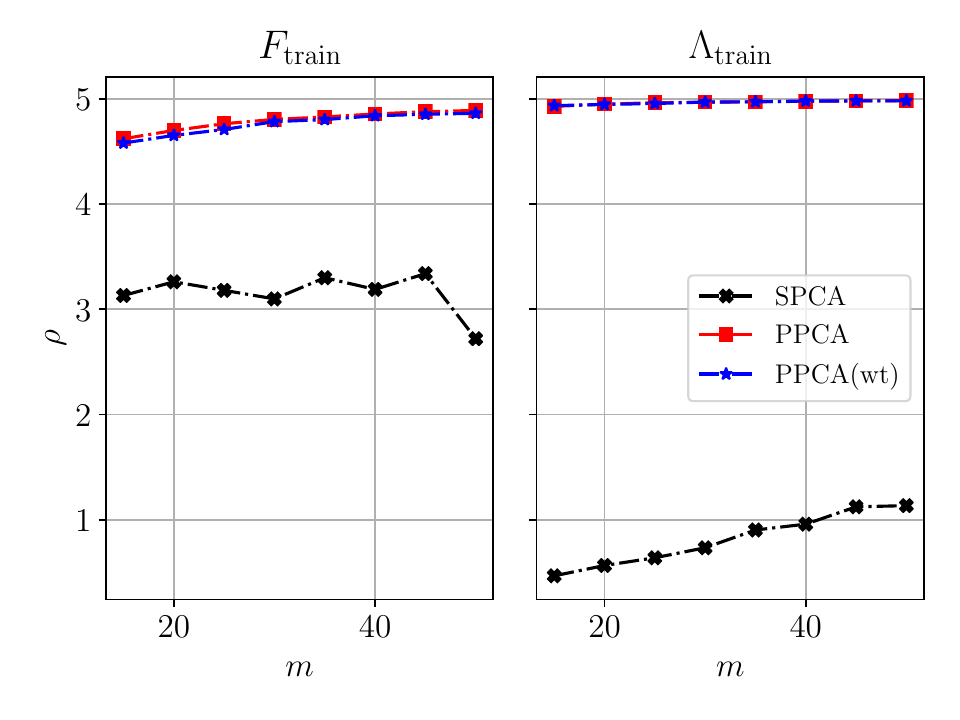}
					\subcaptab{$\rho$ with $\hat{\*F}$/$\hat{\*\Lambda}$}
				\end{subfigure}%
				\begin{subfigure}{.4\textwidth}
					\centering
					\includegraphics[width=1\linewidth]{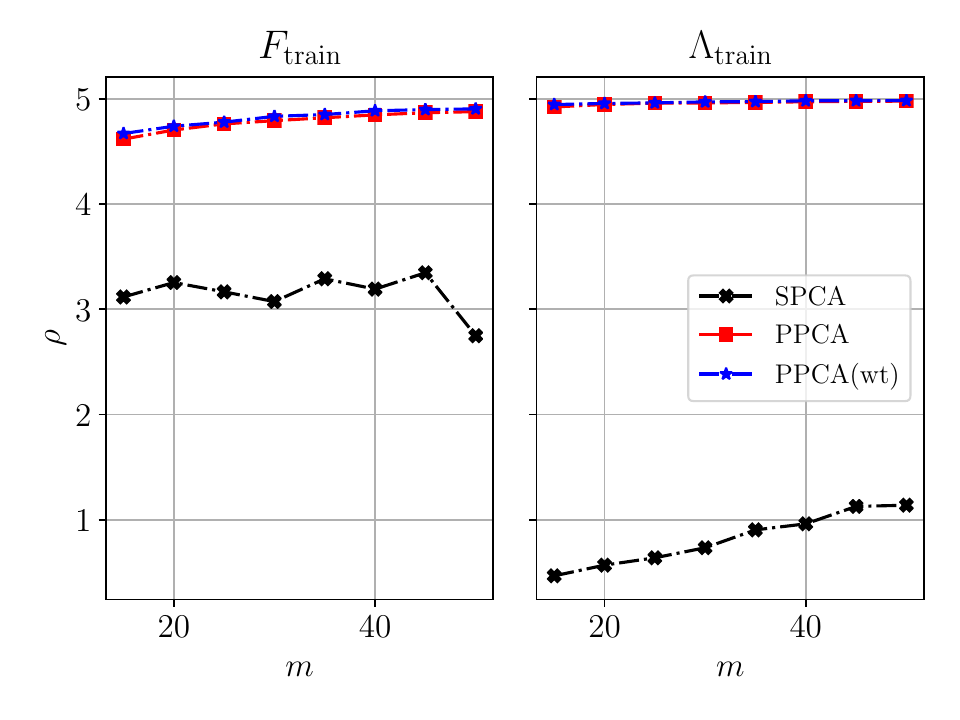}
					\subcaptab{$\rho$ with $\hat{\*F}^\twt$/$\hat{\*\Lambda}^\twt$}
				\end{subfigure}%
				\begin{subfigure}{.235\textwidth}
					\centering
					\includegraphics[width=1\linewidth]{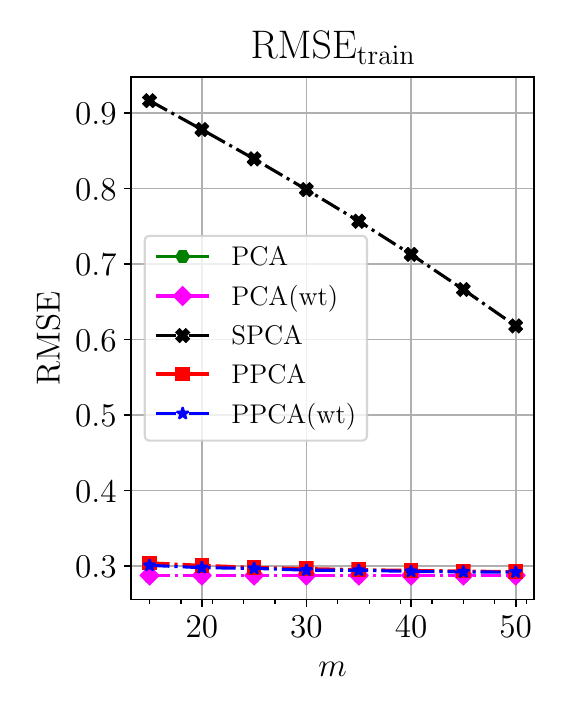}
					\subcaptab{RMSE}
				\end{subfigure}
			\end{adjustwidth}
			\label{fig:370-port-rho-rmse-by-m-train}
			\bnotefig{This figure compares the in-sample generalized correlations for factors and loadings and out-of-sample RMSE for proximate factors (PPCA), weighted proximate factors (PPCA (wt)), sparse PCA (SPCA), weighted PCA (PCA (wt)) and unweighted PCA. PPCA (wt) and PCA (wt) use the inverse standard errors as weights. In order to achieve the same sparsity level for various methods, we first choose $m$ for PPCA and set the cardinality constraint $\norm{\*\Lambda_j}_0 \leq m$ for SPCA.  The left figure shows the generalized correlation of the factors and loadings with the PCA estimates  $\hat{\*F}$/loadings $\hat{\*\Lambda}$. The middle figure show the corresponding generalized correlations with weighted PCA estimates $\hat{\*F}^\twt$/loadings $\hat{\*\Lambda}^\twt$. The right figure displays the RMSE for all five methods. PCA, PCA (wt), PPCA and PPCA (wt) achieve very similar performance and significantly outperform SPCA.} 
		\end{figure}

		\begin{figure}[H]
			\tcapfig{Macroeconomic Data: Out-of-Sample Generalized Correlations and RMSE (SPCA with Cardinality Constraints)}
			\begin{adjustwidth}{-1cm}{}
				\centering
				\begin{subfigure}{.4\textwidth}
					\centering
					\includegraphics[width=1\linewidth]{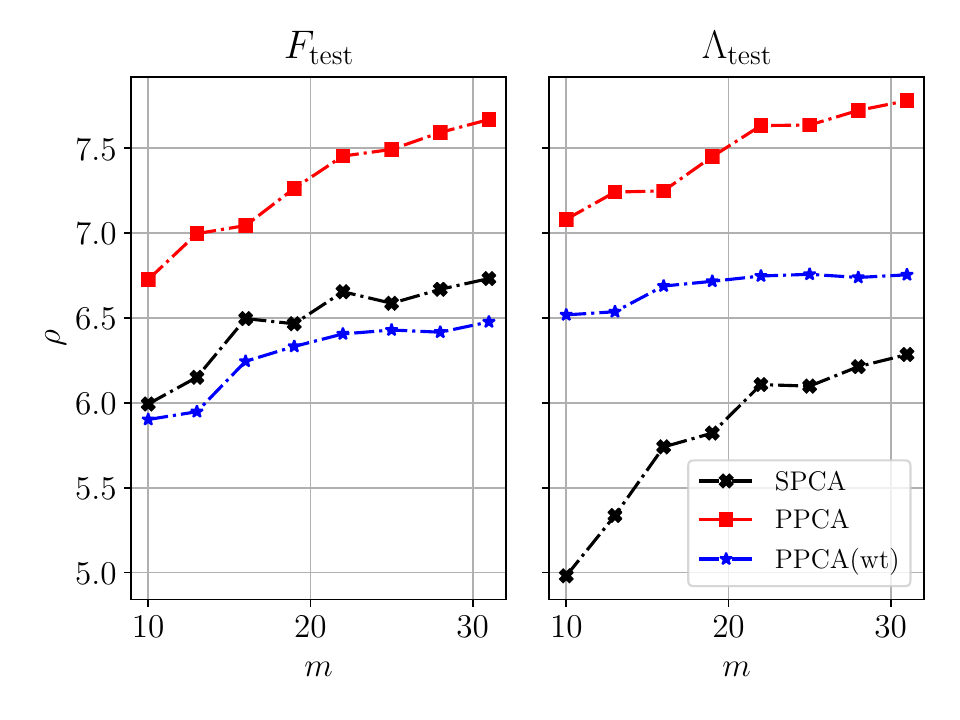}
					\caption{$\rho$ with $\hat{\*F}$/$\hat{\*\Lambda}$}
				\end{subfigure}%
				\begin{subfigure}{.4\textwidth}
					\centering
					\includegraphics[width=1\linewidth]{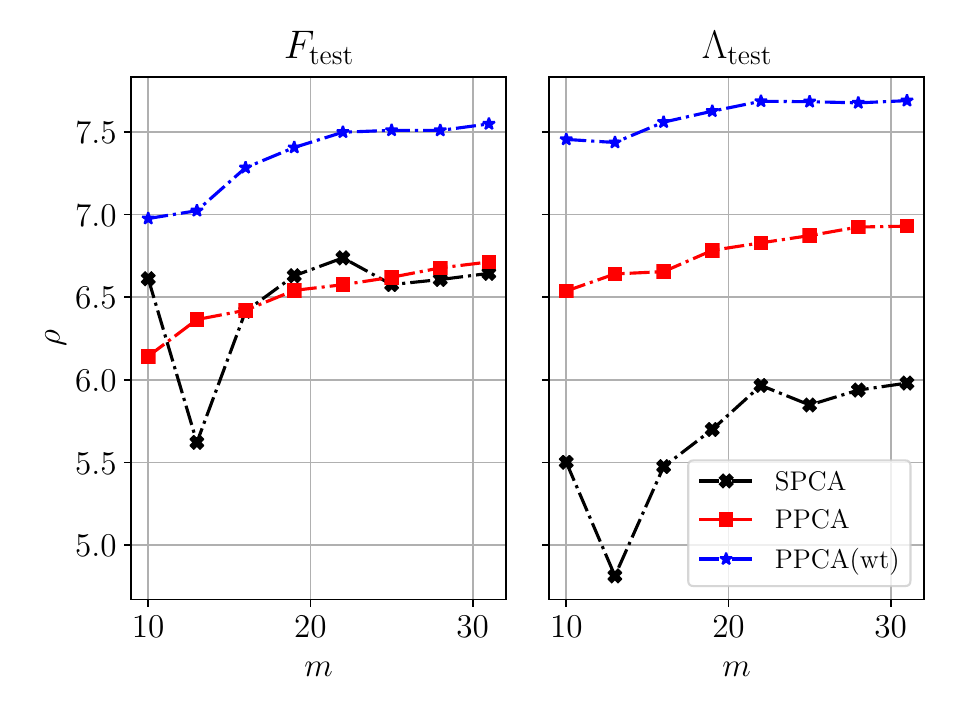}
					\caption{$\rho$ with $\hat{\*F}^\twt$/$\hat{\*\Lambda}^\twt$}
				\end{subfigure}%
				\begin{subfigure}{.235\textwidth}
					\centering
					\includegraphics[width=1\linewidth]{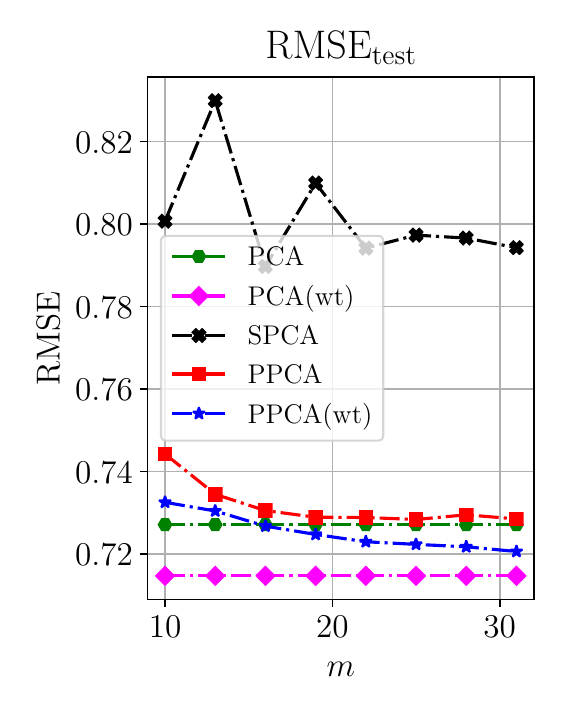}
					\caption{RMSE}
				\end{subfigure}
			\end{adjustwidth}
			\bnotefig{This figure compares the out-of-sample generalized correlations for factors and loadings and out-of-sample RMSE for proximate factors (PPCA), weighted proximate factors (PPCA (wt)), sparse PCA (SPCA), weighted PCA (PCA (wt)) and unweighted PCA. PPCA (wt) and PCA (wt) use the inverse standard errors as weights. In order to achieve the same sparsity level for various methods, we first choose $m$ for PPCA and set the cardinality constraint $\norm{\*\Lambda_j}_0 \leq m$ for SPCA.  The left figure shows the generalized correlation of the factors and loadings with the PCA estimates  $\hat{\*F}$/loadings $\hat{\*\Lambda}$. The middle figure show the corresponding generalized correlations with weighted PCA estimates $\hat{\*F}^\twt$/loadings $\hat{\*\Lambda}^\twt$. The right figure displays the RMSE for all five methods. PCA, PCA (wt), PPCA and PPCA (wt) achieve very similar performance and significantly outperform SPCA. }
			\label{fig:fred-md-rho-rmse-by-m}
		\end{figure}

		\begin{figure}[H]
			\tcapfig{Macroeconomic Data: In-Sample Generalized Correlations and RMSE (SPCA with Cardinality Constraints)}
			\begin{adjustwidth}{-1cm}{}
				\centering
				\begin{subfigure}{.4\textwidth}
					\centering
					\includegraphics[width=1\linewidth]{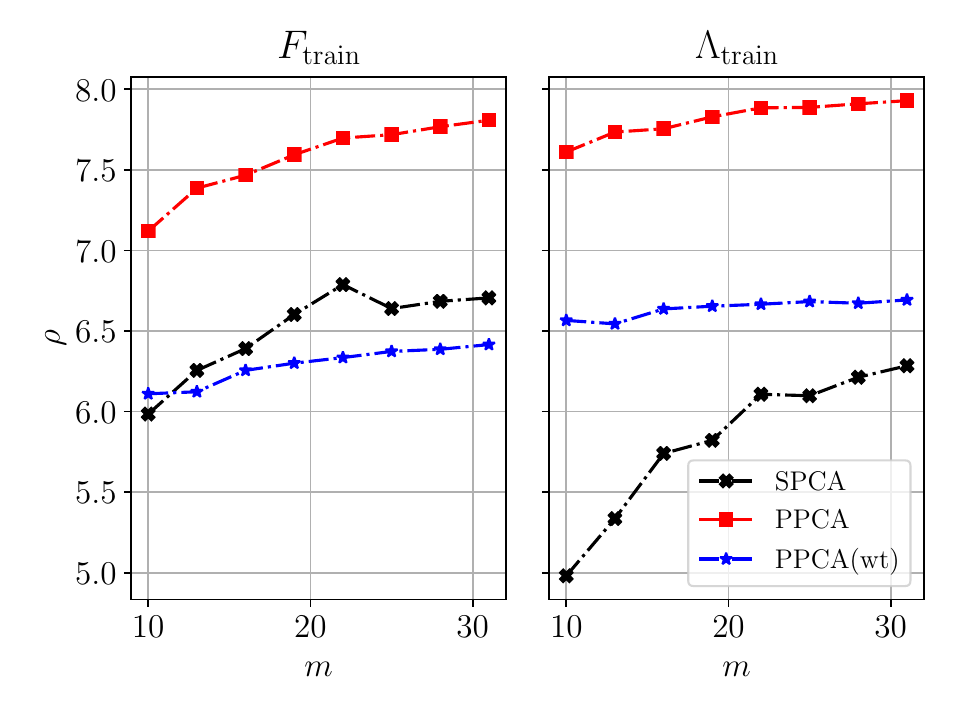}
					\caption{$\rho$ with $\hat{\*F}$/$\hat{\*\Lambda}$}
				\end{subfigure}%
				\begin{subfigure}{.4\textwidth}
					\centering
					\includegraphics[width=1\linewidth]{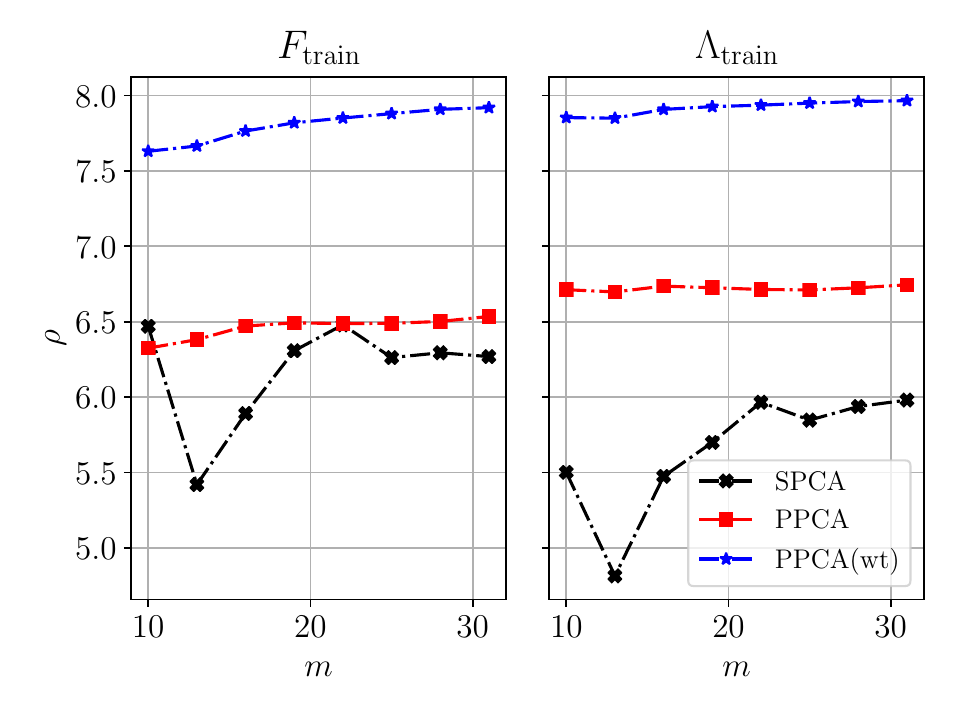}
					\caption{$\rho$ with $\hat{\*F}^\twt$/$\hat{\*\Lambda}^\twt$}
				\end{subfigure}%
				\begin{subfigure}{.235\textwidth}
					\centering
					\includegraphics[width=1\linewidth]{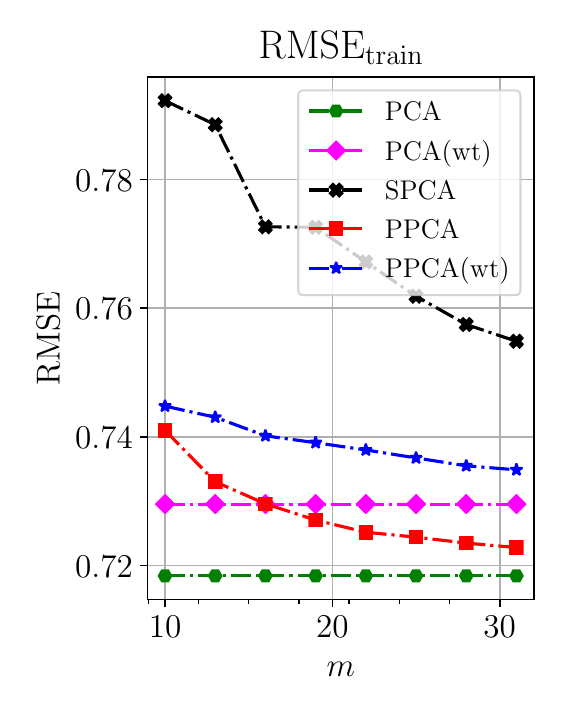}
					\caption{RMSE}
				\end{subfigure}
			\end{adjustwidth}
			\bnotefig{This figure compares the in-sample generalized correlations for factors and loadings and in-sample RMSE for proximate factors (PPCA), weighted proximate factors (PPCA (wt)), sparse PCA (SPCA), weighted PCA (PCA (wt)) and unweighted PCA. PPCA (wt) and PCA (wt) use the inverse standard errors as weights. In order to achieve the same sparsity level for various methods, we first choose $m$ for PPCA and set the cardinality constraint $\norm{\*\Lambda_j}_0 \leq m$ for SPCA.  The left figure shows the generalized correlation of the factors and loadings with the PCA estimates  $\hat{\*F}$/loadings $\hat{\*\Lambda}$. The middle figure show the corresponding generalized correlations with weighted PCA estimates $\hat{\*F}^\twt$/loadings $\hat{\*\Lambda}^\twt$. The right figure displays the RMSE for all five methods. PCA, PCA (wt), PPCA and PPCA (wt) achieve very similar performance and significantly outperform SPCA. }
			\label{fig:fred-md-rho-rmse-by-m-train}
		\end{figure}
		
		\begin{figure}[H]
			\tcapfig{Financial Portfolio Data: In-Sample Generalized Correlations and RMSE (SPCA with $\ell_1$ Penalty)}
			\begin{adjustwidth}{-1cm}{}
				\centering
				\begin{subfigure}{.4\textwidth}
					\centering
					\includegraphics[width=1\linewidth]{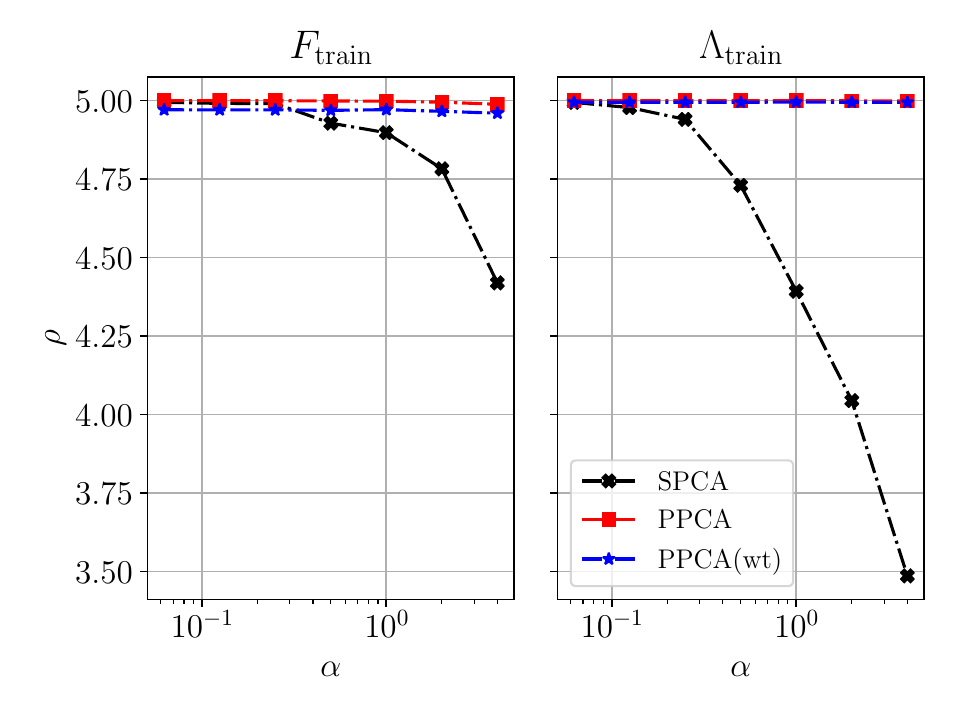}
					\subcaptab{$\rho$ with $\hat{\*F}$/$\hat{\*\Lambda}$}
				\end{subfigure}%
				\begin{subfigure}{.4\textwidth}
					\centering
					\includegraphics[width=1\linewidth]{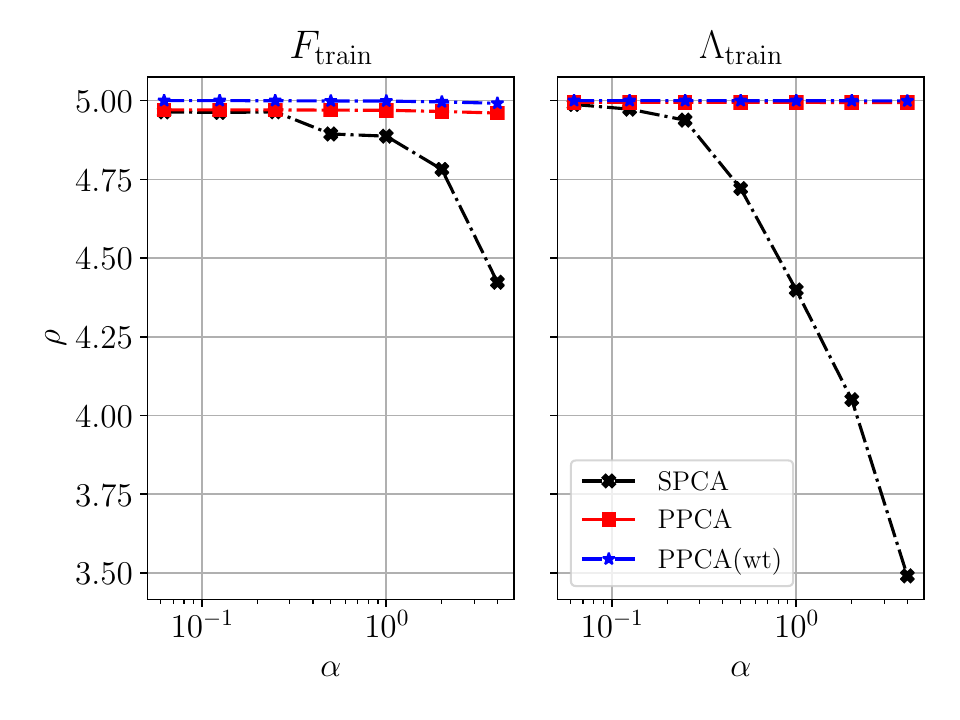}
					\subcaptab{$\rho$ with $\hat{\*F}^\twt$/$\hat{\*\Lambda}^\twt$}
				\end{subfigure}%
				\begin{subfigure}{.235\textwidth}
					\centering
					\includegraphics[width=1\linewidth]{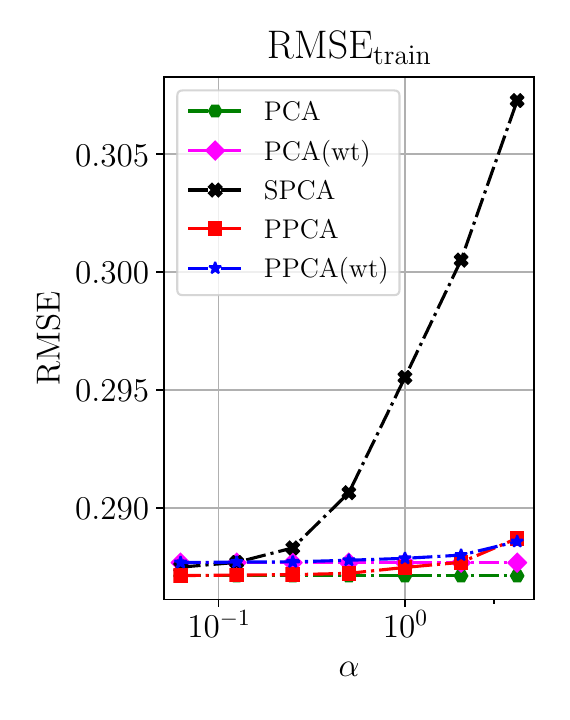}
					\subcaptab{RMSE}
				\end{subfigure}
			\end{adjustwidth}
			\label{fig:370-port-rho-rmse-train}
			\bnotefig{This figure compares the in-sample generalized correlations for factors and loadings and in-sample RMSE for proximate factors (PPCA), weighted proximate factors (PPCA (wt)), sparse PCA (SPCA), weighted PCA (PCA (wt)) and unweighted PCA. PPCA (wt) and PCA (wt) use the inverse standard errors as weights. In order to achieve the same sparsity level for various methods, we first choose $\alpha$, the $\ell_1$ penalty for SPCA, and set the number of nonzero weights $m_j$ for each factor in PPCA and PPCA (wt) to obtain the same number of nonzero elements in each factor as SPCA. The left figure shows the generalized correlation of the factors and loadings with the PCA estimates  $\hat{\*F}$/loadings $\hat{\*\Lambda}$. The middle figure show the corresponding generalized correlations with weighted PCA estimates $\hat{\*F}^\twt$/loadings $\hat{\*\Lambda}^\twt$. The right figure displays the RMSE for all five methods. PCA, PCA (wt), PPCA and PPCA (wt) achieve very similar performance and significantly outperform SPCA.} 
		\end{figure}
		
		\begin{figure}[H]
			\tcapfig{Macroeconomic Data: In-Sample Generalized Correlations and RMSE  (SPCA with $\ell_1$ Penalty)}
			\begin{adjustwidth}{-1cm}{}
				\centering
				\begin{subfigure}{.4\textwidth}
					\centering
					\includegraphics[width=1\linewidth]{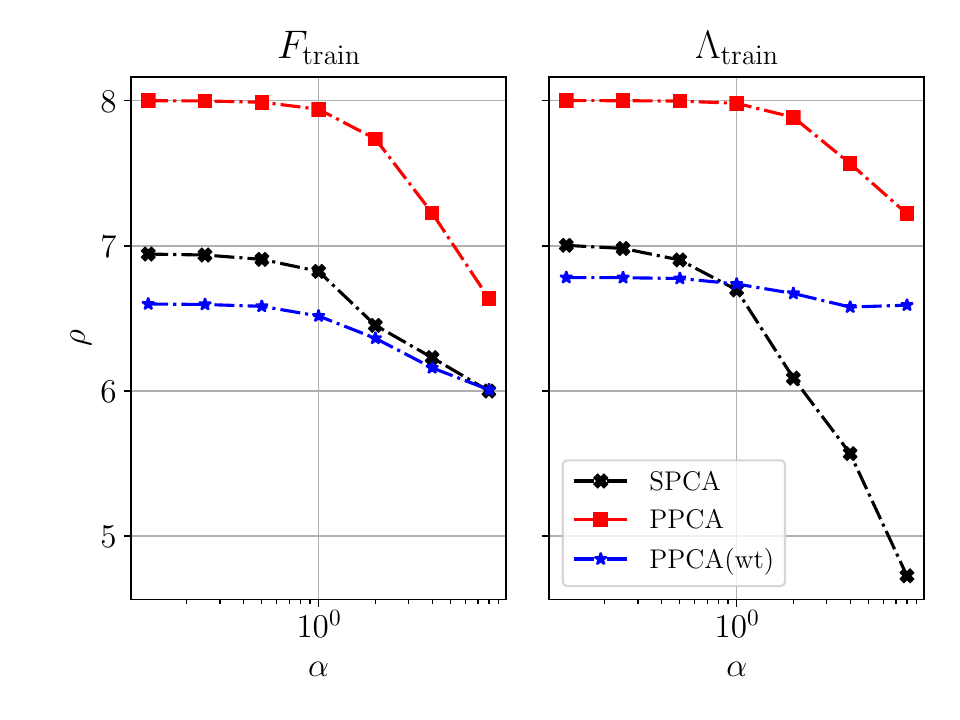}
					\caption{$\rho$ with $\hat{\*F}$/$\hat{\*\Lambda}$}
				\end{subfigure}%
				\begin{subfigure}{.4\textwidth}
					\centering
					\includegraphics[width=1\linewidth]{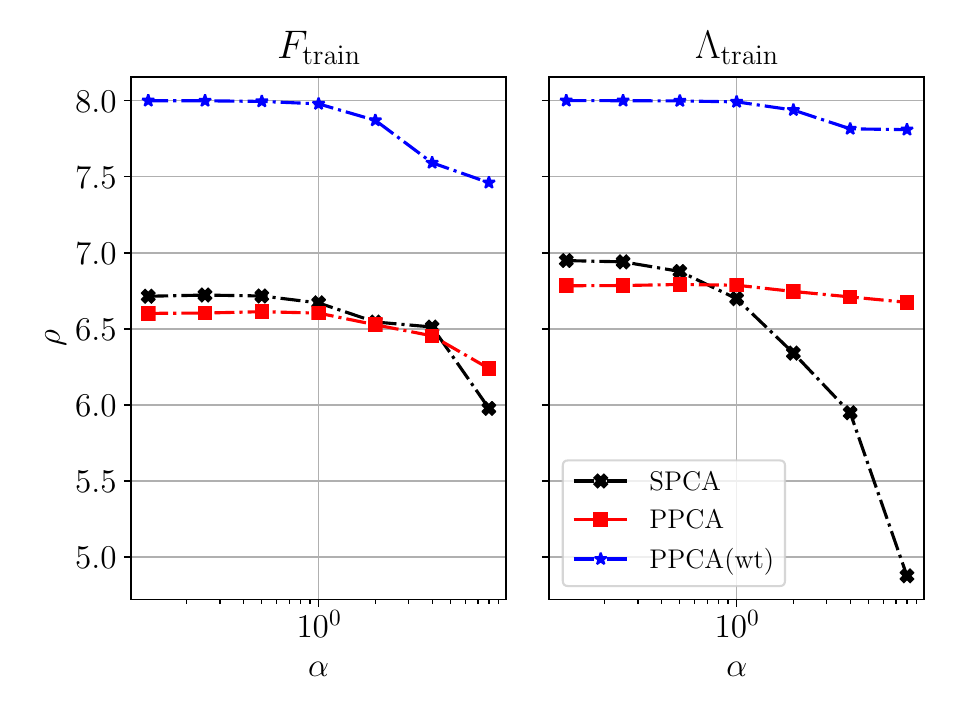}
					\caption{$\rho$ with $\hat{\*F}^\twt$/$\hat{\*\Lambda}^\twt$}
				\end{subfigure}%
				\begin{subfigure}{.235\textwidth}
					\centering
					\includegraphics[width=1\linewidth]{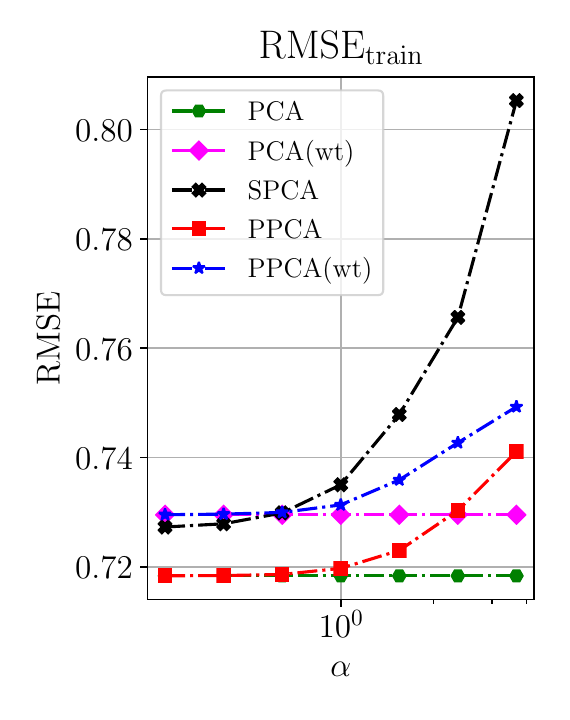}
					\caption{RMSE}
				\end{subfigure}
			\end{adjustwidth}
			\bnotefig{This figure compares the in-sample generalized correlations for factors and loadings and in-sample RMSE for proximate factors (PPCA), weighted proximate factors (PPCA (wt)), sparse PCA (SPCA), weighted PCA (PCA (wt)) and unweighted PCA. PPCA (wt) and PCA (wt) use the inverse standard errors as weights. In order to achieve the same sparsity level for various methods, we first choose $\alpha$, the $\ell_1$ penalty for SPCA, and set the number of nonzero weights $m_j$ for each factor in PPCA and PPCA (wt) to obtain the same number of nonzero elements in each factor as SPCA. The left figure shows the generalized correlation of the factors and loadings with the PCA estimates  $\hat{\*F}$/loadings $\hat{\*\Lambda}$. The middle figure show the corresponding generalized correlations with weighted PCA estimates $\hat{\*F}^\twt$/loadings $\hat{\*\Lambda}^\twt$. The right figure displays the RMSE for all five methods. PCA, PCA (wt), PPCA and PPCA (wt) achieve very similar performance and significantly outperform SPCA. }
			\label{fig:fred-md-rho-rmse-train}
		\end{figure}

		\begin{figure}[H]
			\centering
			\tcapfig{Financial Portfolio Data: Portfolio Weights of 1st Proximate Factor}
			\includegraphics[width=1\linewidth]{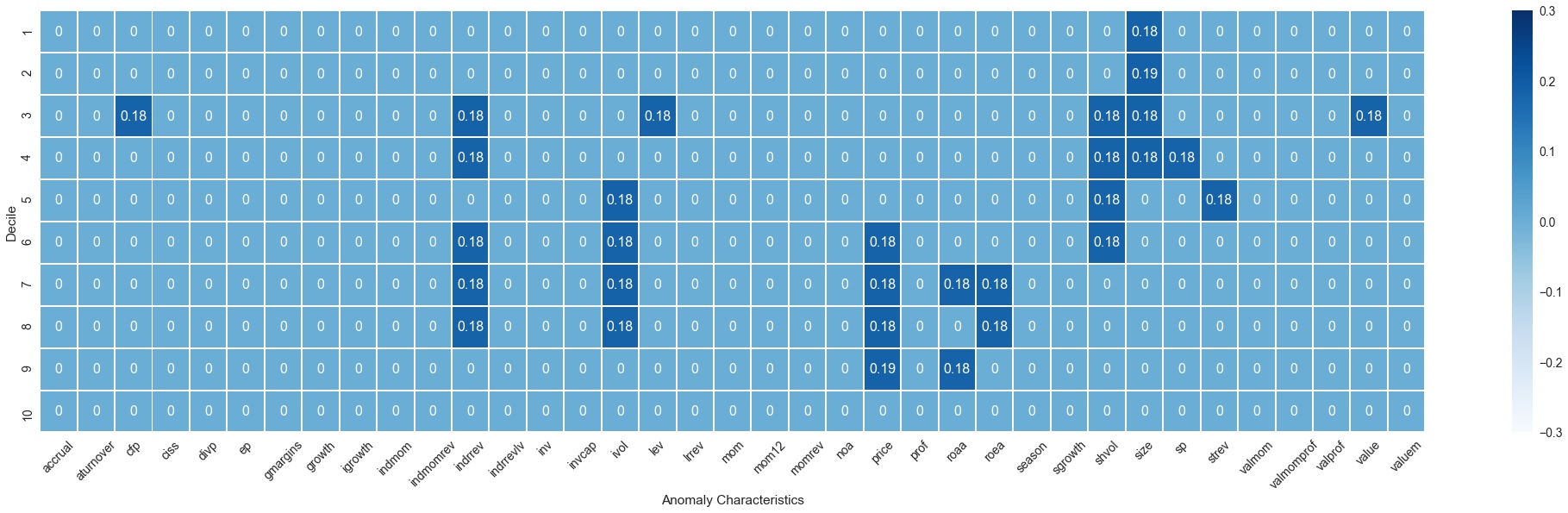}
			\bnotefig{This figure shows the portfolio weights of the 1st proximate factor. The 30 nonzero entries in the portfolio weights are composed of the extreme quantiles of momentum related characteristics. These factor weights are positive for the low quantiles and negative for the high quantiles. This proximate factor can be interpreted as a long-short momentum factor.  The names and description of the 37 anomaly characteristics are listed in Table \ref{tab:list-anomaly} in the Appendix.}
			\label{fig:single-sorted-portfolios-1st-loading}
		\end{figure}

		\begin{figure}[H]
			\centering
			\tcapfig{Financial Portfolio Data: Portfolio Weights of 2nd Proximate Factor}
			\includegraphics[width=1\linewidth]{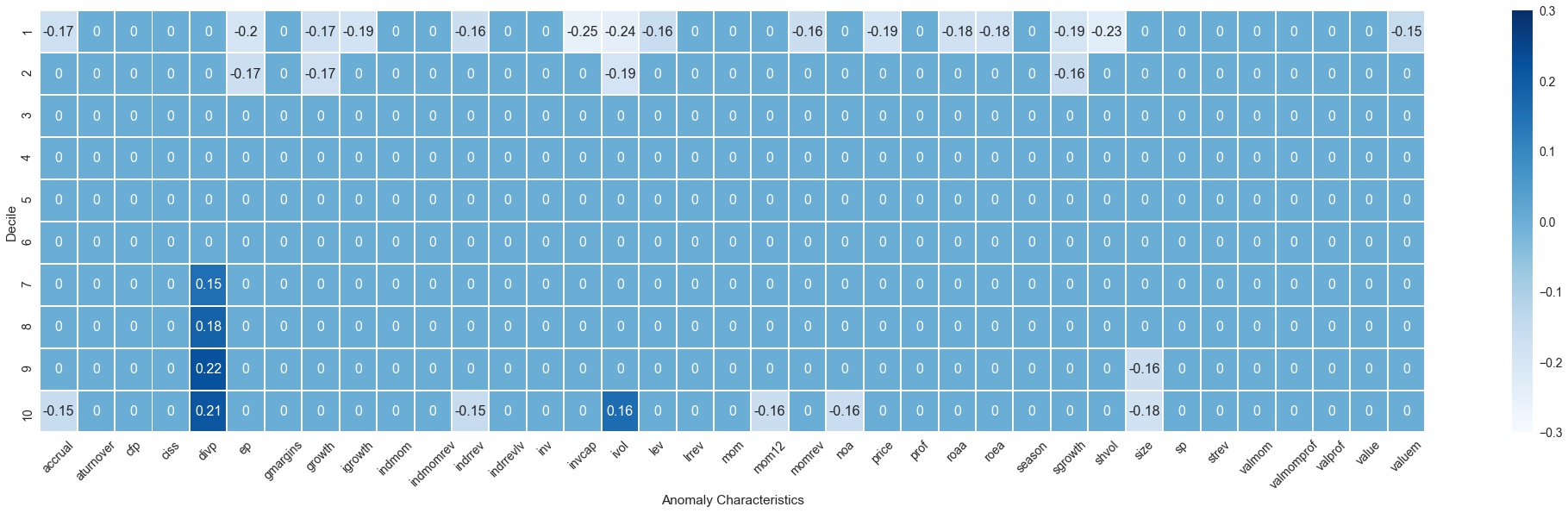}
			\bnotefig{This figure shows the portfolio weights of the 2nd proximate factor. The 30 nonzero entries in the portfolio weights are composed of the extreme quantiles of momentum related characteristics. These factor weights are positive for the low quantiles and negative for the high quantiles. This proximate factor can be interpreted as a long-short momentum factor.  The names and description of the 37 anomaly characteristics are listed in Table \ref{tab:list-anomaly} in the Appendix.}
			\label{fig:single-sorted-portfolios-2nd-loading}
		\end{figure}
		
		\begin{figure}[H]
			\centering
			\tcapfig{Financial Portfolio Data: Portfolio Weights of 3rd Proximate Factor}
			\includegraphics[width=1\linewidth]{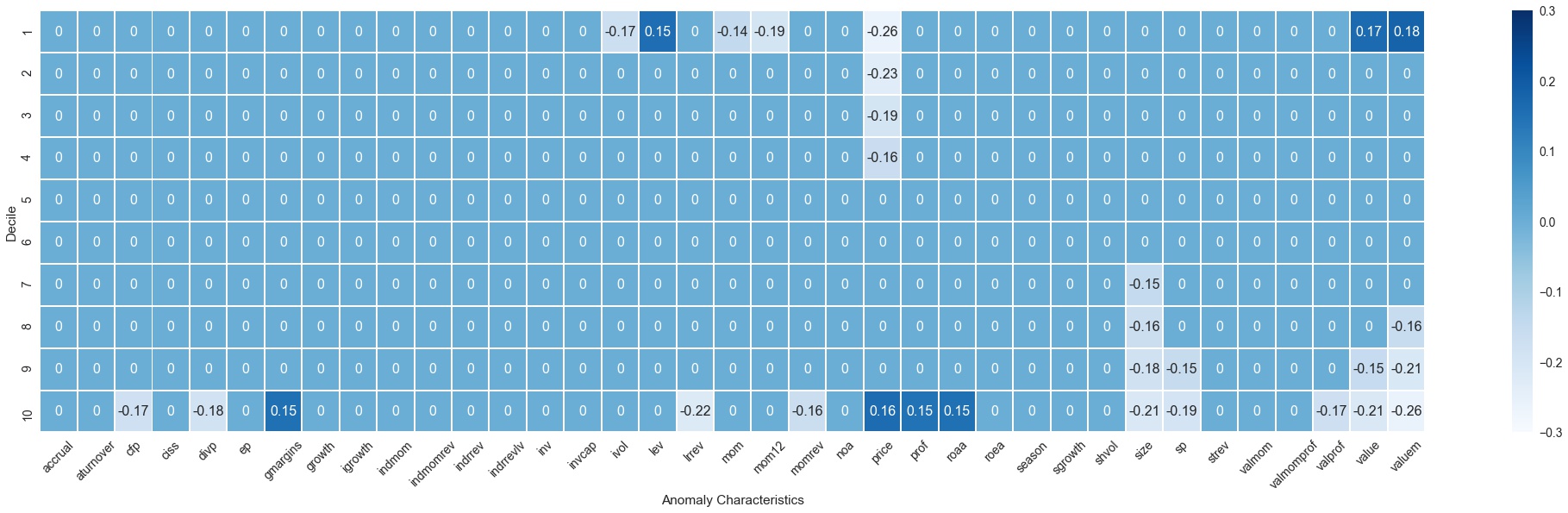}
			\bnotefig{This figure shows the portfolio weights of the 3rd proximate factor. The 30 nonzero entries in the portfolio weights are composed of the extreme quantiles of momentum related characteristics. These factor weights are positive for the low quantiles and negative for the high quantiles. This proximate factor can be interpreted as a long-short momentum factor.  The names and description of the 37 anomaly characteristics are listed in Table \ref{tab:list-anomaly} in the Appendix.}
			\label{fig:single-sorted-portfolios-3rd-loading}
		\end{figure}

		\begin{figure}[H]
			\centering
			\tcapfig{Financial Portfolio Data: Portfolio Weights of 5th Proximate Factor}
			\includegraphics[width=1\linewidth]{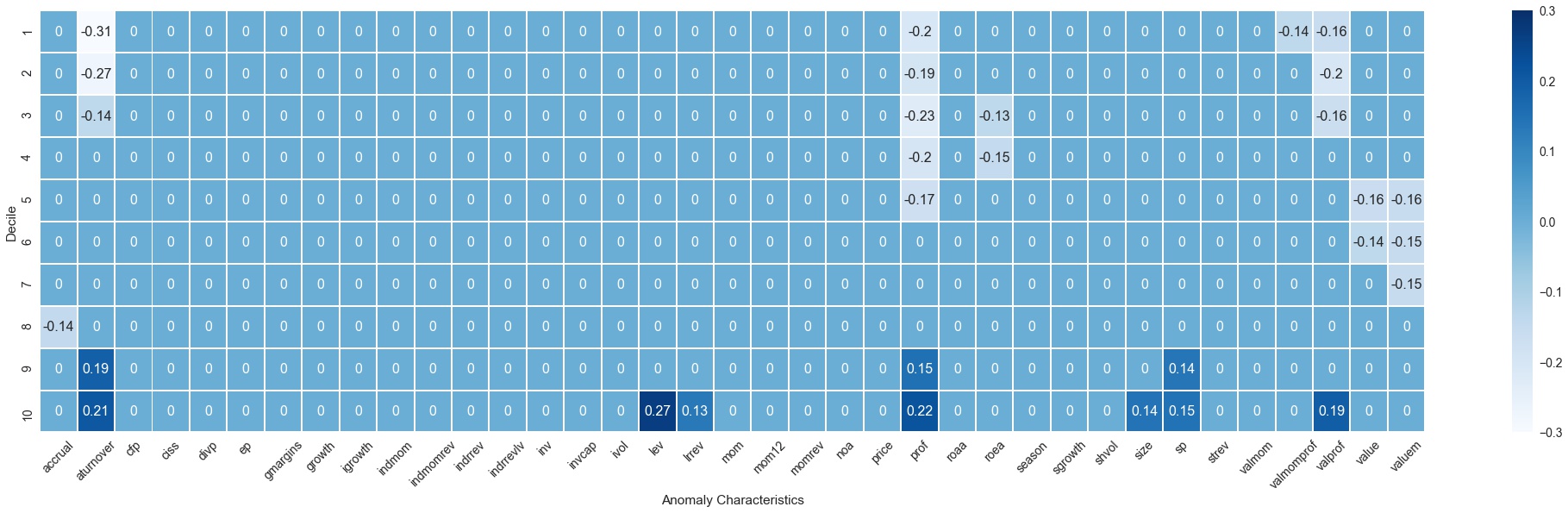}
			\bnotefig{This figure shows the portfolio weights of the 5th proximate factor. The 30 nonzero entries in the portfolio weights are composed of the extreme quantiles of momentum related characteristics. These factor weights are positive for the low quantiles and negative for the high quantiles. This proximate factor can be interpreted as a long-short momentum factor.  The names and description of the 37 anomaly characteristics are listed in Table \ref{tab:list-anomaly} in the Appendix.}
			\label{fig:single-sorted-portfolios-5th-loading}
		\end{figure}
		
		\newpage
		
		\section{Supplementary Simulation Results}\label{sec:add-simulation}
		In this section, we evaluate proximate factors (PPCA), weighted proximate factors PPCA (wt) based on inverse standard errors and sparse PCA (SPCA). 
		Our baseline model assumes
		\[\Lambda_i \overset{\iid}{\sim} \mathcal{N}(0,I_{K}), \quad F_t \overset{\iid}{\sim} \mathcal{N}(0,\Sigma_F),\quad   e_{it} \overset{\iid}{\sim} \mathcal{N}(0,\sigma_i^2), \quad \sigma_i  \overset{\iid}{\sim} \mathrm{U}(0.5,1). \]
		For the idiosyncratic components we consider two types of correlated errors
		\begin{enumerate}
			\item Cross sectional dependence:  $\*e_t \stackrel{\iid}{\sim} \mathcal{N}(0, \Sigma_e)$, where $\Sigma_e = (c_{ij}) \in \+R^{N \times N}$ with $c_{ij} = 0.5^{|i - j|}$
			\item Time series dependence: $\*e_t \stackrel{\iid}{\sim} \mathcal{N}(0, \Sigma_e)$, where $\Sigma_e = (c_{ts}) \in \+R^{T \times T}$ with $c_{ts} = 0.5^{|t - s|}$
		\end{enumerate}
		For the cross-sectional dependent errors, we present the results in Figure \ref{fig:thm-evt-one-factor_cross_dep}, \ref{fig:thm-evt-one-factor-N_cross_dep},  \ref{fig:thm-evt-multi-factor_cross_dep}  and \ref{fig:thm-evt-multi-factor-N_cross_dep}. For the time-series dependent errors, we present the results in Figure \ref{fig:thm-evt-one-factor_time_dep}, \ref{fig:thm-evt-one-factor-N_time_dep}, \ref{fig:thm-evt-multi-factor_time_dep} and \ref{fig:thm-evt-multi-factor-N_time_dep}.
		We confirm that even with correlated errors,  proximate factors are a very good approximation of the population factors and that the lower bounds are an accurate description of the exceedance probabilities for the generalized correlations. 
		\begin{figure}[h!]
			\centering
			\tcapfig{Correlations in One-Factor Model as a Function of $m$ with Cross-Dependent Errors }
			\begin{subfigure}{.5\textwidth}
				\centering
				\includegraphics[width=1\linewidth]{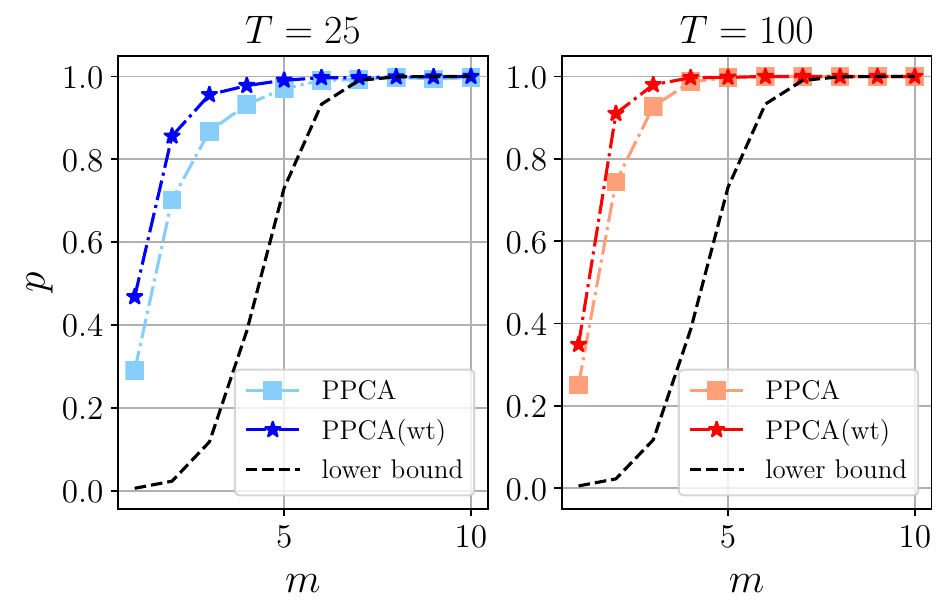}
				\subcaptab{$\sigma_{\*F_1} = 1.0$}
			\end{subfigure}%
			\begin{subfigure}{.5\textwidth}
				\centering
				\includegraphics[width=1\linewidth]{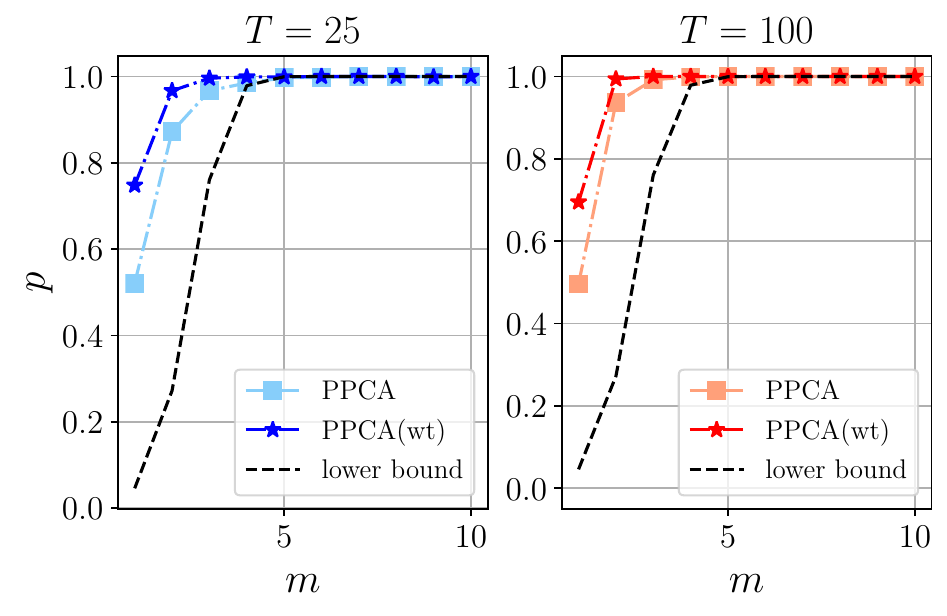}
				\subcaptab{$\sigma_{\*F_1} = 1.2$}
			\end{subfigure}
			\bnotefig{This figure compares $P(\rho \geq \rho_0)$ based on 1,000 Monte Carlo simulations and the probability lower bound $\underline{p} = \bar{G}_{1,m}(y_m)$ as a function of $m$. PPCA are the unweighted proximate factors and PPCA (wt) use the inverse standard errors as weights. We set $N=100$ and $\rho_0=0.95$. The left plots use $\sigma_{\*F_1} = 1.0$ while the right plots have the higher $\sigma_{\*F_1} = 1.2$. We calculate the probabilities for $T=25$ and $T=100$.  Both, $P(\rho \geq \rho_0)$ and $\underline{p}$, are very close to 1 with about 5-10\% of units $m$ to construct the proximate factors.}
			\label{fig:thm-evt-one-factor_cross_dep}
		\end{figure}
		
		\begin{figure}[h!]
			\centering
			\tcaptab{Generalized Correlations as a Function of $N$ with Cross-Dependent Errors}
			\begin{subfigure}{.5\textwidth}
				\centering
				\includegraphics[width=1\linewidth]{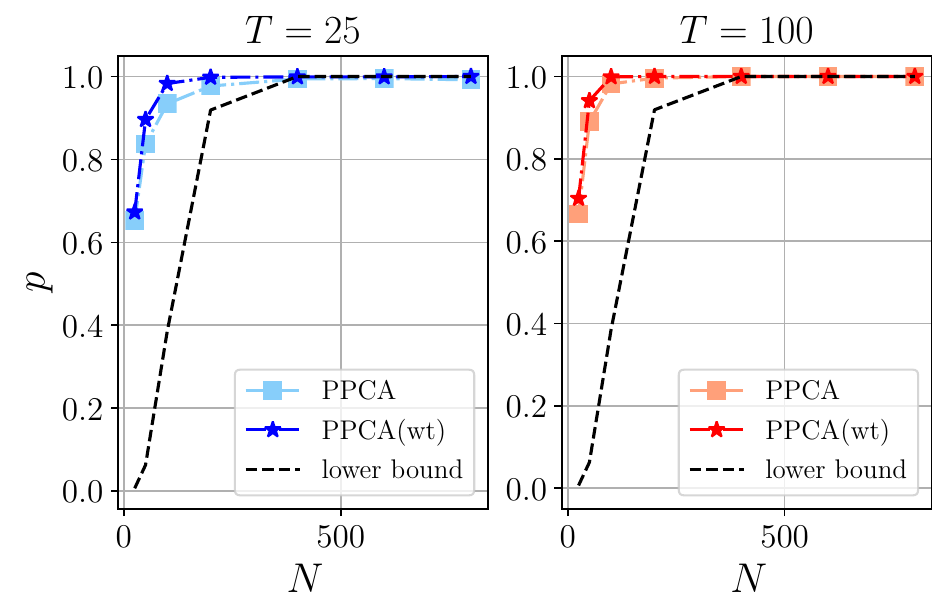}  \subcaptab{One-factor model}
				\label{fig:thm-evt-one-factor-N_cross_dep}
			\end{subfigure}%
			\begin{subfigure}{.5\textwidth}
				\centering
				\includegraphics[width=1\linewidth]{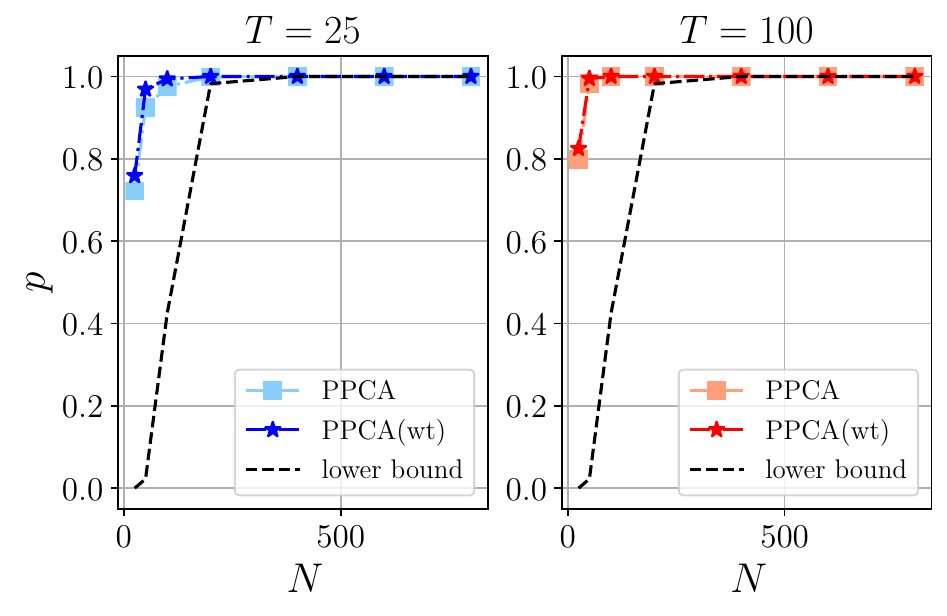}
				\subcaptab{Multi-factor model}
				\label{fig:thm-evt-multi-factor-N_cross_dep}
			\end{subfigure}
			\bnotetab{This figure compares $P(\rho \geq \rho_0)$ based on 1,000 Monte Carlo simulations and the probability lower bound $\underline{p}$ as a function of $N$. The nonzero elements are set to $m=4$ in all cases. PPCA are the unweighted proximate factors and PPCA (wt) use the inverse standard errors as weights. The one-factor model uses $\rho_0 = 0.95$ and $\sigma_{\*F_1}=1.0$, while the multi-factor model sets $\rho_0 = 1.9$ and $[\sigma_{\*F_1}, \sigma_{\*F_2}]=[1.2, 1.0]$. The lower bound is $\underline{p} = \bar{G}_{1,m}(y_m)$ for one factor and is defined in Equation \eqref{eqn:mod-lower-bound-multi-factor} for two factors. We calculate the probabilities for $T=25$ and $T=100$. Both, $P(\rho \geq \rho_0)$ and $\underline{p}$, are very close to 1 for $N > 250$.
			}
		\end{figure}

		\begin{figure}[h!]
			\centering
			\tcapfig{Generalized Correlations in Multi-Factor Model as a Function of $m$  with Cross-Dependent Errors}
			\begin{subfigure}{.5\textwidth}
				\centering
				\includegraphics[width=1\linewidth]{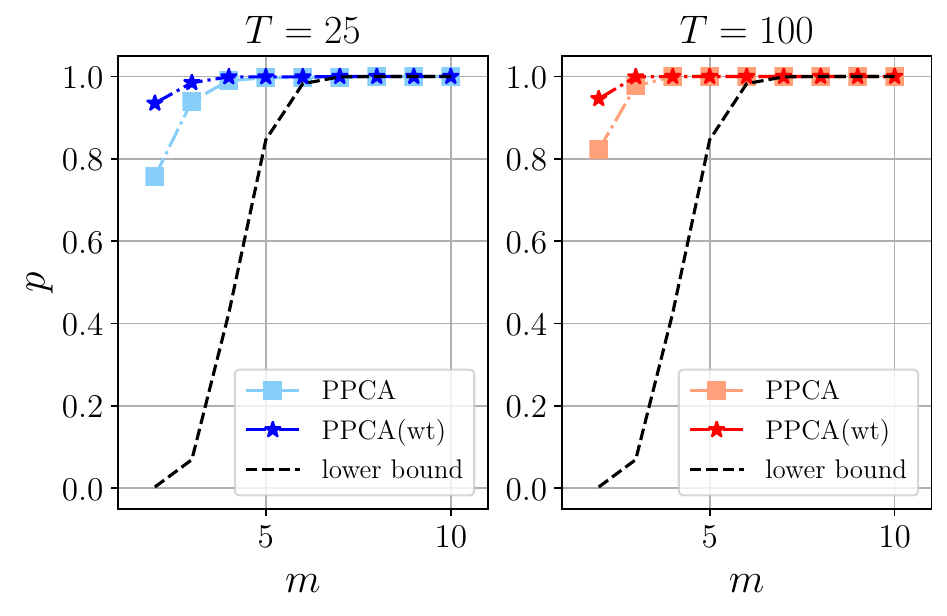}
				\subcaptab{$[\sigma_{\*F_1}, \sigma_{\*F_2}] = [1.2, 1.0]$}
			\end{subfigure}%
			\begin{subfigure}{.5\textwidth}
				\centering
				\includegraphics[width=1\linewidth]{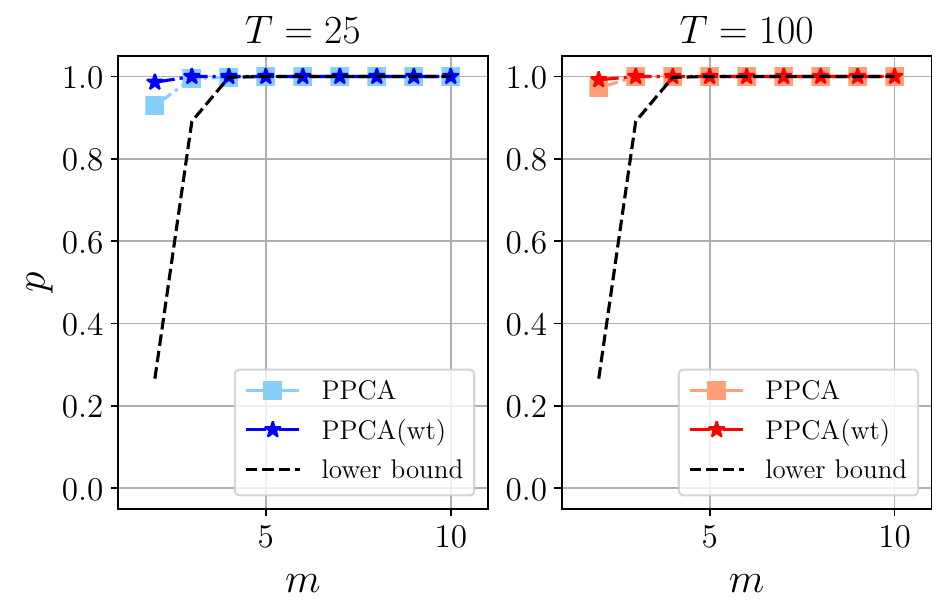}
				\subcaptab{$[\sigma_{\*F_1}, \sigma_{\*F_2}] = [1.5, 1.2]$}
			\end{subfigure}
			\bnotefig{This figure compares $P(\rho \geq \rho_0)$ based on 1,000 Monte Carlo simulations and the probability lower bound defined in Equation \eqref{eqn:mod-lower-bound-multi-factor} as a function of $m$. We have $K=2$ and set $\rho_0=1.9$ and $N=100$. PPCA are the unweighted proximate factors and PPCA (wt) use the inverse standard errors as weights. The left plots use $[\sigma_{\*F_1}, \sigma_{\*F_2}] = [1.2, 1.0]$ while the right plots have the higher $[\sigma_{\*F_1}, \sigma_{\*F_2}] = [1.5, 1.2]$. We calculate the probabilities for $T=25$ and $T=100$.  Both, $P(\rho \geq \rho_0)$ and $\underline{p}$, are very close to 1 with about 5-10\% of units $m$ to construct the proximate factors.}
			\label{fig:thm-evt-multi-factor_cross_dep}
		\end{figure}
		
		\begin{figure}[h!]
			\centering
			\tcapfig{Correlations in One-Factor Model as a Function of $m$ with Time-Dependent Errors }
			\begin{subfigure}{.5\textwidth}
				\centering
				\includegraphics[width=1\linewidth]{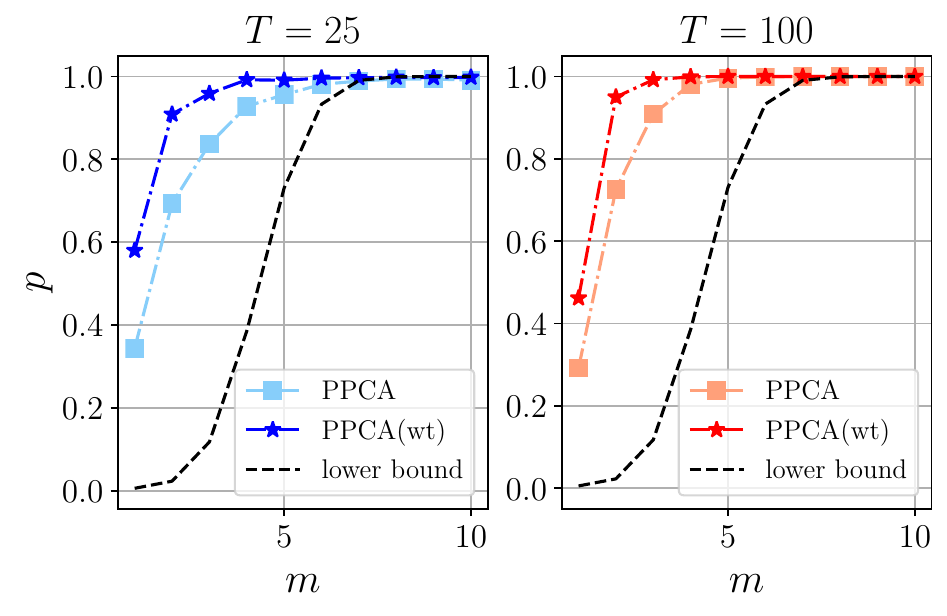}
				\subcaptab{$\sigma_{\*F_1} = 1.0$}
			\end{subfigure}%
			\begin{subfigure}{.5\textwidth}
				\centering
				\includegraphics[width=1\linewidth]{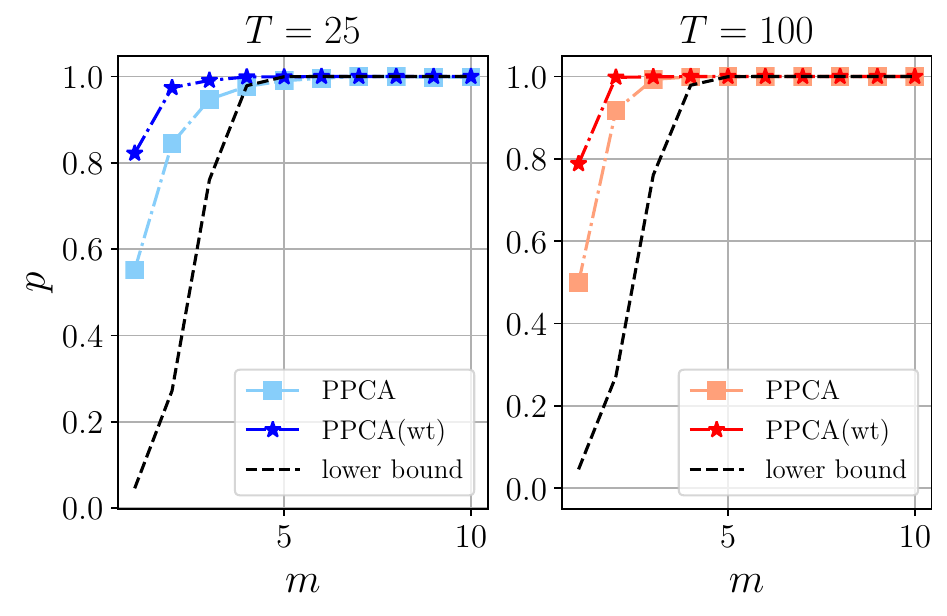}
				\subcaptab{$\sigma_{\*F_1} = 1.2$}
			\end{subfigure}
			\bnotefig{This figure compares $P(\rho \geq \rho_0)$ based on 1,000 Monte Carlo simulations and the probability lower bound $\underline{p} = \bar{G}_{1,m}(y_m)$ as a function of $m$. PPCA are the unweighted proximate factors and PPCA (wt) use the inverse standard errors as weights. We set $N=100$ and $\rho_0=0.95$. The left plots use $\sigma_{\*F_1} = 1.0$ while the right plots have the higher $\sigma_{\*F_1} = 1.2$. We calculate the probabilities for $T=25$ and $T=100$.  Both, $P(\rho \geq \rho_0)$ and $\underline{p}$, are very close to 1 with about 5-10\% of units $m$ to construct the proximate factors.}
			\label{fig:thm-evt-one-factor_time_dep}
		\end{figure}
		
		\begin{figure}[h!]
			\centering
			\tcaptab{Generalized Correlations as a Function of $N$ with Time-Dependent Errors}
			\begin{subfigure}{.5\textwidth}
				\centering
				\includegraphics[width=1\linewidth]{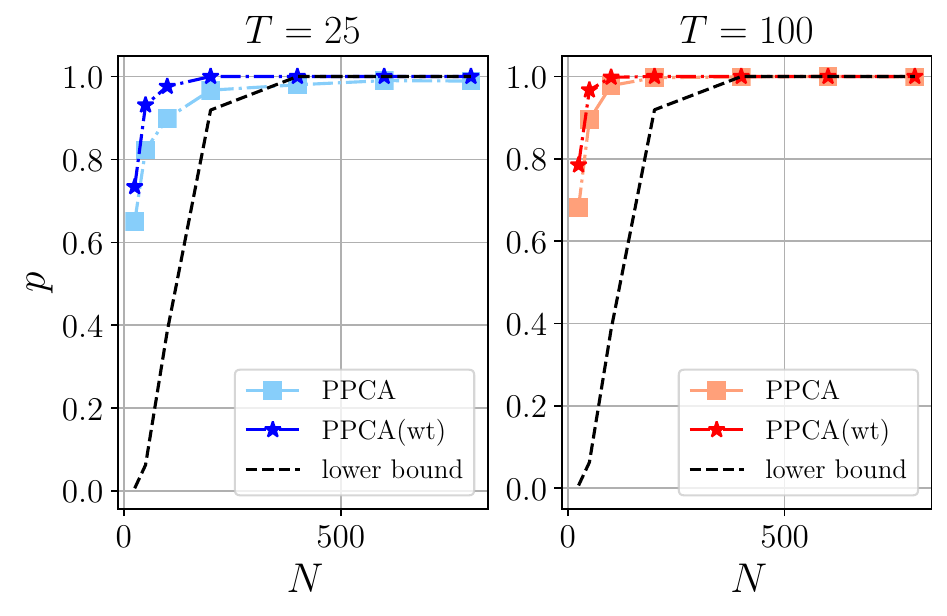}  \subcaptab{One-factor model}
				\label{fig:thm-evt-one-factor-N_time_dep}
			\end{subfigure}%
			\begin{subfigure}{.5\textwidth}
				\centering
				\includegraphics[width=1\linewidth]{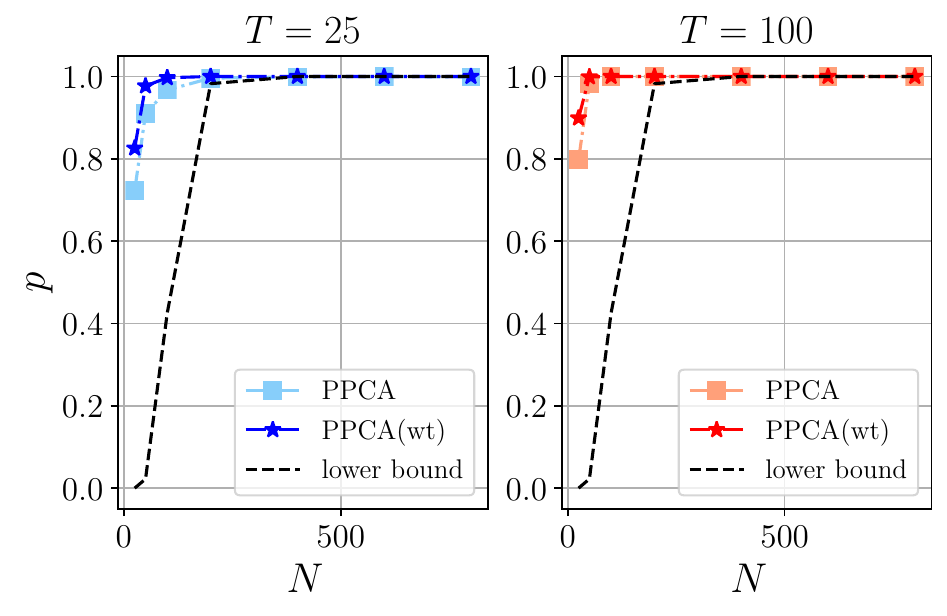}
				\subcaptab{Multi-factor model}
				\label{fig:thm-evt-multi-factor-N_time_dep}
			\end{subfigure}
			\bnotetab{This figure compares $P(\rho \geq \rho_0)$ based on 1,000 Monte Carlo simulations and the probability lower bound $\underline{p}$ as a function of $N$. The nonzero elements are set to $m=4$ in all cases. PPCA are the unweighted proximate factors and PPCA (wt) use the inverse standard errors as weights. The one-factor model uses $\rho_0 = 0.95$ and $\sigma_{\*F_1}=1.0$, while the multi-factor model sets $\rho_0 = 1.9$ and $[\sigma_{\*F_1}, \sigma_{\*F_2}]=[1.2, 1.0]$. The lower bound is $\underline{p} = \bar{G}_{1,m}(y_m)$ for one factor and is defined in Equation \eqref{eqn:mod-lower-bound-multi-factor} for two factors. We calculate the probabilities for $T=25$ and $T=100$. Both, $P(\rho \geq \rho_0)$ and $\underline{p}$, are very close to 1 for $N > 250$.
			}
		\end{figure}

		\begin{figure}[h!]
			\centering
			\tcapfig{Generalized Correlations in Multi-Factor Model as a Function of $m$  with Time-Dependent Errors}
			\begin{subfigure}{.5\textwidth}
				\centering
				\includegraphics[width=1\linewidth]{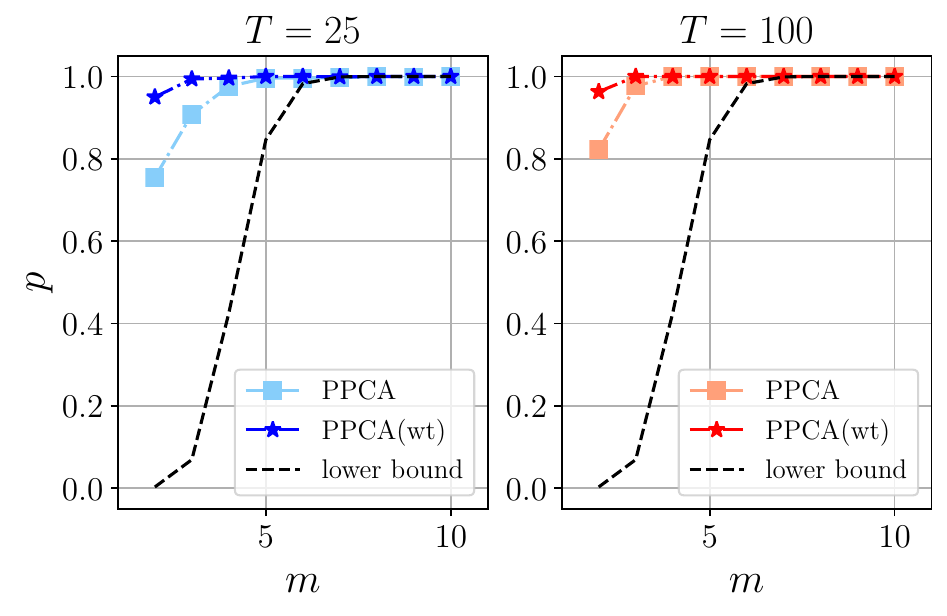}
				\subcaptab{$[\sigma_{\*F_1}, \sigma_{\*F_2}] = [1.2, 1.0]$}
			\end{subfigure}%
			\begin{subfigure}{.5\textwidth}
				\centering
				\includegraphics[width=1\linewidth]{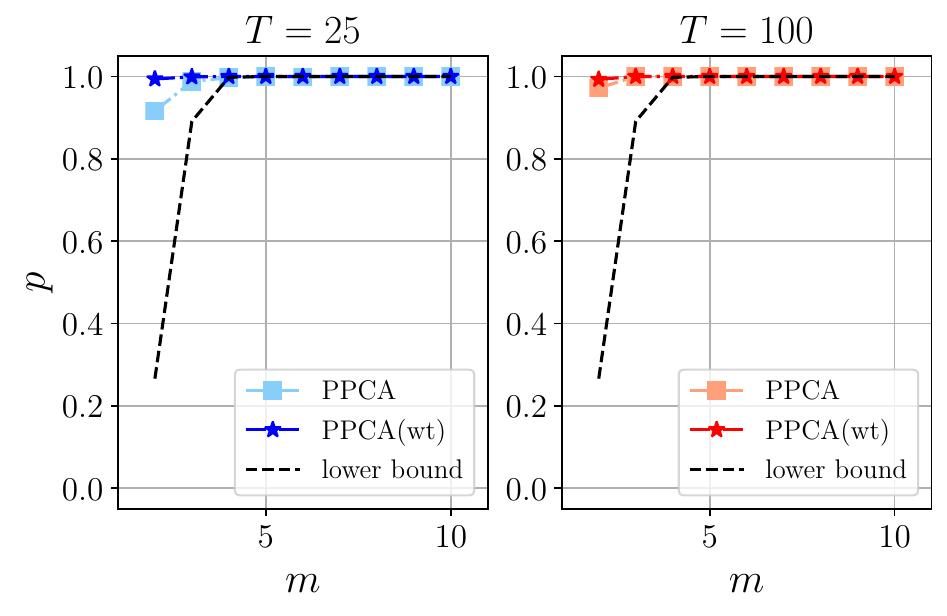}
				\subcaptab{$[\sigma_{\*F_1}, \sigma_{\*F_2}] = [1.5, 1.2]$}
			\end{subfigure}
			\bnotefig{This figure compares $P(\rho \geq \rho_0)$ based on 1,000 Monte Carlo simulations and the probability lower bound defined in Equation \eqref{eqn:mod-lower-bound-multi-factor} as a function of $m$. We have $K=2$ and set $\rho_0=1.9$ and $N=100$. PPCA are the unweighted proximate factors and PPCA (wt) use the inverse standard errors as weights. The left plots use $[\sigma_{\*F_1}, \sigma_{\*F_2}] = [1.2, 1.0]$ while the right plots have the higher $[\sigma_{\*F_1}, \sigma_{\*F_2}] = [1.5, 1.2]$. We calculate the probabilities for $T=25$ and $T=100$.  Both, $P(\rho \geq \rho_0)$ and $\underline{p}$, are very close to 1 with about 5-10\% of units $m$ to construct the proximate factors.}
			\label{fig:thm-evt-multi-factor_time_dep}
		\end{figure}

	\end{document}